\documentclass[11pt, letter]{article} 
\usepackage{geometry}
\geometry{top=1in}
\geometry{bottom=1in}
\geometry{left=1in}
\geometry{right=1in}


\usepackage{latexsym}
\usepackage{xspace}
\usepackage{boxedminipage}
\usepackage{amsmath}
\usepackage{amsfonts,bm}
\usepackage{amssymb}
\usepackage{tabls}
\usepackage{theorem}
\usepackage[dvips]{graphicx}
\usepackage{color}
\usepackage{enumerate}
\usepackage{algorithm}
\usepackage{algorithmicx}
\usepackage[noend]{algpseudocode}
\newcommand{\ignore}[1]{}
\usepackage{sansmath}
\usepackage{hyperref}
 \usepackage{enumitem}

\newlength{\leftbarwidth}
\setlength{\leftbarwidth}{3pt}
\newlength{\leftbarsep}
\setlength{\leftbarsep}{10pt}


\newtheorem{theorem}{Theorem}[section]
\newtheorem{lemma}[theorem]{Lemma}

\newtheorem{definition}[theorem]{Definition}
\newtheorem{claim}[theorem]{Claim}

\newtheorem{remark}[theorem]{Remark}

\newenvironment{proof}{\noindent{\bf Proof}:}{$\hfill \Box$\\}

\newenvironment{proofof}[1]{\noindent{\bf Proof of #1}:}{$\hfill \Box$\\}

\def\P{\mathcal{P}}
\newcommand {\supp}[1]{\textup{supp}(#1)}
\def\I{\mathcal{I}}
\def\M{\mathcal{M}}

\def\Z{\mathcal{Z}}
\def\U{\altmathcal{U}}

\def\bA{\mathbf{A}}
\def\bC{\mathbf{C}}
\def\bV{\mathbf{V}}
\def\bW{\mathbf{W}}
\def\ba{\mathbf{a}}
\def\bb{\mathbf{b}}
\def\bd{\mathbf{d}}
\def\be{\mathbf{e}}
\def\bx{\mathbf{x}}
\def\bX{\mathbf{X}}
\def\by{\mathbf{y}}
\def\bz{\mathbf{z}}
\def\bp{\mathbf{p}}
\def\bw{\mathbf{w}}
\def\bv{\mathbf{v}}
\def\btheta{\bm{\theta}}
\def\blambda{\bm{\lambda}}
\def\rR{\mathbb{R}}
\def\zZ{\mathbb{Z}}

\newcommand{\EE}{\mathcal{E}}
\renewcommand{\Pr}{\mathbb{P}}
\newcommand{\B}{\mathcal{B}}
\newcommand{\V}{\mathcal{V}}
\newcommand{\C}{\mathcal{C}}

\def\SS{\altmathcal{S}}

\def\etal{et al.}

\def\para#1{\noindent {\bf #1}}

\newcommand{\cut}[1]{}
\newcommand{\E}{\mathbb{E}}

\newcommand{\X}{\mathcal{X}}
\newcommand{\D}{\mathcal{D}}
\newcommand{\F}{\mathcal{F}}
\newcommand{\G}{\mathcal{G}}

\newcommand{\R}{\mathbb{R}}

\renewcommand{\SS}{\mathbf{S}}
\renewcommand{\U}{\mathcal{U}}
\renewcommand{\P}{\mathcal{P}}

\newcommand{\balpha}{\boldsymbol\alpha}
\newcommand{\bal}{\bm{\alpha}} 
\newcommand{\bY}{\mathbf{Y}}
\newcommand{\bM}{\mathbf{M}}

\newcommand{\LP}[1][x]{\textup{LP}$_\textup{#1-OPT}$}

\newcommand{\eps}{\epsilon}

\usepackage{bbm}
\usepackage{dsfont}
\newcommand{\one}{\mathds{1}}

\newcommand{\pr}[1]{\left(#1\right)}

\newcommand{\set}[1]{\ensuremath{\left\{#1\right\}}}
\newcommand{\an}[1]{\left\langle#1\right\rangle}
\newcommand{\abs}[1]{\left|#1\right|}
\newcommand{\tr}{\mathrm{tr}}

\newcommand{\CPD}{\textup{CP}}
\newcommand{\CP}{\CPOPT}
\newcommand{\CPOPT}{\ensuremath{\textup{OPT}_{\textup{CP}}}}
\newcommand{\Cs}{C^*}
\newcommand{\DOPT}{\textup{D-OPT}}
\newcommand{\OPT}{\textup{OPT}}
\newcommand{\DetMax}{\textsc{Determinant Maximization}}
\newcommand{\DFea}{\textsc{Det-Feasibility}}


\newcommand{\poly}{\mathrm{poly}}

\DeclareMathOperator*{\argmin}{\arg\min}
\DeclareMathOperator*{\argmax}{\arg\max}

\makeatletter
\newcommand*{\rom}[1]{\expandafter\@slowromancap\romannumeral #1@}
\makeatother
\title{Maximizing Determinants under Matroid Constraints}

\author{Vivek Madan\thanks{Amazon. This work was done while the author was at Georgia Institute of Technology. Email: vmadan7@gatech.edu}
\and
Aleksandar Nikolov\thanks{University of Toronto. Email:
  anikolov@cs.toronto.edu. Supported by NSERC Discovery Grant.
}
\and
        Mohit Singh\thanks{Georgia Institute of Technology. Email: mohit.singh@isye.gatech.edu. Supported by NSF- AF:1910423 and NSF-AF:1717947.
}
\and
Uthaipon Tantipongpipat\thanks{Georgia Institute of Technology.
Email: tao@gatech.edu. Supported by NSF- AF:1910423 and NSF-AF:1717947.
}
}

\begin{document}

\maketitle

\thispagestyle{empty}
\begin{abstract}
Given a set of vectors $\bv_1,\dots,\bv_n\in \rR^d$ and a matroid $\M = ([n],\I)$, we study the problem of finding a basis $S$ of $\M$ such that $\det\pr{\sum_{i \in S} \bv_i \bv_i^\top}$ is maximized. This problem appears in a diverse set of areas, such as experimental design, fair allocation of goods, network design, and machine learning.  The current best results include an $e^{2k}$-estimation for any matroid of rank $k$~\cite{AnariGV18} and a $(1+\epsilon)^d$-approximation for a uniform matroid of rank $k \geq d + \frac{d}{\epsilon}$~\cite{MadanSTU19}, where the rank $k \ge d$ denotes the desired size of the optimal set. Our main result is a new approximation algorithm for the general problem with an approximation guarantee that depends only on the dimension $d$ of the vectors, and not on the size $k$ of the output set. In particular, we show an $\pr{O(d)}^{d}$-estimation and an $\pr{O(d)}^{d^3}$-approximation for any matroid, giving a significant improvement over prior work when $k \gg d$.

Our result relies on showing that there exists an optimal solution to a convex programming relaxation for the problem which has \emph{sparse support}; in particular, no more than $O(d^2)$ variables of the solution have fractional values. The sparsity results rely on the interplay between the first order optimality conditions for the convex program and matroid theory. We believe that the techniques introduced to show sparsity of optimal solutions to convex programs will be of independent interest. We also give a randomized rounding algorithm that, given a sparse fractional solution to the convex program, returns a feasible integral solution to the original problem. To show the approximation guarantee, we utilize recent works on strongly log-concave polynomials~\cite{AnariGV18,AnariLGV19} and show new relationships between different convex programs~\cite{NikolovS16,anari2017generalization} studied for the problem. We remark that sparsity is crucial to the algorithm and that all previous approaches will necessarily fail to achieve such an improved guarantee. Finally, we show how to use the estimation algorithm to give an efficient deterministic approximation algorithm. Once again, the algorithm crucially relies on sparsity of the fractional solution to guarantee that the approximation factor depends solely on the dimension $d$.
\end{abstract}

\newpage


\pagenumbering{arabic}
\section{Introduction}

Choosing a diverse representative set of items from a large corpus is a common problem studied in a variety of areas, including machine learning, information retrieval, statistics, and optimization~\cite{kulesza2012determinantal,chen2006less,CevallosEZ17,pukelsheim2006optimal}.
For example, consider the problem of choosing a subset from a large data set to train a machine learning algorithm; or of displaying a small set of images out of a large set of relevant images to a search query. In these contexts, one aims to choose a small and diverse representative set of items from a large data set. Diversity here can be modeled in many different ways, and the choice of a diversity measure can significantly affect both practical performance and the algorithmic complexity of finding a diverse set. Both general and application-specific diversity criteria have been proposed in the past~\cite{gong2014diverse,celis2016fair,CevallosEZ17,zhai2015beyond,carbonell1998use}.



In this work, we focus on a popular geometric model of the problem above. While it naturally captures problems in data retrieval and statistics, we show that it also encompasses problems in fair allocation of goods, network design, counting, and optimization. We assume that data are represented as points in the $d$-dimensional Euclidean space, so that choosing a subset of items corresponds to selecting a subset of $d$-dimensional vectors.
A number of natural diversity measures can be formulated in terms of functions of the eigenvalues of the matrix given by the sum of outerproducts of the selected vectors. Some examples are the determinant, the trace, the harmonic mean of the eigenvalues, and the minimum eigenvalue. In this work, we focus on the determinant as the diversity measure. We study the determinant maximization problem with general combinatorial constraints which makes the model rich enough to include many of the problems mentioned above. In particular, we consider matroid constraints, which capture cardinality constraints,  partition constraints, and many more as special cases. This allows modeling constraints imposed by, e.g., budget, feasibility, or fairness considerations.

%

\vspace{2mm}

In an instance of the \DetMax\ problem (under a general matroid constraint), we are given a set of \(n\) vectors
\(\bv_1,\ldots,\bv_n\in\R^d\) and a matroid \(\M=([n],\I)\) with set of
bases $\B$, and our goal is to find a set $S\in \B$ that maximizes $\det\pr{\sum_{i \in S} \bv_i \bv_i^\top}$, i.e.
\begin{equation} \label{eq:D-obj}
\max\left\{\det\pr{\sum_{i\in S} \bv_i \bv_i^\top}: S \in \B\right\}.
\end{equation}
We denote by $k$  the rank of the matroid $\M$, which is the size of all the bases in $\B$. We denote the combinatorial optimization problem \eqref{eq:D-obj} by \DOPT{} and its optimum value by \OPT.


A number of special cases of \DOPT{} have been studied, in which either the choice of vectors or the matroid is restricted~\cite{Welch82,BouhtouGS10,WangYS16,ZeyuanYAY17,SinghX18}. We highlight two illustrative examples. Under cardinality constraints, in which $\B$ consists of all subsets of $[n]$ of size $k$, the problem is hard to approximate to a factor better than $(1+c)^d$ for some $c>0$ when $k=d$~\cite{Koutis06,CivrilM13,SummaEFM15}, and Nikolov~\cite{Nikolov15} gave an $e^d$-approximation for $k\leq d$.\footnote{For $k<d$, the objective is naturally replaced by the product of the  $k$ highest eigenvalues of the matrix, rather than the determinant, which is the product of all $d$ eigenvalues} Interestingly, when $k> d$, improved guarantees are known~\cite{WangYS16,ZeyuanYAY17,SinghX18} with the current best $(1+\epsilon)^d$-approximation when $k\geq d+\frac{d}{\epsilon}$~\cite{MadanSTU19}.

For general matroids, a series of works~\cite{NikolovS16,anari2017generalization,straszak2017real,AnariGV18} have focused on the case when $k\leq d$, and the latest results of Anari, Oveis-Gharan, and Vinzant~\cite{AnariGV18} imply an $e^{2k}$-estimation algorithm. These results were first proved for the special case when the \emph{generating polynomial} for the matroid is a \emph{real stable polynomial}~\cite{anari2017generalization}. Recent and exciting advances on completely log-concave polynomials~\cite{AnariGV18} (and the equivalent notion of Lorentzian polynomials~\cite{branden2019lorentzian}) allow the techniques of~\cite{anari2017generalization} to be generalized to all matroids. While these results are not stated when
$k> d$, the analysis naturally yields an $e^{2k}$-estimation algorithm even in that case. Such a dependence on $k$ is often exorbitant since $k$ can be much larger than $d$ in many applications. Moreover, the hardness result mentioned above only shows that the approximation factor needs to depend exponentially on $d$, but not necessarily on $k$.\footnote{Since the objective is the determinant of $d\times d$ matrices, and the determinant is homogeneous of degree $d$, exponential dependence on $d$ is an appropriate scaling.} A starting point for this work is a result showing that these existing techniques are incapable of removing the dependence on $k$ for general matroid constraints.  More formally, we show in Appendix~\ref{sec:no_pc_dist} that any algorithm which solves a convex relaxation and rounds the fractional solution without using the structure of the vectors yields an approximation factor necessarily dependent on $k$ even when $d=2$.

\subsection{Our Results and Contributions}

Our main result is an algorithm that estimates the objective of the \DetMax{} problem under a general matroid constraint.
\begin{theorem}\label{thm:main}
There is an efficiently computable convex program whose objective value estimates the objective of the \DetMax{} problem under a general matroid constraint within a multiplicative factor of $(O(d))^d$.
\end{theorem}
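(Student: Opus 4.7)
The plan is to exhibit an efficiently solvable concave program that upper-bounds $\OPT$ and matches it to within $(O(d))^d$. The natural candidate is the fractional relaxation
\[
g^\star \;=\; \max \set{\log\det\pr{\sum_{i=1}^n x_i \bv_i \bv_i^\top} : x \in P_\B},
\]
where $P_\B$ denotes the base polytope of $\M$. Since $\log\det(\cdot)$ is concave on the positive semidefinite cone and $P_\B$ admits a polynomial-time separation oracle, $g^\star$ is efficiently computable. Every $S \in \B$ corresponds to a $0/1$ vertex of $P_\B$, so $g^\star \geq \log \OPT$, and the task reduces to proving $g^\star \leq \log \OPT + d \log O(d)$.

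The first and most novel step is a sparsity lemma: I would show that some optimizer $x^\star$ has fractional support $F = \set{i : 0 < x_i^\star < 1}$ of size at most $O(d^2)$. The KKT conditions force the gradient coordinate $\bv_i^\top M(x^\star)^{-1} \bv_i$ (for $M(x) = \sum_i x_i \bv_i \bv_i^\top$) to be constant on each block of the laminar family of tight matroid inequalities defining the minimal face of $P_\B$ containing $x^\star$. Each such gradient coordinate is a linear functional on the $\binom{d+1}{2}$-dimensional space of symmetric $d\times d$ matrices, so at most $O(d^2)$ blocks can produce distinct gradient values; a standard uncrossing argument on the laminar tight family then bounds $|F|$ by $O(d^2)$.

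Once sparsity is in hand, set $T := \set{i : x_i^\star = 1}$. Every element of $T$ lies in every basis consistent with $x^\star$, so projecting each $\bv_i$ for $i \in F$ onto the orthogonal complement of $\mathrm{span}\set{\bv_j : j \in T}$ reduces the problem to a determinant maximization in residual dimension $d' \le d$ over the contracted matroid $\M/T$, which has only $|F| = O(d^2)$ elements. On this residual instance, I would invoke the completely log-concave polynomial machinery of~\cite{AnariGV18,AnariLGV19} together with a new comparison between the entropy-based relaxation of~\cite{NikolovS16,anari2017generalization} and the $\log\det$ relaxation above, bounding the integrality gap on the residual by $(O(d))^d$. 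Multiplying by the exact contribution $\det\pr{\sum_{j\in T} \bv_j \bv_j^\top}$ of the integer part proves the theorem.

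The main obstacle is the sparsity lemma. The dimension count on gradients is clean only after one verifies that tight matroid inequalities cannot fragment $F$ beyond what the first-order system allows; combining the laminar structure of tight sets with the at most $\binom{d+1}{2}$ independent gradient conditions is delicate. This is exactly the ingredient the abstract singles out as new and is the reason previous purely convex-relaxation approaches could not remove the dependence on $k$.
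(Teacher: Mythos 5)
You have a genuine gap, and it is at the very first step: the relaxation you chose is too weak. The program $g^\star=\max\{\log\det(\sum_i x_i\bv_i\bv_i^\top): \bx\in P_\B\}$ has an integrality gap that cannot be bounded by any function of $d$ alone. By Cauchy--Binet, $\det(\sum_i x_i\bv_i\bv_i^\top)=\sum_{|S|=d}x^S\det(\sum_{i\in S}\bv_i\bv_i^\top)$, and this sum collects contributions from $d$-subsets $S$ that are \emph{not} independent in $\M$; nothing in the program penalizes them. A concrete witness comes from the Nash Social Welfare encoding in the paper's applications section: take $d$ agents and $d$ goods, good $1$ worth $M$ to every agent and goods $2,\dots,d$ worth $1$ to every agent, with vectors $\sqrt{u_i(j)}\,\be_i$ and the partition matroid assigning each good to one agent. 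The integral optimum is $M$, while the optimum of your program is at least $(M/d)^d$, so the multiplicative gap grows like $M^{d-1}/d^d$ and is unbounded even for fixed $d$. This is precisely why the paper works with the min--max relaxation $\sup_{\bx\in\P(\M)}\inf_{\bz\in\Z}\log\det\left(\sum_i x_ie^{z_i}\bv_i\bv_i^\top\right)$ with $\Z=\{\bz: z(S)\ge0\ \forall S\in\I_d(\M)\}$: the inner infimum over $\bz$ drives the weight of non-independent $d$-subsets to zero while keeping independent ones bounded below, and it is this relaxation that is shown to be within $(O(d))^d$ of $\OPT$.

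Even granting a sparsity lemma, your downstream reduction does not go through. The paper emphasizes that, because the objective is non-linear, the coordinates with $x_i^\star=1$ cannot simply be frozen into the solution: its randomized rounding drops each such element with probability $1/2$, and its deterministic algorithm removes elements one at a time while re-solving the relaxation. Your proposed factorization into ``the exact contribution $\det(\sum_{j\in T}\bv_j\bv_j^\top)$ of the integer part'' times a residual determinant is not valid: that determinant is typically $0$ when the vectors indexed by $T$ do not span $\R^d$, and even when they do, the determinant of a sum does not split as such a product. Also note that the paper's sparsity result is proved for the min--max program, not for $g^\star$: it needs the optimal inner $\bz^\star$, the matroid of $\bz^\star$-minimum-weight bases, and an auxiliary feasibility LP that additionally preserves the matrix $\sum_i x_ie^{z_i^\star}\bv_i\bv_i^\top$, on top of the uncrossing and $\binom{d+1}{2}$-dimension count you describe. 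Your gradient-based count for $g^\star$ resembles known sparsity arguments for the uniform-matroid case and may be salvageable, but it cannot rescue the approach, because the relaxation itself is the bottleneck.
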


As outlined earlier, an approximation factor depending only on $d$ cannot be obtained by rounding an arbitrary optimal solution to any of the known convex relaxations of the problem. Our work introduces two key ideas to bypass this bottleneck. First, we show that there always exists an optimal \emph{sparse} fractional solution to a particular convex programming relaxation.
In particular, we show that there always exists an optimal solution with no more than $O(d^2)$ fractional variables out of a total of $n$ variables. The proof of this fact relies crucially on the first order optimality conditions of the convex program. A straightforward presentation of the first order optimality conditions leads to a system of (exponentially many) \emph{non-linear} constraints over an exponential number of variables. We interpret these constraints using matroid theory and reformulate them as a system of (exponentially many) \emph{linear} inequalities. Then, we apply combinatorial optimization techniques such as uncrossing in order to show that any basic feasible solution to the system of inequalities must be sparse, again using the inherent matroid structure of the linear constraints.

Second, we give a new randomized algorithm that rounds such a sparse solution for any matroid, giving  the desired result. Our algorithm crucially uses the near-integral structure of optimal solutions, and thus differs significantly from previous rounding algorithms, which are oblivious to any such structure. The main challenge in the design of the algorithm is that the non-linearity of the objective function implies that even an integral variable cannot be included in the solution with probability 1. Our rounding proceeds in two phases: we first randomly round the fractional variables, and then we randomly choose which of the integral variables to include in a solution, while maintaining feasibility. We again rely on matroid theory to show that the random solution obtained has  large objective value in expectation.

This combination of techniques from convex optimization and matroid theory, which we use in order to find a sparse optimal solution of a convex program with exponentially many constraints, appears to be novel and may be of independent interest.

We also consider the special case of partition matroids due to its significant applications and note that an improved approximation algorithm can be obtained for this case. We observe that the roadblock in achieving an approximation factor independent of $k$ for general matroids does not appear in the case of partition matroids. Thus, the standard randomized rounding algorithm also achieves $e^{O(d)}$-approximation by generalizing the results on Nash Social Welfare in~\cite{anari2018nash}. We include the proof in Theorem~\ref{thm:partition} in the Appendix for completeness.

\paragraph{Deterministic Algorithms.} A challenge for the \DetMax{}  problem under a general matroid constraint has been the lack of \emph{true} approximation algorithms that achieve the same guarantees as the estimation algorithms. Most results~\cite{NikolovS16,anari2017generalization,AnariGV18,AnariLGV19,straszak2017real} give randomized algorithms whose guarantees hold in expectation and are not known to hold with high probability or deterministically. The few existing efficient algorithms with high probability or deterministic guarantees either work only for restricted classes of matroids, such as uniform matroids~\cite{Nikolov15,allen2017near,SinghX18} or partition matroids with a constant number of parts~\cite{CelisDKSV17}, or rely on special structure of the input vectors (or both)~\cite{anari2016nash,cole2016convex,celis2016fair,barman2018finding}. Ebrahimi, Straszak and Vishnoi~\cite{Ebrahimi17} gave the most general algorithmic results that apply to all regular matroids, but the approximation factors they achieved depend on the size of the ground set and not just the dimension of vectors, as aimed in our work.

We utilize the existence of sparse optimal solutions to our convex programming relaxation to give an efficient deterministic algorithm achieving an approximation factor that only depends on the dimension $d$ of the vectors, and not on the size $k$ of the output set or the size $n$ of the input.

\begin{theorem}\label{thm:deterministic}
  There is a polynomial time deterministic algorithm for the \DetMax{} problem that gives an $\left(O(d)\right)^{d^3}$-approximation.
\end{theorem}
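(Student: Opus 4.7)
The plan is to derandomize the estimation algorithm of Theorem~\ref{thm:main} by conditional estimation, using the sparse fractional optimum as a seed. First I would solve the convex program from Theorem~\ref{thm:main} once to obtain a fractional optimum $x^*$ with at most $O(d^2)$ fractional coordinates (invoking the sparsity result established earlier in the paper). The integer coordinates of $x^*$ are fixed immediately: every $i$ with $x_i^* = 1$ enters a partial selection $T$, every $i$ with $x_i^* = 0$ is discarded, and the matroid $\M$ is replaced by the corresponding minor. Let $F \subseteq [n]$ denote the remaining fractional coordinates, so $|F| = O(d^2)$.

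Next, I would process the elements of $F$ one at a time. For the current residual matroid $\M'$, partial selection $T'$, and base matrix $M_0 = \sum_{j \in T'}\bv_j\bv_j^\top$, pick any $i \in F$ not yet processed and form two subproblems: the contracted one $(\M'/i,\ T' \cup \{i\})$, and the deleted one $(\M'\setminus i,\ T')$. Apply the convex program of Theorem~\ref{thm:main} to each subproblem to obtain estimates $E_+$ and $E_-$, and branch to whichever side has the larger estimate.

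For the analysis, write $\alpha = (O(d))^d$ for the estimation factor. For any subproblem $P$, the relaxation value satisfies $\OPT(P) \le E(P) \le \alpha\cdot\OPT(P)$. Any optimal basis of the parent problem either contains $i$ or not, so $\OPT = \max(\OPT_+,\OPT_-)$; if we pick the larger estimate (say $E_+ \ge E_-$) then
\[
\OPT_+ \;\ge\; E_+/\alpha \;\ge\; E_-/\alpha \;\ge\; \OPT_-/\alpha,
\]
and hence $\OPT_+ \ge \OPT/\alpha$. Compounding this per-step loss over the $O(d^2)$ rounding steps yields an integral output of value at least $\OPT/\alpha^{O(d^2)} = \OPT/(O(d))^{O(d^3)}$, matching the claimed approximation ratio.

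The main obstacle is that at intermediate steps the objective is the augmented determinant $\det(M_0 + \sum_{i \in S'}\bv_i\bv_i^\top)$, while the convex program of Theorem~\ref{thm:main} is stated for the pure objective $\det(\sum_{i\in S}\bv_i\bv_i^\top)$. I would bridge this either by verifying that the proof of Theorem~\ref{thm:main} extends to the augmented form, or by reducing to standard form via factoring $M_0$ as a sum of rank-one terms $\mathbf{u}_\ell\mathbf{u}_\ell^\top$ and appending them to the ground set as coloops in an enlarged matroid that forces them into every basis. A secondary point is running time: each step requires $O(1)$ CP solves and $|F| = O(d^2)$ steps suffice, so the algorithm runs in polynomial time given standard matroid oracles.
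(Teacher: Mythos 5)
Your overall template (use the $O(d^2)$-sparse fractional optimum to bound the number of search steps, and charge an $(O(d))^{d}$ loss per step via the estimation guarantee) is the same as the paper's, and your per-branch inequality $\OPT_+ \ge E_+/\alpha \ge E_-/\alpha \ge \OPT_-/\alpha$ is sound. But there is a genuine gap at the very first step: you permanently commit every element with $x_i^\star=1$ to the output. Nothing in the paper's machinery (or in your proposal) guarantees that a basis containing \emph{all} of $R_2=\{i:x_i^\star=1\}$ has value within $(O(d))^{\poly(d)}$ of $\OPT$. The existence of a good integral solution is certified only through the randomized rounding, and that rounding deliberately \emph{drops} each element of $R_2$ with probability $1/2$ (the paper emphasizes that, because the objective is non-linear, "even an integral variable cannot be included in the solution with probability 1"); the guaranteed high-value basis $T$ satisfies $T\subseteq R_1\cup R_2$ but need not contain $R_2$. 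So after your first commitment the quantity you call $\OPT$ of the residual problem may already be far below the true $\OPT$, and no subsequent branching can recover it. You cannot instead branch on the elements of $R_2$, since $|R_2|$ can be as large as $k$ and the loss would become $\alpha^{k}$. Closing this gap would require a new lemma asserting that some near-optimal basis contains all coordinates that equal $1$ in the sparse optimum, which you do not prove.

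The paper's Algorithm~\ref{alg:deterministic} avoids exactly this trap by never contracting: it initializes $V=\supp(\bx)$ (size at most $k+O(d^2)$) and only \emph{deletes} elements, one per iteration, choosing the deletion $j$ (among those preserving the rank) that maximizes $\CP(V\setminus\{j\})$. Since the rounding guarantee supplies \emph{some} basis $S\subseteq V$ with $\det(\sum_{e\in S}\bv_e\bv_e^\top)\ge (2e^5d)^{-d}\exp(\CP(V))$ and $V\notin\I$ forces some $j\in V\setminus S$, the CP value drops by at most $(2e^5d)^{-d}$ per deletion; after $O(d^2)$ deletions $V$ is itself a basis and the relaxation is exact. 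This also sidesteps your secondary unresolved issue of re-solving the relaxation for an augmented objective $\det(M_0+\sum_{i\in S'}\bv_i\bv_i^\top)$: every subproblem in the paper is a plain restriction $\M_{|V}$ with the original objective. (Your coloop reduction for the augmented objective is plausible, but it is an additional construction you would still need to verify; the deletion-only scheme makes it unnecessary.)
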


The above result is achieved by using the optimal objective value of the convex program as an estimate of the value of an optimal solution, and reducing the search problem of finding an approximately optimal solution to estimation. We have shown that some optimal solution to the convex program has at most $O(d^2)$ fractional variables, and, therefore, has support of size $k + O(d^2)$. Then, producing a feasible solution (which has size $k$) requires finding $O(d^2)$ elements of the support of the optimal solution to exclude from the solution: the remaining $k$ elements form the output. Thus, the sparsity allows us to argue that the estimation problem needs to be recursively solved only $O(d^2)$ times, which is crucial in guaranteeing an approximation factor that depends only on $d$.

We remark the guarantee is worse than is achieved (in expectation) by the randomized algorithm.
Obtaining true approximation algorithms that match the performance of the estimation algorithms remains a challenging open problem for the \DetMax{} problem under a general matroid constraint, even in the case of a partition constraint.


\subsection{Applications}
As mentioned earlier, \DetMax{} models problems in many different areas and our results imply new approximations for many of these problems.  We give details for some of them below.

\paragraph{Experimental Design.} In the optimal experimental design problem for linear models, the goal is to infer an unknown $\btheta^\star \in \R^d$ from a possible set of linear measurements of the form $y_i=\bv_i^\top \btheta^\star +\eta_i$. Here, $\bv_1,\dots,\bv_n \in \R^d$ are known vectors,  and $\eta_1,\dots,\eta_n$ are independent Gaussian noises with mean $0$ and variance $1$. In some settings, performing all of the $n$ measurements might be infeasible, and  combinatorial constraints such as matroid constraints can be used to define the
feasible sets of measurements. Given a set $S\subseteq [n]$ of measurements, an estimator $\widehat{\btheta}$ for $\btheta^*$ is obtained via solving the least squares regression problem $\min_{\btheta \in \R^d} \sum_{i\in S}(y_i-\bv_i^\top\btheta)^2$. The error $\widehat{\btheta}-\btheta^\star$ is distributed
as a $d$-dimensional Gaussian $N\left(0,\left(\sum_{i\in S} \bv_i \bv_i^\top \right)^{-1}\right)$. Minimizing the volume of the confidence ellipsoid, or equivalently the determinant of the covariance matrix of the error, is referred to as $D$-optimal design in statistics~\cite{pukelsheim2006optimal}. Our results directly imply improved approximability for  $D$-optimal design  under a general matroid constraint.

\paragraph{Nash Social Welfare.}
In the indivisible goods allocation problem the goal is to allocate, i.e. partition, $m$ goods among $d$ agents so that some notion of social welfare and/or fairness is achieved. Each agent $i$ has  utility $u_i(j)$ for good $j \in [m]$, and if $S_i$ are the goods assigned to agent $i$, then her utility is $u_i(S_i) = \sum_{j   \in S_i}{u_i(j)}$.
A well studied objective in this context is Nash social welfare (NSW), which asks to maximize $\left(\prod_{i =     1}^du_i(S_i)\right)^{1/d}$. This objective interpolates between maximally efficient and maximally egalitarian allocations  -- see ~\cite{moulin,CKMPSW16} for more extensive background. Maximizing the NSW can be formulated as an instance of \DetMax{} under a partition constraint, as observed in~\cite{anari2016nash}. For each agent $i$ and good $j$, we create a vector $\bv_{(i,j)} = \sqrt{u_i(j)} \be_i$, where $\be_i$ is the $i$-th standard basis vector of $\R^d$, and form a partition matroid $\M$ whose bases $\B$ consist of all sets $S \subseteq [d] \times [m]$ such that $|\{i: (i,j) \in S\}| = 1$ for all $j \in [m]$. Then, a feasible solution $S \in \B$ corresponds to an allocation of the goods, and the determinant $\det\left(\sum_{(i,j)  \in S} \bv_{(i,j)} \bv_{(i,j)}^\top\right)$ is equal to the NSW objective. Our results recover those in~\cite{anari2016nash} and further allow us to give an $O(d)$-estimation algorithm when the allocation $(S_1, \ldots, S_d)$ is required to satisfy additional matroid constraints.  For example, the works~\cite{GourvesMT13,GourvesMT14,GourvesM19} considered allocations such that $\bigcup_{i = 1}^d S_i$ is a basis of a matroid $\M'$. We can model this setting by defining our constraint matroid $\M$ so that $S \subseteq [d] \times [m]$ is a basis of $\M$ if and only if $|\{i: (i,j) \in S\}| = 1$ for all $j \in [m]$ and $\{j: \exists i\text{ s.t. } (i,j) \in S\}$ is a basis of $\M'$. Our results then imply an $O(d)$-estimation algorithm and an $O(d)^{d^2}$-approximation algorithm for maximizing NSW subject to these general matroid constraints.


\paragraph{Network Design Problems.}
In general, the goal in network design problems is to pick a subset $F$ of the edges of an undirected graph $G=(V,E)$ with non-negative edge weights $w$ such that the subgraph $H=(V,F)$ is \emph{well-connected}. One measure of  connectivity is to maximize the total weight of spanning trees in $H=(V,F)$, where the weight of a tree is defined as the product of the weights of its edges (see~\cite{LiPYZ19} and references therein for other applications). This natural network design problem is a special case of the \DetMax{} problem. For each $(i,j) \in E$, we introduce a vector $\bv_{(i,j)}\in \{0,1,-1\}^V$  with $(\bv_{(i,j)})_i=\sqrt{w_{(i,j)}}$, $(\bv_{(i,j)})_j=-\sqrt{w_{(i,j)}}$, and the rest of the coordinates set to zero. Observe that $\sum_{e\in F} \bv_e \bv_e^\top$ is exactly the Laplacian of $H=(V,F)$, and the determinant of the Laplacian\footnote{We remark that the Laplacian is always singular, but we can first project the vectors $\bv_e$ orthogonal to the all-ones vector and take the determinant in $d-1$ dimensions.}
~gives the number of spanning trees in $H$. Our results imply an $O(|V|)^{|V|}$-estimation algorithm, and $O(|V|)^{|V|^3}$-approximation algorithm for this problem under a general matroid constraint.

\subsection{Technical Overview}

Our starting point is a variant of the convex relaxation introduced in~\cite{NikolovS16} for the partition matroid. Let the set of input vectors be $V=\{\bv_1,\dots,\bv_n\}\subset \R^d$.  For a matroid \(\M=([n],\I)\), we denote by
\(\I_s(\M):=\set{S\in\I:|S|=s}\) the set of all independent sets of
size \(s\). We denote by \(\P(\M)\) the matroid base polytope of
\(\M\), which is the convex hull of the indicator vectors of the bases. For any vector \(\bz\in\R^{n}\)
and a subset \(S\subseteq [n]\), we let \(z(S):=\sum_{i\in S}
z_i\). We let $\Z := \{ \bz \in \R^{n} : \forall S \in \I_d(\M), z(S) \geq
0\}$. Our convex relaxation is
\begin{equation}
\sup_{\bx\in\P(\M)}  \inf_{\bz\in\Z} g(\bx,\bz) := \log\det\pr{\sum_{i \in [n]} x_ie^{z_i} \bv_i \bv_i^\top}. \label{eq:obj-CP-intro}
\end{equation}
For ease of notation, we define $f(\bx):=\inf_{\bz\in\Z} g(\bx,\bz)$, the inner infimum of \eqref{eq:obj-CP-intro}.

 Similar but somewhat different convex programs have been studied by ~\cite{anari2016nash,anari2017generalization,straszak2017real,straszak2017belief}. (The relationship of our convex program to these also plays a crucial role in our analysis: see below.) The estimation algorithms in these works rely on a simple randomized algorithm to round a fractional optimal solution $\bx^\star$. The analysis of the algorithm relies on a positive correlation property: the algorithm outputs a random solution such that all elements of an independent set $S$ of size $d$ are included with probability at least $\frac{1}{\alpha} \cdot \prod_{i \in S} x_i^*$, where $\alpha$ is some function of $k$. This property, combined with inequalities for real stable and completely log-concave polynomials, leads to an $\alpha \cdot e^{O(d)}$-estimation algorithm. We show that there exist fractional optimal solutions $\bx^\star$ such that no rounding scheme has this positive correlation property for any $\alpha$ which is a function of $d$ and independent of $k$. So, the dependence on $k$ is inherent to all the previous algorithms which round an arbitrary optimal solution $\bx^\star$ and do not consider the structure of the vectors to obtain some structure on the optimal $\bx^\star$.


Our first technical result is to show that there always exists an optimal solution that has at most $O(d^2)$ fractional variables. We briefly describe how to obtain such a sparse optimal solution. Let $\bx^\star$ denote an optimal solution to the convex program (similar reasoning works for near optimal solutions as well). We first show that, using a series of careful preprocessing steps, we can assume that there exists a $\bz^\star$ attaining the infimum in $f(\bx^\star) = \inf_{\bz\in \Z} g(\bx^\star,\bz)$. We then use first order optimality conditions that give a sufficient condition for another solution $\bx$ to be optimal (i.e., to have $f(\bx) = f(\bx^\star)$). These conditions, however, present two significant  obstacles: first, the conditions are not linear in $\bx$, and, second, they ask for the existence of an exponentially sized dual solution as a certificate of optimality. We address the first problem by noticing that insisting that the entire matrix $\pr{\sum_{i \in [n]} x_i^\star e^{z_i} \bv_i \bv_i^\top}$ does not change when $\bx^\star$ changes to $\bx$ leads to the optimality conditions becoming a system of linear equations in exponentially many variables. We then use the simple, yet elegant fact from matroid theory that minimum weight bases of a matroid under a linear weight function form the base set of another matroid. We use this combinatorial fact to observe that the existence of the exponentially sized dual solution is equivalent to insisting that a vector, whose coordinates are linear functions of $\bx$, is in the base polytope of a new matroid. Putting all of this together reduces the search for the new optimal solution $\bx$ to solving a system of exponentially many linear inequalities. Now, in the familiar territory of matroid polytopes, we apply standard uncrossing methods and show that every extreme point solution of the system of these linear inequalities has only $O(d^2)$ fractional variables.

Finally, we give a new randomized algorithm that gives an $O(d)^d$-estimation algorithm in the presence of $O(d^2)$ fractional variables. Since the objective is non-linear, we cannot just pick all variables set to 1 and apply a randomized algorithm to fractional elements. Indeed, the variables set to 1 must also be dropped from the final solution with certain probability. We show that given a solution $\bx$ with at most $O(d^2)$ fractional values, our rounding scheme outputs a random solution such that for any independent set $S$ of size $d$, all elements of $S$ are picked with probability at least $\pr{O(d)}^{-d}\prod_{i \in S} x_i$. To show that this property implies the random solution output by the algorithm achieves an $O(d)^d$ approximation in expectation, we utilize recent and exciting work on strongly log-concave polynomials~\cite{AnariGV18,AnariLGV19} and the equivalent notion of Lorentzian polynomials~\cite{branden2019lorentzian}. While the analysis using strongly log-concave distributions naturally utilizes a different convex programming relaxation introduced in \cite{anari2017generalization}, the aforementioned sparsity result is not applicable to these convex programs. To this end, we show that the convex programming relaxation considered in our work is stronger than the convex programming relaxation from \cite{anari2017generalization}. The relationship between the various convex programs for this problem and their respective strengths and weaknesses outlined by our results may be of independent interest.

\subsection{Related Work}

Below we given an overview of prior work on the \DetMax{} problem, which has been studied in many special cases.

\paragraph{Uniform Matroid:} \DetMax{} is NP-hard even for a uniform matroid ~\cite{Welch82}. Koutis~\cite{Koutis06} showed that there exists a constant $c>0$ such that it is NP-hard to achieve approximation better than a factor of $(1+c)^d$ when $k \le \beta d$ for some constant $\beta < 1$, and Di Summa et al.~\cite{SummaEFM15} extended this hardness result to $k=d$.
Bouhtou \etal ~\cite{BouhtouGS10} gave an $\pr{\frac{n}{k}}^d$-approximation algorithm based on rounding the solution of a natural convex relaxation. Nikolov~\cite{Nikolov15} improved the result to an $e^{k}$-approximation when $k \leq d$. Wang \etal~\cite{WangYS16} improved the approximation ratio to $(1+\epsilon)^d$ when $k \geq \frac{d^2}{\epsilon}$. Allen-Zhu \etal~\cite{ZeyuanYAY17} improved the bound on $k$ to give $(1+\epsilon)^d$-approximation when $k = \Omega\pr{\frac{d}{\epsilon^2}}$ and showed the existence
of a sparse optimal solution for the standard convex relaxation. This was improved by Singh and Xie~\cite{SinghX18} who gave a $(1+\epsilon)^d$-approximation when $k = \Omega\pr{\frac{d}{\epsilon} + \frac{1}{\epsilon^2} \log \frac{1}{\epsilon}}$. Recently, this was improved by Madan \etal~\cite{MadanSTU19} who gave a $(1+\epsilon)^d$-approximation when $k \geq d + \frac{d}{\epsilon}$.

\paragraph{General Matroid:} Nikolov and Singh~\cite{NikolovS16} gave an $e^d$-estimation algorithm for \DetMax{} under a partition matroid of rank $d$.  
Straszak and Vishnoi~\cite{straszak2017real} gave an $O(e^n)$-estimation (where $n$ is the size of the ground set), and Anari and Gharan~\cite{anari2017generalization} gave an $e^{2k}$-estimation when the generating polynomial for the matroid is real-stable. This corresponds to  Strongly Rayleigh matroids which include  uniform and partition matroids. These results were generalized by Anari, Gharan, and Vinzant~\cite{AnariGV18} who gave an $e^{2k}$-estimation for a general matroid.\footnote{While the result in~\cite{AnariGV18} is not stated for $k>d$, it can be easily deduced from the analysis.}
Algorithms in~\cite{NikolovS16,anari2017generalization,AnariGV18} estimate the optimum value within a certain approximation factor, but they do not yield an approximate solution with high probability in polynomial time. For partition and regular matroids of rank $k \leq d$, Ebrahimi, Straszak, and Vishnoi~\cite{Ebrahimi17}, using anti-concentration inequalities, gave efficient approximation algorithms with high probability guarantees. These are the most general algorithmic approximation results known for the \DetMax{} problem. Their guarantees on the approximation factor, however, are worse than the estimation algorithms and depend on the size of the ground set.

\paragraph{Experimental Design:} In the experimental design literature, several different objective functions are studied, which lead to different optimization problems. Apart from $D$-optimal design, two of the most notable problems are $A$-optimal design and $E$-optimal design. In $A$-optimal design, the objective is to minimize the trace of the covariance matrix: $\min_{S \in \B} \tr\pr{\pr{\sum_{i \in S} v_i v_i^\top}^{-1}}$. In $E$-optimal design, the objective is to minimize the maximum eigenvalue of the covariance matrix: $\min_{S \in \B}\lambda_{\max}\pr{\pr{\sum_{i \in S} v_i v_i^\top}^{-1}}$. There have been a series of works on both of these problems in the uniform matroid setting~\cite{avron2013faster,WangYS16,nikolov2018proportional,ZeyuanYAY17,MadanSTU19}. The current best results are $(1+\epsilon)$-approximation for $A$-design  when 
$k\geq \Omega\pr{\frac{d}{\epsilon} + \frac{1}{\epsilon^2} \log \frac{1}{\epsilon}}$~\cite{nikolov2018proportional} and $(1+\epsilon)$-approximation for $E$-design when 
$k\geq \Omega\pr{\frac{d}{\epsilon^2}}$~\cite{ZeyuanYAY17}.

\paragraph{Nash Social Welfare:} Cole and Gkatzelis~\cite{CG15} gave the first constant factor approximation algorithm for the Nash Social Welfare problem, achieving an approximation factor of $(2e^{1/e})$. This result was subsequently improved in a series of papers~\cite{anari2016nash,cole2016convex,barman2018finding} with the current best approximation ratio being $1.45$.

\paragraph{Completely Log Concave Polynomials:} The theory of completely log concave polynomials introduced in ~\cite{AnariGV18,AnariLGV19} (see also ~\cite{Gurvits09-Newton,branden2019lorentzian}) plays an important role in the analysis of algorithms for the \DetMax{} problem. These results build on the use of stable polynomials in the analysis of algorithms in~\cite{NikolovS16,anari2016nash,anari2017generalization,straszak2017real}, themselves building on the results by Gurvits~\cite{Gur06}.

\paragraph{Sparsity and Fractionality in Convex Programs.}
Bounding the number of fractional variables,  the sparsity of optimal solutions of convex programs, and, in particular, of convex relaxations of discrete problems is a powerful technique which appears in many different contexts. In combinatorics and geometry, early examples can be found in the proof of the Beck-Fiala theorem in discrepancy theory~\cite{beck1981integer} and in work of Barany, Grinberg, and Sevastyanov~\cite{Sevast78,GrinbergS80,BaranyG81}. A survey of these results is given by Barany~\cite{Barany08}. In approximation algorithms, an early example is the Karmakar-Karp approximation algorithm for the bin packing problem~\cite{KarmarkarK82}. Bounding the sparsity and fractionality of optimal basic feasible solutions to linear programs is the basis of the iterative rounding method in approximation algorithms, introduced by Jain~\cite{jain2001factor}. The book~\cite{lau2011iterative} gives many results derived from this method. Bounding the sparsity of basic feasible solutions is also key to the linear programming approach in compressed sensing~\cite{CandesTao05}. Related results are known for the matrix completion problem, where sparsity is defined in terms of matrix rank and the corresponding optimization problem is non-linear~\cite{CandesTao10}. Sparsity of optimal solutions of non-linear convex programs appears to be, however, underexplored in general.

\subsection{Organization}
In Section~\ref{sec:solvability}, we discuss our convex relaxation, some technical issues in solving the relaxation, our main technical result, and the first order optimality conditions for the relaxation. In Section~\ref{sec:sparse}, we show the existence of an optimal solution with at most $O(d^2)$ fractional values. In Section~\ref{sec:rounding}, we give the randomized algorithm to round a solution of the relaxation with few fractional values. In Section~\ref{sec:deterministic}, we give our deterministic approximation algorithm that gives a guarantee that only depends on $d$. In Appendix~\ref{sec:prelim}, we discussed some of the definitions and preliminaries related to matroids, log-concavity, and real stability. In Appendix~\ref{sec:preprocessing}, we discuss the preprocessing of a given instance so that the convex relaxation is solvable and the inner infimum is achieved. In Appendix~\ref{sec:KKT}, we derive optimality conditions for our convex relaxation. In Appendices~\ref{sec:proof-sparse}~and~\ref{sec:app-rounding}, we give missing proofs from Sections~\ref{sec:sparse}~and~\ref{sec:rounding}, respectively.
In Appendix~\ref{sec:no_pc_dist}, we show an example proving that none of the previous approaches can achieve an approximation factor independent of $k$. In Appendix~\ref{sec:partition_approx}, we give an improved approximation algorithm for \DetMax\ under a partition matroid.

\section{Convex Program and Optimality Conditions}\label{sec:solvability}

Our algorithm for \DetMax{} under a general matroid constraint   is
based on solving a convex relaxation and rounding
an optimal solution of the convex relaxation to an integral solution. In this section,
we formulate this convex relaxation, show that it is efficiently
solvable, and prove some of its properties which are crucial for the
rounding algorithm.

\subsection{Formulation of the Convex Program}

Let $V=\{\bv_1,\dots,\bv_n\}$ be input vectors.  For a matroid
\(\M=([n],\I)\), we denote by
\(\I_s(\M):=\set{S\in\I:|S|=s}\) the set of all independent sets of
size \(s\). We denote by \(\P(\M)\) the matroid base polytope of
\(\M\), which is the convex hull of all of the bases. We include some basic preliminaries on matroids in Appendix~\ref{sec:matroidprelim}. For any vector \(\bz\in\R^{n}\) of real numbers
and a subset \(S\subseteq [n]\), we let \(z(S):=\sum_{i\in S}
z_i\). We let $\Z := \{ \bz \in \R^{n} : \forall S \in \I_d(\M), z(S) \geq
0\}$. We  introduce the optimization problem
\begin{equation} \label{eq:D-relax}
\sup_{\bx\in\P(\M)}  \inf_{\bz\in\Z} g(\bx,\bz) := \log\det\pr{\sum_{i \in [n]} x_ie^{z_i} \bv_i \bv_i^\top}.
\end{equation}
For ease of notation, we also let $f(\bx):=\inf_{\bz\in\Z} g(\bx,\bz)$, the inner infimum of \eqref{eq:D-relax}.
The above program is a convex relaxation, as shown in Nikolov and Singh~\cite{NikolovS16}. We include a proof for completeness in Lemma~\ref{lem:relaxation} in the Appendix. Unfortunately, it is not clear whether the outer supremum and  inner
infimum are attained at some \(\bx^\star\) and finite \(\bz^\star\). While the supremum over $\bx$ can be
approximated, our approach relies crucially on the inner infimum being
achieved exactly at some finite \(\bz^\star\). We first show the following technical lemma that gives a
sufficient condition for the infimum to be achieved based on KKT
conditions and Slater's qualification of constraints. We
say that the vectors $\{\bv_i: i\in [n]\}\subseteq \rR^d$ are in \emph{general
  position} if any subset of size $d$ is linearly independent.

\begin{lemma} \label{lem:finite}
   Let \(\bx\in\P(\M)\) be such that $\max_{i \in [n]} x_i < 1$ and 
  \(f( \bx)=\inf_{\bz\in\Z} g( \bx,\bz)\) is finite, and suppose that
  the vectors $\{\bv_i: i\in [n]\}$ are in general position. Then,
  \(g( \bx,\bz)\) attains its infimum over \(\bz\in\Z\) at some
  \(\bz^*\in\Z\).
\end{lemma}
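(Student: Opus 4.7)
My plan is to verify Slater's condition and then argue that the sublevel sets of $g(\bx,\cdot)$ restricted to $\Z$ are bounded, so the infimum is attained by the Weierstrass extreme value theorem. Slater's condition holds trivially, since $\mathbf{1}\in\Z$ satisfies $z(S)=d>0$ for every $S\in\I_d(\M)$, hence $\Z$ has nonempty interior. Together with convexity of $g$ in $\bz$ and finiteness of $f(\bx)$, this is the setting where standard KKT theory and strong duality apply.

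The first reduction is that $g(\bx,\bz)$ depends only on $z_i$ for $i\in T:=\supp{\bx}$. For any $\bw\in\R^T$, we may extend to $\bz\in\R^n\cap\Z$ by setting $z_i=L$ for $i\notin T$ with $L$ sufficiently large that $z(S)\geq 0$ whenever $S\not\subseteq T$. The problem therefore reduces to minimizing $g(\bx,\bw)$ over $\bw\in\R^T$ subject to $w(S)\geq 0$ for all $S\in\I_d(\M)\cap\binom{T}{d}$. The hypothesis $\max_i x_i<1$ rules out $\bx$ being a vertex of $\P(\M)$, so $\bx$ is a strict convex combination of at least two distinct bases and $T$ is their union; in particular $|T|\geq k+1\geq d+1$, and every $i\in T$ lies in some size-$d$ independent set contained in $T$.

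Next, I would invoke Cauchy--Binet together with general position to write
\[
\det\!\Big(\sum_{i\in T} x_i e^{w_i}\bv_i\bv_i^\top\Big)\;=\;\sum_{S\in\binom{T}{d}} a_S\, e^{w(S)},\qquad a_S=\Big(\prod_{i\in S}x_i\Big)(\det V_S)^2>0.
\]
The bound $g(\bx,\bw)\leq c$ then forces $w(S)\leq c-\log a_S$ for \emph{every} size-$d$ subset of $T$ (not just the independent ones), while feasibility gives $w(S)\geq 0$ for $S\in\I_d(\M)\cap\binom{T}{d}$. Using the matroid exchange axiom, for two size-$d$ independent sets $S_1,S_2$ differing in a single pair $(i,j)$, we get $w_i-w_j=w(S_1)-w(S_2)$, which is bounded. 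This pins coordinate differences within each connected component of $\M|_T$, and the upper bounds on $w(S)$ for size-$d$ subsets crossing components then control the shift across components. Combined with $0\leq w(S)\leq c_S$ for any single independent $S$, this forces each $w_i$ into a bounded interval.

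The main obstacle I expect is the combinatorial step of extracting pointwise bounds on $w_i$ from the system of ``linear'' constraints on $w(S)$, particularly when $\M|_T$ is disconnected. This is where the assumption $\max_i x_i<1$ is essential: it guarantees that $T$ strictly contains a basis, giving the exchange slack needed to bound the cross-component shifts using the Cauchy--Binet upper bounds on all of $\binom{T}{d}$. Once boundedness is in hand, any minimizing sequence $\{\bw^{(n)}\}$ has a convergent subsequence by Bolzano--Weierstrass, its limit $\bw^*$ lies in the closed feasible region, and achieves $f(\bx)$ by continuity of $g$. Extending by $z_i^*=L$ for $i\notin T$ produces the desired finite $\bz^*\in\Z$ attaining the infimum, which completes the proof.
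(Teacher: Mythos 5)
Your route is genuinely different from the paper's. The paper never argues compactness of sublevel sets: it identifies $\inf_{\bz\in\Z}g(\bx,\bz)$ as the Lagrangian \emph{dual} of an entropy-minimization problem over distributions on $d$-subsets of $\supp{\bx}$ (Lemma~\ref{lm:finite-inf}), verifies Slater's condition for \emph{that} problem --- which is where $\max_i x_i<1$ and general position enter, via the existence of a point of $\P(\I_d(\M))$ in the relative interior of the polytope of all $d$-subsets of the support --- and concludes attainment from strong duality. Your opening appeal to Slater's condition for $\inf_{\bz\in\Z}g$ itself buys nothing here: Slater for a minimization problem yields attainment of the dual optimum (existence of multipliers), not of the primal infimum, and indeed you do not use it afterwards. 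Your actual engine is a direct boundedness-plus-Weierstrass argument, and you have correctly isolated the two sources of control: Cauchy--Binet upper bounds $w(S)\le C$ on \emph{all} $d$-subsets of $T$ (exactly where general position is needed) and lower bounds $w(S)\ge 0$ on the independent ones.

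The gap is the step that converts these set-wise bounds into pointwise bounds on the $w_i$. ``Coordinate differences within each connected component of $\M|_T$'' and ``shifts across components'' are asserted, not proved: the exchange pairs realizable among independent $d$-sets inside $T$ are governed by the rank-$d$ truncation of $\M|_T$, not by $\M|_T$, and you give no argument that these exchanges reach every pair of elements or that the cross-component upper bounds suffice. Moreover, you extract from $\max_i x_i<1$ only that $|T|\ge k+1$, which is weaker than what is needed; the fact you actually need (Claim~\ref{clm:i-in-notin-S} in the paper) is that for every $i\in T$ there exist independent $d$-sets $S,S'\subseteq T$ with $i\in S$ and $i\notin S'$. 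With that in hand, boundedness closes cleanly with no component analysis: if $w_{i^*}$ is the maximum coordinate, take an independent $S'\subseteq T$ with $i^*\notin S'$ and $j\in S'$ minimizing $w_j$, so $w_j\le w(S')/d$; then $(S'\setminus\{j\})\cup\{i^*\}$ is a $d$-subset of $T$, hence $w_{i^*}\le C-w(S')+w_j\le C-w(S')(1-\tfrac1d)\le C$; and the lower bound $w_i\ge-(d-1)C$ follows from $w(S)\ge 0$ for an independent $S\ni i$. With that repair your proof is complete and arguably more elementary than the paper's duality argument; as written, the central combinatorial step does not stand.
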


In general, an instance of our problem may not satisfy the conditions
of the lemma: the given vectors need not be in  general position,
and every optimal $\bx$ may have value $1$ on some coordinates.  We
outline a preprocessing step in Appendix~\ref{sec:preprocessing} to
show that both of these assumptions can be made with a slight loss in
optimality by modifying the input instance. This is achieved by
modifying the matroid by introducing two parallel copies of each
element as well as perturbing the vectors slightly to put them in
general position. From here on, we assume that these modifications
have been carried out, and we use $\M$ and $V$ to denote the resulting
matroid and vectors, respectively.


These reductions allow us to formulate the following stronger convex
program where we place an additional upper bound on the coordinates of
$\bx$: 

\begin{equation} \label{eq:D-relax2}
\sup_{\bx\in\P(\M)\cap \left[0,\frac12\right]^n}  \inf_{\bz\in\Z} g(\bx,\bz):= \log\det\pr{\sum_{i \in [n]} x_ie^{z_i} \bv_i \bv_i^\top}
\end{equation}
We denote the convex program \eqref{eq:D-relax2} by \CPD{}, its optimum value by \CP, and an optimal solution by
\((\bx^\star,\bz^\star)\). We denote by $\OPT{}$, the optimal value of the
\DetMax{} problem.
Based on the discussion above, we show the
following lemma where we also outline the polynomial time solvability
of the convex program. The proof of the lemma appears in
Appendix~\ref{sec:preprocessing}.

\begin{lemma}\label{lem:alg}
  For any $\epsilon >0$, there is a polynomial time algorithm that
  returns  ${\bx^\star}\in \P(\M)\cap \left[0,\frac12\right]^n$ such
  that $\inf_{\bz\in\Z} g( \bx^\star,\bz) \geq \log
  \left(\OPT{}\right) -\epsilon$. Moreover, there exists 
  $\bz^\star$ attaining the infimum in $\inf_{\bz\in\Z} g(
  \bx^\star,\bz)$.
\end{lemma}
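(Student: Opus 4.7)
The plan is to combine ellipsoid-type convex optimization on \eqref{eq:D-relax2} with a direct invocation of Lemma~\ref{lem:finite} to settle attainment of the inner infimum. Throughout, I assume the preprocessing of Appendix~\ref{sec:preprocessing} has been applied, so that the vectors $\{\bv_i\}$ are in general position, $\OPT > 0$, and intersecting $\P(\M)$ with the box $[0,\tfrac12]^n$ changes the optimum of the relaxation by at most the target additive error $\epsilon$.

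For the algorithmic claim, I first argue the outer problem is a concave maximization. For each fixed $\bz$, $g(\bx,\bz) = \log\det\bigl(\sum_i x_i e^{z_i}\bv_i\bv_i^\top\bigr)$ is concave in $\bx$ because the matrix argument is linear in $\bx$ and $\log\det$ is concave on the PSD cone; hence $f(\bx)=\inf_{\bz\in\Z} g(\bx,\bz)$ is a pointwise infimum of concave functions, and therefore concave. The feasible set $\P(\M)\cap[0,\tfrac12]^n$ admits a polynomial-time separation oracle: the base polytope of $\M$ has one via matroid optimization, and intersection with a box is trivial. The value/subgradient oracle for $f$ at $\bx$ is obtained by solving the inner convex program $\inf_{\bz\in\Z} g(\bx,\bz)$. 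Here $g(\bx,\cdot)$ is convex in $\bz$ by the standard fact that $\log\det\bigl(\sum_i e^{z_i}A_i\bigr)$ is convex whenever each $A_i$ is PSD, and the set $\Z$, though described by exponentially many inequalities $z(S)\geq 0$ for $S\in\I_d(\M)$, has a polynomial-time separation oracle since the most violated constraint is found by computing a minimum-weight independent set of size $d$, a standard matroid computation. Coercivity of $g(\bx,\cdot)$ on $\Z$ (which follows along the same lines as Lemma~\ref{lem:finite}, using that $\bx$ charges at least one basis and the vectors are in general position) restricts the search to a polynomial-radius ball, so the ellipsoid method returns the infimum to any desired accuracy in polynomial time. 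Composing the inner run with an outer ellipsoid run on $f$ yields $\bx^\star\in\P(\M)\cap[0,\tfrac12]^n$ with $f(\bx^\star)\geq \log(\OPT)-\epsilon$ in $\poly(n,d,L,1/\epsilon)$ time, where $L$ is the input bit-complexity.

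For the attainment claim, I verify the hypotheses of Lemma~\ref{lem:finite} for the returned $\bx^\star$. By construction $\max_i x_i^\star \leq \tfrac12 < 1$, general position was arranged by preprocessing, and $f(\bx^\star)$ is finite: it is lower-bounded by $\log(\OPT)-\epsilon$, which is finite because preprocessing guarantees $\OPT>0$, and upper-bounded by $g(\bx^\star,\mathbf{0}) = \log\det\bigl(\sum_i x_i^\star \bv_i\bv_i^\top\bigr)<\infty$. Lemma~\ref{lem:finite} then produces the desired $\bz^\star\in\Z$ attaining the inner infimum.

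The main obstacle I anticipate is the quantitative bookkeeping at the interface with the preprocessing: confirming that doubling parallel elements plus the general-position perturbation distorts $\OPT$ and the relaxation value by an amount that can be folded into $\epsilon$; producing a polynomial-size bounding box for the inner minimization so that ellipsoid actually runs in polynomial time; and carrying approximate subgradients of $f$ through the saddle structure. The convexity and separation arguments themselves are routine once the convexity in each of $\bx$ and $\bz$ separately is in hand.
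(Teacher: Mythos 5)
Your proposal follows essentially the same route as the paper: preprocess (double elements, perturb to general position, intersect with the box), run an ellipsoid method on the outer concave maximization with an inner ellipsoid solving $\inf_{\bz\in\Z}g(\bx,\bz)$ via the matroid separation oracle for $\Z$, and invoke Lemma~\ref{lem:finite} for attainment; this is exactly the structure of Lemma~\ref{lem:relax} and Theorem~\ref{thm:solvability}.

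The one step that is not merely ``quantitative bookkeeping'' is the claim that coercivity of $g(\bx,\cdot)$ on $\Z$ restricts the inner search to a polynomial-radius ball. In fact $g(\bx,\cdot)$ is \emph{not} coercive on $\Z$ in general: by Cauchy--Binet, $g(\bx,\bz)=\log\sum_R c_R e^{z(R)}$, and along directions in $\Z$ one can send $z(R)\to-\infty$ for sets $R\notin\I_d(\M)$ while keeping all constraints $z(S)\ge 0$ satisfied, so $g$ stays bounded while $\|\bz\|\to\infty$ — this is precisely why Lemma~\ref{lm:finite-inf} needs the relative-interior condition rather than a compactness argument, and why Lemma~\ref{lem:finite} gives no a priori bound on $\|\bz^\star\|$. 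The paper closes this gap in Lemma~\ref{lem:bound-relax}: it rewrites $\inf_{\bz\in\Z}g(\bx,\bz)$ as $\sup_{\balpha\in\P(\I_d)}\inf_{\bz}g(\bx,\bz)-\an{\balpha,\bz}$ (Claim~\ref{claim:dual-g}), applies the bounded-domain approximation of Lemma~3.4 of~\cite{anari2017generalization} to that saddle form, and converts back, obtaining an explicit $M$ with $\log M=\poly(L,\log(1/\epsilon))$; it also bounds the outer iterate away from zero ($\bx\ge 1/M$) so that the Lipschitz constants stay $2^{\poly(L)}$. You correctly flagged the bounding box as an obstacle, but you should be aware that filling it requires this duality detour rather than a coercivity argument. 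The remainder of your outline — the separation oracles, concavity/convexity in each block, and the verification of the hypotheses of Lemma~\ref{lem:finite} for the returned $\bx^\star$ — matches the paper.
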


Our main algorithmic result is to show that the value of the convex program \CP{}
gives a good approximation of the optimal value $\OPT{}$ of the
\DetMax{} problem. The theorem below immediately implies Theorem~\ref{thm:main}.

\begin{theorem}\label{thm:main2}
The optimum value \CP{} of the convex program gives a $(2e^5d)^{d}$-approximation to the value of the optimum, i.e.,
\begin{align}
\log \left(\OPT{}\right) -\epsilon  \leq \CP{}\leq  \log \left(\OPT{}\right) + O(d\log  d). \label{eq:thm-OPT-CP}
\end{align}
Moreover, there is a polynomial time algorithm that, given $\bx^\star$
attaining $\CP{}$ and $\bz^\star$ attaining the infimum in
$\inf_{\bz\in\Z} g( \bx^\star,\bz)$, returns a random set $S\in \I$
such that
$$\E\left[\det\left(\sum_{i\in S} \bv_i \bv_i^T \right)\right]\geq (2e^5d)^{-d} \left(\OPT{}\right).$$
\end{theorem}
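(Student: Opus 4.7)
The left inequality $\log(\OPT)-\epsilon \le \CP$ is immediate from Lemma~\ref{lem:alg}, since the returned $\bx^\star$ satisfies $f(\bx^\star)\ge\log\OPT-\epsilon$ and $\CP\ge f(\bx^\star)$ by definition. For the remaining two claims, the plan is to establish the single target bound
\[
\E\Bigl[\det\Bigl(\sum_{i\in S}\bv_i\bv_i^\top\Bigr)\Bigr] \;\ge\; (2e^5 d)^{-d}\,e^{\CP},
\]
from which both the upper bound on $\CP$ and the in-expectation rounding guarantee follow: combined with $\E[\det(\sum_{i\in S}\bv_i\bv_i^\top)]\le\OPT$, it yields the rounding guarantee directly, and, combined with $e^{\CP}\ge\OPT$, it yields $\CP\le\log\OPT+d\log(2e^5 d)=\log\OPT+O(d\log d)$.

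The algorithm that delivers the target bound proceeds in three stages. First, invoke Lemma~\ref{lem:alg} to obtain an approximately optimal pair $(\bx^\star,\bz^\star)$ for $\CPD$ with $\bz^\star$ attaining the inner infimum $f(\bx^\star)=g(\bx^\star,\bz^\star)$. Second, apply the sparsification result of Section~\ref{sec:sparse} to produce $\bx^{\star\star}\in\P(\M)\cap[0,\tfrac12]^n$ satisfying $f(\bx^{\star\star})=f(\bx^\star)$ and with at most $O(d^2)$ fractional coordinates, without any loss in the CP objective. Third, feed $\bx^{\star\star}$ to the randomized rounding algorithm of Section~\ref{sec:rounding}, which — crucially exploiting the $O(d^2)$ sparsity — returns a random basis $S\in\B$ satisfying the positive correlation property
\[
\Pr[T\subseteq S] \;\ge\; (O(d))^{-d}\prod_{i\in T}x^{\star\star}_i \quad\text{for every }T\in\I_d(\M).
\]
The $d$-only (not $k$-dependent) correlation factor is precisely what sparsity buys us, and is what previous rounding schemes for general matroids provably cannot achieve (see Appendix~\ref{sec:no_pc_dist}).

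To convert the correlation bound into the expectation bound, use Cauchy--Binet to write $\det(\sum_{i\in S}\bv_i\bv_i^\top)=\sum_{T\subseteq S,\,|T|=d}\det(V_T)^2$, where $V_T$ is the $d\times d$ matrix with columns $(\bv_i)_{i\in T}$. Taking expectations gives
\[
\E\Bigl[\det\Bigl(\sum_{i\in S}\bv_i\bv_i^\top\Bigr)\Bigr] \;\ge\; (O(d))^{-d}\sum_{T\in\I_d(\M)}\Bigl(\prod_{i\in T}x^{\star\star}_i\Bigr)\det(V_T)^2,
\]
while Cauchy--Binet applied to the CP matrix yields $e^{\CP}=\sum_{T:\,|T|=d}\prod_{i\in T}(x^{\star\star}_i\,e^{z^\star_i})\det(V_T)^2$. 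The goal is then to show that the right-hand sum above (restricted to independent $T$, without the exponential weights) is at least $e^{-O(d)}\cdot e^{\CP}$. This is where the theory of strongly log-concave / Lorentzian polynomials \cite{AnariGV18,AnariLGV19,branden2019lorentzian} enters, together with the comparison that our $\CPD$ relaxation is at least as tight as the alternative relaxation $\sup_{\bx\in\P(\M)}\log\sum_{T\in\I_d(\M)}(\prod_{i\in T}x_i)\det(V_T)^2$ of \cite{anari2017generalization}: the log-concavity of the inner polynomial in the latter forces an $e^{O(d)}$ integrality gap, and the CP-dominance transfers that bound to $\CP$.

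The main obstacle I expect is establishing this CP-to-CP dominance — equivalently, controlling the gap between the sum over $\I_d(\M)$ (without $e^{z^\star_i}$ weights) and the full size-$d$ sum with those weights. The planned argument exploits the first-order optimality conditions for $\bz^\star$ derived in Section~\ref{sec:solvability}, which convert the exponential dual variables into a matroid-structured certificate supported on $\I_d(\M)$; strong duality then yields the desired inequality. Once this step is in hand, the accumulated losses — lossless sparsification, $(O(d))^{-d}$ from rounding, and $e^{O(d)}$ from log-concavity — combine to the final factor $(2e^5 d)^{-d}$. Polynomial-time efficiency is inherited from Lemma~\ref{lem:alg}, from the iterative-rounding-style sparsification of Section~\ref{sec:sparse}, and from the explicit polynomial-time rounding of Section~\ref{sec:rounding}.
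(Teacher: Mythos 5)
Your proposal is correct and follows essentially the same route as the paper: reduce everything to the single bound $\E[\det(\sum_{i\in S}\bv_i\bv_i^\top)]\ge (2e^5d)^{-d}e^{\CP}$, obtained by chaining lossless sparsification (Theorem~\ref{thm:sparsity}), the $(2e^3d)^{-d}$ correlation/expectation bound for the rounding algorithm (Lemmas~\ref{lem:rounding_prob} and~\ref{lem:expected_value}), the $e^{-2d}$ log-concavity inequality (Lemma~\ref{lem:relate}), and the dominance of \CPD{} over the relaxation of~\cite{anari2017generalization} (Lemma~\ref{lem:stronger_cp}), then combining with $\E[\det]\le\OPT$. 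The only divergence is in your anticipated proof of the CP-dominance step: the paper's Lemma~\ref{lem:stronger_cp} does not invoke the first-order optimality conditions for $\bz^\star$ but rather a change of variables, Sion's minimax theorem, and an entropy inequality for log-concave distributions, holding for arbitrary $\bx^\star\ge 0$.
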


We now outline the ideas behind proving  Theorem~\ref{thm:main2}.
First, we obtain the KKT optimality conditions of $\inf_{\bz \in \Z}
g(\bx,\bz)$ in Section~\ref{sec:first-order}. In
Section~\ref{sec:sparse}, we show that the KKT conditions can be
related to a new matroid defined by minimum weight bases of the
original matroid under the weight function $\bz^\star$. We then apply
uncrossing methods on matroids to show that there is always an optimal sparse
solution -- in particular, one with at most $O(d^2)$ fractional
variables. In Section~\ref{sec:rounding}, we give a rounding algorithm
that uses the fact that number of fractional variables is bounded, and we
prove Theorem~\ref{thm:main2} building on inequalities proved
in~\cite{anari2017generalization}~and~\cite{AnariGV18} for
stable and completely log concave polynomials, respectively.

\subsection{Optimality Conditions}\label{sec:first-order}

Recall the notation \(f( \bx)=\inf_{\bz\in\Z} g( \bx,\bz)\). 
In the following result, we state a sufficient condition that some
feasible solution $\hat{\bx} \in \P(\M)$ satisfies $f(\hat{\bx})
=f(\bx^\star)$, where $\bx^\star$ is an (approximately) optimal
solution to \CPD{} as returned by the algorithm in
Lemma~\ref{lem:alg}. The result is obtained by applying the general
KKT conditions to the optimization problem $\inf_{\bz \in \Z}g(\bx,\bz)$. For completeness, we give a
detailed description of the general KKT conditions and Slater's constraint qualification in
Appendix~\ref{sec:gen-KKT}.  

\begin{lemma} \label{thm:KKT} Suppose \(\bx^\star\in\P(\M)\cap \left[0,\frac12\right]^n\) is a feasible
  solution for \CPD{} such that the infimum over \(\Z\) in \CPD{} is achieved, and
  let \(\bz^\star\in\argmin_{\bz\in\Z} g(\bx^\star,\bz)\). For any \(\hat \bx\in \P(\M)\),
  suppose that there exists
  \(\blambda\in\R^{\I_d(\M)}_{\geq0}\) such that
\begin{enumerate}[label={\arabic*.}]
\item
for all \(S\in \I_d(\M)\) with \(\bz^\star(S) \neq 0\), we have \(\lambda_S = 0\),
\label{cond:CS}
\item
for all \(i\in[n]\), we have \(\hat x_i e^{z_i^\star}\bv_i^\top \bX^{-1} \bv_i = \sum_{S\in \I_d(\M): i\in S}\lambda_S\) where \(\bX=\sum_{i=1}^n x_i^\star e^{z_i^\star}\bv_i \bv_i^\top\), and
\label{cond:first-order}
\item
\(\sum_{i=1}^n x_i^\star e^{z_i^\star}\bv_i \bv_i^\top = \sum_{i=1}^n \hat x_i e^{z_i^\star}\bv_i \bv_i^\top\).
\label{cond:matrix}
\end{enumerate}
Then, \(f(\hat \bx)= f(\bx^\star)\). Moreover,
there exists
\(\blambda\in\R^{\I_d(\M)}_{\geq 0}\) such that
the above three
conditions hold with \(\hat{\bx} = \bx^\star\).
\end{lemma}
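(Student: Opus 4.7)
The plan is to derive both directions of the lemma as direct consequences of the KKT optimality conditions applied to the convex program $\inf_{\bz \in \Z} g(\bx, \bz)$ with $\bx \in \P(\M) \cap [0, \tfrac{1}{2}]^n$ held fixed. To justify the use of KKT, I first check that $g(\bx, \cdot)$ is convex in $\bz$: expanding via the Cauchy--Binet identity,
\[
\det\Bigl(\sum_i x_i e^{z_i}\bv_i\bv_i^\top\Bigr) = \sum_{S \subseteq [n],\, |S|=d} \Bigl(\prod_{i \in S} x_i\Bigr) e^{z(S)} \det\bigl[\bv_i : i \in S\bigr]^2,
\]
so $g(\bx,\cdot)$ is the log of a nonnegative combination of exponentials of linear forms in $\bz$, i.e.\ a log-sum-exp, and therefore convex. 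Since every constraint defining $\Z$ is linear in $\bz$, the KKT conditions are both necessary at any minimizer (linear constraint qualification) and sufficient for optimality (convexity), with no Slater-type assumption required.

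For the ``moreover'' direction, I apply KKT necessity at $\bz^\star \in \argmin_{\bz \in \Z} g(\bx^\star, \bz)$. A direct computation gives $\partial_{z_i} g(\bx^\star, \bz^\star) = x_i^\star e^{z_i^\star} \bv_i^\top \bX^{-1} \bv_i$, so the existence of nonnegative multipliers $\blambda = (\lambda_S)_{S \in \I_d(\M)}$ with stationarity $\partial_{z_i} g(\bx^\star, \bz^\star) = \sum_{S \ni i} \lambda_S$ is exactly condition~\ref{cond:first-order} with $\hat{\bx} = \bx^\star$, and complementary slackness $\lambda_S z^\star(S) = 0$ is exactly condition~\ref{cond:CS}. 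Condition~\ref{cond:matrix} is automatic when $\hat{\bx} = \bx^\star$.

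For the main direction, given $\hat{\bx}$ and $\blambda$ satisfying conditions 1--3, the strategy is to show that the pair $(\bz^\star, \blambda)$ satisfies the KKT conditions for the convex program $\inf_{\bz \in \Z} g(\hat{\bx}, \bz)$, so that sufficiency forces $\bz^\star$ to be a global minimizer of $g(\hat{\bx}, \cdot)$ over $\Z$. Primal feasibility $\bz^\star \in \Z$ is inherited, dual feasibility is hypothesized, and complementary slackness is condition~\ref{cond:CS}. For stationarity I invoke condition~\ref{cond:matrix}: since $\sum_i \hat{x}_i e^{z_i^\star}\bv_i\bv_i^\top$ coincides with $\bX$, the gradient at $\bz^\star$ computes to $\partial_{z_i} g(\hat{\bx}, \bz^\star) = \hat{x}_i e^{z_i^\star} \bv_i^\top \bX^{-1} \bv_i$, which condition~\ref{cond:first-order} identifies with $\sum_{S \ni i} \lambda_S$. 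Hence $f(\hat{\bx}) = g(\hat{\bx}, \bz^\star)$, and applying condition~\ref{cond:matrix} inside the log-determinant then gives $g(\hat{\bx}, \bz^\star) = g(\bx^\star, \bz^\star) = f(\bx^\star)$, completing the argument.

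The main subtlety is bookkeeping around the exponentially many constraints indexed by $\I_d(\M)$: one must invoke a form of KKT valid for convex programs with polyhedral constraint sets cut out by many linear inequalities. Because the constraints are linear, no strict-feasibility condition is needed for either necessity or sufficiency, and the general KKT framework recalled in the appendix applies directly; beyond this, the argument is a standard Lagrangian duality calculation.
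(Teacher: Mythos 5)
Your proposal is correct and follows essentially the same route as the paper: both directions are obtained by applying the KKT conditions to $\inf_{\bz\in\Z} g(\bx,\bz)$ for fixed $\bx$, verifying sufficiency at $(\bz^\star,\blambda)$ for $\hat\bx$ (with condition~3 used twice, once to identify the gradient matrix with $\bX$ and once to conclude $g(\hat\bx,\bz^\star)=g(\bx^\star,\bz^\star)$) and necessity at $\bz^\star$ for $\bx^\star$. The only cosmetic difference is that you justify KKT necessity via the linearity of the constraints directly, while the paper routes this through its strong-duality theorem whose interiority hypothesis is trivially met since the domain is all of $\R^n$; the two justifications are equivalent here.
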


We remark that the above criteria ask for the existence of
exponentially sized vector $\blambda$ in order to certify that $\hat
\bx$ is optimal. In the next section, we show that the above condition
is equivalent to showing a certain vector is in the base polytope of
another matroid derived from $\M$.



\section{Small Support Solutions to \CPD}\label{sec:sparse}
\subsection{Preserving the Value of a Solution}
In this section, we show that there is always an optimal solution to \CPD{} that has small number of fractional components. Indeed, given any solution \(\bx\) such that the inner infimum of \CPD{} is attained, we show how to obtain a sparse solution whose objective is no worse.

\begin{theorem}[Sparsity of an optimal solution]\label{thm:sparsity}
Let $\bx^\star$ be a solution to \CPD{} such that the inner infimum of \CPD{} is attained. Then there exists a solution \(\hat {\bx}\in \P(\M)\) such that
\begin{enumerate}
\item $f(\hat {\bx})= f(\bx^\star)$, and
\item $\abs{\set{i\in[n]: 0 < \hat x_i  < 1}} \leq 2\left(\binom{d+1}{2}+d\right).$
\end{enumerate}
Moreover, such a solution $\hat{\bx}$ can be found in polynomial time.
\end{theorem}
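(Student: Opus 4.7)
The plan is to use the KKT optimality conditions of Lemma~\ref{thm:KKT} to express the set of value-preserving $\hat{\bx}$ as a (subset of a) polytope $Q$ built from two matroid base polytopes and a few linear equations, and then to bound the fractional support of a vertex of $Q$ by uncrossing. Let $\bz^\star\in\argmin_{\bz\in\Z}g(\bx^\star,\bz)$. Since $\bz^\star(S)\ge 0$ for every $S\in\I_d(\M)$, the collection $\B':=\{S\in\I_d(\M):\bz^\star(S)=0\}$ is precisely the family of minimum-$\bz^\star$-weight bases of the rank-$d$ truncation of $\M$; by the classical fact that minimum-weight bases of a matroid again form the bases of a matroid, $\B'$ is the base family of a matroid $\M'$ of rank $d$ on $[n]$. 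Setting $\bX:=\sum_i x^\star_i e^{z^\star_i}\bv_i\bv_i^\top$ and $D_i:=e^{z^\star_i}\bv_i^\top\bX^{-1}\bv_i>0$, define the positive-diagonal linear map $\bw(\hat{\bx})_i:=D_i\hat{x}_i$. Conditions~\ref{cond:CS} and~\ref{cond:first-order} of Lemma~\ref{thm:KKT} amount to the statement that $\bw(\hat{\bx})$ is a nonnegative combination of indicator vectors $\chi_S$ with $S\in\B'$ (i.e.\ of bases of $\M'$), while Condition~\ref{cond:matrix} is a linear system $A\hat{\bx}=A\bx^\star$ of $\binom{d+1}{2}$ scalar equations (the independent entries of the symmetric matrix equality); moreover Condition~\ref{cond:matrix} forces $\sum_i\bw(\hat{\bx})_i=\mathrm{tr}(\bX\bX^{-1})=d$, normalizing the nonnegative combination so that $\bw(\hat{\bx})\in\P(\M')$. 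Setting
\begin{equation*}
 Q \;:=\; \{\hat{\bx}\in\R^n:\hat{\bx}\in\P(\M),\; \bw(\hat{\bx})\in\P(\M'),\; A\hat{\bx}=A\bx^\star\},
\end{equation*}
Lemma~\ref{thm:KKT} gives $\bx^\star\in Q$ and $f(\hat{\bx})=f(\bx^\star)$ for every $\hat{\bx}\in Q$. A vertex of $Q$ can be found in polynomial time by the ellipsoid method, since $\P(\M)$ and $\P(\M')$ both admit polynomial-time separation oracles (the oracle for $\P(\M')$ is built from an independence oracle for $\M'$ derived from the min-weight-bases construction).

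Now fix a vertex $\hat{\bx}$ of $Q$ and let $F=\{i:0<\hat{x}_i<1\}$. At a vertex the tight constraints span $\R^n$; the $n-|F|$ box tightenings at integer coordinates contribute that much rank, so the remaining tight constraints---from $\P(\M)$, from $\bw(\hat{\bx})\in\P(\M')$, and from $A$---must contribute rank at least $|F|$ when projected onto coordinates in $F$. Submodularity of $r_\M$ and $r_{\M'}$ ensures that the families of tight rank sets in either polytope are closed under union and intersection (the argument goes through verbatim for the $D$-weighted inequalities of $\bw(\hat{\bx})\in\P(\M')$ since the $D_i$ are positive), so standard uncrossing produces two chains $\emptyset=T_0\subsetneq T_1\subsetneq\cdots\subsetneq T_\ell=[n]$ whose tight link-constraints span the respective tight subspaces. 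I count the two chains asymmetrically. For the $\P(\M)$ chain, the link constraint $\hat{x}(T_j\setminus T_{j-1})=r_\M(T_j)-r_\M(T_{j-1})$ forces every coordinate in the link to be integer whenever the rank gain is $0$ or $|T_j\setminus T_{j-1}|$; otherwise the right-hand side and the fixed integer contributions in the link are all integers, so a link containing only a single fractional coordinate would force that coordinate to integrality---a contradiction. Thus every nontrivial link of the $\P(\M)$ chain contains at least two elements of $F$; since the link differences are disjoint, there are at most $|F|/2$ nontrivial links, bounding the rank contribution of $\P(\M)$ by $|F|/2$. For the $\P(\M')$ chain, the weighted link constraint $\sum_{i\in T_j\setminus T_{j-1}}D_i\hat{x}_i=r_{\M'}(T_j)-r_{\M'}(T_{j-1})$ forces all link coordinates to zero whenever the rank gain is zero (by $D_i>0$), so only links with positive rank gain are nontrivial, and since the total rank of $\M'$ is $d$, there are at most $d$ such links. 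Combined with the $\binom{d+1}{2}$ rows of $A$ this gives
\begin{equation*}
 |F| \;\le\; \tfrac{|F|}{2} + d + \binom{d+1}{2},
\end{equation*}
which rearranges to $|F|\le 2\bigl(\binom{d+1}{2}+d\bigr)$, as required.

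The step that I expect to need the most care is the matroid-theoretic reinterpretation of Conditions~\ref{cond:CS}--\ref{cond:first-order}: an existence statement for an exponentially-sized nonnegative $\blambda$ supported on $\B'$ must be turned into membership in a polynomially-describable polytope, and this hinges precisely on the classical fact that the minimum-weight bases $\B'$ form the bases of a matroid $\M'$---the cone spanned by their indicator vectors then equals the cone over $\P(\M')$. The asymmetric counting of the two chains is also essential: the $\{0,1\}$-coefficients of $\P(\M)$ force nontrivial links to contain at least two fractional elements, whereas the $D$-rescaled $\P(\M')$ admits nontrivial links with a single fractional element, which is why that chain must instead be charged through the total rank $d$ of $\M'$. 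Losing either bound would leave $|F|$ uncontrolled.
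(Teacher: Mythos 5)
Your proof is correct and follows essentially the same route as the paper's: reinterpret the KKT conditions of Lemma~\ref{thm:KKT} via the matroid of minimum-$\bz^\star$-weight bases to obtain the feasibility polytope of Figure~\ref{fig:sparse}, take a vertex, and bound its fractional support by uncrossing the two matroid constraint families together with the $\binom{d+1}{2}$ matrix-equality rows. The only cosmetic difference is in the final count: the paper counts singleton links of the $\P(\M)$ chain (which force $x_i=1$) whereas you count links containing at least two fractional coordinates, but the two tallies are equivalent and yield the identical bound $2\bigl(\binom{d+1}{2}+d\bigr)$.
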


\begin{proof}
Given $\bx^\star$, a solution to \CPD{}, we let $\bz^\star$ be an optimal solution to $\inf_{\bz\in \Z} g(\bx^\star, \bz)$.
Also, let $\bX=\sum_{i\in[n]} x_i^\star e^{z_i^\star}\bv_i \bv_i^\top$. We  assume that $\supp{\bx^{\star}}=\set{1,\ldots,n}$ since for any $i$ with $x^\star_i=0$, we can update the instance by deleting these elements. Observe that this does not effect the optimality (restricted to $\supp{\bx^\star}$)
of $\bz^\star$  (see Lemma \ref{lem:restrict-U} in the Appendix for details).

We first give a simpler description than Lemma~\ref{thm:KKT} for a solution $\hat {\bx}$  to have an objective better than $f(\bx^\star)$. This relies on the following basic lemma.
\begin{lemma} \label{lem:sparse-another-matroid}
Let
$\B^\star = \set{S\in \I_d: z^\star(S)=0}$.
Then, $\B^\star$ is a basis of another matroid $\M^\star=([n], \I^\star).$ Additionally, if $\M$ admits an independent oracle, then $\M^\star$ also admits an independent oracle.
\end{lemma}
\begin{proof}
Since $z^\star(S)\geq 0$ for all $S\in \I_d$, the basis of $\I_d$ included in $\I^\star$ are the minimum weight bases under the weight function $\bz$. Minimum weight bases of a matroid form the bases of another matroid, and the independence oracle can be implemented in polynomial time (see Lemma~\ref{lem:bases} in the Appendix for details).
\end{proof}

We now have the following simpler description for $\hat \bx$ to be optimal building on Lemma~\ref{thm:KKT}. Let \(\M^\star\) be the matroid in Lemma \ref{lem:sparse-another-matroid} and let $r^\star: 2^{[n]}\rightarrow \zZ_+$ denote the rank function of $\M^\star$.
\begin{lemma}
Let $\bx^\star$ be a solution of  \CPD{} and $\bz^\star \in \argmin_{\bz\in \Z} g(\bx^\star, \bz)$. Let $\hat \bx\in \R^{[n]}$ be such that
\begin{enumerate}
\item $\hat \bx \in \P(\M)$,
\item the vector $\bw\in \R^{[n]}$ defined as $w_i= \hat{x}_i e^{z_i^\star}\bv_i^\top \bX^{-1} \bv_i$ for each $i\in [n]$ satisfies $\bw \in \P(\M^\star)$, where $\bX=\sum_{i\in[n]} x_i^\star e^{z_i^\star}\bv_i \bv_i^\top$,
\item \(\sum_{i\in[n]} \hat x_i  e^{z_i^\star}\bv_i \bv_i^\top = \sum_{i\in[n]} x_i^\star e^{z_i^\star}\bv_i \bv_i^\top\), and
\item $\supp{\hat \bx}\subseteq \supp{\bx^\star}$.
\end{enumerate}
Then $f(\hat{\bx})= f(\bx^\star)$.
\end{lemma}
\begin{proof}
We show that the above conditions imply that the conditions of Lemma~\ref{thm:KKT} are satisfied. Indeed, we only need to show the existence of $\blambda \in \R^{\I_d(\M)}$ as claimed. Since $\bw\in \R^{[n]}$ is in  $\P(\M^\star)$, we have $\bw=\sum_{S\in \B(\M^\star)} \mu_S \chi_S$ where $\chi_S\in \R^{[n]}$ is the indicator vector of set $S$ and $\sum_{S\in \B(\M^\star)} \mu_S=1$. Observe that for each $S\in \B(\M^\star)$, we have $z^\star(S)=0$. Thus, setting $\lambda_S=\mu_S$ for $S\in \B(\M^\star)$ and $\lambda_S=0$ for all other sets in $\I_d(\M)$ satisfies the conditions of Lemma~\ref{thm:KKT}.
\end{proof}

\def \lp {LP}
\begin{figure}
\centering
\begin{boxedminipage}{0.6\textwidth}
\begin{align}
\label{LP:start} &\min 0 \qquad   \\
\label{mat11}s.t. \qquad \qquad \sum_{i\in S} x_i &\leq r(S) \qquad \ \forall\ \emptyset \subsetneq S\subsetneq [n] \\
\label{mat12}x([n]) &=r([n]) =k\\
\label{mat21}\sum_{i\in S} x_i  e^{z_i^\star} \bv_i^\top \bX^{-1} \bv_i &\leq r^\star(S) \qquad \forall\ \emptyset \subsetneq S\subsetneq [n]\\
\label{mat22}\sum_{i\in [n]} x_i  e^{z_i^\star} \bv_i^\top \bX^{-1} \bv_i &= r^\star([n])=d \\
\label{vec1}\sum_{i=1}^n x_i e^{z_i^\star} \bv_i \bv_i^\top &= \sum_{i=1}^n  x_i^\star  e^{z_i^\star} \bv_i \bv_i^\top \\
x_i &\geq 0 \qquad \qquad \forall i\in [n] \label{LP:x-geq-0}
\end{align}
\end{boxedminipage}\caption{Linear program to obtain a sparse solution.}\label{fig:sparse}
\end{figure}

Now the above conditions can be formulated as a feasibility system over the following linear constraints as given in Figure~\ref{fig:sparse},  and we call the formulated linear program \LP. Here, constraints \eqref{mat11}-\eqref{mat12} insist that $\bx\in \P(\M)$ and \eqref{mat21}-\eqref{mat22} insist that the vector $(x_i  e^{z_i^\star} \bv_i \bX^{-1} \bv_i )_{i\in [n]} \in \P(\M^\star)$. Constraints \eqref{vec1} insist that the matrix $\bX$ does not change when the solution changes to $\bx$ from $\bx^\star$. For ease of notation, we let $\bw_x$  be the vector $(x_i  e^{z_i^\star} \bv_i \bX^{-1} \bv_i )_{i\in [n]}$.

From basic uncrossing methods we obtain the following lemma characterizing any extreme point of the above linear program. Recall that a collection \(\C\) of sets is a \textit{chain} if for all \(A,B\in \C\), we have \(A\subseteq B\) or \(B\subseteq A\). The proof of the lemma appears in Appendix \ref{sec:proof-sparse}. Again, we focus on $\supp{\bx}$ since \(\bx\) remains extreme after removing coordinates with $x_i=0$. Thus, we assume that $[n]=\supp{\bx}$.
\begin{lemma} \label{lem:extreme-LP-uncross}
If $\bx$ is an extreme point of the linear program \LP, then there
exist chains $\C_1,\C_2 \subseteq 2^{[n]}$ and $P\subseteq [d]\times[d]$ such that
\begin{enumerate}
\item $x(S)=r(S)$ for each $S\in \C_1$,  $w_x(S)=r^\star(S)$ for each $S\in \C_2$, and  $(\sum_{i=1}^n x_i e^{z_i^\star}\bv_i \bv_i^\top)_{jk}=(\sum_{i=1}^n  x_i^\star e^{z_i^\star} \bv_i \bv_i^\top)_{jk}$ for each $(j,k)\in P$,
\item the linear constraints corresponding to sets in $\C_1, \C_2$ and pairs in $P$ are linearly independent, and
\item $|\supp{\bx}|=|\C_1|+|\C_2|+|P|$.
\end{enumerate}
\end{lemma}
Let $\bx$ be an extreme point of the linear program \LP. Such an $\bx$ can be found in polynomial time.  Let $\C_1=\{S_1,\ldots, S_l\}$ where $S_1\subset S_2 \ldots \subset S_l$. Then, we have $x(S_i)=r(S_i)$. Since $x_i>0$ for all $i\in [n]$, we have $1\leq r(S_1)<r(S_2)\ldots <r(S_l)\leq k$ and from the integrality of the rank function, we obtain that  $|\C_1|=l\leq k$. Similarly, $|\C_2|\leq r^\star([n])\leq d$, and clearly $|P|\leq \binom{d+1}{2}$ since $\sum_{i=1}^n x_i e^{z_i^\star}\bv_i \bv_i^\top$ and \(\sum_{i=1}^n  x_i^\star e^{z_i^\star} \bv_i \bv_i^\top\) are $d\times d$ symmetric matrices. Therefore, $\supp{\bx}\leq k+d+\binom{d+1}{2}$. In what follows we argue all but $2\left(d+\binom{d+1}{2}\right)$ coordinates are set to 1.

For ease of notation, we let $S_0=\emptyset$. Observe that if $|S_j\setminus S_{j-1}|= 1$ for any $1\leq j\leq l$, say $\{i\}=S_j\setminus S_{j-1}$, then $x_i=x(S_j)-x(S_{j-1})= r(S_j)-r(S_{j-1})$ which is an non-negative integer. Since $x_i>0$, we obtain that $x_i=1$. Let $I=\{1\leq j\leq k: |S_j\setminus S_{j-1}|=1\}$. Observe that there are at least $|I|$ variables set to $1$. But since every set $S_j$ with $j\notin I$ contains at least two elements in $S_j\setminus S_{j-1}$, we have
$$|\supp{\bx}|\geq |I|+2(l-|I|).$$
But from Lemma~\ref{lem:extreme-LP-uncross}, we have$$|\supp{\bx}|\leq l+d+\binom{d+1}{2}.$$

Combining the two inequalities, we get $|I|\geq l-d-\binom{d+1}{2} \geq |\supp{\bx}|-2(d+\binom{d+1}{2})$. Hence, the number of fractional variables is at most $\supp{x}-|I|\leq 2(d+\binom{d+1}{2})$.
\end{proof}

\section{Randomized Rounding Algorithm}\label{sec:rounding}
In this section, we give our randomized rounding algorithm and prove the
guarantee on its performance claimed in Theorem~\ref{thm:main2}.


Throughout this section, we assume that the algorithm receives an input  \(\bx\in \P(\M)\) such that
 \begin{displaymath}
\abs{\set{i:0 < x_i < 1}} \leq 2\pr{{d+1\choose 2} +d}.
\end{displaymath}

We first describe the rounding algorithm, presented in Algorithm~\ref{alg:rounding}. It is obvious that Algorithm~\ref{alg:rounding} runs in polynomial time.
\begin{algorithm}
\caption{Rounding Algorithm}
\begin{algorithmic}[1]
\State \textbf{Input:} a matroid \(\M=([n],\I)\), \(\bx\in\P(\M)\).
\State \textbf{Output:} a set \(S\in\I\).
\Procedure{Rounding}{$x,\I$}
        \State \(R_1 \leftarrow \set{i:0<x_i<1}, R_2\leftarrow\set{i:x_i=1}\)
        \State \(T\leftarrow \emptyset\)
        \For{\(i\) in \(R_1\)}
               \If{$T \cup \set{i} \in \I$}
                     \State \(T\leftarrow T\cup\set{i}\) with probability \(\frac 1d\)
               \EndIf
        \EndFor
        \For{\(i\) in \(R_2\)}
               \If{$T \cup \set{i} \in \I$}
                     \State \(T\leftarrow T\cup\set{i}\) with probability \(\frac 12\)
               \EndIf
        \EndFor
        \If{\(T\) is not a basis}
                \State Extend \(T\) to a basis (e.g. by going through each element in \([n]\setminus T\) and add it to \(T\) if \(T\) remains independent until \(T\) is a basis)
                \EndIf
        \Return $T$
        \EndProcedure
\end{algorithmic}
\label{alg:rounding}
\end{algorithm}

For ease of notation we denote $\gamma= (2e^3d)^{-d}$ and \(\I_d=\I_d(\M)\). We first claim that every independent subset $S$ of $R_1 \cup R_2$ of size $d$ is contained in the output set with probability at least $\gamma$. The claim can only be true if the ground set $R_1 \cup R_2$, which has been restricted to the support of $\bx$,  is small. 

\begin{lemma}\label{lem:rounding_prob}
Let $T$ denote the random set returned by Algorithm~\ref{alg:rounding}. Then, for any set $S \subseteq R_1 \cup R_2$ such that $S\in \I_d$, we have
$$\Pr[S\subseteq T]\geq \gamma.$$
\end{lemma}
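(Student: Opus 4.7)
My plan is to lower bound $\Pr[S \subseteq T]$ by decomposing the analysis across the two phases of the algorithm. Write $S_1 = S \cap R_1$ and $S_2 = S \cap R_2$, with $d_1 = |S_1|$ and $d_2 = |S_2|$, so that $d_1 + d_2 = d$. The overall approach proceeds from
\[
\Pr[S \subseteq T] \geq \Pr[T = S_1 \text{ at end of phase 1}] \cdot \Pr[S_2 \subseteq T \text{ at end of phase 2} \mid T = S_1 \text{ at end of phase 1}].
\]

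For the phase-1 factor, I would use that since $S$ is independent, whenever $T \subseteq S_1$ and $i \in S_1$, the test $T \cup \{i\} \in \I$ passes automatically. Conditioning on $T$ never leaving $S_1$ during phase~1 (equivalently, no element of $R_1 \setminus S_1$ is ever added), every element of $S_1$ is added independently with probability $1/d$ and every element of $R_1 \setminus S_1$ is skipped with probability at least $1 - 1/d$. This yields
\[
\Pr[T = S_1 \text{ at end of phase 1}] \geq (1/d)^{d_1} (1 - 1/d)^{|R_1| - d_1}.
\]
Using $|R_1| \leq 2\binom{d+1}{2} + 2d$ from Theorem~\ref{thm:sparsity}, the factor $(1-1/d)^{|R_1|}$ is at least $e^{-O(d)}$.

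For the phase-2 factor, conditioned on $T = S_1$ at the start of phase 2, I would establish
\[
\Pr[S_2 \subseteq T \text{ at end of phase 2} \mid T = S_1] \geq (1/2)^d.
\]
A key input is that $R_2$ is itself an independent set of $\M$: since $\bx \in \P(\M)$ and $\sum_{i \in R_2} x_i = |R_2|$, the matroid polytope inequality $\sum_{i \in R_2} x_i \leq r(R_2)$ forces $R_2 \in \I$. A naive argument conditioning on $T = S$ exactly at the end of phase~2 (which forbids any addition from $R_2 \setminus S_2$) would introduce a factor of $(1/2)^{|R_2|}$, which is unacceptable since $|R_2|$ can be as large as $k$. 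Instead, I would work in the contracted matroid $\M/S_1$, within which $S_2$ is independent because $S_1 \cup S_2 = S$ is independent in $\M$, and combine matroid exchange with the independence of $R_2$ to argue that at most $O(d)$ coin flips in phase~2 are relevant to the event $S_2 \subseteq T$. Concretely, an element $j \in R_2 \setminus S_2$ can contribute to preventing some $i \in S_2$ from being added only if $\{j\}$ participates in a circuit of $\M$ spanned by $S_1$ together with previously-added elements of $R_2$ touching $i$, and matroid rank bounds limit the total number of such effective blockers by $d - d_2$.

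Combining and using that $(1/d)^{d_1}(1/2)^d \geq (1/(2d))^d$ whenever $d_1 \leq d$, we obtain
\[
\Pr[S \subseteq T] \geq (1/(2d))^d \cdot e^{-O(d)} \geq (2 e^3 d)^{-d} = \gamma
\]
for suitable absolute constants absorbed in the $O(d)$ term. The main obstacle is the phase-2 lower bound: the straightforward approach loses a $(1/2)^{|R_2|}$ factor, and avoiding this dependence on $|R_2|$ (which can be much larger than $d$) is the technical heart of the argument, requiring a careful matroid-exchange analysis in $\M/S_1$.
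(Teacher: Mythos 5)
Your decomposition into the two phases, the phase-1 bound $(1/d)^{|S_1|}(1-1/d)^{|R_1|-|S_1|}$, and the target $(1/2)^d$ for phase 2 all match the paper's proof, and your diagnosis of the difficulty (a naive conditioning loses $(1/2)^{|R_2|}$) is exactly right. But the step you flag as ``the technical heart'' is precisely the step you have not proved, and your sketch of it is not yet a proof. The notion of an ``effective blocker'' as you describe it -- an element $j$ that ``participates in a circuit spanned by $S_1$ together with previously-added elements of $R_2$ touching $i$'' -- depends on the random history of the algorithm, so it does not by itself give you a fixed set of at most $d-d_2$ coin flips to condition on; and the assertion that ``matroid rank bounds limit the total number of such effective blockers by $d-d_2$'' is stated without justification.

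The clean way to close this gap (and what the paper does) is to choose the exclusion set \emph{deterministically in advance}: show there exists a fixed $Y \subseteq R_2 \setminus S_2$ with $|Y| \le |S_1|$ such that $S_1 \cup (R_2 \setminus Y) \in \I$. This follows from the augmentation axiom applied to $S_1 \cup S_2$ and $R_2$ (repeatedly extend $S_1\cup S_2$ by elements of $R_2\setminus S_2$ until it contains all but at most $|S_1|$ elements of $R_2$); equivalently, it is your observation that $r_{\M/S_1}(R_2) \ge |R_2| - |S_1|$, made into an explicit set. With $Y$ fixed, you condition on $T \cap Y = \emptyset$ (cost $(1/2)^{|Y|} \le (1/2)^{|S_1|}$, independent of $|R_2|$); then throughout phase 2 the running set $T'$ stays inside the independent set $S_1 \cup (R_2 \setminus Y)$, so every $i \in S_2$ passes its independence test and is added with probability exactly $1/2$, giving the remaining factor $(1/2)^{|S_2|}$ and hence $(1/2)^d$ overall. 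One further caution: the final constant is not an ``$O(d)$ to be absorbed'' -- the lemma claims the specific bound $\gamma = (2e^3d)^{-d}$, so you must actually verify that $(1-1/d)^{|R_1|-|S_1|} \ge e^{-3d}$ using $|R_1| \le d^2+3d$ and $|S_1|\le d$, which holds via $1-1/d \ge e^{-3/(d+2)}$ for $d \ge 2$.
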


Lemma \ref{lem:rounding_prob} implies a lower bound on the expected objective value of the solution returned.
\begin{lemma}\label{lem:expected_value}
Algorithm~\ref{alg:rounding} returns an independent set $T \in \I$ with expected objective value $$\E\left[\det\pr{\sum_{i \in T} \bv_i \bv_i^\top}\right]\geq \gamma \sum_{S\in \I_d} \det\pr{\sum_{i\in S} x_i \bv_i \bv_i^\top}.$$
\end{lemma}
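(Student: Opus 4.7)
The plan is to expand $\det\pr{\sum_{i \in T} \bv_i \bv_i^\top}$ via the Cauchy--Binet formula as a sum of principal $d \times d$ minors indexed by subsets of $T$, take expectation via linearity, and then invoke Lemma~\ref{lem:rounding_prob} termwise. Since Lemma~\ref{lem:rounding_prob} already provides the key per-set lower bound $\Pr[S \subseteq T] \geq \gamma$, the remainder is essentially a routine calculation with no serious obstacle.

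Concretely, I would start by noting that Algorithm~\ref{alg:rounding} always extends $T$ to a basis, so $|T|=k\geq d$, and that every size-$d$ subset of $T$ lies in $\I_d$ since $T \in \I$. The Cauchy--Binet formula then gives
$$\det\pr{\sum_{i\in T}\bv_i\bv_i^\top} \;=\; \sum_{S\in\I_d,\,S\subseteq T}\det\pr{\sum_{i\in S}\bv_i\bv_i^\top}.$$
Taking expectation and swapping the order of summation,
$$\E\br{\det\pr{\sum_{i\in T}\bv_i\bv_i^\top}} \;=\; \sum_{S\in\I_d}\Pr[S\subseteq T]\cdot\det\pr{\sum_{i\in S}\bv_i\bv_i^\top}.$$

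Next, I would split the sum on the right based on whether $S\subseteq R_1\cup R_2 = \supp{\bx}$. For $S\in \I_d$ with $S\subseteq \supp{\bx}$, Lemma~\ref{lem:rounding_prob} directly gives $\Pr[S\subseteq T]\geq \gamma$. For $S\in\I_d$ with $S\not\subseteq \supp{\bx}$, some $x_i=0$ for $i\in S$, and the identity
$$\det\pr{\sum_{i\in S}x_i\bv_i\bv_i^\top} \;=\; \pr{\prod_{i\in S}x_i}\cdot\det\pr{\sum_{i\in S}\bv_i\bv_i^\top},$$
valid for $|S|=d$ (factor $\mathrm{diag}(x_S)$ out of the rank-one expansion and apply $\det(V_S D V_S^\top)=\det(V_S)^2\det(D)$), shows that such $S$ contribute $0$ to the target right-hand side. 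The same identity together with $\bx\in [0,\frac{1}{2}]^n \subseteq [0,1]^n$ yields $\prod_{i\in S}x_i\leq 1$ for every remaining $S$, so
$$\sum_{S\in\I_d,\,S\subseteq\supp{\bx}}\det\pr{\sum_{i\in S}\bv_i\bv_i^\top} \;\geq\; \sum_{S\in\I_d}\prod_{i\in S}x_i\cdot\det\pr{\sum_{i\in S}\bv_i\bv_i^\top}.$$

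Chaining the two inequalities, multiplying by $\gamma$ at the appropriate point, finishes the proof. There is no substantive obstacle: Lemma~\ref{lem:rounding_prob} does all the hard combinatorial work, and the only place requiring any care is the bookkeeping around $S\subseteq \supp{\bx}$, since Lemma~\ref{lem:rounding_prob} restricts its hypothesis to $S\subseteq R_1\cup R_2$. This is harmless because terms with $S\not\subseteq \supp{\bx}$ vanish on the right-hand side of the target inequality.
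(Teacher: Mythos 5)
Your proof is correct and follows essentially the same route as the paper: Cauchy--Binet, linearity of expectation, termwise application of Lemma~\ref{lem:rounding_prob}, insertion of the factors $x_i \le 1$, and the observation that sets $S \not\subseteq \supp{\bx}$ contribute zero to the right-hand side. The bookkeeping you flag around restricting to $S \subseteq R_1 \cup R_2$ is handled identically in the paper, so there is nothing further to add.
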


Next, we relate this lower bound to the objective of the convex relaxation \CPD{} in a two-step procedure. Building on results by~\cite{AnariGV18}, the lower bound on the expected objective of the algorithm can be bounded in terms of objective of a different convex relaxation as described in Lemma~\ref{lem:relate}. Proof of the lemma is inferred from the inequality proved in \cite{AnariGV18} on completely log-concave polynomials by observing that the polynomials (in  $\by$ and $\bz$ variables)
$\det\pr{\sum_{i=1}^n x_i^\star y_i \bv_i \bv_i^\top}$ and $\sum_{S \in \I_d} z^{[n]\setminus S} $ are completely log-concave. Here we use the notation \(\pr{\frac{\by\bw}{\balpha}}^{\balpha}:=\prod_{i=1}^n \pr{\frac{y_iw_i}{\alpha_i}}^{\alpha_i}\).  

\begin{lemma}\label{lem:relate}
For any $\bx^\star\geq 0$,
\[ \sum_{S \in \I_d} \det\pr{\sum_{i\in S} x_i^\star \bv_i \bv_i^\top} \geq e^{-2d} \sup_{\balpha \in P(\I_d)} \inf_{\by,\bw>0} \frac{\det\pr{\sum_{i=1}^n x_i^\star y_i \bv_i \bv_i^\top} \pr{\sum_{S \in \I_d} w^S }}{\pr{\frac{\by\bw}{\balpha}}^{\balpha}}.\]
\end{lemma}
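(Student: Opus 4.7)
The plan is to interpret the LHS as a diagonal sum of coefficients of a product of two completely log-concave (CLC) polynomials in disjoint variable sets, and to interpret the RHS as (the reciprocal of $\balpha^\balpha$ times) a Gurvits-type capacity of that same product polynomial. The capacity-to-coefficient bounds from \cite{AnariGV18} will then produce the $e^{-2d}$ factor.

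First I will identify the two CLC polynomials. Let $p(\by) := \det\!\pr{\sum_{i=1}^n x_i^\star y_i \bv_i \bv_i^\top}$, which by Cauchy--Binet expands as $\sum_{T:|T|=d}\det(\bV_T)^2 \prod_{i\in T} x_i^\star y_i$; it is a $d$-homogeneous polynomial with non-negative coefficients and is real-stable (the determinant of a PSD linear pencil in $\by$), hence CLC. Let $q(\bw) := \sum_{S \in \I_d(\M)} \bw^S$ be the base-generating polynomial of the rank-$d$ truncation of $\M$, which is CLC by the main theorem of \cite{AnariGV18}. Their product $P(\by,\bw) := p(\by)\, q(\bw)$, viewed in $2n$ disjoint variables, is then a CLC polynomial of degree $2d$, since products of CLC polynomials in disjoint variable sets are CLC.

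Next, I will rewrite both sides in terms of $P$. Since $[\by^{\chi_S}]\, p = \det\!\pr{\sum_{i\in S} x_i^\star \bv_i \bv_i^\top}$ and $[\bw^{\chi_S}]\, q = \one[S\in \I_d(\M)]$, the LHS equals $\sum_{S \in \I_d}[\by^{\chi_S}\bw^{\chi_S}]\, P(\by,\bw)$, i.e., the ``diagonal'' sum of coefficients of $P$ at exponents $(\chi_S,\chi_S)$. On the RHS, the identity $(\by\bw/\balpha)^\balpha = \by^\balpha\bw^\balpha/\balpha^\balpha$ together with the disjoint-variable product structure yields
\[
\inf_{\by,\bw>0}\frac{P(\by,\bw)}{(\by\bw/\balpha)^\balpha} \;=\; \balpha^\balpha\cdot\inf_{\by>0}\frac{p(\by)}{\by^\balpha}\cdot\inf_{\bw>0}\frac{q(\bw)}{\bw^\balpha},
\]
which, up to the factor of $\balpha^\balpha$, is the product of the Gurvits capacities of $p$ and $q$ at the point $\balpha$. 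Now I invoke the capacity-to-coefficient bound from \cite{AnariGV18}: for a degree-$D$ CLC polynomial $r$ in $N$ variables with non-negative coefficients and any $\bar\balpha$ in its Newton polytope with $|\bar\balpha|=D$, the capacity $\inf_{\bz>0} r(\bz)/(\bz/\bar\balpha)^{\bar\balpha}$ is bounded above by $e^D$ times an appropriate coefficient sum of $r$ near $\bar\balpha$. Applied with $r=P$, $D=2d$, and $\bar\balpha=(\balpha,\balpha)$ for $\balpha \in P(\I_d)$ (a choice lying in the Newton polytope of $q$ by construction and of $p$ by the general-position assumption) and matching the coefficient sum with our diagonal sum, this gives the desired inequality at a fixed $\balpha$; taking $\sup_{\balpha\in P(\I_d)}$ concludes the proof.

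The main obstacle will be pinning down the precise form of the capacity bound so that its right-hand side matches our diagonal sum $\sum_{S\in\I_d}[\by^{\chi_S}\bw^{\chi_S}]P$ rather than a cruder quantity like the full evaluation $P(\mathbf{1},\mathbf{1})$. The natural route is to use the polarization/partial-derivative operators for CLC polynomials from \cite{AnariGV18,AnariLGV19}, which extract precisely the kind of ``diagonal'' coefficient sums we need. A related subtlety is that $\balpha\in P(\I_d)$ need not be integer-valued, so I will need to apply the continuous form of the Gurvits--AGV capacity inequality for CLC polynomials, or equivalently, prove the bound at integer vertices $\balpha=\chi_{S_0}$ and extend to all of $P(\I_d)$ via the log-concavity of capacity in $\balpha$.
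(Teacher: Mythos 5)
Your setup (the two completely log-concave factors, their product in disjoint variables, the identification of the left-hand side as the diagonal coefficient sum, and the factorization of the infimum into a product of capacities) matches the paper's strategy, and those steps are all sound. But the step you yourself flag as "the main obstacle" is a genuine gap, not a routine verification, and neither of the two routes you sketch for closing it works as stated. The capacity-to-coefficient inequality from \cite{AnariGV18} that handles \emph{fractional} $\balpha$ and produces a \emph{sum} of coefficients rather than a single one is their Corollary~7.2 (the paper's Lemma~\ref{lem:AGV_bound}), and it is stated for a multi-affine polynomial in $2n$ variables with the normalization $\by^{\bp}\bz^{\mathbf{1}-\bp}$, i.e., the two exponent blocks must be \emph{complementary}, and the extraction operator $\prod_{i}(\partial_{y_i}+\partial_{z_i})$ picks out coefficients of multidegree $(\chi_T,\chi_{[n]\setminus T})$. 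Your choice $q(\bw)=\sum_{S\in\I_d}\bw^S$ puts the target exponent at $(\balpha,\balpha)$ and the target coefficients at $(\chi_S,\chi_S)$, which that corollary does not address. The missing device in the paper's proof is to pair $p(\by)$ with the \emph{complement} generating polynomial $h(\bz)=\sum_{S\in\I_d}\bz^{[n]\setminus S}$ (the basis generating polynomial of the dual of the truncation), apply Corollary~7.2 to $p(\by)h(\bz)$ with normalization $\by^{\bp}\bz^{\mathbf{1}-\bp}$ — so that the mixed derivative at $\mathbf{0}$ is exactly your diagonal sum — and only at the very end substitute $\bw=1/\bz$ to arrive at the stated form with $\sum_S w^S$ and denominator $\pr{\by\bw/\balpha}^{\balpha}$.

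Your proposed fallback — prove the bound at the integer vertices $\balpha=\chi_{S_0}$ and extend to all of $\P(\I_d)$ "via the log-concavity of capacity in $\balpha$" — fails for a directional reason. Writing $\mathrm{cap}_{\balpha}(r)=\inf_{\bz>0} r(\bz)\bz^{-\balpha}$ and substituting $\bz=e^{\bm{u}}$ shows that $\log\mathrm{cap}_{\balpha}(r)$ is an infimum of affine functions of $\balpha$, hence \emph{concave} in $\balpha$. Concavity gives $\mathrm{cap}_{\balpha}(p)\,\mathrm{cap}_{\balpha}(q)\ \geq\ \prod_{S}\pr{\mathrm{cap}_{\chi_S}(p)\,\mathrm{cap}_{\chi_S}(q)}^{\lambda_S}$ for $\balpha=\sum_S\lambda_S\chi_S$, i.e., it \emph{lower}-bounds the fractional capacity by the vertex capacities, whereas you need an \emph{upper} bound to transfer the vertex inequality to all of $\P(\I_d)$; the supremum over $\balpha$ need not be attained at a vertex. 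So the vertex-plus-interpolation route cannot close the gap, and the complement-polynomial trick (or an equivalent reformulation of the AGV inequality) is genuinely needed.
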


To finish the proof of Theorem~\ref{thm:main2}, we show that the convex relaxation \CPD{} is stronger than the convex relaxation studied in \cite{AnariGV18}. 
\begin{lemma}\label{lem:stronger_cp}
For any $\bx^\star\geq 0$,
\[ \sup_{\balpha \in P(\I_d)} \inf_{\by,\bw>0} \frac{\det\pr{\sum_{i=1}^n x_i^\star y_i \bv_i \bv_i^\top} \pr{\sum_{S \in \I_d} w^S }}{\pr{\frac{\by\bw}{\balpha}}^{\balpha}} \geq \inf_{\bz\in \Z} \det\pr{\sum_{i=1}^n x_i^\star e^{z_i} \bv_i \bv_i^\top}.\]
\end{lemma}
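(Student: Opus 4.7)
The plan is to apply strong Lagrangian duality to both sides of the inequality and reduce the claim to a capacity lower bound for the bases generating polynomial of the matroid. Let $p(\by) := \det\pr{\sum_i x_i^\star y_i \bv_i\bv_i^\top}$ and $q(\bw) := \sum_{S\in\I_d} w^S$, and write $\textup{cap}_\balpha(r) := \inf_{\by>0} r(\by)/\prod_i y_i^{\alpha_i}$ for a polynomial $r$.

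I would first observe that the LHS objective separates in $\by$ and $\bw$: since $\pr{\tfrac{\by\bw}{\balpha}}^{\balpha} = \pr{\prod_i \alpha_i^{-\alpha_i}} \cdot \prod_i y_i^{\alpha_i} \cdot \prod_i w_i^{\alpha_i}$, the inner infimum factors as $\textup{cap}_\balpha(p)\cdot\textup{cap}_\balpha(q)\cdot\prod_i \alpha_i^{\alpha_i}$, and hence
\[
\textup{LHS} \;=\; \sup_{\balpha \in \P(\I_d)} \textup{cap}_\balpha(p)\cdot \textup{cap}_\balpha(q)\cdot \prod_i \alpha_i^{\alpha_i}.
\]
Next I would apply strong Lagrangian duality to the RHS. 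Slater's condition holds since any $\bz$ with sufficiently large coordinates is strictly feasible. Introducing multipliers $\mu_S\ge 0$ for the constraints $-z(S)\le 0$ and grouping $\alpha_i := \sum_{S\ni i}\mu_S$, stationarity of the inner infimum along the all-ones direction combined with the degree-$d$ homogeneity of $p$ forces $\sum_i\alpha_i = d$, so $\sum_S\mu_S = 1$ and $\balpha\in\P(\I_d)$. Strong duality then yields
\[
\log\textup{RHS} \;=\; \sup_{\balpha\in\P(\I_d)} \log\textup{cap}_\balpha(p).
\]

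Let $\balpha^\star$ attain this supremum; equivalently, take the KKT dual of the RHS minimizer $\bz^\star$ and set $\alpha_i^\star = x_i^\star e^{z_i^\star}\bv_i^\top\bX^{-1}\bv_i$ with $\bX := \sum_j x_j^\star e^{z_j^\star}\bv_j\bv_j^\top$. Using the Cauchy--Binet expansion $p(\by) = \sum_S c_S \prod_{i\in S} y_i$ with $c_S\ge 0$, the map $\bz \mapsto \log p(e^{\bz})$ is a log-sum-exp and therefore convex; the stationarity condition $x_i^\star e^{z_i^\star}\bv_i^\top \bX^{-1}\bv_i = \alpha_i^\star$ exhibits $\bz^\star$ as a critical point of $\log p(e^{\bz})-\sum_i \alpha_i^\star z_i$, so convexity gives $\textup{cap}_{\balpha^\star}(p) = \det(\bX) = \textup{RHS}$, where complementary slackness $\sum_i \alpha_i^\star z_i^\star = \sum_S \mu_S^\star z^\star(S) = 0$ cancels the extra linear term. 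Evaluating the LHS expression at $\balpha^\star$, the desired inequality then reduces to the single claim
\[
\textup{cap}_{\balpha^\star}(q)\cdot\prod_i (\alpha_i^\star)^{\alpha_i^\star}\;\ge\; 1. \quad (\star)
\]

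The hardest step will be establishing $(\star)$. I would prove it by invoking the fact that $q$ is the bases generating polynomial of the rank-$d$ truncated matroid and is therefore completely log-concave by~\cite{AnariGV18}. Log-concavity of $\log q$ on the positive orthant together with the observation that $q(\one_B)=1$ at every vertex $\one_B$ of $\P(\I_d)$ already yields $q(\balpha^\star)\ge 1$ via Jensen's inequality applied to any decomposition $\balpha^\star = \sum_B \lambda_B \one_B$; this is $(\star)$ evaluated at the single point $\bw=\balpha^\star$. Upgrading this pointwise estimate to the infimum over all $\bw>0$ requires Gurvits--Mason-style capacity lower bounds for real-stable/CLC polynomials, in the spirit of the inequality used in Lemma~\ref{lem:relate}. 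Equivalently, identifying $\log\textup{cap}_{\balpha^\star}(q)$ with the entropy $H(\mu^\dagger)$ of the max-entropy product-form distribution on $\I_d$ with marginals $\balpha^\star$, $(\star)$ becomes the bound $H(\mu^\dagger) \ge -\sum_i\alpha_i^\star\log\alpha_i^\star$, capturing the fact that a max-entropy measure on the bases of a rank-$d$ matroid must be spread enough to dominate the naive marginal entropy term.
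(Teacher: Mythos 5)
Your proposal is correct and follows essentially the same route as the paper: the identity $\inf_{\bz\in \Z} \det\pr{\sum_{i} x_i^\star e^{z_i} \bv_i \bv_i^\top}=\sup_{\balpha\in\P(\I_d)}\inf_{\by>0}\det\pr{\sum_i x_i^\star y_i\bv_i\bv_i^\top}/\by^{\balpha}$ is exactly the paper's Claim~\ref{claim:dual-g} (derived there via the substitution $y_i=e^{z_i}$, degree-$d$ homogeneity, and Sion's minimax rather than Lagrangian/KKT duality), and your key inequality $(\star)$, i.e.\ $\mathrm{cap}_{\balpha}(q)\cdot\prod_i\alpha_i^{\alpha_i}\ge 1$, is precisely the paper's Claim~\ref{claim:max_entropy}, proved there by the same machinery you invoke (the max-entropy product-form distribution of Lemma~\ref{lem:max_entropy_dist} together with the entropy comparison of Lemma~\ref{lem:compare_entropy}). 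The one caution is that your reduction routes through optimizers $\bz^\star$ and $\balpha^\star$ whose existence is not guaranteed for arbitrary $\bx^\star\ge 0$; since $(\star)$ holds for every $\balpha\in\P(\I_d)$, the clean fix is to apply it uniformly and take suprema, which is what the paper does.
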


Note that we cannot directly use the convex relaxation of~\cite{AnariGV18} and avoid the two-step procedure for our problem. Algorithm~\ref{alg:rounding} and the proof of Lemma~\ref{lem:rounding_prob} require that the solution $\bx$ is sparse, and we do not know if such property holds true for the convex relaxation of~\cite{AnariGV18}. 

Before we prove these lemmas, we use them to prove the main result of our paper. 

\begin{proofof}{Theorem~\ref{thm:main2}} 
We first show  \eqref{eq:thm-OPT-CP}. Recall that $ f(\bx) = \inf_{\bz\in\Z}\log\det\pr{\sum_{i \in [n]} x_ie^{z_i} \bv_i \bv_i^\top}$. Let $(\bx^\star,\bz^\star)$ be an optimal solution to \CPD, so we have $f(\bx^\star) = \CP{}$. The first inequality of \eqref{eq:thm-OPT-CP}  follows from Lemma~\ref{lem:alg}. It remains to show the second inequality.

By Theorem~\ref{thm:sparsity}, there exists $\hat{\bx} \in \P(\M)$ such that $f(\hat{\bx}) = f(\bx^\star)$ and $|\{ i \in [n] \mid 0<\hat {x_i} < 1\}| \leq 2\pr{{d+1 \choose 2}  + d}$. Let $T \in \I$ be the random solution returned by Algorithm~\ref{alg:rounding} given an input $\bx = \hat {\bx}$. We apply Lemmas~\ref{lem:expected_value},
\ref{lem:relate}, and~\ref{lem:stronger_cp} successively and in this order
to get
\begin{align}
\E\left[\det\pr{\sum_{i \in T} \bv_i \bv_i^\top}\right] &\geq (2e^3d)^{-d} \sum_{S\in \I_d} \det\pr{\sum_{i\in S} \hat{x_i} \bv_i \bv_i^\top} \nonumber \\
&\geq (2e^3d)^{-d} e^{-2d}  \sup_{\bal \in P(\I_d)} \inf_{\by,\bw>0} \frac{\det\pr{\sum_{i=1}^n \hat{x_i} y_i \bv_i \bv_i^\top} \pr{\sum_{S \in \I_d} w^S }}{\pr{\frac{\by\bw}{\balpha}}^{\bal}} \nonumber  \\
&\geq (2e^5d)^{-d} \inf_{\bz\in \Z} \det\pr{\sum_{i=1}^n \hat{x_i} e^{z_i} \bv_i \bv_i^\top} \nonumber \\
&= (2e^5d)^{-d} \inf_{\bz\in \Z} \det\pr{\sum_{i=1}^n x_i^\star e^{z_i} \bv_i \bv_i^\top} =(2e^5d)^{-d} \cdot \CP \label{eq:alg-vs-CP}
\end{align}
where the  first of the two equalities follows from Theorem~\ref{thm:sparsity}.

On the other hand, for any \(T\in\I\), we have \(\det\pr{\sum_{i \in T} \bv_i \bv_i^\top} \leq \OPT\), and therefore 
\begin{equation}
\E\left[\det\pr{\sum_{i \in T} \bv_i \bv_i^\top}\right] \leq \OPT{}. \label{eq:alg-vs-OPT}
\end{equation}
 Combining \eqref{eq:alg-vs-CP} and \eqref{eq:alg-vs-OPT} proves the second inequality of \eqref{eq:thm-OPT-CP}.

Given a solution $\bx^\star$ and $\bz^\star$ attaining the infimum in $\inf_{\bz\in \Z} \det\pr{\sum_{i=1}^n x_i^\star e^{z_i} \bv_i \bv_i^\top}$, the efficiency of the randomized algorithm that satisfies \eqref{eq:alg-vs-CP} follows from the efficiency of obtaining a sparse solution
(by Theorem~\ref{thm:sparsity}) and of the rounding Algorithm~\ref{alg:rounding}.
\end{proofof}


Now we prove Lemmas~\ref{lem:rounding_prob} and \ref{lem:expected_value}. The proofs of Lemmas  \ref{lem:relate} and~\ref{lem:stronger_cp} appears in Appendix \ref{sec:app-rounding}.

\begin{proofof}{Lemma~\ref{lem:rounding_prob}} We need to prove that for any $S \subseteq R_1 \cup R_2$ such that $S \in \I_d$, 
\[\Pr[ S \subseteq T] \geq (2e^3d)^{-d}.\]
Let $S_1 = S \cap R_1$ and $ S_2 = S \cap R_2$. Since $x_i =1$ for any $i \in R_2$ and $\bx \in \P(\M)$, we have $R_2 \in \I$. Since $S \in \I$ and $S_1 \subseteq S$, we have $S_1 \in \I$. We first claim the following.

\begin{claim}\label{claim:matroid_be}
There exists $Y \subseteq R_2 \setminus S_2$ such that $|Y| \leq |S_1|$ and $S_1 \cup (R_2 \setminus Y) \in \I$.
\end{claim}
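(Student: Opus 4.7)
The plan is to prove the claim by a standard matroid augmentation argument applied to the independent sets $S$ and $R_2$. First I would observe that $R_2 \in \I$: since $\bx \in \P(\M)$, we have $|R_2| = x(R_2) \leq r(R_2) \leq |R_2|$, so $r(R_2) = |R_2|$ and $R_2$ is independent. In particular, every subset of $R_2 \setminus S_2$ is independent.

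Next I would greedily augment $S$ using elements of $R_2 \setminus S_2$. Starting from the independent set $S$, repeatedly apply the matroid augmentation axiom to add elements of $R_2 \setminus S_2$ one at a time while maintaining independence. Since the ground set of this process lies in $S \cup R_2$, this produces a set $S \cup Z$ with $Z \subseteq R_2 \setminus S_2$ and $S \cup Z \in \I$ that is maximal in $S \cup R_2$, so
\[
|S \cup Z| \;=\; r\bigl(S \cup (R_2 \setminus S_2)\bigr) \;=\; r(S \cup R_2).
\]
Because $R_2 \subseteq S \cup R_2$ and $R_2$ is itself independent, $r(S\cup R_2) \geq |R_2|$, and therefore
\[
|Z| \;\geq\; |R_2| - |S| \;=\; |R_2| - d.
\]

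Finally I would define $Y := (R_2 \setminus S_2) \setminus Z \subseteq R_2 \setminus S_2$. Then $R_2 \setminus Y = S_2 \cup Z$ and
\[
|Y| \;=\; |R_2 \setminus S_2| - |Z| \;\leq\; (|R_2| - |S_2|) - (|R_2| - d) \;=\; d - |S_2| \;=\; |S_1|,
\]
while
\[
S_1 \cup (R_2 \setminus Y) \;=\; S_1 \cup S_2 \cup Z \;=\; S \cup Z \;\in\; \I,
\]
establishing the claim. There is no real obstacle here — the argument is a straightforward invocation of the matroid augmentation axiom — so the only care needed is to correctly track that augmentation is done within $R_2 \setminus S_2$ (rather than all of $R_2$) so that the swapped-out set $Y$ is disjoint from $S_2$, as required by the claim.
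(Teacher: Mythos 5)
Your proof is correct and follows essentially the same route as the paper's: both extend the independent set $S$ by elements of $R_2 \setminus S_2$ via the matroid augmentation axiom and then take $Y$ to be the leftover elements of $R_2 \setminus S_2$. The only cosmetic difference is that you phrase the cardinality count through the rank function ($|S\cup Z| = r(S\cup R_2) \ge |R_2|$) whereas the paper counts the augmentation steps directly; your phrasing also handles the degenerate case $|R_2|\le |S|$ a bit more cleanly.
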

\begin{proof}
Recall that $S = S_1 \cup S_2 \in \I$ and $R_2 \in \I$. If $|R_2| \leq |S_1\cup S_2|$, then $Y = R_2$ satisfies the condition. Else, by the definition of matroids, there exists an element $i \in R_2 \setminus (S_1 \cup S_2)$ such that $S_1 \cup S_2 \cup \{i \} \in \I$. Since $S_1 \cap R_2 = \emptyset$, we get that $i \in R_2 \setminus S_2$. 
Repeating this process for $|R_2| - |S_1 \cup S_2|$ times, we obtain a set $W\subseteq R_2 \setminus S_2$ of size $|R_2| - |S_1 \cup S_2|$ such that $S_1 \cup S_2 \cup W \in \I$. If $Y = R_2 \setminus (S_2 \cup W)$, then $ S_1 \cup (R_2 \setminus Y) = S_1 \cup S_2 \cup W  \in \I$.  Since $W$ has size $|R_2| - |S_1 \cup S_2|$, $Y$ has size $|R_2| - (|R_2| - |S_1 \cup S_2|) - |S_2| = |S_1\cup S_2| - |S_2| \leq |S_1|$.
\end{proof}

Let $Y \subseteq R_2 \setminus S_2$ be a set such that $S_1 \cup (R_2 \setminus Y) \in \I$. Next, we prove a lower bound on $\Pr[S \subseteq T]$. Note that $S$ is a disjoint union of $S_1$ and $S_2$. Hence,\begin{align*}
\Pr[S \subseteq T] & = \Pr[S_1 \subseteq T \text{ and } S_2 \subseteq T]\\
& \geq \Pr[T \cap R_1 = S_1  \text{ and } S_2 \subseteq T]\\
& \geq \Pr[T \cap Y = \emptyset \text{ and }T \cap R_1 = S_1
\text{ and } S_2 \subseteq T]\\
& = \Pr[ T \cap Y = \emptyset] \cdot \Pr[T \cap R_1 = S_1 \mid T \cap Y = \emptyset] \cdot \Pr[ S_2 \subseteq T \mid T \cap Y = \emptyset, T \cap R_1 = S_1]
\end{align*}

Next, we lower bound each of the probabilities.

\begin{enumerate}
\item \para{$\Pr[T \cap Y = \emptyset]$:} Consider the event that $T \cap Y = \emptyset$. It happens if for each $i \in Y$, $i$ is not added to $T$ during the execution of the algorithm. Let $T'$ be the set $T$ before the iteration considering $i$. If $T' \cup \{i\} \not \in \I$, $i$ is not added to $T$ with probability $1$. If $T' \cup \{i \} \in \I$, $i$ is not added to $T$ with probability $1/2$. Hence, for each $i \in T$, $i$ is not added to $T$ with probability at least $1/2$. Probability that none of the elements of $Y$ are added to $T$  is therefore at least $\pr{\frac{1}{2}}^{|Y|}$. Since $|Y| \leq |S_1|$,
\[ \Pr[Y \cap T = \emptyset] \geq \pr{\frac{1}{2}}^{|S_1|}.\]

\item \para{$\Pr[T \cap R_1 = S_1 \mid T \cap Y = \emptyset]$:} Since all elements of $R_1$ are considered before the elements of $R_2$ (and hence $Y$), we have 
\begin{equation} \label{eq:prob2-1}
\Pr[T \cap R_1 = S_1 \mid T \cap Y = \emptyset] = \Pr[T \cap R_1 = S_1].
\end{equation}
To get \(T \cap R_1 = S_1 \), we must have $(R_1 \setminus S_1) \cap T = \emptyset$ and $S_1 \subseteq T$. Hence,
\begin{equation} \label{eq:prob2-2}
 \Pr[T \cap R_1 = S_1] = \Pr[T \cap (R_1 \setminus S_1) = \emptyset]\cdot\
\Pr[S_1 \subset T \mid T \cap (R_1 \setminus S_1) = \emptyset].
\end{equation}

As argued above, for any element $i\in  R_1 \setminus S_1$, the probability that $i$ is not in $T$ (regardless of other elements) is at least $1-\frac{1}{d}$. Hence, $\Pr[T \cap (R_1 \setminus S_1)] \geq \pr{1-\frac{1}{d}}^{|R_1\setminus S_1|}$ which is equal to $\pr{1-\frac{1}{d}}^{|R_1|-|S_1|} $ since $S_1 \subseteq R_1$.

Since $S\in\I$ and $S_1 \subseteq S$, we have $S_1\in\I$. Consider an element $i \in S_1$ and the set $T'$ being the set $T$ before the algorithm processes the element $i$. If no element of $R_1 \setminus S_1$ is picked, then $T' \cup \{i\} \subset S_1$. Hence, $T' \cup \{i\} \in \I$, and the probability that the element $i$ is picked is $\frac{1}{d}$. Hence, if no element of $R_1 \setminus S_1$ is picked, then every element of $S_1$ is picked with probability $\frac{1}{d}$. This implies that
\begin{equation} \label{eq:prob2-3}
\Pr[S_1 \subseteq T \mid T \cap (R_1 \setminus S_1) = \emptyset] = \pr{\frac{1}{d}}^{|S_1|}.
\end{equation}

Combining \eqref{eq:prob2-1}-\eqref{eq:prob2-3}, we get
\[\Pr[T \cap R_1 = S_1 \mid T \cap Y = \emptyset] \geq \pr{1-\frac{1}{d}}^{|R_1|-|S_1|} \pr{\frac{1}{d}}^{|S_1|}.\]

\item \para{$\Pr[ S_2 \subseteq T \mid T \cap Y = \emptyset, T \cap R_1 = S_1]$:} Consider an element $i \in S_2$. Let $T'$ be the set $T$ just before the algorithm considers the element $i$. If $T' \cap R_1 = S_1$ and $T' \cap Y = \emptyset$, then $T' \cup \{i\} \subseteq S_1 \cup (R_2 \setminus Y)$. By Claim~\ref{claim:matroid_be}, $S_1 \cup (R_2 \setminus Y) \in \I$. Hence, if $T' \cap R_1 = S_1$ and $T' \cap Y = \emptyset$. Then, $T' \cup \{i\} \in \I$, and 
$i$ is added to $T$ with probability $1/2$. Therefore,
\[ \Pr[ S_2 \subseteq T \mid T \cap Y = \emptyset, T \cap R_1 = S_1] = \pr{\frac{1}{2}}^{|S_2|}.\]
\end{enumerate}

Combining the bounds on the three probabilities, we get
\[ \Pr[S \subseteq T] \geq \pr{\frac{1}{2}}^{|S_1|}\pr{1-\frac{1}{d}}^{|R_1|-|S_1|} \pr{\frac{1}{d}}^{|S_1|}\pr{\frac{1}{2}}^{|S_2|}.\]

Since $|S| = d$ and $S$ is a disjoint union of $S_1$ and $S_2$, we have $|S_1| + |S_2| = d$. Also, by the assumption of the theorem, $|R_1| \leq 2\pr{{d+1\choose 2} +d}$. Hence,
\[ \Pr[S \subseteq T] \geq \pr{\frac{1}{2}}^{d} \pr{1-\frac{1}{d}}^{2\pr{{d+1\choose 2} +d}}  \pr{1-\frac{1}{d}}^{-|S_1|}\pr{\frac{1}{d}}^{|S_1|}.\]
Since $|S_1| \leq d$,
we have\begin{align*}
 \Pr[S \subseteq T] &\geq  \pr{\frac{1}{2}}^{d} \pr{1-\frac{1}{d}}^{2\pr{{d+1\choose 2} +d}}\pr{1-\frac{1}{d}}^{-d}\pr{\frac{1}{d}}^{d}\\
& = \pr{\frac{1}{2d}}^{d} \pr{1-\frac{1}{d}}^{d(d+1) + d}
\end{align*}
For $d \geq 2$, we have $1-\frac{1}{d} \geq e^{- \frac{1.5}{d}} \geq e^{-\frac{3}{d+2}}$.
Hence,\begin{align*}
 \Pr[S \subseteq T] & \geq \pr{2d}^{-d} e^{-3d } =\pr{2e^3 d}^{-d}
\end{align*}
finishing the proof of Lemma~\ref{lem:rounding_prob}
\end{proofof}

\begin{proofof}{Lemma~\ref{lem:expected_value}} By Lemma~\ref{lem:rounding_prob}, for any $S \subseteq R_1 \cup R_2$ such that $S \in \I_d$, we have
$\Pr[S \subset T] \geq \pr{2e^3 d}^{-d}$. The rounding Algorithm \ref{alg:rounding} returns a solution \(T\) of expected value 
\begin{align*}
\E \left[ \det\pr{\sum_{i\in T} \bv_i \bv_i^\top}\right] &= \E \left[ \sum_{S \in {T \choose d}} \det\pr{\sum_{i \in S} \bv_i \bv_i^\top}\right]\\
& = \sum_{S \subseteq [n]: |S| = d} \Pr\left[S \subseteq T\right] \det\pr{\sum_{i \in S} \bv_i \bv_i^\top}
\end{align*}
where we apply the Cauchy-Binet formula to obtain the first equality. Since we only pick elements of $R_1 \cup R_2$ which form an independent set, we have
\begin{align*}
\E \left[ \det\pr{\sum_{i\in T} \bv_i \bv_i^\top}\right] & = \sum_{S \subseteq R_1 \cup R_2: S \in \I_d} \Pr\left[S \subseteq T\right] \det\pr{\sum_{i \in S} \bv_i \bv_i^\top}\\
& \geq \pr{2e^3d}^{-d} \sum_{S \subseteq R_1 \cup R_2: S \in \I_d} \det\pr{\sum_{i \in S} \bv_i \bv_i^\top}.
\end{align*}
For each $i \in [n]$, we have $0 \leq x_i \leq 1$. Hence,
\[ \E \left[ \det\pr{\sum_{i\in T} \bv_i \bv_i^\top}\right] \geq \pr{2e^3d}^{-d} \sum_{S \subseteq R_1 \cup R_2: S \in \I_d} \det\pr{\sum_{i \in S} x_i \bv_i \bv_i^\top}.\]
For $S \in \I_d$ such that $S \not\subseteq R_1 \cup R_2$, there exists $i \in S$ such that $x_i = 0$. Hence, for $S \in \I_d$ such that $S \not \subseteq R_1 \cup R_2$, $\sum_{i \in S} x_i \bv_i \bv_i^\top$ has rank at most $d-1$ and $\det\pr{\sum_{i \in S} x_i \bv_i \bv_i^\top} =0$. Therefore,

\[ \E \left[ \det\pr{\sum_{i\in T} \bv_i \bv_i^\top}\right] \geq \pr{2e^3d}^{-d} \sum_{ S \in \I_d} \det\pr{\sum_{i \in S} x_i \bv_i \bv_i^\top}\]
finishing the proof of Lemma~\ref{lem:expected_value}.
\end{proofof}

\section{Deterministic Algorithm}\label{sec:deterministic}

In this section, we prove Theorem~\ref{thm:deterministic} and give the
deterministic algorithm achieving the claimed guarantee. The algorithm
reduces the ground set in each iteration until the ground set is
itself an independent set. Given any $V\subseteq [n]$, we let
$\M_{|V}=(V,\I_{|V})$ denote the matroid obtained by deleting all
elements not in $V$ from $\M$. Moreover, we let \CPD{}(V) denote the
convex program when the ground set and the matroid are $V$ and
$\M_{|V}$, respectively, and we consider only vectors indexed by $V$.
We let \CP{}(V) denote optimal value of the convex program
\CPD{}(V). We denote by \(r(V)\) the rank of the matroid \(\M_{|V}\).

We first describe the deterministic rounding algorithm, presented in Algorithm \ref{alg:deterministic}.
\begin{algorithm}
\caption{Deterministic Algorithm}
\begin{algorithmic}[1]
\State \textbf{Input:} a matroid \(\M=([n],\I)\).
\State \textbf{Output:} a basis \(S\in\I\).
\Procedure{Rounding}{}
        \State Let \(\bx\) be optimal solution to \CPD{} such that \(|\{i\in [n]:0<x_i \}|\leq k+2\left(\binom{d+1}{2}+d\right) \) as returned by Theorem~\ref{thm:sparsity}.
        \State Let \(V\gets \set{i\in [n]:0<x_i}\).
        \While {\( V\notin \I\) }
            \State \(i\gets\argmax_{j\in V: r(V\setminus \{j\})=r(V)} \CP{}(V\setminus \{j\})\) (breaking a tie arbitrarily)
            \State \(V\gets V\setminus \{i\}\)
        \EndWhile
        \Return $V$
        \EndProcedure
\end{algorithmic}
\label{alg:deterministic}
\end{algorithm}

Observe that $V$ is initialized to a set of size at most
$k+2\left(\binom{d+1}{2}+d\right)$ along with $r(V)=k$. Moreover,
$\CP{}(V)=\CP{}$ initially, since we just remove all elements with $x_i=0$ from the ground set.

In each iteration of the while loop, we decrease the size of $V$ by one, and thus there can be at most $2\left(\binom{d+1}{2}+d\right)$ iterations of the while loop. In each iteration, we do not decrease the rank of \(V\) from $k$, so the final output, by construction, is  an independent set of size \(k\) and hence feasible. To prove the guarantee, we show that in each iteration,
\begin{equation}\label{eqn:decrease}
\CP{}(V\setminus \{i\})\geq \beta \cdot \CP{}(V)
\end{equation}
where \(\beta=(2e^5 d)^{-d}=O(d)^{-d}\). Also, the relaxation is exact after the
last iteration because $V$ is a basis after the while loop
terminates. Thus, the objective value of the returned solution is at
least
 \begin{equation*}
\beta^{2(\binom{d+1}{2}+d)} \cdot \CP{},
\end{equation*}
giving an approximation factor \(O(d)^{2d(\binom{d+1}{2}+d)}=O(d)^{d^3}\cdot O(1)^{3d^2\log d}=O(d)^{d^3}\), as claimed.

It only remains to prove \eqref{eqn:decrease}. From the guarantee of the randomized algorithm given in Theorem~\ref{thm:main2}, there \emph{exists} a basis $S\in \I_{|V}$ with $S\subseteq V$ such that
\begin{equation}\label{eq:apx-guarantee}
\det \left(\sum_{j\in S} \bv_j \bv_j^\top\right)\geq \beta\cdot \CP{}(V).
\end{equation}
Let $j\in V\setminus S$ where $j$ must exist since $V\notin \I$.  Then $r(V\setminus \{j\})=r(S)=k$ since $S$ is a basis.
We have
\(
\CP{}(V\setminus \{j\}) \ge \det \left(\sum_{e\in S} \bv_e
  \bv_e^\top\right),
\) because the indicator vector $\bx$ of
$S \subseteq V\setminus \{j\}$ is a solution to $\CPD{}(V\setminus \{j\})$ of value
$\det \left(\sum_{e\in S} \bv_e \bv_e^\top\right)$. Together with
\eqref{eq:apx-guarantee}, and because $i$ is chosen to maximize
$\CP{}(V\setminus \{j\})$ over $j$ s.t.~$r(V\setminus \{j\})=k$, we
have established \eqref{eqn:decrease}. This completes the proof of
Theorem~\ref{thm:deterministic}.

\bibliographystyle{alpha}
\bibliography{references}

\appendix
\section{Preliminaries}\label{sec:prelim}

In this section, we introduce definitions and theorems from convex
duality and the polyhedral theory of matroids that we use in this paper.

\begin{theorem}[Cauchy-Binet Formula] For any set of vectors $\bv_1,\dots,\bv_n \in \rR^d$,
\[ \det\pr{\sum_{i=1}^n \bv_i \bv_i^\top} = \sum_{S \in {[n] \choose d}} \det\pr{\sum_{i \in S} \bv_i \bv_i^\top}\]
where ${[n] \choose d}$ denotes the set of subsets of $[n]$ of size $d$.
\end{theorem}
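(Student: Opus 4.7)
The plan is to deduce this statement from the classical Cauchy--Binet identity for determinants of products of rectangular matrices, applied to a single $d\times n$ matrix built out of the vectors $\bv_1,\dots,\bv_n$.

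First I would introduce the $d\times n$ matrix $\bA$ whose $i$-th column is $\bv_i$. Then $\sum_{i=1}^n \bv_i\bv_i^\top = \bA\bA^\top$, so the left-hand side is $\det(\bA\bA^\top)$. For $S\subseteq[n]$ with $|S|=d$, let $\bA_S$ denote the $d\times d$ submatrix of $\bA$ whose columns are indexed by $S$; then $\sum_{i\in S}\bv_i\bv_i^\top = \bA_S\bA_S^\top$, and $\det(\bA_S\bA_S^\top)=\det(\bA_S)^2$. Thus the claim reduces to the well-known identity $\det(\bA\bA^\top)=\sum_{S\in\binom{[n]}{d}}\det(\bA_S)^2$.

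To prove this identity, I would expand via the Leibniz formula: $\det(\bA\bA^\top) = \sum_{\sigma\in S_d}\sgn(\sigma)\prod_{i=1}^d (\bA\bA^\top)_{i,\sigma(i)}$, and substitute $(\bA\bA^\top)_{i,\sigma(i)} = \sum_{k=1}^n A_{ik}A_{\sigma(i),k}$. Distributing the product yields a sum indexed by functions $\phi:[d]\to[n]$. Group the contributions by the image of $\phi$. If $\phi$ is not injective, two indices $i\ne j$ satisfy $\phi(i)=\phi(j)$, and pairing $\sigma$ with $\sigma\circ(i\;j)$ shows that the corresponding terms cancel. For injective $\phi$ with image $S$, the sum over $\phi$ reorganizes (using that $\det$ is alternating in its columns) into $\det(\bA_S)^2$, giving the identity. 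Alternatively, one can give a one-line proof using exterior algebra: applying the $d$-th exterior power to the identity $\bA\bA^\top$ uses $\wedge^d(\bA\bA^\top)=\wedge^d(\bA)\wedge^d(\bA^\top)$, and the components of $\wedge^d(\bA)$ in the standard basis of $\wedge^d(\R^n)$ are exactly the minors $\det(\bA_S)$.

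There is no real obstacle: the Cauchy--Binet formula is classical and both proof strategies are routine. The only bookkeeping step that requires care is verifying the cancellation of non-injective $\phi$ in the Leibniz expansion, which follows cleanly from the antisymmetry of the sign function under transposition.
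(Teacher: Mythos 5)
Your proof is correct. The paper itself states the Cauchy--Binet formula without proof, treating it as a classical fact, so there is no in-paper argument to compare against; your reduction to $\det(\bA\bA^\top)=\sum_{S}\det(\bA_S)^2$ via the column matrix $\bA$, followed by the Leibniz expansion with cancellation of non-injective index functions (or, equivalently, the exterior-power argument), is the standard and complete derivation.
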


\subsection{Matroids}\label{sec:matroidprelim}
For basic preliminaries on matroids, we refer readers to Chapter 39 of \cite{schrijver2003combinatorial}. Here we include the definitions and basic facts that are used in our proofs in this paper.
\begin{definition} [Matroids]
A matroid $\M = (E,\I)$ is a structure consisting of a finite ground set $E$ and a non-empty collection $\I$ of independent subsets of $E$ satisfying the following conditions:
\begin{itemize}
\item If $S \subseteq T$ and $T \in \I$, then $S \in \I$.
\item If $S,T \in \I$ and $|T| > |S|$, then there exists an element $i \in T \setminus S$ such that $S \cup \{i\} \in \I$.
\end{itemize}
\end{definition}

\begin{definition}[Rank Function of Matroids]
For a matroid $\M = (E,\I)$, the rank of $\M$, denoted by \(r(\M)\), is the size of the largest independent set. For any \(S\subseteq E\), the rank function \(r:2^E\rightarrow\R\), denoted by \(r(S)\), is the size of maximal independent sets contained in \(S\).
\end{definition}
It is known that all maximal sets in a subset \(S\subseteq E\) have the same size, and therefore \(r(\M)\) is well-defined and equals to the size of these maximal sets.
\begin{definition} [Basis of Matroids]
An independent set with the largest cardinality of a matroid \(\M\), i.e. with the size equals to the rank of \(\M\), is called a \textit{basis} of $\M$. The set of all bases of \(\M\) is called a \textit{base set} of \(\M\) and is denoted by $\B_\M$.
\end{definition}
\begin{lemma} [Strong Basis Exchange Property] (Theorem 39.12 of \cite{schrijver2003combinatorial})
Let \(B_1,B_2\) be distinct bases of a matroid \(\M\). Then for all \(x\in B_2\setminus B_1\), there exists \(y\in B_1\setminus B_2\) such that \(B_1+x-y,B_2+y-x\in\B_\M\).
\end{lemma}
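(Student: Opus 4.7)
The plan is to prove the symmetric (strong) basis exchange property using fundamental circuits and the circuit elimination axiom, which are the standard tools for such matroid arguments.

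First, I would recall the following basic facts about fundamental circuits: for any basis $B$ of $\M$ and any $e\notin B$, the set $B\cup\{e\}$ contains a unique circuit $C(B,e)$, called the fundamental circuit of $e$ with respect to $B$; moreover, for any $f\in C(B,e)$, the set $B\cup\{e\}\setminus\{f\}$ is again a basis of $\M$. Consequently, $B_1+x-y$ is a basis whenever $y\in C(B_1,x)\cap B_1$, and $B_2+y-x$ is a basis whenever $y\in B_1\setminus B_2$ and $x\in C(B_2,y)$. Thus, the goal reduces to finding a single $y\in B_1\setminus B_2$ satisfying both $y\in C(B_1,x)$ and $x\in C(B_2,y)$.

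Next, I would produce candidates for the first condition. Let $C_1:=C(B_1,x)\subseteq B_1+x$. Since $C_1$ is a circuit and $B_2$ is independent, we cannot have $C_1\subseteq B_2$; combined with $x\in B_2$, this forces $C_1\cap(B_1\setminus B_2)\neq\emptyset$. Any element of this intersection satisfies the first condition. The nontrivial part is to ensure such a $y$ can be chosen to additionally satisfy $x\in C(B_2,y)$.

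To force the second condition, I would argue by contradiction using the (strong) circuit elimination axiom. Suppose no $y\in C_1\cap(B_1\setminus B_2)$ has $x\in C(B_2,y)$. Pick any such $y$ and let $C_2:=C(B_2,y)\subseteq B_2+y$; by assumption $x\notin C_2$, so $C_2\subseteq(B_2\setminus\{x\})\cup\{y\}$. Apply strong circuit elimination to the pair $C_1,C_2$ at their common element $y$, with $x\in C_1\setminus C_2$: this yields a circuit $C'\subseteq(C_1\cup C_2)\setminus\{y\}$ containing $x$. Tracing supports gives $C'\setminus\{x\}\subseteq(B_1\setminus\{y\})\cup(B_2\setminus\{x\})$, producing a new circuit containing $x$ that avoids $y$.

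The main obstacle is extracting a clean contradiction from this new circuit. The plan is to do this by induction on $|B_1\triangle B_2|$: the base case $|B_1\triangle B_2|=2$ is trivial since then $B_2=B_1-y+x$ for some $y$ and both required sets coincide with $B_1,B_2$. For the inductive step, the circuit $C'$ produced above contains $x$ and some $y'\in B_1\setminus B_2$; if $y'$ itself works we are done, otherwise the existence of $C'$ allows us to swap along $C'$ to construct an auxiliary basis $B_1'$ with $|B_1'\triangle B_2|<|B_1\triangle B_2|$ that still contains $x\in B_2\setminus B_1'$ and for which a successful exchange $y$ in $(B_1',B_2)$ can be lifted back to a successful exchange in $(B_1,B_2)$. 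Invoking the induction hypothesis on $(B_1',B_2,x)$ then yields the required $y\in B_1\setminus B_2$, completing the proof.
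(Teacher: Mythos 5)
The paper does not prove this lemma; it is quoted verbatim from Schrijver (Theorem 39.12), so there is no in-paper argument to compare against and your proposal must stand on its own. Your reduction is correct and is the standard one: $B_1+x-y$ is a basis iff $y\in C(B_1,x)$, $B_2+y-x$ is a basis iff $y\notin B_2$ and $x\in C(B_2,y)$, and $C(B_1,x)\cap(B_1\setminus B_2)\neq\emptyset$ because the circuit $C(B_1,x)$ cannot lie inside the independent set $B_2$. The single application of strong circuit elimination in your third step is also valid.

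The gap is in your closing move. You never specify how the auxiliary basis $B_1'$ is constructed from $C'$, why $|B_1'\triangle B_2|$ strictly decreases, nor --- most importantly --- why a successful exchange element $y'$ for the pair $(B_1',B_2)$ ``lifts back'' to one for $(B_1,B_2)$: the $y'$ produced by the induction hypothesis need not lie in $B_1$ at all, and even when it does, $B_1'+x-y'$ being a basis says nothing about $B_1+x-y'$. This lifting claim is the entire content of the proof and is left unargued; I do not see how to make the induction on $|B_1\triangle B_2|$ go through in this form. The fix is to change the progress measure: keep $x$ and iterate strong circuit elimination against fundamental circuits $C(B_2,y'')$ for $y''\in C'\setminus B_2$ (each such circuit omits $x$ by your contradiction hypothesis and is contained in $B_2+y''$, so each step strictly shrinks $C'\setminus B_2$ while retaining $x\in C'$), terminating in a circuit contained in $B_2$ --- a contradiction. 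Equivalently, and more directly: if no $y\in C(B_1,x)\cap(B_1\setminus B_2)$ has $x\in C(B_2,y)$, then every element of $C(B_1,x)\setminus\{x\}$ lies in $\mathrm{span}(B_2\setminus\{x\})$, hence so does $x$, contradicting the independence of $B_2$. Either of these closes the argument; the induction as you stated it does not.
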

\begin{definition}[Matroid Base Polytope]
For a matroid $\M=(E,\I)$ and a subset \(S\subseteq E\) the indicator vector \(\one_S\) of \(S\) is a binary vector in $\{0,1\}^n$ whose $i^{th}$ coordinate is $1$ if $i\in S$ and $0$ otherwise. The matroid base polytope $\P(\M)$ is the convex hull of indicator vectors of all the bases of $\M$.
\end{definition}
\begin{lemma}  [Characterization of Base Polytope] (Corollary 40.2d of \cite{schrijver2003combinatorial})
For a matroid $\M$, the matroid base polytope $\P(\M)$ is characterized by
\begin{equation}
\P(\M)=\set{x\in\R^E:\sum_{e\in S} x_e \leq r(S),\ \forall S\subseteq [n] \text{ and } \sum_{e\in E} x_e =r(E)}.
\end{equation}
Therefore, for any matroid $\M = (E,\I)$ of rank $r$ and $\mathbf{x} \in \P(\M)$, we have $\sum_{e \in E} x_e = r$.
\end{lemma}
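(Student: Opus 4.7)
The plan is to establish this as Edmonds' classical matroid polytope theorem specialized to bases, by proving both inclusions of the set equality, with the nontrivial direction handled via LP duality and the greedy algorithm. Throughout, let $P$ denote the polyhedron defined on the right-hand side.

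For the easy inclusion $\P(\M) \subseteq P$: for any basis $B$ of $\M$, its indicator vector $\one_B$ satisfies $\one_B(S) = |B \cap S| \leq r(S)$, because $B \cap S$ is an independent set contained in $S$; and $\one_B(E) = |B| = r(E)$ since all bases have size $r(\M) = r(E)$. Since $P$ is defined by linear (in)equalities, convex combinations of such indicator vectors remain in $P$, yielding $\P(\M) \subseteq P$.

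For the reverse inclusion $P \subseteq \P(\M)$, the plan is to show that every vertex of $P$ is the indicator vector of some basis. Equivalently, I would show that for every weight vector $c \in \R^E$, some $\one_B$ attains $\max\{c^\top x : x \in P\}$. Order the elements $e_1, \ldots, e_n$ so that $c_{e_1} \geq \cdots \geq c_{e_n}$, let $S_i := \{e_1, \ldots, e_i\}$, and let $B$ be produced by the greedy algorithm scanning in this order, which is standardly known to maximize $c^\top \one_{B'}$ over all bases $B'$ and to satisfy $|B \cap S_i| = r(S_i)$ for each $i$. To certify $\one_B$ is LP-optimal over $P$, I would construct a dual solution as follows: set $y_{S_i} := c_{e_i} - c_{e_{i+1}}$ for $i < n$, absorb the residual $c_{e_n}$ into the multiplier on the equality constraint $x(E) = r(E)$, and put $y_S := 0$ elsewhere. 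Nonnegativity of the chain multipliers follows from the ordering of $c$; dual feasibility $\sum_{S \ni e_j} y_S = c_{e_j}$ is verified by telescoping along the chain; and complementary slackness holds because $y_S \neq 0$ only on $S_1 \subsetneq \cdots \subsetneq S_n \subseteq E$, precisely the sets on which the primal constraints are tight at $\one_B$. This proves $\one_B$ is primal-optimal for every $c$, so every vertex of $P$ is an indicator vector of a basis, hence $P \subseteq \P(\M)$.

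The ``therefore'' part is then immediate: any $\bx \in \P(\M)$ lies in $P$ and hence $\sum_{e \in E} x_e = r(E)$. The main technical point is verifying dual feasibility and complementary slackness along the chain $S_1 \subsetneq \cdots \subsetneq S_n$, which rests on the submodularity of the rank function and the greedy optimality characterization of matroids; the remainder is a direct application of LP duality. Since the statement is cited as Corollary 40.2d of \cite{schrijver2003combinatorial}, one could alternatively simply defer to the reference.
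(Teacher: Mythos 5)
The paper does not prove this lemma at all: it is stated as a cited preliminary (Corollary 40.2d of \cite{schrijver2003combinatorial}), so there is no in-paper argument to compare against. Your proof is correct and is essentially the classical Edmonds-style argument that the reference itself uses: the easy inclusion by checking basis indicator vectors against rank constraints, and the hard inclusion by showing that for every objective $c$ the greedy basis $B$, together with the telescoping dual multipliers supported on the chain $S_1\subsetneq\cdots\subsetneq S_{n-1}$ and the residual $c_{e_n}$ on the equality constraint, satisfies complementary slackness. Two small points worth making explicit if you write this up: (i) the polyhedron $P$ as stated has no explicit nonnegativity constraints, but $x_e\ge 0$ is implied, since $x(E\setminus\{e\})\le r(E\setminus\{e\})\le r(E)=x(E)$ forces $x_e\ge 0$; combined with $x_e\le r(\{e\})\le 1$ this shows $P$ is bounded, which you need in order to pass from ``every linear functional is maximized at some $\one_B$'' to ``every vertex of $P$ is some $\one_B$'' (alternatively, your duality argument already shows all linear objectives are bounded on $P$, which also gives boundedness); and (ii) you should note that the greedy set $B$ restricted to each prefix $S_i$ is exactly the set built after scanning $e_1,\dots,e_i$, which is a maximal independent subset of $S_i$, justifying $|B\cap S_i|=r(S_i)$. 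With those details filled in, the argument is complete; deferring to the reference, as the paper does, is of course also fine.
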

\begin{lemma}  [Submodularity of the Rank Function] (Theorem 39.8 of \cite{schrijver2003combinatorial})
For a matroid $\M=(E,\I)$, the rank function \(r\) of \(\M\) is submodular. That is, for all sets \(A,B\subseteq E\),
\begin{equation}
r(A)+r(B)\geq r(A\cup B)+r(A\cap B).
\end{equation}
\end{lemma}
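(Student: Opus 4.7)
The plan is to use the augmentation axiom of matroids together with an inclusion-exclusion identity to derive submodularity. The key idea is to build a single independent set $J$ that simultaneously certifies the ranks of $A\cap B$ and $A\cup B$, and then exploit its intersections with $A$ and $B$ individually.

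First, I would let $I$ be any maximal independent subset of $A\cap B$, so that $|I|=r(A\cap B)$. Using the augmentation axiom (the second bullet in the definition of a matroid), I would repeatedly extend $I$ to a maximal independent subset $J$ of $A\cup B$: as long as some maximal independent subset of $A\cup B$ has more elements than the current set, the axiom provides an element from $A\cup B$ that can be added while preserving independence. The resulting $J$ satisfies $I\subseteq J\subseteq A\cup B$ and $|J|=r(A\cup B)$.

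Next I would argue that $|J\cap A\cap B|=r(A\cap B)$. Since $J$ is independent, $J\cap A\cap B$ is an independent subset of $A\cap B$, so by definition of $r$ we have $|J\cap A\cap B|\le r(A\cap B)$. Conversely, $I\subseteq J\cap A\cap B$ gives $|J\cap A\cap B|\ge |I|=r(A\cap B)$, hence equality. In the same way, $J\cap A$ and $J\cap B$ are independent subsets of $A$ and $B$, respectively, so $|J\cap A|\le r(A)$ and $|J\cap B|\le r(B)$.

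Finally I would combine these bounds with the elementary counting identity
\[
|J\cap A|+|J\cap B| \;=\; |J\cap(A\cup B)| + |J\cap(A\cap B)|,
\]
which is standard inclusion-exclusion applied to the sets $J\cap A$ and $J\cap B$. Since $J\subseteq A\cup B$, we have $|J\cap(A\cup B)|=|J|=r(A\cup B)$, and therefore
\[
r(A)+r(B) \;\ge\; |J\cap A|+|J\cap B| \;=\; r(A\cup B) + r(A\cap B),
\]
which is exactly the submodular inequality. The only subtle step is the construction of $J$ as a common extension of $I$ that is maximal in $A\cup B$; this is immediate from iterating the augmentation axiom, but it is the one place the matroid structure (as opposed to pure set theory) is actually used.
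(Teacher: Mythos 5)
Your proof is correct and complete; each step (the greedy extension of $I$ to $J$ via the augmentation axiom, the identification $|J\cap A\cap B|=r(A\cap B)$, and the inclusion--exclusion count) is valid. The paper does not prove this lemma at all --- it simply cites Theorem 39.8 of Schrijver --- and your argument is exactly the standard proof given in that reference, so there is nothing to reconcile.
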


\begin{definition} [Basis Generating Polynomial of Matroids]
For a matroid $\M = ([n],\I)$ with the base set $\mathcal{B}_\M $, the basis generating polynomial of matroid $\M$ is
\[ g_\M(z_1,\dots,z_n) = \sum_{B \in \mathcal{B_{M}}} \prod_{i \in B} z_i.\]
\end{definition}

\subsection{Optimality Conditions for Convex Programming}\label{sec:gen-KKT}

In this section, we recall Slater's constraint qualification as a sufficient
condition for strong duality to hold for a convex program.
Since all constraints in the optimization problems we consider in this
paper are affine, we will state the relevant results for this special
case.

Consider an optimization problem
\begin{equation}\label{eq:inf-generic}
    \inf\{h(\bz): \bz \in \D, \bA\bz \le \bb, \bC\bz = \bd\},
\end{equation}
where $h$ is a convex function defined on a non-empty convex domain $\D \subseteq
\R^n$, $\bA$ and $\bC$ are matrices of dimensions, respectively, $m
\times n$ and $\ell \times n$, and $\bb \in \R^m$, $\bd\in \R^\ell$ are
vectors. We allow either $m$ or $\ell$ to be $0$. Then, the Lagrangian
associated with \eqref{eq:inf-generic} is
\begin{equation*}
  L(\bz, \blambda) := h(\bz) + \sum_{i = 1}^m\lambda_i (\bA\bz - \bb)_i + \sum_{i
    = 1}^\ell{\lambda_{m + i} (\bC\bz- \bd)_i},
\end{equation*}
defined for $\bz \in \D$ and Lagrange multipliers $\blambda \in \EE$, where $\EE = \{\blambda
\in \R^{m + \ell}: \lambda_i \ge 0 \ \ \forall i = 1, \ldots, m\}$.
It is easy to see that if $\bz\in \D$ is feasible, i.e., satisfies $\bA\bz
\le \bb$ and $\bC\bz = \bd$, and if $\blambda \in \EE$, then $L(\bz,\blambda) \le
h(\bz)$. Moreover,
\[
\sup_{\blambda \in \EE} L(\bz, \blambda) =
\begin{cases}
  h(\bz) & \bA\bz \le \bb, \bC\bz = \bd\\
  \infty & \text{otherwise}
\end{cases}.
\]
Therefore, we have
\begin{align*}
\sup_{\blambda  \in \EE}\inf_{\bz\in \D} L(\bz,\blambda) &\le
\sup_{\blambda  \in \EE}\inf_{\bz\in \D} \{L(\bz,\blambda): \bA\bz \le \bb, \bC\bz = \bd\}\\
&\le
\inf\{h(\bz): \bz \in \D, \bA\bz \le \bb, \bC\bz = \bd\}
= \inf_{\bz \in \D} \sup_{\blambda \in \EE} L(\bz, \blambda).
\end{align*}
The next theorem, which is classical, gives a sufficient condition for
these inequalities to hold with equality.

\begin{theorem}[Strong Duality]\label{thm:slater}
  Suppose that $h$ in \eqref{eq:inf-generic} is convex with non-empty
  convex domain $\D$.  Suppose that the infimum in
  \eqref{eq:inf-generic} is not $-\infty$, and that there exists some
  $\bz$ in the relative interior of $\D$ such that $\bA\bz \le \bb$ and $\bC\bz =
  d$. Then there exists a vector $\blambda^\star \in \EE$ such that
\[
\inf_{\bz \in \D} L(\bz, \blambda^\star)
=
\inf \{h(\bz): \bz\in \D, \bA\bz \le \bb, \bC\bz = \bd\}.
\]
\end{theorem}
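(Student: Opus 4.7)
The plan is to prove this classical strong duality result via a separating hyperplane argument applied to an appropriate convex set in $\R^{1+m+\ell}$, followed by a careful case analysis using the Slater-type condition to guarantee a nontrivial multiplier on the objective. Let $p^\star := \inf\{h(\bz) : \bz \in \D, \bA\bz \le \bb, \bC\bz = \bd\}$, which by hypothesis is finite; if $p^\star = +\infty$ the conclusion holds with any $\blambda^\star \in \EE$, so assume $p^\star \in \R$.

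First I would introduce the convex set
\[
A := \{(t, \bu, \bv) \in \R \times \R^m \times \R^\ell : \exists\, \bz \in \D \text{ with } h(\bz) \le t,\ \bA\bz - \bb \le \bu,\ \bC\bz - \bd = \bv\},
\]
whose convexity follows from the convexity of $h$ and $\D$ together with linearity of the constraints. By the definition of $p^\star$, the ray $\{(t, 0, 0) : t < p^\star\}$ is disjoint from $A$. Applying the separating hyperplane theorem to these two disjoint convex sets yields a nonzero triple $(\mu, \balpha, \bbeta) \in \R \times \R^m \times \R^\ell$ and a scalar $c$ with
\[
\mu t + \balpha^\top \bu + \bbeta^\top \bv \ge c \quad \forall (t, \bu, \bv) \in A, \qquad \mu t \le c \quad \forall t < p^\star.
\]
Letting $t \to -\infty$ in the second inequality forces $\mu \ge 0$, and letting $t \to p^\star$ gives $c \ge \mu p^\star$. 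Letting any coordinate of $\bu$ tend to $+\infty$ in the first inequality forces $\balpha \ge 0$. Thus $(\balpha, \bbeta) \in \EE$ up to the normalization $\mu = 1$.

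The central step is to rule out $\mu = 0$, which is where the Slater-type hypothesis enters. If $\mu = 0$, then $(\balpha, \bbeta) \ne 0$ and
\[
\balpha^\top(\bA\bz - \bb) + \bbeta^\top(\bC\bz - \bd) \ge c \ge 0 \quad \forall \bz \in \D.
\]
Let $\bz_0$ be the Slater point in the relative interior of $\D$ with $\bA\bz_0 \le \bb$ and $\bC\bz_0 = \bd$. Since $\balpha \ge 0$ and $\bA\bz_0 - \bb \le 0$, the left hand side is $\le 0$ at $\bz_0$, so the linear function $\bz \mapsto (\balpha^\top \bA + \bbeta^\top \bC)\bz$ attains its minimum over $\D$ at the relative-interior point $\bz_0$. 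A linear function attaining its minimum over a convex set at a relative-interior point must be constant on that set, so $\balpha^\top \bA + \bbeta^\top \bC$ is orthogonal to the affine hull of $\D$ and $\balpha^\top(\bA\bz_0 - \bb) = 0$, which in particular means $\alpha_i = 0$ whenever $(\bA\bz_0 - \bb)_i < 0$. One then argues that the separating functional can be modified to produce a valid separator with $\mu > 0$: concretely, the residual constraint from $(\balpha, \bbeta)$ is constant on all of $\D$, so adding a suitable multiple of it to the original separator preserves separation while making the coefficient on $t$ strictly positive. This is the main obstacle, and the delicate point where the all-affine form of the constraints (together with the relative-interior hypothesis) is essential, since the standard Slater proof for generic convex inequality constraints requires a strictly feasible point whereas here only relative-interior feasibility is used.

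Once $\mu > 0$ is secured, I would rescale so that $\mu = 1$ and set $\blambda^\star = (\balpha, \bbeta) \in \EE$. Substituting any $\bz \in \D$ into the first separation inequality via $(t, \bu, \bv) = (h(\bz), \bA\bz - \bb, \bC\bz - \bd) \in A$ and using $c \ge p^\star$ yields
\[
L(\bz, \blambda^\star) = h(\bz) + \balpha^\top(\bA\bz - \bb) + \bbeta^\top(\bC\bz - \bd) \ge p^\star \quad \forall \bz \in \D,
\]
so $\inf_{\bz \in \D} L(\bz, \blambda^\star) \ge p^\star$. Combined with the weak duality inequality already derived in the preceding paragraph of the excerpt, $\inf_{\bz \in \D} L(\bz, \blambda^\star) \le p^\star$, equality follows and the theorem is established.
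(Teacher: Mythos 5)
The paper does not prove this theorem itself; it cites Rockafellar (Theorems 28.2 and 28.3), so your attempt is to reconstruct the classical proof. Your skeleton — separate the convex set $A$ from the ray $\{(t,0,0):t<p^\star\}$, extract $(\mu,\balpha,\bbeta)$, show $\mu\ge 0$, $\balpha\ge 0$, and then normalize $\mu=1$ — is the standard textbook route, and everything up to and including the deduction that, when $\mu=0$, the affine functional $\ell(\bz)=\balpha^\top(\bA\bz-\bb)+\bbeta^\top(\bC\bz-\bd)$ is identically zero on $\D$ is correct.

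The gap is exactly at the step you flag as ``the main obstacle'': having established that $\ell\equiv 0$ on $\D$, your proposed fix — ``adding a suitable multiple of it to the original separator \ldots making the coefficient on $t$ strictly positive'' — cannot work, because the functional you would add has zero coefficient on $t$; no linear combination of it with the degenerate separator $(0,\balpha,\bbeta)$ produces a separator with $\mu>0$. Worse, in the degenerate case no contradiction is available: if, say, $\D$ lies inside the affine set $\{\bC\bz=\bd\}$, then $(0,0,\bbeta)$ with $\bbeta\neq 0$ genuinely is a (proper) separating hyperplane for $A$ and the ray, so one cannot rule out $\mu=0$ for the separator that the separation theorem hands you; one must instead \emph{construct} a different separator with $\mu>0$. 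Doing so requires a further argument — e.g.\ quotienting out the lineality directions in which the constraints are degenerate on $\operatorname{aff}\D$ and re-running the separation in the reduced space, or the inductive ``polyhedral refinement'' underlying Rockafellar's Theorems 20.2 and 28.2, which is precisely what lets affine constraints be handled with only relative-interior feasibility rather than strict feasibility. As written, the proposal asserts the conclusion of that argument without supplying it, so the proof is incomplete at its decisive step. (A minor additional slip: if $p^\star=+\infty$ the conclusion does not hold for arbitrary $\blambda^\star$, but this case is excluded anyway since the Slater point is feasible.)
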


Furthermore, the same assumptions also imply that the KKT optimality
conditions are necessary and sufficient.

\begin{theorem}[KKT Conditions] \label{thm:gen-KKT} Suppose that the
  assumptions of Theorem~\ref{thm:slater} hold and that $\bz^\star \in
  \D$ is feasible, i.e., $\bA\bz^\star \le \bb$, $\bC\bz^\star = \bd$, and
  $\blambda^\star \in \EE$. Then, $\bz^\star$ achieves the infimum in
  \eqref{eq:inf-generic} and $\blambda^\star$ achieves
  \(
  \inf_{\bz \in \D} L(\bz, \blambda^\star)
  = h(\bz^\star)
  \)
  if and only if
  \begin{enumerate}[label={\alph*.}]
  \item $\lambda^\star_i (\bA\bz^\star - b)_i = 0$ for all $i \in \{1, \ldots, m\}$, and\label{cond:gen-CS}
  \item $0 \in \partial h(\bz^\star) + \bM^\top \blambda$,\label{cond:gen-first-order}
  \end{enumerate}
  where $\partial h(\bz^\star)$ is the subgradient of $f$ at $\bz^\star$ and
  $\bM$ is the matrix \(\bM = \begin{pmatrix}\bC \\ D\end{pmatrix}\).
\end{theorem}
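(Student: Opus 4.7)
The plan is to reduce both directions of the biconditional to the strong duality identity guaranteed by Theorem~\ref{thm:slater}, together with the standard fact that for a convex function $q$, the subgradient inclusion $0\in\partial q(\bz)$ characterizes global minimizers of $q$. The key observation is that the partial Lagrangian $L(\cdot,\blambda^\star)$ is convex in $\bz$, being the sum of the convex $h$ and an affine function of $\bz$, so its subgradient at any point is $\partial_\bz L(\bz,\blambda^\star)=\partial h(\bz)+\bM^\top\blambda^\star$.

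For sufficiency, assume (a) and (b). Condition (b) says $0\in\partial_\bz L(\bz^\star,\blambda^\star)$, so by convexity $\bz^\star\in\argmin_{\bz\in\D} L(\bz,\blambda^\star)$, i.e., $\inf_{\bz\in\D} L(\bz,\blambda^\star)=L(\bz^\star,\blambda^\star)$. Expanding,
\[L(\bz^\star,\blambda^\star)=h(\bz^\star)+\sum_{i=1}^m \lambda^\star_i(\bA\bz^\star-\bb)_i+\sum_{i=1}^\ell \lambda^\star_{m+i}(\bC\bz^\star-\bd)_i.\]
The second sum vanishes because $\bC\bz^\star=\bd$, and the first sum vanishes by complementary slackness. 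Hence $\inf_{\bz\in\D} L(\bz,\blambda^\star)=h(\bz^\star)$. Weak duality $\inf_{\bz\in\D} L(\bz,\blambda^\star)\le\inf\{h(\bz):\bz\in\D,\bA\bz\le\bb,\bC\bz=\bd\}\le h(\bz^\star)$ then sandwiches both the primal and dual values, so $\bz^\star$ attains the primal infimum and $\blambda^\star$ attains the claimed dual value.

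For necessity, suppose $\bz^\star$ is primal-optimal and $\inf_{\bz\in\D} L(\bz,\blambda^\star)=h(\bz^\star)$. Because $\blambda^\star\in\EE$ and $\bz^\star$ is feasible, the weak-duality bound $L(\bz^\star,\blambda^\star)\le h(\bz^\star)$ holds; combined with the trivial $\inf_{\bz\in\D} L(\bz,\blambda^\star)\le L(\bz^\star,\blambda^\star)$, we obtain the chain
\[h(\bz^\star)=\inf_{\bz\in\D} L(\bz,\blambda^\star)\le L(\bz^\star,\blambda^\star)\le h(\bz^\star),\]
forcing both inequalities to be equalities. The first equality says $\bz^\star\in\argmin_{\bz\in\D} L(\bz,\blambda^\star)$, which by convexity is equivalent to (b). The second equality forces $\sum_{i=1}^m \lambda^\star_i(\bA\bz^\star-\bb)_i=0$ (the equality-constraint terms being already zero); since each summand is the product of the nonnegative $\lambda^\star_i$ and the nonpositive $(\bA\bz^\star-\bb)_i$, every summand must individually vanish, giving (a). No genuine obstacle arises beyond this bookkeeping; the only care needed is to treat possibly non-smooth $h$ via subgradients rather than gradients and to allow the degenerate cases $m=0$ or $\ell=0$, which merely omit the corresponding sums.
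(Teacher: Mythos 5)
Your proof is correct and complete: the weak-duality sandwich plus the subgradient characterization of minimizers of the convex function $L(\cdot,\blambda^\star)$ is exactly the standard argument, and you correctly handle the only delicate points (the sum rule $\partial_\bz L=\partial h+\bM^\top\blambda^\star$ holds because the added term is affine, and each complementary-slackness summand vanishes individually because it is a product of a nonnegative and a nonpositive factor). The paper does not prove this statement itself but defers to Rockafellar (Theorem 28.3), whose proof proceeds along the same lines, so there is nothing to reconcile.
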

For proofs of Theorem~\ref{thm:slater} and Theorem~\ref{thm:gen-KKT},
see e.g.~Theorems~28.2~and~28.3~in~\cite{Rockafellar}.


\subsection{Complete Log-Concavity}



In this section, we review some of the results by Anari, Gharan, and Vinzant~\cite{AnariGV18} on log-concave polynomials and their implications for our problem.
For vectors \(\by,\bp\in\R^n\), we let \(\by^\bp:=\prod_{i=1}^n y_i^{p_i}\) and \(\by^{1-\bp}:=\prod_{i=1}^n y_i^{1-p_i}\).
We define a vector multiplied and divided by another vector or a
scalar coordinate-wise.
For a vector \(\bal\in\R^n\), we also denote \(|\bal|:=\sum_{i=1}^n \alpha_i\).
\begin{definition} [Log-Concave Polynomials]
A polynomial $g \in \rR[z_1,\dots,z_n]$ with non-negative coefficients is log-concave if $\log(g)$ is concave  over $\rR_{>0}^n$. Equivalently, $g$ is log-concave if for any two vectors $\bv,\bw \in \rR_{\geq 0}^n$ and $\lambda \in [0,1]$, we have
\[ g(\lambda \mathbf{v} + (1-\lambda) \bw) \geq g(\bv)^\lambda \cdot g(\bw)^{1-\lambda}.\]
\end{definition}

\begin{lemma}\label{lem:log_concavity_preserve}\label{lem:prod_log_concave}(Proposition 2.2 in \cite{AnariGV18})
\begin{itemize}
\item For any two log-concave polynomials $g,h$, the polynomial $g\cdot h$ is log-concave.
\item For any log-concave polynomial $g(z_1,\dots,z_n)$, the polynomial
$c\cdot g(\lambda_1z_1,\dots,\lambda_nz_n)$ is log-concave if $c,\lambda_1,\dots,\lambda_n \geq 0$.
\end{itemize}
\end{lemma}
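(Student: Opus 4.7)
The plan is to verify each of the two claims directly from the equivalent ``inequality form'' of log-concavity given in the definition, i.e.
\[
g(\mu\bv+(1-\mu)\bw)\geq g(\bv)^\mu g(\bw)^{1-\mu}\quad \text{for all } \bv,\bw\in\R^n_{\geq 0},\ \mu\in[0,1].
\]
Neither step requires the stronger differential characterization through $\log$, and both are one-line manipulations; the proof is essentially bookkeeping, so the ``hard part'' is really only choosing the right substitution in Part~2.

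For Part~1, given log-concave $g$ and $h$, I would fix $\bv,\bw\in\R^n_{\geq 0}$ and $\mu\in[0,1]$ and just multiply the two defining inequalities:
\[
(gh)(\mu\bv+(1-\mu)\bw)=g(\mu\bv+(1-\mu)\bw)\cdot h(\mu\bv+(1-\mu)\bw)\geq g(\bv)^\mu g(\bw)^{1-\mu}\cdot h(\bv)^\mu h(\bw)^{1-\mu},
\]
which regroups as $(gh)(\bv)^\mu (gh)(\bw)^{1-\mu}$. Since the coefficients of $gh$ are clearly non-negative when those of $g$ and $h$ are, the resulting polynomial satisfies the definition.

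For Part~2, write $\tilde g(\bz):=cg(\lambda_1 z_1,\ldots,\lambda_n z_n)$ and note its coefficients are non-negative because $c,\lambda_i\geq 0$. For $\bv,\bw\in\R^n_{\geq 0}$ and $\mu\in[0,1]$, set $\bv':=(\lambda_1 v_1,\ldots,\lambda_n v_n)$ and $\bw':=(\lambda_1 w_1,\ldots,\lambda_n w_n)$, which lie in $\R^n_{\geq 0}$. The key observation is that the coordinatewise scaling commutes with convex combinations, i.e. $(\lambda_i(\mu v_i+(1-\mu)w_i))_i=\mu\bv'+(1-\mu)\bw'$. Applying log-concavity of $g$ to $\bv',\bw'$ yields
\[
g(\mu\bv'+(1-\mu)\bw')\geq g(\bv')^\mu g(\bw')^{1-\mu},
\]
and multiplying both sides by $c=c^\mu c^{1-\mu}$ (treating $c=0$ as the trivial case where $\tilde g\equiv 0$) gives the desired inequality $\tilde g(\mu\bv+(1-\mu)\bw)\geq \tilde g(\bv)^\mu\tilde g(\bw)^{1-\mu}$.

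The only potentially subtle point is how to handle boundary cases: if some $\lambda_i=0$ or if $g$ vanishes at one of the evaluation points, one of the factors $g(\bv')$ or $g(\bw')$ may be zero, but the inequality is still well-defined (with the convention $0^t=0$ for $t\in(0,1]$ and $0^0=1$) and holds trivially; likewise the case $c=0$ collapses to $0\geq 0$. So no additional argument is needed.
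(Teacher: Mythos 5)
Your proposal is correct. Note, however, that the paper does not prove this lemma at all: it is imported verbatim as Proposition~2.2 of \cite{AnariGV18}, so there is no in-paper argument to compare against. Your direct verification from the inequality form of the definition is the standard one and is complete; the only point worth making explicit is that in Part~1 multiplying the two defining inequalities is legitimate because all four quantities involved are non-negative (each polynomial has non-negative coefficients and is evaluated on the non-negative orthant), which you implicitly use and which follows from the hypotheses you already invoke. The boundary-case discussion at the end is also handled correctly.
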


\begin{definition} [Completely Log-Concave Polynomial]
A polynomial $g \in \rR[z_1,\dots,z_n]$ is completely log-concave if for every $k \geq 0$ and nonnegative matrix $\mathbf{V} \in \rR_{\geq 0}^{n \times k}$, $D_{\mathbf{V}}g(\mathbf{z})$ is nonnegative and log-concave as a function over $\rR^n_>0$, where
\[ D_{\mathbf{V}}g(\mathbf{z}) = \pr{ \Pi_{j=1}^k \sum_{i=1}^n V_{ij} {\partial}_i} g(\mathbf{z}).\]
\end{definition}

\begin{lemma}\label{lem:matroid_log_concave}(Theorem 4.2 in ~\cite{AnariGV18}) For any matroid $\M$, the basis generating polynomial $g_\M(\mathbf{z})$ is completely log-concave over the positive orthant.
\end{lemma}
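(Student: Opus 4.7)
The plan is to prove this by induction on the rank $r$ of the matroid $\M$. The base cases $r = 0$ (where $g_\M = 1$) and $r = 1$ (where $g_\M$ is the sum of the variables corresponding to the non-loops) are immediate, since constants and nonnegative linear forms are easily checked to satisfy the definition of complete log-concavity.

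For the inductive step with $r \geq 2$, the approach is to invoke an equivalent recursive characterization of complete log-concavity that unrolls the definition one derivative at a time: a homogeneous polynomial $g$ of degree $r \geq 2$ with nonnegative coefficients is completely log-concave on the positive orthant if and only if (i) for every $i \in [n]$, the partial derivative $\partial_i g$ is completely log-concave (of degree $r-1$), and (ii) $g$ itself is log-concave on $\mathbb{R}^n_{>0}$. For homogeneous $g$, condition (ii) is equivalent to requiring that $\nabla^2 g(\mathbf{z})$ have at most one positive eigenvalue for every $\mathbf{z} > 0$.

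For condition (i), I would first establish the matroid identity $\partial_i g_\M = g_{\M/i}$ when $i$ is not a loop of $\M$, and $\partial_i g_\M = 0$ when $i$ is a loop, via the obvious bijection between bases of $\M$ containing $i$ and bases of the contraction $\M/i$. Since $\M/i$ has rank $r - 1$, the inductive hypothesis gives that $g_{\M/i}$ is completely log-concave, so (i) holds. This step crucially uses the multiaffine structure of $g_\M$ together with the matroid contraction operation.

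The main obstacle is verifying condition (ii). Because $g_\M$ is multiaffine, iterating $r - 2$ partial derivatives indexed by an independent set $S$ of size $r - 2$ yields $g_{\M/S}$, the basis generating polynomial of the rank-$2$ matroid $\M/S$ (derivatives along dependent sets vanish). Thus the Hessian eigenvalue condition for $g_\M$ reduces to the same condition for all rank-$2$ matroids. A rank-$2$ matroid is determined by its partition into parallel classes $C_1, \ldots, C_k$, and its basis generating polynomial is $g = \sum_{a<b} (\sum_{i\in C_a} z_i)(\sum_{j\in C_b} z_j)$. The Hessian is then the \emph{constant} matrix with entries $H_{ij} = 1$ when $i,j$ lie in different parallel classes and $H_{ij}=0$ otherwise; equivalently, $H = \mathbf{1}\mathbf{1}^\top - \sum_a \mathbf{1}_{C_a}\mathbf{1}_{C_a}^\top$. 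A direct computation of its spectrum (e.g., by passing to the $k \times k$ quotient matrix indexed by parallel classes and analyzing the signs of its leading principal minors) shows that $H$ has exactly one positive eigenvalue, yielding (ii) for the rank-$2$ case.

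The hardest part will be making precise the reduction from the general Hessian condition on $g_\M$ to its rank-$2$ specialization, since log-concavity of $g_\M$ at a positive $\mathbf{z}$ is a statement about an $n \times n$ matrix whose entries depend on $\mathbf{z}$, rather than directly about the constant Hessians of rank-$2$ matroids. The reduction exploits the recursive form of the characterization, propagating the eigenvalue bound through partial derivatives by repeated application of $D_{\bv}$ with nonnegative $\bv$, which by Lemma~\ref{lem:prod_log_concave} and the matroid identity $\partial_i g_\M = g_{\M/i}$ preserves the relevant structure. Combining this reduction with condition (i) closes the induction and proves that $g_\M$ is completely log-concave.
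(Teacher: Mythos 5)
The paper does not prove this lemma at all --- it is imported verbatim as Theorem~4.2 of the cited work of Anari, Oveis Gharan, and Vinzant --- so there is no in-paper argument to compare against; your sketch instead has to stand on its own as a proof of that external theorem, and it has a genuine gap at its central step. The ``equivalent recursive characterization'' you invoke --- that a homogeneous $g$ of degree $r\ge 2$ with nonnegative coefficients is completely log-concave iff every coordinate derivative $\partial_i g$ is completely log-concave and $g$ is log-concave --- is not an unrolling of the definition. Complete log-concavity quantifies over \emph{all} nonnegative directional derivatives $D_{\bv}g=\sum_i v_i\,\partial_i g$, and log-concavity is not closed under taking sums, so knowing each $\partial_i g$ is completely log-concave does not by itself control $D_{\bv}g$ for general $\bv\ge 0$. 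The true statement of this form (Br\"and\'en--Huh, and the corresponding result in Anari et al.) requires in addition that the support of $g$ be \emph{M-convex}, and even then the implication from coordinate derivatives to arbitrary nonnegative directional derivatives is the main technical content of those papers, not a formality. Your proposal never mentions the support condition, and for the matroid polynomial $g_\M$ the M-convexity of the support is precisely the basis exchange axiom --- i.e., the place where the matroid hypothesis does real work beyond the contraction identity $\partial_i g_\M = g_{\M/i}$. As written, your induction silently assumes the theorem it is trying to prove.

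The parts of the sketch that are concrete are correct and do mirror the architecture of the actual proof: $\partial_i g_\M=g_{\M/i}$ for non-loops, iterated derivatives along independent sets reduce the Hessian condition to rank-$2$ contractions, and the Hessian of a rank-$2$ basis generating polynomial is the adjacency matrix of a complete multipartite graph, $H=\mathbf{1}\mathbf{1}^\top-\sum_a \mathbf{1}_{C_a}\mathbf{1}_{C_a}^\top$, which indeed has exactly one positive eigenvalue. But these facts only verify the coordinate-derivative Hessian conditions; to conclude complete log-concavity you must either (a) state and prove the M-convex-support version of the recursive characterization, or (b) cite it explicitly as the Br\"and\'en--Huh/ALOV theorem rather than presenting it as a rephrasing of the definition. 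Either way, the step you flag as ``the hardest part'' (propagating the eigenvalue bound through arbitrary nonnegative directional derivatives) is exactly where the proof currently has no argument.
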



\begin{lemma}\label{lem:AGV_bound}(Corollary 7.2 in \cite{AnariGV18})
For any completely log-concave multi-affine polynomial $g \in \rR[y_1,\dots,y_n, z_1,\dots,z_n]$ and $\mathbf{p} \in [0,1]^n$, the following inequality holds:

\[ \pr{\Pi_{i=1}^n ({\partial}_{y_i} + {\partial}_{z_i})} g(\mathbf{y},\mathbf{z}) |_{\mathbf{y} = \mathbf{z} = 0} \geq \pr{\frac{\mathbf{p}}{e^2}}^{\mathbf{p}} \inf_{\by,\bz \in \rR_{>0}^n} \frac{g(\by,\bz)}{\by^{\mathbf{p}} \bz^{1-\mathbf{p}}} .\]
\end{lemma}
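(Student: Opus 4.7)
The plan is to prove the inequality by induction on $n$, the number of coordinate pairs; the base case $n = 0$ is trivial since both sides reduce to a nonnegative constant. For the inductive step, I would introduce the reduced polynomial
\[
\tilde g(\tilde \by, \tilde \bz) \ := \ (\partial_{y_n} + \partial_{z_n})\, g(\tilde \by, y_n, \tilde \bz, z_n)\Big|_{y_n=z_n=0},
\]
which is multi-affine in $2(n-1)$ variables. The first key step is to show that $\tilde g$ is itself completely log-concave on $\rR_{>0}^{2(n-1)}$. I would invoke two properties of completely log-concave polynomials from \cite{AnariGV18}: the directional derivative $\partial_{y_n}+\partial_{z_n}$ along the nonnegative vector $\be_{y_n}+\be_{z_n}$ preserves complete log-concavity, and, since $g$ is multi-affine, substituting $y_n=z_n=0$ produces the pointwise limit as $y_n,z_n \to 0^+$ of completely log-concave polynomials, which remains completely log-concave on the interior.

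Applying the inductive hypothesis to $\tilde g$ with $\tilde \bp = (p_1, \ldots, p_{n-1})$ then yields
\[
\prod_{i=1}^{n}(\partial_{y_i}+\partial_{z_i})\, g\Big|_{\by=\bz=0}
\ =\
\prod_{i=1}^{n-1}(\partial_{y_i}+\partial_{z_i})\, \tilde g\Big|_{\tilde \by=\tilde \bz=0}
\ \geq\
\prod_{i=1}^{n-1}\pr{\frac{p_i}{e^2}}^{p_i}\inf_{\tilde \by, \tilde \bz > 0}\frac{\tilde g(\tilde \by, \tilde \bz)}{\tilde \by^{\tilde \bp}\tilde \bz^{1-\tilde \bp}}.
\]
Because the full infimum factors as $\inf_{\by,\bz>0} = \inf_{\tilde \by, \tilde \bz > 0}\inf_{y_n,z_n>0}$, closing the induction reduces to proving the one-variable bound
\[
\tilde g(\tilde \by, \tilde \bz) \ \geq\ \pr{\frac{p_n}{e^2}}^{p_n}\inf_{y_n,z_n>0}\frac{g(\tilde \by, y_n, \tilde \bz, z_n)}{y_n^{p_n}\,z_n^{1-p_n}}
\]
for every fixed $\tilde \by, \tilde \bz > 0$.

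For this one-variable bound, I would write the restriction of $g$ in $(y_n,z_n)$ as $A + B y_n + C z_n + D y_n z_n$ with nonnegative coefficients $A,B,C,D$ depending on $(\tilde \by, \tilde \bz)$. Then $\tilde g = B + C$, while complete log-concavity specialized to the pair $(y_n, z_n)$ yields the Newton-type inequality $AD \leq BC$. Minimizing $(A + B y_n + C z_n + D y_n z_n)/(y_n^{p_n} z_n^{1-p_n})$ over $y_n, z_n > 0$ is a calculus exercise: setting partial derivatives to zero gives a closed-form stationary point, $AD \leq BC$ lets one absorb the $A$- and $D$-contributions into the $B$- and $C$-contributions, and a final application of the weighted AM-GM inequality $p_n B + (1-p_n) C \geq B^{p_n} C^{1-p_n}$ yields the inequality with the factor $(p_n/e^2)^{p_n}$ emerging from the stationary-point values. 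The hard part will be the closure step that makes $\tilde g$ completely log-concave---this is what permits the induction to proceed and requires carefully combining the directional-derivative and boundary-restriction properties of completely log-concave polynomials; by contrast, the remaining one-variable estimate is routine though fiddly.
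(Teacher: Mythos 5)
The paper does not actually prove this lemma---it is imported wholesale as Corollary~7.2 of \cite{AnariGV18}---so the only comparison available is with the proof in that reference, which does proceed by the Gurvits-style induction you describe: peel off one pair of variables, show the reduced polynomial is still completely log-concave, and reduce to a single "capacity" estimate in the last pair of variables. Your high-level skeleton, including the closure step for $(\partial_{y_n}+\partial_{z_n})g|_{y_n=z_n=0}$, is therefore the right one.

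However, the one step you dismiss as routine contains a genuine error. Complete log-concavity of the bivariate restriction $q(y_n,z_n)=A+By_n+Cz_n+Dy_nz_n$ does \emph{not} imply $AD\le BC$; that Newton-type inequality is the hallmark of \emph{real stability}. For a multi-affine bivariate polynomial with nonnegative coefficients, log-concavity of $\log q$ on $\R^2_{>0}$ is equivalent to the weaker bound $AD\le 2BC$ (the determinant condition on the Hessian of $\log q$ reads $0\le qD\le 2(B+Dz)(C+Dy)$, which at the origin gives exactly $AD\le 2BC$), and the extra derivative conditions add nothing in two multi-affine variables. The polynomial $q=2+y+z+yz$ is completely log-concave and satisfies $AD=2BC>BC$, so your absorption of the $A$- and $D$-terms into $B$ and $C$ breaks down. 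This factor of $2$ is precisely why the constant in the statement is $e^{2}$ rather than the $e$ one gets for real stable polynomials: the single-variable estimate must be carried out with $AD\le 2BC$ and must spend part of the $e^{2p_n}$ slack to compensate, which is the actual content of the lemma rather than a "fiddly calculus exercise." As written, the inductive step is therefore not established; to repair it you would need to redo the minimization of $\pr{A+By+Cz+Dyz}/\pr{y^{p}z^{1-p}}$ under the correct hypothesis $AD\le 2BC$ and verify that the resulting bound is still at most $\pr{e^{2}/p}^{p}(B+C)$.
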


\begin{lemma}\label{lem:det-is-log-concave}(Corollary 1.8 in~\cite{AnariLGV19})
For any set of vectors $\bv_1,\dots,\bv_n \in \rR^d$, the polynomial $\det(\sum_{i=1}^n x_i \bv_i \bv_i^\top)$ is a completely log-concave polynomial in $\bx$.
\end{lemma}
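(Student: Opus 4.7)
The plan is to deduce complete log-concavity of $p(\bx) = \det\pr{\sum_{i=1}^n x_i \bv_i \bv_i^\top}$ from two ingredients: first, that $p$ is a real-stable polynomial; and second, that for multi-affine polynomials with nonnegative coefficients, real-stability coincides with complete log-concavity. This reduction is the content of the theory of Lorentzian (equivalently, completely log-concave) polynomials.

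First I would verify that $p$ is multi-affine with nonnegative coefficients. Since each $\bv_i\bv_i^\top$ has rank one, no monomial in the Leibniz expansion of the determinant can contain two copies of the same variable. More explicitly, letting $V$ be the $d\times n$ matrix with columns $\bv_1,\ldots,\bv_n$, the Cauchy--Binet formula yields
\begin{equation*}
p(\bx)=\det\pr{V\,\mathrm{diag}(\bx)\,V^\top}=\sum_{S\in {[n]\choose d}}(\det V_S)^2\prod_{i\in S}x_i,
\end{equation*}
where $V_S$ is the $d\times d$ submatrix of $V$ indexed by $S$, and the coefficients $(\det V_S)^2$ are manifestly nonnegative. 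Next, I would establish real-stability. Assume without loss of generality that the $\bv_i$ span $\R^d$ (otherwise $p\equiv 0$, which is vacuously completely log-concave). For any $\bx\in\mathbb{C}^n$ with $\mathrm{Im}(x_i)>0$ for every $i$, the imaginary part of $M(\bx):=\sum_i x_i\bv_i\bv_i^\top$ equals $\sum_i\mathrm{Im}(x_i)\bv_i\bv_i^\top$, a positive combination of rank-one PSD matrices whose span is all of $\R^d$, hence positive definite. In particular $M(\bx)$ is invertible, so $p(\bx)\neq 0$, establishing real-stability.

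The main obstacle is the third step: invoking the equivalence between real-stability and complete log-concavity for multi-affine polynomials with nonnegative coefficients. This rests on the Hessian characterization that a homogeneous polynomial with nonnegative coefficients is completely log-concave if and only if its Hessian has at most one positive eigenvalue at every point of the positive orthant, together with a careful analysis of how real-stability constrains the roots of univariate restrictions. A more hands-on alternative is to verify the Hessian condition directly for $p$: multi-affineness forces the diagonal of $\nabla^2 p$ to vanish, while a short computation gives $\partial_i\partial_j p/p=(\bv_i^\top M^{-1}\bv_i)(\bv_j^\top M^{-1}\bv_j)-(\bv_i^\top M^{-1}\bv_j)^2$ for $i\neq j$; after the change of variables $\mathbf{u}_i:=M^{-1/2}\bv_i$, one can reduce the signature computation to eigenvalue interlacing under rank-one perturbations of the Gram matrix of the $\mathbf{u}_i$'s, which is enough to confirm the one-positive-eigenvalue property.
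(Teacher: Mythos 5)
Your main argument is correct and is essentially the standard proof of the cited result (the paper gives no proof of this lemma, it only cites Corollary 1.8 of the reference): Cauchy--Binet gives multi-affineness and nonnegativity of the coefficients, positive-definiteness of $\operatorname{Im}\bigl(\sum_i x_i \bv_i\bv_i^\top\bigr)$ on the open upper half-planes gives real stability, and the implication ``homogeneous real-stable with nonnegative coefficients $\Rightarrow$ completely log-concave'' from the Lorentzian/completely-log-concave literature finishes the job. Two small caveats: real stability and complete log-concavity do not ``coincide'' for multi-affine polynomials with nonnegative coefficients (only the forward implication holds --- basis generating polynomials of general matroids are completely log-concave but need not be stable), and your hands-on alternative is incomplete as stated, since for degree $d>2$ the one-positive-eigenvalue condition must be verified for the Hessians of all order-$(d-2)$ partial derivatives of $p$, not just for $\nabla^2 p$ itself.
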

%

\begin{lemma}\label{lem:prod_complete_log_concave}(Follows from Theorem 5.3 and Corollary 5.5 in~\cite{branden2019lorentzian})
For any two completely log-concave homogenous polynomials $g \in \rR[y_1,\dots,y_n]$ and $h \in \rR[z_1,\dots,z_m]$, $g\cdot h$ is completely log-concave.
\end{lemma}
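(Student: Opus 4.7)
The plan is to route through the equivalence between completely log-concave homogeneous polynomials and Lorentzian polynomials established by Brändén--Huh. Specifically, Theorem~5.3 of that work asserts that a homogeneous polynomial with non-negative coefficients is completely log-concave if and only if it is Lorentzian. Applying this direction to the hypotheses, I would first conclude that both $g(\by)$ and $h(\bz)$ are Lorentzian polynomials on their respective variable sets.

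Next, I would invoke Corollary~5.5 of Brändén--Huh, which is the main substantive input: the class of Lorentzian polynomials is closed under multiplication. Concretely, the proof of that corollary uses two facts. First, Lorentzian polynomials have M-convex support, and the support of $g\cdot h$ (viewed in $\R^{n+m}$) is the Minkowski sum of the supports of $g$ and $h$, which is M-convex by standard matroid union arguments; since $g$ and $h$ depend on disjoint variable sets, this is particularly transparent. Second, the Hessian signature condition (that every $(\deg f - 2)$-fold partial derivative has a Hessian with at most one positive eigenvalue) is preserved under multiplication; the Leibniz rule distributes the $\deg g + \deg h - 2$ derivatives across the two factors in all possible ways, and one checks that the resulting quadratic form decomposes in a manner that keeps the signature controlled. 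This yields that $g\cdot h$ is Lorentzian.

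Finally, since $g\cdot h$ is homogeneous of degree $\deg g + \deg h$ with non-negative coefficients (inherited from $g$ and $h$), I would apply Theorem~5.3 of Brändén--Huh in the reverse direction to transfer the Lorentzian property back to complete log-concavity, finishing the proof. The hard part, which is entirely contained in the cited Brändén--Huh results, is the Hessian signature bookkeeping in Corollary~5.5; since the lemma here only invokes those results as a black box, the only thing to verify directly is that our two definitions of complete log-concavity agree on the homogeneous setting, which they do.
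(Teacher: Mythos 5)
Your proposal is correct and matches the paper's approach exactly: the paper offers no proof beyond the citation, deriving the lemma by passing from complete log-concavity to the Lorentzian property via Theorem~5.3 of Br\"and\'en--Huh, invoking closure of Lorentzian polynomials under products (Corollary~5.5), and passing back. The additional detail you sketch about M-convex supports and Hessian signatures is internal to the cited results and not needed beyond the black-box use you already acknowledge.
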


\begin{lemma}\label{lem:max_entropy_dist}(Implied by Theorem 2.10 in~\cite{AnariGV18})
Let $\zeta = \{S_1,\dots,S_t \subseteq [n]\}$ and $\P$ be the convex closure of $1_{S_1},\dots,1_{S_t}$. Then, for any point $\mathbf{p}$ strictly inside $\P$, there exists $\lambda_1,\dots,\lambda_n>0$ and a distribution $\mu$ over $S_i$'s such that $\mu(S_i) \propto \lambda^{S_i}$ and $p_i = \Pr_{S \sim \mu}[i \in S]$.
\end{lemma}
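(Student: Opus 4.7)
The plan is to construct $\mu$ as the maximum-entropy distribution supported on $\{S_1,\ldots,S_t\}$ whose marginals equal $\mathbf{p}$, and then to read off the product form $\mu(S_i)\propto \lambda^{S_i}$ from the Lagrangian dual. Concretely, consider the convex program
\begin{equation*}
\max\Bigl\{-\sum_{i=1}^t \mu_i\log \mu_i\; :\; \mu_i\ge 0,\ \sum_{i=1}^t \mu_i=1,\ \sum_{i:\,j\in S_i}\mu_i = p_j\ \forall j\in[n]\Bigr\}.
\end{equation*}
The objective is strictly concave on the simplex, all constraints are affine, and the feasible set is non-empty because $\mathbf{p}\in\P$. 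Hence the supremum is attained at a unique $\mu^\star$.

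The first real step is to verify Slater's condition so that strong duality applies. Because $\mathbf{p}$ lies strictly inside $\P$, I would take any convex representation $\mathbf{p}=\sum_i \nu_i \mathbf{1}_{S_i}$ and mix it with a tiny uniform distribution $\tilde\nu_i = 1/t$ via $\mu^{(0)} = (1-\varepsilon)\nu + \varepsilon \tilde\nu$. Its marginals are a convex combination of $\mathbf{p}$ and $\sum_i \tilde\nu_i \mathbf{1}_{S_i}\in\P$, and since $\mathbf{p}$ is in the relative interior of $\P$, a suitable additional perturbation (or by a direct extremal-point argument) produces a strictly positive feasible $\mu^{(0)}$ whose marginals are exactly $\mathbf{p}$. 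This is the standard fact that relative-interior points of a polytope admit positive-coefficient convex combinations of all extreme points. With a strictly feasible primal point in hand, Theorem~\ref{thm:slater} gives strong duality and the existence of optimal Lagrange multipliers.

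Next, I would apply Theorem~\ref{thm:gen-KKT} to $\mu^\star$, writing the Lagrangian with multipliers $\alpha_j$ for the marginal equalities and $\beta$ for the simplex equality. Stationarity in $\mu_i$ gives, for every $i$ with $\mu^\star_i>0$,
\begin{equation*}
-\log \mu^\star_i - 1 + \sum_{j\in S_i}\alpha_j + \beta = 0,
\end{equation*}
so $\mu^\star_i = e^{\beta-1}\prod_{j\in S_i} e^{\alpha_j}$. Setting $\lambda_j := e^{\alpha_j} > 0$, this becomes $\mu^\star_i \propto \lambda^{S_i}$. Since the normalizing constant is absorbed into the proportionality, the marginals still equal $\mathbf{p}$ by feasibility of $\mu^\star$.

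The one subtlety I anticipate as the main obstacle is handling coordinates $j\in[n]$ for which $p_j\in\{0,1\}$ or, more generally, implicit affine relations among the $\mathbf{1}_{S_i}$ that make some marginal constraints redundant; in such cases the dual multipliers $\alpha_j$ are not unique, and one must be careful that at least one set of finite $\alpha_j$ exists. Since $\mathbf{p}$ is strictly inside $\P$, one can first quotient by the affine span of $\{\mathbf{1}_{S_i}\}$ so that $\mathbf{p}$ becomes a true interior point in the induced space, apply the above argument there, and then extend the corresponding $\alpha_j$ arbitrarily (say to $0$) on the redundant directions — these extensions only rescale $\lambda^{S_i}$ by a constant that depends on $i$-independent data, preserving the claimed product form and marginals.
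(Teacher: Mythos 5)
The paper does not actually prove this lemma --- it is imported wholesale as ``implied by Theorem 2.10'' of the cited work of Anari, Oveis Gharan, and Vinzant --- so there is no in-paper argument to compare against. Your self-contained maximum-entropy derivation is the standard route by which such product-form (``external field'') representations are established, and its skeleton is sound: strict interiority of $\mathbf{p}$ gives a feasible $\mu^{(0)}$ with all coordinates positive (a relative-interior point of the convex hull of finitely many points is a convex combination of all of them with strictly positive weights), Slater's condition then yields finite optimal multipliers, and stationarity gives $\mu^\star_i \propto \prod_{j\in S_i} e^{\alpha_j}$, so $\lambda_j := e^{\alpha_j} > 0$ works.

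One step you should not leave implicit. You write the stationarity identity only ``for every $i$ with $\mu^\star_i>0$,'' but the conclusion $\mu(S_i)\propto\lambda^{S_i}$ with all $\lambda_j>0$ forces $\mu$ to be strictly positive on every $S_i$, so you must actually prove that the entropy maximizer $\mu^\star$ has full support. This is exactly where strict interiority of $\mathbf{p}$ earns its keep: if $\mu^\star_{i_0}=0$ for some $i_0$, set $\mu_\theta=(1-\theta)\mu^\star+\theta\mu^{(0)}$, which stays feasible; the one-sided derivative of the entropy at $\theta=0^+$ is $+\infty$ because the derivative of $-\mu\log\mu$ blows up as $\mu\to 0^+$, contradicting optimality of $\mu^\star$. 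With that sentence added, the KKT computation and the substitution $\lambda_j=e^{\alpha_j}$ go through as written, and your closing worry about redundant marginal constraints is harmless --- strong duality only requires \emph{some} finite multiplier vector, and any choice yields a valid $\blambda$.
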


For a distribution $\mu: 2^{[n]} \rightarrow \rR_+$, the generating polynomial of $\mu$ is defined as $g_{\mu}(\bz) = \sum_{S \subseteq [n]} \mu(S) \Pi_{i \in S} z_i$. We also call \(\Pr_{S \sim \mu}[i \in S]\) the \textit{marginal probability} of an element \(i\) of the distribution \(\mu\). A distribution $\mu$ is called {\em log-concave} if the generating polynomial of $\mu$ is log-concave.

\begin{lemma}\label{lem:compare_entropy}(Theorem 5.2 in~\cite{AnariGV18}) For any log-concave distribution $\mu:2^{[n]} \rightarrow \rR_+$ with marginal probabilities $\mu_1,\dots,\mu_n\geq 0$, we have

\[ \mathcal{H}(\mu) := \sum_{S \subseteq[n]} \mu(S) \log \frac{1}{\mu(S)} \geq \sum_{i =1}^n \mu_i \log \frac{1}{\mu_i}.\]

\end{lemma}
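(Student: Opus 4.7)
The plan is to induct on the dimension $n$. The base case $n=1$ is immediate: $\mathcal{H}(\mu) - \mu_1\log(1/\mu_1) = (1-\mu_1)\log(1/(1-\mu_1)) \geq 0$. For the inductive step, I would condition on the single coordinate $X_n := \one[n\in S]$, yielding two conditional distributions $\mu^0, \mu^1$ on $2^{[n-1]}$ whose generating polynomials are, up to normalization, $g_\mu(z_1, \ldots, z_{n-1}, 0)$ and $\partial_{z_n} g_\mu(z_1, \ldots, z_{n-1}, 0)$. Both remain log-concave: the former by taking the limit $z_n \to 0^+$, and the latter because in our applications the hypothesis is really that $g_\mu$ is completely log-concave (cf.~Lemma~\ref{lem:matroid_log_concave}), so differentiation in $z_n$ preserves log-concavity. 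The chain rule for entropy then gives
\[\mathcal{H}(\mu) = H(\mathrm{Ber}(\mu_n)) + \mu_n \mathcal{H}(\mu^1) + (1-\mu_n) \mathcal{H}(\mu^0),\]
and applying the inductive hypothesis to $\mu^0,\mu^1$ yields $\mathcal{H}(\mu^b) \geq \sum_{i<n}\mu_i^b\log(1/\mu_i^b)$ for $b \in \{0,1\}$, where $\mu_i^b := \Pr[i\in S \mid X_n=b]$.

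Writing $\phi(x) := x\log(1/x)$ and using $H(\mathrm{Ber}(\mu_n)) = \mu_n\log(1/\mu_n) + (1-\mu_n)\log(1/(1-\mu_n))$ together with $\mu_i = (1-\mu_n)\mu_i^0 + \mu_n\mu_i^1$, the inductive step reduces to proving
\[(1-\mu_n)\log(1/(1-\mu_n)) \;\geq\; \sum_{i<n}\bigl[\phi(\mu_i) - \mu_n\phi(\mu_i^1) - (1-\mu_n)\phi(\mu_i^0)\bigr].\]
The main obstacle is that each summand on the right is non-negative by Jensen's inequality (since $\phi$ is concave), so bounding the sum forces a genuinely multivariate use of log-concavity rather than a term-by-term argument; in particular, one cannot work with $\mu^0$ and $\mu^1$ in isolation.

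To overcome this obstacle, I would exploit the negative-correlation inequality $\Pr[i,n\in S] \leq \mu_i\mu_n$, a direct consequence of $\partial_i\partial_n\log g_\mu|_{\mathbf{1}} \leq 0$. This yields $\mu_i^1 \leq \mu_i \leq \mu_i^0$ and, via a second-order Taylor expansion of $\phi$ around $\mu_i$, bounds the $i$-th Jensen gap by a multiple of $(\mu_n\mu_i - \Pr[i,n\in S])\cdot |\phi''(\xi_i)|$ for some intermediate point $\xi_i$. Summing over $i<n$ and comparing with the Bernoulli slack on the left then reduces to a trace-type inequality on the off-diagonal entries of the Hessian of $\log g_\mu$ at $\mathbf{1}$, which is controlled by its negative semidefiniteness --- equivalently, by the full log-concavity of $g_\mu$ beyond its pairwise consequences. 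Carrying out this aggregation, and verifying that the universal constants work out so that the accumulated gap is absorbed by $(1-\mu_n)\log(1/(1-\mu_n))$, is the main technical step.
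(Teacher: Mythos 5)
The paper does not prove this lemma; it is imported verbatim as Theorem~5.2 of \cite{AnariGV18}, so the relevant comparison is with the known proof there, which is short and entirely different from your route: since $\log g_\mu$ is concave on $\R^n_{>0}$ and $g_\mu(\mathbf{1})=1$, first-order concavity at $\mathbf{1}$ gives $g_\mu(\bz)\le\exp\pr{\sum_i \mu_i(z_i-1)}$ for all $\bz>0$; optimizing over $\bz$ coordinate-wise bounds each coefficient by $\mu(S)\le\prod_{i\in S}\mu_i$ (using $d$-homogeneity, i.e.\ $|S|=d=\sum_i\mu_i$), and then $\mathcal{H}(\mu)\ge\sum_S\mu(S)\log\pr{1/\prod_{i\in S}\mu_i}=\sum_i\mu_i\log(1/\mu_i)$. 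No induction, no conditioning, and only log-concavity at the single point $\mathbf{1}$ is used.

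Your proposal has a genuine gap at its crux: the inductive step is reduced to showing that the sum over $i<n$ of the Jensen gaps is absorbed by $(1-\mu_n)\log(1/(1-\mu_n))$, and this is exactly the step you leave as ``the main technical step'' to be ``carried out'' and ``verified.'' The proposed mechanism for it does not work as described. First, $\phi''(x)=-1/x$ is unbounded near $0$ and the conditional marginals $\mu_i^b$ can equal $0$ or $1$ (e.g.\ for $\mu$ uniform on $\{1\},\{2\}$ one has $\mu_1^1=0$, $\mu_1^0=1$, and the target inequality holds with equality), so a second-order Taylor bound on the Jensen gap with an ``intermediate point $\xi_i$'' gives nothing uniform, and there is no slack left to absorb loose constants. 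Second, the claim that $\Pr[i,n\in S]\le\mu_i\mu_n$ is ``a direct consequence of $\partial_i\partial_n\log g_\mu|_{\mathbf{1}}\le 0$'' begs the question: negative semidefiniteness of the Hessian of $\log g_\mu$ (which is all that log-concavity gives) controls diagonal entries and the quadratic form, not the sign of an individual off-diagonal entry. Pairwise negative correlation is a theorem about \emph{completely} log-concave (Lorentzian) polynomials, not a one-line consequence of log-concavity; and similarly your log-concavity of $\mu^1$ already requires complete log-concavity, so even if completed your argument would prove a statement with a strictly stronger hypothesis than the lemma as stated. As it stands the proposal is a plausible plan whose decisive inequality is neither proved nor reduced to anything known.
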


\subsection{Convex Relaxation}
Here, we show that the convex program \eqref{eq:D-relax} is a
relaxation of \DetMax.

\begin{lemma} \label{lem:relaxation}
The optimization \eqref{eq:D-relax}:
\[
\sup_{\bx\in\P(\M)}  \inf_{\bz\in\Z} g(\bx,\bz) := \log\det\pr{\sum_{i \in [n]} x_ie^{z_i} \bv_i \bv_i^\top}
\]
is a relaxation of \DetMax{} problem \eqref{eq:D-obj}:
\[
\max\left\{\det\pr{\sum_{i\in S} \bv_i \bv_i^\top}: S \in \B\right\}.
\]
More specifically, \(\OPT\leq\exp(\CP)\).

\end{lemma}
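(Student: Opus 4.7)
The plan is to show that for every basis $S \in \B$, the indicator vector $\mathbf{1}_S$ is a feasible point of $\P(\M)$ whose associated inner infimum is at least $\log \det(\sum_{i \in S}\bv_i \bv_i^\top)$. Taking the supremum over all bases then yields $\CP \geq \log \OPT$, which is the desired inequality.

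Fix a basis $S \in \B$ and take $\bx = \mathbf{1}_S \in \P(\M)$. With this choice, $x_i = 1$ for $i \in S$ and $x_i = 0$ otherwise, so for any $\bz \in \R^n$ the objective reduces to
\begin{equation*}
g(\mathbf{1}_S, \bz) = \log \det\!\left(\sum_{i \in S} e^{z_i}\bv_i \bv_i^\top\right).
\end{equation*}
The key step is to apply the Cauchy--Binet formula to the matrix inside the determinant, writing each $e^{z_i}\bv_i$ as a column of a $d \times |S|$ matrix. This yields
\begin{equation*}
\det\!\left(\sum_{i \in S} e^{z_i}\bv_i \bv_i^\top\right) = \sum_{T \subseteq S,\, |T|=d} e^{z(T)} \det\!\left(\sum_{i \in T} \bv_i \bv_i^\top\right),
\end{equation*}
where we used the fact that $e^{z(T)} = \prod_{i \in T} e^{z_i}$.

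Now the matroid structure enters: every $T \subseteq S$ with $|T| = d$ is a subset of the independent set $S$, so $T \in \I_d(\M)$. Hence the constraint $\bz \in \Z$ forces $z(T) \geq 0$, i.e. $e^{z(T)} \geq 1$, for every such $T$. Since each Cauchy--Binet term $\det(\sum_{i \in T}\bv_i \bv_i^\top)$ is nonnegative (it equals the squared volume of a parallelepiped), we may drop the factors $e^{z(T)}$ from below to obtain
\begin{equation*}
\det\!\left(\sum_{i \in S} e^{z_i}\bv_i \bv_i^\top\right) \geq \sum_{T \subseteq S,\,|T|=d} \det\!\left(\sum_{i \in T} \bv_i \bv_i^\top\right) = \det\!\left(\sum_{i \in S} \bv_i \bv_i^\top\right),
\end{equation*}
where the final equality is Cauchy--Binet applied in reverse. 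Taking logarithms and infimizing over $\bz \in \Z$ shows $f(\mathbf{1}_S) \geq \log\det(\sum_{i \in S}\bv_i \bv_i^\top)$, and supremizing over $S \in \B$ gives $\CP \geq \log \OPT$, hence $\OPT \leq \exp(\CP)$. The only subtle point is ensuring the Cauchy--Binet expansion is indexed exactly by $d$-subsets of $S$ (which are automatically in $\I_d(\M)$); this is immediate since the matrix has $d$ rows, and there is no obstacle once the roles of $k$ and $d$ are kept straight.
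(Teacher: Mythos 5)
Your proof is correct and is essentially identical to the paper's own proof of Lemma~\ref{lem:relaxation}: both plug in the indicator vector of a basis, expand via Cauchy--Binet, use $z(T)\ge 0$ for every $d$-subset $T$ of the basis (since $T\in\I_d(\M)$) to drop the factors $e^{z(T)}\ge 1$, and recombine via Cauchy--Binet.
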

\begin{proof}
Let \(S^\star\subseteq [n]\) denote an optimal set for  \DetMax{} and \(\bx^\star\) denote the indicator vector of \(S^\star\). We have
\begin{align*}
\CP = \sup_{\bx\in\P(\M)}  \inf_{\bz\in\Z} g(\bx,\bz) &\geq \inf_{\bz\in\Z} g(\bx^\star,\bz).
\end{align*}
For each \(\bz\in\Z\), we have
\begin{align*}
 \exp(g(\bx^\star,\bz))=\det\pr{\sum_{i=1}^n x_i^\star e^{z_i} \bv_i \bv_i^\top} &= \det\pr{\sum_{i\in{S^\star}} e^{z_i} \bv_i \bv_i^\top}=\sum_{R\subseteq S^\star:|R|=d} \det\pr{\sum_{i\in R}  e^{z_i} \bv_i \bv_i^\top}
\end{align*}
where we use the Cauchy-Binet formula for the last equality. For each \(R\subseteq S^\star\) of size \(d\), we have
\[
\det\pr{\sum_{i\in R} e^{z_i} \bv_i \bv_i^\top} =\pr{ \prod_{i\in R} e^{z_i}} \det\pr{\sum_{i\in R}  \bv_i \bv_i^\top} \geq \det\pr{\sum_{i\in R}  \bv_i \bv_i^\top}
\]
where the last inequality follows from the constraint \(z(S)\geq0,\forall S\in\I_d(\M)\) in the definition of \(\Z\) and \(R\in\I_d(\M)\). Therefore, we obtain
\begin{equation} \label{eq:last-relax}
\exp(g(\bx^\star,\bz))\geq\sum_{R\subseteq S^\star:|R|=d}\det\pr{\sum_{i\in R}  \bv_i \bv_i^\top} = \det\pr{\sum_{i\in{S^\star}}  \bv_i \bv_i^\top} =\OPT
\end{equation}
where we apply the Cauchy-Binet formula again for the first equality. Since \eqref{eq:last-relax} holds for each \(z\in\Z\), we have \(\exp(\inf_{\bz\in\Z} g(\bx^\star,\bz)) \geq \OPT\), and therefore \(\exp(\CP) \geq \exp(\inf_{\bz\in\Z} g(\bx^\star,\bz))\geq\OPT\).
\end{proof}

\cut{
\begin{claim}
For any  \(\M=([n],\U)\), the set system \(\M'=(\U',\I')\) constructed by \(\U'=[n]\times [2]\) and 
\begin{align*}
\I'&=\set{S\subseteq \U':\forall i\in[n], \set{(i,1),(i,2)}\nsubseteq S \text{ and } \set{j\subseteq[n]:(j,1)\in S \text{ or } (j,2)\in S}\in\I}
\end{align*}
is a matroid.
\end{claim}
\begin{proof}
If \(B\in\I'\) and \(A\subseteq B\), it is easy to see that \(A\in \I'\) by the hereditary property of \(\I\). Let \(A,B\in\I'\) such that \(|B|>|A|\). Define  \(A_\M = \set{j\subseteq[n]:(j,1)\in A \text{ or } (j,2)\in A}\) and similarly for \(B_\M\). Then by the definition of \(\I'\), we have \(|A|=|A_\M|,|B|=|B_\M|\), and \(A_\M,B_\M\in\I\). Therefore, there exists \(b\in B_\M\setminus A_\M\) such that \(A_\M+b\in\I\). Suppose \((b,1)\in B\). Then, \((b,1)\in B\setminus A\) and \(A+(b,1)\in\I'\), finishing the proof. The other case \((b,2)\in B \) is similar.
\end{proof}

We show that \CPD{} is a valid relaxation of \DOPT.

\begin{proofof}{Lemma \ref{lem:relax}}
Let \(S^*\subseteq [n]\) denote an optimal set for  \DOPT . Let \(\hat x\in\R^{U'}\) be defined by \(x_{i,j}=\frac12\) if \(i\in S^*\) and 0 otherwise for \(j=1,2\). To see that \(x^*\in\P(\M')\), observe that \(\set{(i,j)\in\U':i\in S^*, j=1}\) and \set{(i,j)\in\U':i\in S^*, j=2} are both independent in \(\M'\), and \(x^*\) is a convex combination of indicator vectors of those two sets.
Therefore, we have
\begin{align*}
\Cs = \max_{x\in\X}  \inf_{z\in\Z} g(x,z) &\geq \inf_{z\in\Z} g(\hat x,z)
\end{align*}
For each \(z\in\Z\), we have
\begin{align*}
 \exp(g(\hat x,z))=\det\pr{\sum_{e\in\U'} \hat x_ee^{z_e} v_e v_e^\top} &= \det\pr{\sum_{i\in{S^*}} e^{z_i} v_i v_i^\top}=\sum_{R\subseteq S^*:|R|=d} \det\pr{\sum_{i\in R}  e^{z_i} v_i v_i^\top}
\end{align*}
where we use Cauchy-Binet formula for the last equality. For each \(R\subseteq S^*\) of size \(d\), we have
\[
\det\pr{\sum_{i\in R} e^{z_i} v_i v_i^\top} = e^{z(R)}\det\pr{\sum_{i\in R}  v_i v_i^\top} \geq \det\pr{\sum_{i\in R}  v_i v_i^\top}
\]
where the last inequality follows from the constraint \(z(S)\geq0,\forall S\in\I_d(\M')\) in the definition of \(\Z\) and \(R\in\I_d(\M')\). Therefore, we obtain
\begin{equation} \label{eq:last-relax}
\exp(g(\hat x,z))\geq\sum_{R\subseteq S^*:|R|=d}\det\pr{\sum_{i\in R}  v_i v_i^\top} = \det\pr{\sum_{i\in{S^*}}  v_i v_i^\top} =\OPT
\end{equation}
where we use Cauchy-Binet formula again. Since \eqref{eq:last-relax} is true for each \(z\in\Z\), we conclude that \(\exp(\inf_{z\in\Z} g(\hat x,z)) \geq \OPT\).
\end{proofof}

\subsection{Reduction to General Position}

Next, we show that \OPT{} and \CP{} do not change significantly by a small perturbation of the input.

\begin{lemma} \label{lem:perturb}
Let \(\delta>0\). Suppose \(u_e\)'s are independent uniformly random direction in \(\R^d\) for all \(e\in\U\). Suppose the original instance \(V=\set{v_e}_{e\in\U}\) is perturbed by \(\tilde v_e\leftarrow v_e+\delta u_e\). Then the new optimums \OPT{} and \CP{} of input instance \(\tilde V=\set{\tilde v_e}_{e\in\U}\) is at most [To do, add bound here] from the original optimums.
\end{lemma}
\begin{proof}
[To do, Vivek has this]
\end{proof}

The motivation of a small perturbation is to ensure that input vectors are in general position.

\begin{lemma} \label{lem:general}
Let \(a_1,\ldots,a_n\in\R^d\) be any given vectors in \(d\) dimensions and \(\delta>0\). Then with probability 1, the perturbed vectors \(a_i':=a_i+\delta u_i\) where \(u_i\) is an independently uniformly random direction in \(\R^d\) are in general position. That is, with probability 1, any set of \(d\) vectors in \(\set{a_i'}_{i=1}^n\) are linearly independent.
\end{lemma}
\begin{proof}
[To do, cite some work]
\end{proof}

\begin{remark} \label{rem:general-position}
Because of Lemmas \ref{lem:perturb} and \ref{lem:general}, we may assume without loss of generality that the input vectors are in general position.
\end{remark}
}

\section{Preprocessing and Solvability} \label{sec:preprocessing}

In this section, we show how to transform the matroid and the input
vectors so that Lemma~\ref{lem:alg} holds. We first prove Lemma~\ref{lem:finite},
which gives sufficient conditions for the $\inf_{\bz \in
  \Z}g(\bx,\bz)$ to attain its infimum. This motivates the modifications to the
input, which we carry out next. Finally, we argue that the
appropriately modified input gives an efficiently solvable convex
relaxation.

\subsection{Attaining the Infimum}

In this section, we prove Lemma~\ref{lem:finite}.
We first prove an auxiliary lemma, from which the result will
follow. For notational convenience, let us define vectors $\hat{\bv}_i =
x_i \bv_i$, and let us denote by $\V$ the bases of the linear
matroid generated by the $\set{\hat{\bv}_i}_{i=1}^n$, i.e.
\[\V = \left\{S \subseteq[n]: |S| = d \text{ and } \det\left(\sum_{i \in S}\hat{\bv}_i \hat{\bv}_i^\top\right) \neq 0\right\}.\]

Recall that $\P(\V)$ is the convex hull of indicator vectors of sets
in $\V$. We denote by $\mathrm{relint}\,\P(\V)$ the relative interior
of $\P(\V)$. Equivalently, $\mathrm{relint}\,P(\V)$ is the set of all
points that can be written as a convex combination of indicator
vectors of $\V$ such that all coefficients in the convex combination
are positive. We claim the following lemma.



\begin{lemma} \label{lm:finite-inf}
Suppose that \(\P(\I_d(\M)) \cap \mathrm{relint}\, \P(\V) \neq
\emptyset.\) Then, \(\inf_{\bz \in \Z} g(\bx,\bz)\) is achieved
at some \(\bz^\star\in\Z\),
and $-g(\bx, \bz^\star)$ is equal to
 \begin{equation}\label{eq:dual}
   \inf_{\bm{\mu}\in \D, \bm{\nu} \in \R^{\I_d(\M)}}
    \left\{\sum_{S \in \V} \mu_S \log\left(\frac{\mu_S}{c_S}\right):
        \sum_{I \in \I_d(\M)} \nu_I 1_I = \sum_{S \in \V} \mu_S 1_S,
        \sum_{S \in \V} \mu_S = 1,
        \bm{\nu} \ge 0
        \right\},
\end{equation}
where $c_S = \det\left(\sum_{e \in S}\hat{\bv}_i \hat{\bv}_i^\top\right)$ and \(\D = \{\bm{\mu} \in \R^{\V}: \mu_S > 0 \ \forall S \in \V\}.\)
\end{lemma}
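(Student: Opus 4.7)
The plan is to identify the primal as a constrained log--sum--exp minimization via the Cauchy--Binet formula, then apply Lagrangian duality, and finally use the geometric hypothesis $\P(\I_d(\M)) \cap \mathrm{relint}\,\P(\V) \neq \emptyset$ to guarantee both that the primal infimum is attained and that the dual optimum lies in the strict interior $\D$.

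By Cauchy--Binet, one can rewrite
\[
e^{g(\bx,\bz)} = \det\pr{\sum_{i \in [n]} x_i e^{z_i}\bv_i\bv_i^\top} = \sum_{S \in \V} c_S\, e^{z(S)},
\]
so the primal becomes $\inf_{\bz \in \Z}\log\sum_{S \in \V} c_S\, e^{z(S)}$, a convex minimization with linear inequality constraints $z(I)\geq 0$ for $I \in \I_d(\M)$. Slater's condition holds since $\bz = \one$ gives $z(I) = d > 0$. The Lagrangian is $L(\bz,\bm{\nu}) = \log\sum_{S \in \V} c_S e^{z(S)} - \sum_I \nu_I z(I)$ for $\bm{\nu} \geq 0$, and computing the Fenchel conjugate of log--sum--exp yields that $\inf_\bz L(\bz,\bm{\nu})$ is finite if and only if $\bm{a} := \sum_I \nu_I \one_I \in \P(\V)$ (which, since every $\one_S$ has coordinate sum $d$, forces $\sum_I \nu_I = 1$), and in that case equals $-\inf\{\sum_S \mu_S\log(\mu_S/c_S): \bm{\mu}\geq 0,\ \sum_S \mu_S = 1,\ \sum_S \mu_S \one_S = \bm{a}\}$. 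Taking the supremum over $\bm{\nu}\geq 0$ produces the expression in \eqref{eq:dual}, a priori with $\bm{\mu}\geq 0$ instead of $\bm{\mu} \in \D$.

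To apply the hypothesis, fix $\bm{a}^\star \in \P(\I_d(\M)) \cap \mathrm{relint}\,\P(\V)$, so $\bm{a}^\star = \sum_I \nu^\star_I \one_I = \sum_S \mu^\star_S \one_S$ with $\bm{\nu}^\star \geq 0$, $\sum_I \nu^\star_I = 1$, and $\mu^\star_S > 0$ for all $S \in \V$, $\sum_S \mu^\star_S = 1$. This already exhibits a dual feasible pair with $\bm{\mu}^\star \in \D$, so the dual feasible set in $\D$ is non-empty. For primal attainment, I would analyze recession directions of the sublevel sets: along any $\bd \in \Z$, one has $\bm{a}^\star \cdot \bd = \sum_I \nu^\star_I d(I) \geq 0$; on the other hand, $\bm{a}^\star \cdot \bd = \sum_S \mu^\star_S d(S)$, and since all $\mu^\star_S > 0$, this precludes $\max_{S \in \V} d(S) < 0$ (along which the log--sum--exp would decrease without bound), and, when $\max_{S \in \V} d(S) = 0$, forces $d(S) = 0$ for every $S \in \V$, meaning $g(\bx,\cdot)$ is constant along $\bd$. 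Quotienting by the lineality subspace $L = \{\bd : d(S) = 0\ \forall S \in \V\}$, the objective becomes coercive on $\Z/L$, hence the primal infimum is attained at some $\bz^\star \in \Z$.

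Finally, by Slater's condition strong duality holds, so $g(\bx,\bz^\star)$ equals the dual optimal value. Using the KKT conditions at $\bz^\star$, one verifies that $\mu^\star_S := c_S e^{z^\star(S)}/\sum_{S'} c_{S'} e^{z^\star(S')} > 0$ for all $S \in \V$, and complementary slackness (together with the identity $\sum_S \mu^\star_S z^\star(S) = \sum_I \nu^\star_I z^\star(I) = 0$) shows that this $\bm{\mu}^\star$, paired with the corresponding Lagrange multipliers $\bm{\nu}^\star$, attains the dual value $-g(\bx,\bz^\star)$. Since this optimizer lies in $\D$, the infimum over $\bm{\mu} \in \D$ coincides with that over $\bm{\mu} \geq 0$, yielding precisely \eqref{eq:dual}. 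The main obstacle is the attainment step: without the strict positivity guaranteed by $\bm{a}^\star \in \mathrm{relint}\,\P(\V)$, nothing would rule out recession directions along which $g(\bx,\cdot)$ decreases strictly to a finite but unattained limit.
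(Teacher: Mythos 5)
Your proof is correct, but it runs the duality in the opposite direction from the paper. The paper treats the entropy program \eqref{eq:dual} as the \emph{primal}: the hypothesis $\P(\I_d(\M)) \cap \mathrm{relint}\,\P(\V) \neq \emptyset$ supplies a feasible point with $\bm{\mu}$ in the relative interior of $\D$, so Slater's condition (Theorem~\ref{thm:slater}) applies to \eqref{eq:dual}; the Lagrange dual of \eqref{eq:dual} is then computed to be exactly $\sup_{\bz \in \Z} -g(\bx,\bz)$, and dual attainment under Slater delivers both the existence of $\bz^\star$ and the equality of values in one stroke. You instead take $\inf_{\bz\in\Z} g(\bx,\bz)$ as the primal, recover \eqref{eq:dual} as its Fenchel/Lagrange dual, and must then establish primal attainment by hand via the recession-cone analysis — which is where your use of the hypothesis enters, through the strictly positive coefficients $\mu^\star_S$ of a point of $\mathrm{relint}\,\P(\V)$ that also lies in $\P(\I_d(\M))$. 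Both arguments are sound. The paper's route is shorter because attainment of $\bz^\star$ falls out of a black-box duality theorem; your route is more transparent about \emph{why} the relative-interior condition is the right one (it is precisely what kills recession directions $\bd\in\Z$ with $\max_{S\in\V} d(S) \le 0$ other than those along which $g$ is constant), at the cost of the extra care needed in the "quotient by the lineality space" step — which is cleanest if phrased as: $g(\bx,\cdot)$ factors through the linear map $\bz \mapsto (z(S))_{S\in\V}$, the image of the polyhedral cone $\Z$ under this map is closed, and your recession argument shows the sublevel sets of the induced function on that image are compact. Also note that the positivity $\mu^\star_S = c_S e^{z^\star(S)}/\sum_{S'} c_{S'} e^{z^\star(S')} > 0$ at the end is automatic from $c_S>0$ and does not itself use the hypothesis; the hypothesis is consumed entirely in the attainment step.
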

\begin{proof}
The objective function
$h(\bm{\mu}) := \sum_{S \in \V} \mu_S\log\left(\frac{\mu_S}{c_S}\right)$ is
easily seen to be convex in $\bm{\mu}$, and $\D$ is a non-empty convex
set. Moreover, since $\D$ is open, it is equal to its relative
interior. Let $\bx' \in \P(\I_d(\M)) \cap \mathrm{relint}\, \P(\V)$. Then, we can write
\begin{align*}
    \bx' &= \sum_{I \in \I_d(\M)} \nu_I 1_I
      = \sum_{S \in \V} \mu_S 1_S
\end{align*}
for some $\bm{\nu} \in \R^{\I_d(\M)}_{\ge 0}$ such that $\sum_{I \in \I_d(\M)} \nu_I = 1$
and $\bm{\mu} \in \D$ such that $\sum_{S \in \V}\mu_S = 1$. Therefore,
\eqref{eq:dual} is feasible, and the assumptions of
Theorem~\ref{thm:slater} are satisfied. We  claim that $-\inf_{\bz
  \in \Z}{g(\bx,\bz)} = \sup_{\bz \in \Z}{-g(\bx, \bz)}$ is
equivalent to the dual problem of \eqref{eq:dual}.

We can write the Lagrangian of \eqref{eq:dual} as
\begin{multline*}
L(\bm{\mu}, \bm{\nu}, \bz,\bm{\gamma}, t)
=
\sum_{S \in \V} \mu_S \log\left(\frac{\mu_S}{c_S}\right)
+ \sum_{i=1}^n z_i \left(\sum_{I \in \I_d(\M): i \in I} \nu_I - \sum_{S \in \V: i\in S} \mu_S\right) \\
+ t\left(\sum_{S\in \V}\mu_S - 1\right) - \sum_{I \in \I_d(\M)} \gamma_I \nu_I,
\end{multline*}
where the Lagrange multipliers are $\blambda = (\bz, \bm{\gamma}, t)$, and we
have $\bm{\gamma} \in \R_{\ge 0}^{\I_d(\M)}$, while $t$ and $\bz$ are
unconstrained. We will show that $\sup_{\bz \in \Z}{\,-g(\bx, \bz)}$ is
equivalent to
\begin{equation}\label{eq:lagr-dual}
\sup_{\bz\in \R^n,t\in \R,\bm{\gamma} \in \R_{\ge 0}^{\I_d(\M)}}\ \ \inf_{\bm{\mu}\in\D,\bm{\nu} \in \R^{\I_d(\M)}} L(\bm{\mu}, \bm{\nu}, \bz,\bm{\gamma}, t).
\end{equation}
In particular, we will show that for any $\bz$,
\begin{equation}\label{eq:duality-equiv}
\sup_{t\in \R,\bm{\gamma} \in \R_{\ge 0}^{\I_d(\M)}}\ \ \inf_{\bm{\mu}\in\D,\bm{\nu} \in \R^{\I_d(\M)}} L(\bm{\mu}, \bm{\nu}, \bz,\bm{\gamma}, t) =
\begin{cases}
  -g(\bx,\bz) &\bz\in \Z,\\
  -\infty &\bz \not \in \Z
\end{cases}.
\end{equation}
Since, by Theorem~\ref{thm:slater}, the supremum in
\eqref{eq:lagr-dual} is achieved and equals \eqref{eq:dual}, the lemma will follow.

Let us fix some $\bm{\mu}, \bz, \bm{\gamma}, t$, and first take the infimum over
$\bm{\nu}$. The terms in $L(\bm{\mu},\bm{\nu},\bz,\bm{\gamma},t)$ that depend on $\bm{\nu}$ are
\[
\sum_{I \in \I_d(\M)}\nu_I (z(I)-\gamma_I).
\]
We see that
\[
\inf_{\bm{\nu} \in \R^{\I_d(\M)}}\sum_{I \in \I_d(\M)}\nu_I (\gamma_I - z(I)) =
\begin{cases}
0 & z(I) = \gamma_I\ \forall I\\
-\infty & \text{otherwise}
\end{cases}.
\]
Recall that $\Z = \{\bz: z(I) \ge 0\ \forall I \in \I_d(\M)\}$. If $\bz
\not \in \Z$ and $\bm{\gamma} \in \R_{\ge 0}^{\I_d(\M)}$, then
$z(I) \neq \gamma_I$ for some $I\in\I_d(\M)$, and we have
$\inf_{\bm{\nu} \in \R^{\I_d(\M)}} L(\bm{\mu}, \bm{\nu}, \bz,\bm{\gamma},t) = -\infty$. If $\bz \in \Z$, then $\inf_{\bm{\nu} \in \R^{\I_d(\M)}} L(\bm{\mu}, \bm{\nu}, \bz,\bm{\gamma},t) = -\infty$ unless $z(I) = \gamma_I$ for every $I \in \I_d(\M)$. So, we may restrict the domain of $\bz$ to $\Z$  and simplify $L(\bm{\mu}, \bm{\nu}, \bz, \bm{\gamma},t)$ to
\begin{align*}
L'(\bm{\mu}, \bz,t) &:=
\sum_{S \in \V} \mu_S \log\left(\frac{\mu_S}{c_S}\right)
- \sum_{i=1}^n z_i \sum_{S \in \V: i \in S} \mu_S  + t\left(\sum_{S\in \V}\mu_S - 1\right)\\
&=
\sum_{S \in \V} \mu_S \log\left(\frac{\mu_S}{c_S}\right)
+  \sum_{S \in \V} \mu_S(t - z(S)) - t.
\end{align*}
In other words, either $\inf_{\bm{\nu} \in \R^{\I_d(\M)}} L(\bm{\mu}, \bm{\nu}, \bz, \bm{\gamma},t) = -\infty$, or else we have $\bz \in \Z$ and
\[
\inf_{\bm{\nu} \in \R^{\I_d(\M)}} L(\bm{\mu}, \bm{\nu}, \bz, \bm{\gamma},t) = L'(\bm{\mu}, \bz,t).
\]

Let us next fix $t$ and $\bz$, and compute $\inf_{\bm{\mu} \in D} L'(\bm{\mu}, \bz,
t)$. By taking derivatives over $\bm{\mu}$, we see that the infimum is achieved for ${\mu_S} :={c_S} e^{z(S)-t-1}$ and is equal to
\[
\inf_{\bm{\mu} \in D} L'(\bm{\mu}, \bz, t)
=
-e^{-t-1}\sum_{S \in \V}c_S e^{z(S)} - t.
\]
Taking the derivative over $t$, we see that the right-hand side is
maximized for $t = \log\left(\sum_{S \in \V}c_S e^{z(S)}\right) - 1$, and we have, for every $\bz$,
\[
\sup_{t \in \R}\inf_{\bm{\mu} \in D} L'(\bm{\mu}, \bz, t)
=
-\log\left(\sum_{S \in \V}c_S e^{z(S)}\right)
= -\log \det\left(\sum_{i=1}^n\hat{\bv}_i \hat{\bv}_i^\top\right)
= -g(\bx,\bz),
\]
where the penultimate equality follows by the Cauchy-Binet
formula. This establishes \eqref{eq:duality-equiv} and proves the
lemma.
\end{proof}

We can now prove Lemma \ref{lem:finite}.

\begin{proofof}{Lemma \ref{lem:finite}}
  By Lemma~\ref{lm:finite-inf}, we only need to show that
 \[\P(\I_d(\M)) \cap \mathrm{relint}\, \P(\V) \neq \emptyset.\]
 
Since we assumed that the vectors are in general position, we
 have \(\V=\set{S\subseteq \supp{\bx}:|S|=d},\) and therefore
 \[
 \mathrm{relint}\, \P(\V) =
 \left\{\bx' \in \R^n: 0 < x'_i < 1 \ \forall i \in \supp{\bx} \text{ and }
x'_i = 0
 \ \forall i \not \in \supp{\bx} \text{ and } \sum_{i=1}^n x'_i = d\right\}.
 \]

We first claim the following.
\begin{claim} \label{clm:i-in-notin-S}
  For any \(\bx\in\P(\M)\) such  that $\max_{i \in [n]} x_i < 1$, we have that for all
  \(i\in\supp{\bx}\),
\begin{enumerate}
\item
there exists \(S\in\I_d(\M)\) such that \(S\subseteq \supp{\bx}\) and \(i\in S\), and
\item
there exists \(S'\in\I_d(\M)\) such that \(S'\subseteq \supp{\bx}\) and \(i\notin S\)
\end{enumerate}
\end{claim}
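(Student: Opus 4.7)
The plan is to unpack the hypothesis $\bx \in \P(\M)$ as a convex combination of basis indicators and read off both required independent sets of size $d$ directly. Write
\[
\bx = \sum_{B \in \B(\M)} \lambda_B \mathbf{1}_B
\]
with $\lambda_B \geq 0$ and $\sum_B \lambda_B = 1$, which is possible since $\P(\M)$ is the convex hull of indicator vectors of the bases of $\M$. The key observation is: whenever $\lambda_B > 0$, we have $B \subseteq \supp{\bx}$, because for every $j \in B$ the identity $x_j = \sum_{B' \ni j} \lambda_{B'} \geq \lambda_B > 0$ forces $j \in \supp{\bx}$.

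For the first assertion, I fix $i \in \supp{\bx}$, so $x_i = \sum_{B \ni i} \lambda_B > 0$, and hence there exists some basis $B$ with $i \in B$ and $\lambda_B > 0$. By the observation above, $B \subseteq \supp{\bx}$. Since $|B| = k \geq d$ (the regime $k < d$ is trivial because then $\I_d(\M) = \emptyset$ and the objective vanishes identically), I may take any subset $S \subseteq B$ of size $d$ containing $i$; this $S$ is independent as a subset of the independent set $B$, and satisfies $S \in \I_d(\M)$, $S \subseteq \supp{\bx}$, and $i \in S$.

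For the second assertion, the hypothesis $\max_{j \in [n]} x_j < 1$ gives in particular $x_i < 1$, so
\[
\sum_{B \not\ni i} \lambda_B \;=\; 1 - x_i \;>\; 0,
\]
and there exists a basis $B'$ with $i \notin B'$ and $\lambda_{B'} > 0$. The same observation yields $B' \subseteq \supp{\bx}$, and any $d$-subset $S' \subseteq B'$ satisfies $S' \in \I_d(\M)$, $S' \subseteq \supp{\bx}$, and $i \notin S'$, as required.

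I do not foresee any real obstacle: the claim is purely combinatorial and the only subtlety is the observation that every basis appearing with positive weight in the decomposition of $\bx$ must sit inside $\supp{\bx}$. Once this is in hand, both parts follow by selecting an appropriate basis (one that contains $i$, and one that avoids $i$) from the convex combination and truncating to a size-$d$ subset using $k \geq d$.
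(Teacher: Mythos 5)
Your proof is correct and follows essentially the same route as the paper's: decompose $\bx\in\P(\M)$ as a convex combination of basis indicators, use $0<x_i<1$ to extract a positively-weighted basis containing $i$ and another avoiding $i$, and observe that any positively-weighted basis must lie inside $\supp{\bx}$. You are in fact slightly more careful than the paper, whose proof writes the convex combination directly over sets in $\I_d(\M)$; your explicit truncation of the size-$k$ bases to size-$d$ independent subsets (using $k\ge d$) is the right way to make that step precise.
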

\begin{proof}
  Since \(\bx\in\P(\M)\), \(\bx\) is a convex combination of indicator
  vectors of sets \(\set{S_j}_{j\in J}\) in \(\I_d(\M)\). For each
  \(i\in\supp{\bx}\), we have \(0<x_i<1\). Hence, there exist
  \(S,S'\in \set{S_j}_{j\in J}\) such that \(i\in S\) and \(i\notin
  S'\). Since \(\bx\) is a convex combination of indicator vectors
  of \(\set{S_j}_{j\in J}\), we must have \(S,S'\subseteq \supp{\bx}\).
\end{proof}

We now construct a point $\bx'$ in \(\P(\I_d(\M)) \cap
\mathrm{relint}\, \P(\V)\).  For each \(i\in\supp{\bx}\), we apply
Claim \ref{clm:i-in-notin-S} to obtain \(S_i,S_i'\in\I_d(\M)\) such
that \(i\in S_i\) and \(i\notin S_i'\). Let \(\bx'\in\P(\I_d(\M))\) be the
average of all indicator vectors of
\(\set{S_i}_{i\in\supp{\bx}}\cup\set{S_i'}_{i\in\supp{\bx}}\). Then, we
have that \(0<x'_i<1\) for all \(i\in \supp{\bx}\),
$\supp{\bx'}=\supp{\bx}$, and $\sum_{i=1}^dx'_i = d$. Therefore, $\bx' \in
\mathrm{relint}\, \P(\V)$.
\end{proofof}

\subsection{Transforming the Input}

In order to guarantee that the assumptions of Lemma~\ref{lem:finite} hold,
we transform the input to our problem \DetMax. The transformation will
preserve the values of integral solutions. It  consists of two
steps: first we construct a new matroid and corresponding new
vectors, and then we perturb the vectors to ensure that they are in general
position.

We define the new matroid \(\M'=(\U',\I')\) to be derived
from the original matroid \(\M = ([n], \I)\) by introducing two copies
for each element, forming a circuit. In particular, we let
\(\U'=[n]\times [2]\) and
\begin{align}
\I'&:=\set{S\subseteq \U':\forall i\in[n], \set{(i,1),(i,2)}\nsubseteq S \text{ and } \set{j\subseteq[n]:(j,1)\in S \text{ or } (j,2)\in S}\in\I}. \label{eq:I'}
\end{align}
We also create new vectors, corresponding to the elements of $\U'$, by
setting $\bv'_{(i,j)} = \bv_i$ for any $i\in [n]$ and $j\in [2]$.

The next claim shows that if \(\M\) is a matroid, then so is
\(\M'\), and that the objective value of \DetMax{} on the new
instance is preserved.

\begin{claim}\label{cl:new-mat}
For any  \(\M=([n],\U)\), the set system \(\M'=(\U',\I')\) constructed
by \(\U'=[n]\times [2]\) and $\I'$ defined as in \eqref{eq:I'}
is a matroid. Moreover, the vectors $\{\bv'_{(i,j)}: (i,j) \in \U'\}$
defined by $\bv'_{i,j} = \bv_i$ for all $i$ and $j$ satisfy
\[
\max_{S \in \B} \det\left(\sum_{i \in S}{\bv_i \bv_i^\top}\right)
=
\max_{S' \in \B'} \det\left(\sum_{e \in S}{\bv'_e (\bv'_e)^\top}\right),
\]
where $\B'$ are the bases of $\M'$.
\end{claim}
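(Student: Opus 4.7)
The plan is to split the claim into its two assertions and handle each with a short direct argument. For the first assertion, I will verify the two matroid axioms (hereditary and augmentation) for $\M'$ by exploiting the natural ``projection'' map $\pi : 2^{\U'} \to 2^{[n]}$ defined by $\pi(T) = \{j \in [n] : (j,1) \in T \text{ or } (j,2) \in T\}$. For the second assertion, I will observe that the definition of $\bv'_{(i,j)} = \bv_i$ together with the constraint $\{(i,1),(i,2)\} \not\subseteq S'$ in \eqref{eq:I'} makes $\pi$ restrict to a value-preserving bijection between $\B$ and $\B'$.

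\textbf{Matroid axioms.} The hereditary property is immediate: if $B \in \I'$ and $A \subseteq B$, then neither of the two defining conditions of \eqref{eq:I'} can be violated on the smaller set $A$, since $\pi(A) \subseteq \pi(B) \in \I$ and any pair $\{(i,1),(i,2)\}$ missing from $B$ is also missing from $A$. For the augmentation axiom, suppose $A, B \in \I'$ with $|B| > |A|$. Because the parallel constraint forces each element of $\pi(T)$ to be ``hit'' by exactly one copy, we have $|\pi(A)| = |A|$ and $|\pi(B)| = |B|$, so $|\pi(B)| > |\pi(A)|$. The matroid axiom for $\M$ yields some $b \in \pi(B) \setminus \pi(A)$ with $\pi(A) \cup \{b\} \in \I$. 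Pick $j \in [2]$ with $(b,j) \in B$; since $b \notin \pi(A)$, neither $(b,1)$ nor $(b,2)$ lies in $A$. Thus $(b,j) \in B \setminus A$, the parallel condition is preserved in $A \cup \{(b,j)\}$, and $\pi(A \cup \{(b,j)\}) = \pi(A) \cup \{b\} \in \I$, so $A \cup \{(b,j)\} \in \I'$, as required.

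\textbf{Preservation of the objective.} The above also shows that the ranks of $\M$ and $\M'$ coincide (bases of $\M'$ are in bijection with subsets of $[n]$ together with a choice of copy for each element, and the underlying subset is a basis of $\M$). Hence every $S' \in \B'$ satisfies $\pi(S') \in \B$ with $|\pi(S')| = |S'|$, and conversely every $S \in \B$ lifts to, say, $S' = \{(i,1) : i \in S\} \in \B'$. Since $\bv'_{(i,1)} = \bv'_{(i,2)} = \bv_i$ and each $i \in \pi(S')$ contributes exactly one copy, we get
\[
 \sum_{e \in S'} \bv'_e (\bv'_e)^\top \;=\; \sum_{i \in \pi(S')} \bv_i \bv_i^\top,
\]
so the two determinants agree on corresponding sets, and therefore the maxima over $\B$ and $\B'$ coincide.

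The argument is almost entirely bookkeeping; the only subtle point is noticing that $|\pi(T)| = |T|$ for $T \in \I'$, which comes from the parallel-pair constraint in \eqref{eq:I'}. With this in hand both the axiom verification and the value comparison reduce to routine manipulations, so I do not anticipate any real obstacle.
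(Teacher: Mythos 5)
Your proof is correct and follows essentially the same route as the paper: both verify the hereditary and augmentation axioms via the projection $T \mapsto \{j : (j,1)\in T \text{ or } (j,2)\in T\}$ (the paper's $A_\M$), using $|T| = |\pi(T)|$ from the parallel-pair constraint, and both establish the value equality by the same lifting $S \mapsto \{(i,1): i\in S\}$ and its inverse. No gaps.
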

\begin{proof}
We first show that $\M'$ is a matroid. If \(B\in\I'\) and \(A\subseteq
B\), then \(A\in \I'\) by the hereditary property of
\(\I\). Let \(A,B\in\I'\) be such that \(|B|>|A|\). Define  \(A_\M =
\set{j\subseteq[n]:(j,1)\in A \text{ or } (j,2)\in A}\) and similarly
for \(B_\M\). Then, by the definition of \(\I'\), we have
\(|A|=|A_\M|,|B|=|B_\M|\), and \(A_\M,B_\M\in\I\). Therefore, there
exists \(b\in B_\M\setminus A_\M\) such that \(A_\M+b\in\I\). Suppose
\((b,1)\in B\). Then, \((b,1)\in B\setminus A\) and \(A+(b,1)\in\I'\),
finishing the proof. The other case \((b,2)\in B \) is similar.

To show that the optimal value of \DetMax{} is preserved,
observe  that, for any $S \in \B$, the set $S' = \{(i,1): i \in S\}$
is a basis of $\M'$, and observe that
\[
\det\left(\sum_{i \in S}{\bv_i \bv_i^\top}\right)
=
\det\left(\sum_{e \in S'}{\bv'_e (\bv')_e^\top}\right).
\]
In the other direction, we have that, by the definition of $\M'$, for
any $S' \in \B$, the set $S = \{i: (i,1) \in S' \text{ or } (i,1) \in
S'\}$ is a basis of $\M$, and the equality above is, again,
satisfied.
\end{proof}

The next step is to transform the vectors so that they are in general
position. We use the following lemma.

\begin{lemma} \label{lem:perturb}
  Let $V' = \{\bv'_e\}_{e \in \U'}$ be a collection of vectors in
  $\R^d$. For any $\delta,\gamma>0$, there exists a collection $V'' =
  \{\bv''_e\}_{e \in \U'}$ of vectors in $\R^d$, computable in randomized
  polynomial time in the bit complexity of $V'$, $\log(1/\delta)$, and \(\log\log(1/\gamma)\)
  such that, with probability at least \(1-\gamma\), the vectors $V''$ are in general
  position, and that for all $S \subseteq \U'$ of size $|S| = d$,
\begin{equation}
  \left|
    \det\left(\sum_{e \in S}\bv'_e (\bv'_e)^\top\right)
    -
    \det\left(\sum_{e \in S}\bv''_e (\bv''_e)^\top\right)
  \right| \le \delta.
  \label{eq:lem-det-V-dif}
\end{equation}
\end{lemma}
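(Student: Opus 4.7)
The natural strategy is to define $\bv''_e := \bv'_e + \epsilon \bu_e$, where $\epsilon > 0$ is a sufficiently small scalar and the $\bu_e$ are independent random vectors in $\R^d$ with bounded bit complexity. The analysis then splits into two essentially independent pieces: a quantitative continuity estimate to verify \eqref{eq:lem-det-V-dif} for small enough $\epsilon$, and a probabilistic argument via the Schwartz--Zippel lemma to ensure general position with high probability. A key observation is that non-vanishing of the relevant determinants is an algebraic property, so it will depend only on $\epsilon>0$ (and the sampled $\bu_e$'s), not on the magnitude of $\epsilon$.

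\textbf{General position.} Fix a $d$-subset $S \subseteq \U'$ and view
\[
P_S := \det\bigl([\,\bv'_e + \epsilon \bu_e\,]_{e \in S}\bigr)
\]
as a polynomial in the $d^2$ coordinates of $\{\bu_e\}_{e\in S}$. Expanding in powers of $\epsilon$, the top-degree term is $\epsilon^d \det([\bu_e]_{e \in S})$, which is a non-zero polynomial of degree $d$ in the $\bu_e$'s, so $P_S \not\equiv 0$. Sampling each coordinate of each $\bu_e$ independently and uniformly from $\{1,\ldots, N\}$, the Schwartz--Zippel lemma gives $\Pr[P_S = 0] \le d/N$. A union bound over the $\binom{|\U'|}{d}$ subsets shows that $N = \Theta(|\U'|^d d / \gamma)$ makes all such $P_S$ non-zero simultaneously with probability at least $1-\gamma$, so each coordinate of $\bu_e$ requires $O(d\log|\U'| + \log(1/\gamma))$ bits. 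The failure probability can be reduced further by independent repetition, which also gives the necessary polynomial dependence on $\log(1/\gamma)$.

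\textbf{Determinant perturbation.} For each $d$-subset $S$, the map $\{\bv''_e\}_{e\in S} \mapsto \det\bigl(\sum_{e \in S} \bv''_e (\bv''_e)^\top\bigr)$ is a polynomial of degree $2d$ in the coordinates of the $\bv''_e$. Expanding
\[
\det\!\Bigl(\sum_{e \in S}(\bv'_e + \epsilon \bu_e)(\bv'_e + \epsilon \bu_e)^\top\Bigr)
\]
by multilinearity of the determinant and the Cauchy--Binet formula, every term other than $\det\bigl(\sum_{e \in S} \bv'_e (\bv'_e)^\top\bigr)$ carries at least one factor of $\epsilon$. Setting $M := \max_e\|\bv'_e\|$ and $U := \max_e\|\bu_e\|$, a crude term-by-term bound gives a total deviation at most $\epsilon \cdot F(d,M,U)$ for an explicit polynomially sized function $F$. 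Choosing $\epsilon := \delta / F(d,M,U)$ yields \eqref{eq:lem-det-V-dif} for every $S$ simultaneously. Since $\log(1/\epsilon)$ is polynomial in the input bit complexity and $\log(1/\delta)$, and since this scaling does not affect the Schwartz--Zippel argument above, the overall procedure runs in randomized polynomial time.

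\textbf{Main obstacle.} The main subtlety is the interaction between the two requirements: $\epsilon$ must be taken very small for the continuity bound, which shrinks the leading term $\epsilon^d \det([\bu_e])$ and makes the evaluated determinants numerically tiny. However, this is not a genuine difficulty because the general position conclusion is purely algebraic and only requires $\epsilon>0$; once this is observed, the two pieces decouple cleanly. The only remaining care is to track bit complexities so that $\epsilon$ and the discrete samples $\bu_e$ can be written down in polynomial space in the input, $\log(1/\delta)$, and $\log(1/\gamma)$.
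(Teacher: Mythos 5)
Your construction is sound in its main ideas but takes a genuinely different route from the paper. The paper perturbs each $\bv'_e$ by a Gaussian with covariance $\sigma I_d$: general position then holds with probability $1$ (singular matrices have Lebesgue measure zero), and the determinant bound \eqref{eq:lem-det-V-dif} is what holds only with probability $1-\gamma$, via the Minkowski determinant (Brunn--Minkowski) inequality $|\det(\mathbf{A}+\mathbf{B})|^{1/d}\ge|\det(\mathbf{A})|^{1/d}+|\det(\mathbf{B})|^{1/d}$ combined with a Gaussian tail bound on the entries of the perturbation matrix. You instead use a discrete grid perturbation $\epsilon\,\mathbf{u}_e$ with Schwartz--Zippel for general position and a direct polynomial expansion in $\epsilon$ for the determinant bound. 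Your route has real advantages: it uses only finitely many random bits (the paper silently assumes exact Gaussian sampling), it makes the determinant estimate deterministic rather than probabilistic, and it replaces the Brunn--Minkowski step by an elementary coefficient bound. Your observation that the general-position conclusion is algebraic in $\epsilon$ and hence decouples from the continuity estimate is exactly the right point, and the Schwartz--Zippel application is correct: the degree-$d$ homogeneous part of $P_S$ in the $\mathbf{u}$-variables is $\epsilon^d\det([\mathbf{u}_e]_{e\in S})\not\equiv 0$, so $P_S\not\equiv 0$ for every $\epsilon>0$.

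The one place where your argument falls short of the statement is the dependence on $\gamma$. The lemma asks for running time polynomial in $\log\log(1/\gamma)$; the Gaussian proof achieves this because the relevant failure probability is $e^{-t^2/2\sigma^2}$, so the ratio $t/\sigma$ needs only $O(\log\log(1/\gamma))$ bits. Your grid size $N=\Theta(|\U'|^d d/\gamma)$ forces $\log N = \Omega(\log(1/\gamma))$ bits per coordinate, which is exponentially worse in this parameter. Moreover, the proposed repair by ``independent repetition'' does not go through as stated: amplification requires an efficient test of whether a given trial succeeded, and verifying that $n$ vectors in $\R^d$ are in general position means certifying nonsingularity of all $\binom{|\U'|}{d}$ subsets --- this is not known to be polynomial-time (it amounts to deciding whether a linear matroid has a small circuit, a problem closely related to computing the minimum distance of a linear code, which is NP-hard). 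So with your scheme one is stuck with $\poly(\log(1/\gamma))$ rather than $\poly(\log\log(1/\gamma))$. This is immaterial to how the lemma is actually used downstream (where $\gamma$ is effectively a constant), but it does mean your proof establishes a slightly weaker statement than the one claimed, and you should either weaken the complexity claim accordingly or switch to a perturbation whose failure probability decays doubly exponentially in its bit complexity, as the Gaussian does.
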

\begin{proof}
Let \(\sigma>0\) be a constant, to be determined shortly. For each
vector \(\bv_e'\), we add a Gaussian vector with mean $0$ and
covariance matrix \(\sigma \cdot I_d\) to obtain \(\bv_e''\) which has
mean \(\bv_e'\) and covariance \(\sigma \cdot I_d\). For any
\(\sigma>0\), any subset of \(d\) vectors of \(\{\bv''_e\}_{e \in
  \U'}\) are linearly independent with probability 1 (because the set of
singular matrices has Lebesgue measure 0 on \(\R^{d\times d}\)
\cite{caron2005zero}, and the multivariate Gaussian distribution is
absolutely continuous with respect to Lebesgue measure).
Therefore, by the union bound, all \(\binom nd\) such subsets are simultaneously  linearly independent with probability 1, proving that \(V''\) are in general position.

We now show \eqref{eq:lem-det-V-dif}. Let $S \subseteq \U'$ with  $|S|
= d$. Let \(\bV'_S,\bV''_S\) be \(d\times d\) matrices whose columns
are \(\bv_e',\bv_e''\) for \(e\in S\), respectively.  Let
\(\bW=\bV_S''-\bV_S'\). The following inequality is an easy
consequence of the Brunn-Minkowski inequality \cite{ball1997elementary}:
  for any $d\times d$ matrices $\mathbf{A}$ and $\mathbf{\mathbf{B}}$,
  \[
  |\det(\mathbf{A}+\mathbf{\mathbf{B}})|^{1/d} \ge |\det(\mathbf{A})|^{1/d} + |\det(\mathbf{B})|^{1/d}.
  \]
We set \(\mathbf{A}=\bV_S'\) and \(\mathbf{B}=\bW\) to obtain
\[
\det(\bV_S'')^{1/d} \geq \det(\bV_S')^{1/d}+|\det(\bW)|^{1/d}
\]
and set \(\mathbf{A}=\bV''_S\) and \(\mathbf{B}=-\bW\) to obtain
\[
\det(\bV_S')\geq\det(\bV_S'')^{1/d}+|\det(-\bW)|^{1/d}.
\]
Therefore, we have
\begin{equation} \label{eq:V-S-W}
\abs{\det(\bV_S'')^{1/d} - \det(\bV_S')^{1/d}} \leq|\det(\bW)|^{1/d}
\end{equation}

We now bound the determinant of \(\bW\)
with high probability.
Note that \(\bW\) is a random matrix whose entries \(\set{w_{ij}}_{i,j\in[d]}\) are independently sampled from the Gaussian distribution with mean zero and variance \(\sigma\). From a standard tail bound of a Gaussian, we have that for each \(i,j\), 
\[
\Pr[|w_{ij}|>t] \leq\sqrt{\frac 2\pi}\cdot \frac{\sigma e^{-t^2/2\sigma^2}}{t} \leq \frac{\sigma e^{-t^2/2\sigma^2}}{t}.
\]
By union bound, all entries \(w_{ij}\) satisfy \(|w_{ij}|\leq t\) with probability at least \(1-\frac{d^2\sigma e^{-t^2/2\sigma^2}}{t}\).

In this event of probability \(1-\frac{d^2\sigma e^{-t^2/2\sigma^2}}{t}\) , we have from the Leibniz formula for determinants that\begin{align}
|\det(\bW)|&=\abs{\sum_{\tau\in\text{perm}(S)} \text{sgn}(\tau)\prod_{e\in S} w_{i,\tau(i)}}\leq d!\cdot t^d \leq (d t)^d. \label{eq:det-W}
\end{align}
Also, using the convexity of the function \(h(x)=x^{2d}\), we have that for all \(a,b\geq0\),
\[
|a^{2d}-b^{2d}|\leq \abs{a-b}\cdot  [h'(x)]_{x=\max\set{a,b}} = 2d \abs{a-b} \max\set{a,b}^{2d-1}.
\]
Setting \(a=\det(\bV_S')^{1/d}\), \(b=\det(\bV_S'')^{1/d}\) and using \eqref{eq:V-S-W}, we obtain
\begin{align}
\left|
    \det\left(\sum_{e \in S}\bv'_e (\bv'_e)^\top\right)
    -
    \det\left(\sum_{e \in S}\bv''_e (\bv''_e)^\top\right)
  \right| &= \abs{\pr{\det(\bV_S')^{1/d}}^{2d}-\pr{\det(\bV_S'')^{1/d}}^{2d}} \nonumber \\
  &\leq 2d \abs{\det(\bW)^{1/d}} \pr{\det(\bV_S')^{1/d}+|\det(\bW)|^{1/d}}^{2d-1} \nonumber \\
&\leq 2d(dt) \pr{\det(\bV_S')^{1/d}+dt}^{2d-1}, \label{eq:det-V-dif}
\end{align}
where the first equality is by \(\det\left(\sum_{e \in S}\bv'_e (\bv'_e)^\top\right)=\det\pr{\bV_S'\bV_S'^\top}=\det\pr{\bV'_S}^2\) and similarly \(\det\left(\sum_{e \in S}\bv''_e(\bv''_e)^\top\right)=\det\pr{\bV''_S}^2\), and the last equality is by \eqref{eq:det-W}. 

Let \(L\) be the bit complexity of \(V'\). Then we have \(\det\pr{\bV'_S}\leq2^{2L}\) \cite{Schrijver1998}.
Hence, we set \(t=d^{-2}\delta^{-1}{2^{-8L}}\) so that \eqref{eq:det-V-dif} implies
\[
\left|
    \det\left(\sum_{e \in S}\bv'_e (\bv'_e)^\top\right)
    -
    \det\left(\sum_{e \in S}\bv''_e (\bv''_e)^\top\right)
  \right| \leq\delta
\]
as desired.

Recall that this desired bound happens with probability
\(1-\frac{d^2\sigma e^{-t^2/2\sigma^2}}{t}\). Set \(\sigma=\frac{t}{d^2\sqrt{2\log(1/\gamma)} }=\pr{d^4\delta2^{8L+\frac 12}\log^{1/2}(1/\gamma)}^{-1}\) so that
\[
\frac{d^2\sigma e^{-t^2/2\sigma^2}}{t} \leq e^{-\log(1/\gamma)d^2}
\log^{-1/2}(1/\gamma) \leq \gamma
\]
as required. The bit complexity of \(\sigma\) is \(\poly\pr{\log\frac 1\delta,\log\log \frac 1\gamma, L}\). Therefore, the algorithm also runs in time \(\poly\pr{\log\frac 1\delta,\log\log \frac 1\gamma, L}\).
\end{proof}


The next lemma shows that we can, without modifying the optimal value,
replace the original instance by $\M'$ and $V'$.

\begin{lemma}\label{lem:instance-equiv}
  Suppose that
  $\max_{S \in \B} \det\left(\sum_{i \in S}{\bv_i \bv_i^\top}\right) \ne 0$.
  Then for any $\epsilon > 0$, there exists a value of $\delta>0$ such that
  $\log(1/\delta)$ is polynomial in the bit complexity of $V$ and in $\log(1/\epsilon)$, and
  that the vectors $V''$ constructed in Lemma~\ref{lem:perturb} from $V'$ and
  the matroid $\M' = (\U', \I')$ with bases $\B'$ constructed above
  satisfy
  \[
  (1-\epsilon)\max_{S \in \B} \det\left(\sum_{i \in S}{\bv_i \bv_i^\top}\right)
  \le
  \max_{S' \in \B'} \det\left(\sum_{e \in S}{\bv''_e (\bv''_e)^\top}\right)
  \le
  (1+\epsilon)\max_{S \in \B} \det\left(\sum_{i \in S}{\bv_i \bv_i^\top}\right).
  \]
  Moreover, if $S^\star$ achieves the maximum over $\M$ and we let
$S_1 :=
  \{(i,1): i \in S^\star\}$, $S_2 := \{(i,2): i \in S^\star\}$, then
  \[
  \det\left(\frac12\sum_{e \in S_1}{\bv''_e (\bv''_e)^\top} + \frac12 \sum_{e \in S_2}{\bv''_e (\bv''_e)^\top}\right)
  \ge
  (1-\epsilon)\det\left(\sum_{i \in S^\star}{\bv_i \bv_i^\top}\right).
  \]
\end{lemma}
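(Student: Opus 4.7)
The plan is to reduce both claims to termwise applications of Lemma~\ref{lem:perturb} combined with the Cauchy--Binet formula. By Claim~\ref{cl:new-mat}, the original \DetMax\ optimum $\max_{S \in \B} \det\bigl(\sum_{i \in S}\bv_i\bv_i^\top\bigr)$ equals the $V'$-optimum $\max_{S' \in \B'} \det\bigl(\sum_{e \in S'} \bv'_e (\bv'_e)^\top\bigr)$ over $\M'$, so it suffices to control how much the $V''$-optimum $\max_{S' \in \B'} \det\bigl(\sum_{e \in S'} \bv''_e (\bv''_e)^\top\bigr)$ deviates from this common value. For any fixed basis $S' \in \B'$ (hence $|S'|=k$), Cauchy--Binet gives $\det\bigl(\sum_{e \in S'} \bv_e \bv_e^\top\bigr) = \sum_{T \subseteq S',\, |T|=d} \det\bigl(\sum_{e \in T} \bv_e \bv_e^\top\bigr)$ for both $\bv=\bv'$ and $\bv=\bv''$. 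Subtracting and applying Lemma~\ref{lem:perturb} to each of the $\binom{k}{d}$ summands yields an $S'$-uniform bound of $\binom{k}{d}\delta$ on the difference. To turn this additive bound into the required $(1\pm\epsilon)$-guarantee, I would invoke the standard fact that since the entries of $V$ have bit complexity $L$ and the $V'$-optimum is nonzero by assumption, it is bounded below by $2^{-cL}$ for some universal constant $c$ (its rationals have denominators at most $2^{\poly(L)}$). Choosing $\delta := \epsilon \cdot 2^{-cL}/\binom{k}{d}$ then satisfies $\log(1/\delta) = \poly(L,\log(1/\epsilon))$ and yields the first claim.

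For the ``moreover'' claim, rewrite the averaged matrix as $\sum_{e \in S_1 \cup S_2} \tfrac12 \bv''_e (\bv''_e)^\top$ and expand its determinant via Cauchy--Binet as $2^{-d} \sum_{T \subseteq S_1 \cup S_2,\, |T|=d} \det\bigl(\sum_{e \in T} \bv''_e (\bv''_e)^\top\bigr)$. The key observation, applied first to the unperturbed $V'$, is that any $T$ containing both parallel copies $(i,1),(i,2)$ of some $i \in S^\star$ has two equal columns $\bv_i$ in its generating matrix and contributes $0$; the surviving $T$ pick exactly one of $\{(i,1),(i,2)\}$ for each of $d$ distinct elements of $S^\star$, and for each $d$-subset $I \subseteq S^\star$ there are exactly $2^d$ such $T$, each contributing $\det\bigl(\sum_{i \in I}\bv_i\bv_i^\top\bigr)$. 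A reverse Cauchy--Binet applied to $S^\star$ collapses the total to $2^d \det(B)$ where $B := \sum_{i \in S^\star}\bv_i\bv_i^\top$, exactly cancelling the $2^{-d}$ prefactor. Applying Lemma~\ref{lem:perturb} termwise then bounds the $V'$--$V''$ difference in the averaged expression by $2^{-d}\binom{2k}{d}\delta$, so the perturbed value is at least $\det(B) - 2^{-d}\binom{2k}{d}\delta \ge (1-\epsilon)\det(B)$, provided $\delta$ is chosen as above (with the additional factor $2^{-d}\binom{2k}{d}/\binom{k}{d}$ absorbed into the constant).

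The main technical obstacle in both parts is to simultaneously control the sum of $\binom{k}{d}$-many (respectively $\binom{2k}{d}$-many) termwise perturbation errors against a combinatorial lower bound on the optimum, while keeping $\log(1/\delta)$ polynomial in $L$ and $\log(1/\epsilon)$; the bit-complexity lower bound on nonzero rational determinants is the key quantitative input. The conceptually most interesting step is the exact cancellation in the ``moreover'' expansion, which relies crucially on the parallel-copy construction of $\M'$ to force all ``degenerate'' $d$-subsets out of the Cauchy--Binet sum before perturbation; everything else reduces to bookkeeping with the perturbation bounds of Lemma~\ref{lem:perturb}.
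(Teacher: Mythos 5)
Your proposal is correct and follows essentially the same route as the paper: invoke Claim~\ref{cl:new-mat} to identify the $\M$- and $\M'$-optima, bound the $V'$-to-$V''$ deviation termwise via Cauchy--Binet and Lemma~\ref{lem:perturb} by $\binom{k}{d}\delta$, and choose $\delta$ against the $2^{-\poly(L)}$ lower bound on the nonzero optimum so that the additive error becomes a relative $\epsilon$; the paper handles the ``moreover'' part by the same reasoning, which you simply spell out in more detail (the exact $2^{\pm d}$ cancellation for the parallel copies).
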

\begin{proof}
  Note that, since $\max_{S \in \B} \det\left(\sum_{i \in S}{\bv_i
      \bv_i^\top}\right) \ne 0$, by Claim~\ref{cl:new-mat} we must have
  \[
  \max_{S \in \B} \det\left(\sum_{i \in S}{\bv_i \bv_i^\top}\right)
  =
  \max_{S \in \B'} \det\left(\sum_{e \in S}{\bv_e (\bv'_e)^\top}\right)
  \ge 2^{-\poly(L)}.
  \]
  where $L$ is the bit complexity of $V$.
  For any $S\subseteq[n]$ of size $k$, by
  Lemma~\ref{lem:perturb} and  the Cauchy-Binet formula, we have
  \[
  \left|
    \det\left(\sum_{e \in S}\bv'_e (\bv'_e)^\top\right)
    -
    \det\left(\sum_{e \in S}\bv''_e (\bv''_e)^\top\right)
  \right| \le \delta{k \choose d}.
  \]
  We can then choose $\delta$ sufficiently small so that
  \[
  \delta{k \choose d}
  \le
  \epsilon   \max_{S \in \B} \det\left(\sum_{i \in S}{\bv_i \bv_i^\top}\right).
  \]
  This choice of $\delta$ suffices for the claim after ``moreover''
  by an analogous reasoning.
\end{proof}

We are now ready to prove Lemma~\ref{lem:alg}, except for the
polynomial time solvability of \CP{}, which is deferred to the next
section. The following lemma captures the non-algorithmic statements
in Lemma~\ref{lem:alg}, when the original instance $(\M, V)$ is
replaced by $(\M', V'')$.

\begin{lemma} \label{lem:relax} Let $\M'$ be as constructed above, and
  let $V''$ be as in Lemma~\ref{lem:instance-equiv}. Let $\Z(\M') = \{
  \bz \in \R^{\U'} : \forall S \in \I_d(\M'), z(S) \geq 0\}$. Then there
  exists a ${\bx^\star}\in \P(\M')\cap \left[0,\frac12\right]^{\U'}$
  such that
  \begin{align*}
  \inf_{\bz\in\Z(\M')}\log\det\pr{\sum_{f \in \U'} x_fe^{z_f} \bv''_f  (\bv''_f)^\top}
  &\ge
  \log\max_{S \in \B} \det\left(\sum_{i \in S}{\bv_i \bv_i^\top}\right)
  - O(\epsilon)\\
  &\ge
  \log\max_{S \in \B'} \det\left(\sum_{f \in S}{\bv''_f (\bv''_f)^\top}\right)
  - O(\epsilon).
  \end{align*}
  Moreover, there exists $\bz^\star$ attaining the infimum on the
  left hand side.
\end{lemma}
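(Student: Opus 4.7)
The plan is to exhibit $\bx^\star$ explicitly by placing mass $\tfrac12$ on each parallel copy of an optimal basis of $\M$. I would fix $S^\star\in\B$ attaining $\max_{S\in\B}\det\pr{\sum_{i\in S}\bv_i\bv_i^\top}$; then $S_1:=\set{(i,1):i\in S^\star}$ and $S_2:=\set{(i,2):i\in S^\star}$ are both bases of $\M'$ by the definition of $\I'$, so $\bx^\star:=\tfrac12\mathbf{1}_{S_1}+\tfrac12\mathbf{1}_{S_2}$ lies in $\P(\M')\cap[0,\tfrac12]^{\U'}$ automatically.

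For the main inequality, I would fix any $\bz\in\Z(\M')$ and apply the Cauchy--Binet formula to $\det\pr{\sum_f x^\star_f e^{z_f}\bv''_f(\bv''_f)^\top}$. Only $d$-subsets $R\subseteq S_1\cup S_2$ contribute, each with $\prod_{f\in R}x^\star_f=2^{-d}$; restricting further to $R\in\I_d(\M')$ drops nonnegative terms, and the definition of $\Z(\M')$ guarantees $z(R)\geq 0$, so $e^{z(R)}\geq 1$ on the surviving terms. These $R$ are in bijection with pairs $(T,\sigma)$ where $T\subseteq S^\star$, $|T|=d$, and $\sigma:T\to\set{1,2}$, contributing exactly $2^d$ sets per $T$. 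By Lemma~\ref{lem:perturb} with $\delta$ chosen as in Lemma~\ref{lem:instance-equiv}, each $\det\pr{\sum_{f\in R}\bv''_f(\bv''_f)^\top}$ differs from $\det\pr{\sum_{i\in T}\bv_i\bv_i^\top}$ by at most $\delta$. The $2^d$ copies precisely cancel the $2^{-d}$ prefactor, and a second Cauchy--Binet on the original vectors yields $\sum_T\det\pr{\sum_{i\in T}\bv_i\bv_i^\top}=\det\pr{\sum_{i\in S^\star}\bv_i\bv_i^\top}=\max_{S\in\B}\det\pr{\sum_{i\in S}\bv_i\bv_i^\top}$. Taking $\log$ and the infimum over $\bz$ yields the first claim, with an $O(\epsilon)$ additive slack from $\log(1-\epsilon)$.

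The second inequality would follow immediately from Lemma~\ref{lem:instance-equiv}, which ensures that the two maxima agree to within a factor $(1\pm\epsilon)$, i.e., to $O(\epsilon)$ additively in log scale. For the ``moreover'' assertion, I would invoke Lemma~\ref{lem:finite} on the instance $(\M',V'')$: by construction $\max_f x^\star_f=\tfrac12<1$, the vectors $V''$ are in general position by Lemma~\ref{lem:perturb}, and $f(\bx^\star)$ is finite (bounded below by the previous paragraph and above by $g(\bx^\star,\mathbf{0})$); Lemma~\ref{lem:finite} then produces the desired $\bz^\star\in\Z(\M')$ attaining the infimum.

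The main technical subtlety will be the bookkeeping between elements indexed by $[n]$ for $\M$ and by $\U'=[n]\times\set{1,2}$ for $\M'$: the $2^{-d}$ from placing mass $\tfrac12$ on each copy is cancelled \emph{exactly} by the $2^d$ choices of which copy to pick when forming a $d$-subset of $S^\star$, which is precisely why $\tfrac12$ is the natural coordinate bound. A minor point is that the Cauchy--Binet expansion also contains positive contributions from $R$ containing both parallel copies $(i,1),(i,2)$; since only a lower bound is required, I would simply drop these nonnegative terms rather than estimate them.
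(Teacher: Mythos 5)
Your proposal is correct and follows essentially the same route as the paper's proof: the same choice $\bx^\star=\tfrac12\mathbf{1}_{S_1}+\tfrac12\mathbf{1}_{S_2}$, the same Cauchy--Binet expansion combined with $e^{z(R)}\ge 1$, the perturbation bounds of Lemmas~\ref{lem:perturb}/\ref{lem:instance-equiv} for the error accounting, and Lemma~\ref{lem:finite} for attainment of the infimum. Your treatment of the subsets $R$ containing a parallel pair $\set{(i,1),(i,2)}$ is in fact slightly more careful than the paper's (which asserts all $R\subseteq S_1\cup S_2$ of size $d$ lie in $\I_d(\M')$); dropping those nonnegative terms, as you do, is the right fix and does not change the conclusion.
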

\begin{proof}
  Let \(S^\star\in \B\) achieve the maximum in the middle expression. Let \(\bx^\star\in\R^{\U'}\) be defined by \(x_{(i,j)}=\frac12\) if \(i\in S^\star\) and 0 otherwise. To see that \(x^\star\in\P(\M')\), observe that \(S_1:=\set{(i,1):i\in S^\star}\) and \(S_2:=\set{(i,2):i\in S^\star}\) are both bases in \(\M'\), and that \(\bx^\star\) is a convex combination of indicator vectors of those two sets.

For each \(z\in\Z\), we have
\begin{align*}
  \det\pr{\sum_{f\in\U'} \hat x_fe^{z_f} \bv''_f (\bv''_f)^\top}
  &= \det\pr{\frac12\sum_{f\in{S_1}} e^{z_f} \bv''_f (\bv''_f)^\top +
    \frac12\sum_{f\in{S_2}} e^{z_f} \bv''_f (\bv''_f)^\top}\\
  &= \sum_{R \subseteq S_1 \cup S_2: |R| = d} e^{z(R)}2^{-d}
  \pr{\sum_{f \in R}\bv''_f (\bv''_f)^\top},
\end{align*}
where we use the Cauchy-Binet formula for the last equality. For each
\(R\subseteq S_1 \cup S_2\) of size \(d\), we have $R \in \I_d(\M')$,
and therefore $e^{z(R)} \ge 1$, so the right-hand side above is at
least
\begin{align*}
\sum_{R \subseteq S_1 \cup S_2: |R| = d} 2^{-d} \pr{\sum_{f \in  R}\bv''_f (\bv''_f)^\top}
&=
\det\pr{\frac12\sum_{f\in{S_1}} e^{z_f} \bv''_f (\bv''_f)^\top +
    \frac12\sum_{f\in{S_2}} e^{z_f} \bv''_f (\bv''_f)^\top}\\
&\ge (1-\epsilon) \det\pr{\sum_{i \in S^\star} \bv_i \bv_i^\top},
\end{align*}
where we used the Cauchy-Binet formula for the equality and
Lemma~\ref{lem:instance-equiv} for the inequality. This and
Lemma~\ref{lem:instance-equiv} prove the claim before ``moreover''.

The fact that the infimum is attained at some $\bz^\star$ is
guaranteed by Lemma~\ref{lem:finite} and applied to the vectors $V''$
and to the matroid $\M'$. Indeed, the vectors in $V''$ can
be assumed to be in general position by Lemma~\ref{lem:perturb}, and
$\bx^\star$ satisfies $x_e \le \frac12 < 1$ for all $e \in
\U'$. Therefore, the assumptions of Lemma~\ref{lem:finite} hold, and
the infimum is achieved.
\end{proof}

Lemma~\ref{lem:relax} allows us to replace $\M$ and $V$ with $\M'$ and
$V''$, respectively, for the rest of the paper. In particular, it
allows us to assume that  the input vectors  are in general
position, which will be useful in the next section, when we address
polynomial time solvability of the relaxation \CPD{}.

\subsection{Solvability of the Convex Program} \label{sec:solvability-of-CP}



The next theorem shows that \CPD{} can be solved approximately to any
degree of accuracy. This will complete the proof of Lemma~\ref{lem:alg}.
For conciseness we denote \(\X=\P(\M)\cap \left[0,\frac12\right]^n\), \(\I_d=\I_d(\M)\), and \(\P(\I_d)\) the base polytope of the matroid \(([n],\I_d)\).
\begin{theorem} \label{thm:solvability}
There is an algorithm that, given  input vectors \(\bv_1,\ldots,\bv_n\in\R^d\) and $\epsilon>0$, either shows that the optimal value of \DetMax{} is $0$, or returns a solution $(\tilde{\bx},\tilde{\bz})\in\X\times \Z$ such that
\begin{enumerate}
\item $f(\tilde{\bx})\geq f(\bx^\star) -\epsilon$, and\item $g(\tilde{\bx}, \tilde{\bz})\leq \inf_{{\bz\in\Z}} g(\tilde{\bx}, \bz)+\epsilon.$
\end{enumerate}
The algorithm runs in time polynomial  in the size of the input and \(\log\pr{\frac 1\eps}\).
\end{theorem}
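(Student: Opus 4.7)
The plan is to attack the saddle-point problem \CPD{} by two nested applications of the ellipsoid method, one for the outer supremum over $\bx\in\X$ and one for the inner infimum over $\bz\in\Z$. I begin by handling the degenerate case: I check whether $\max_{S\in\B}\det\pr{\sum_{i\in S}\bv_i\bv_i^\top}=0$, which is equivalent to asking whether $\M$ and the linear matroid generated by $\{\bv_i\}_{i\in[n]}$ share a common independent set of size $k$. This is a matroid intersection problem solvable in polynomial time; if the answer is no, the algorithm returns $\OPT=0$ and halts. Otherwise, after the preprocessing of Appendix~\ref{sec:preprocessing}, the vectors are in general position and Lemma~\ref{lem:finite} applies to every $\bx\in\X$ with $f(\bx)$ finite, guaranteeing the inner infimum is attained.

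\textbf{Separation for $\Z$.} Although $\Z$ has exponentially many defining inequalities $z(S)\ge 0$ for $S\in\I_d(\M)$, it admits a polynomial-time separation oracle: given $\bz\in\R^n$, compute $S^\star\in\argmin_{S\in\I_d(\M)}z(S)$ by the standard greedy algorithm for minimum-weight independent sets of a given size in a matroid. If $z(S^\star)\ge 0$ then $\bz\in\Z$; otherwise the inequality $z(S^\star)\ge 0$ is a violated constraint. Using this oracle, I can optimize linear and convex functions over $\Z$ in polynomial time via the ellipsoid method, provided I can also restrict attention to a bounded region.

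\textbf{Bounding $\bz$ and evaluating $f(\bx)$.} For any $\bx\in\X$ with $f(\bx)>-\infty$, I argue that an $\epsilon$-approximate minimizer of $g(\bx,\cdot)$ over $\Z$ lies in a box $[-M,M]^n$ with $\log M$ polynomial in the bit length of the input and $\log(1/\epsilon)$. This uses the fact that $g(\bx,\bz)=\log\det\pr{\sum_i x_ie^{z_i}\bv_i\bv_i^\top}$ is smooth and strictly convex in $\bz$ along directions that modify the matrix $\sum_i x_ie^{z_i}\bv_i\bv_i^\top$; large $|z_i|$ either blows up one eigenvalue (if $z_i\to+\infty$) or is blocked by the $\Z$ constraints $z(S)\ge 0$ (if $z_i\to-\infty$, which would require other coordinates to compensate inside every $S\in\I_d(\M)$ containing $i$). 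Together with standard tools for bounding extreme points of LP-like systems with integral right-hand sides, this yields the required polynomial bound on $M$. Given such an $M$, I approximately minimize the convex, smooth function $\bz\mapsto g(\bx,\bz)$ over $\Z\cap[-M,M]^n$ by the ellipsoid method, obtaining a point $\tilde\bz$ with $g(\bx,\tilde\bz)\le\inf_{\bz\in\Z}g(\bx,\bz)+\epsilon$ in polynomial time, which simultaneously certifies an $\epsilon$-approximation of the value $f(\bx)$.

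\textbf{Outer optimization.} As the pointwise infimum of the family of concave functions $\bz\mapsto g(\cdot,\bz)$, the function $f$ is concave on its effective domain; moreover, at $(\bx,\tilde\bz)$, the vector $\nabla_{\bx}g(\bx,\tilde\bz)$ whose $i$-th coordinate is $e^{\tilde z_i}\bv_i^\top\bX^{-1}\bv_i$ with $\bX=\sum_i x_ie^{\tilde z_i}\bv_i\bv_i^\top$ serves as an approximate supergradient of $f$ at $\bx$. Since $\X=\P(\M)\cap[0,\tfrac12]^n$ admits a polynomial-time separation oracle (via matroid polytope separation), I apply the ellipsoid method for concave maximization, using the inner algorithm to evaluate $f$ and a supergradient at each query point. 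Careful accounting of the accuracies required at each level, standard for nested ellipsoid arguments, yields $\tilde\bx\in\X$ with $f(\tilde\bx)\ge f(\bx^\star)-\epsilon$. Running the inner procedure once more at $\tilde\bx$ produces the required $\tilde\bz$. The main obstacle, as flagged above, is establishing the polynomial bound on $M$; this is the only step that depends delicately on the interplay between the matroid structure of $\Z$ and the non-linearity of the $\log\det$ objective, whereas the remaining steps follow a standard template for saddle-point convex programs with oracle-given feasible sets.
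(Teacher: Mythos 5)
Your overall architecture matches the paper's: a matroid-intersection test for $\OPT=0$, a truncation of the feasible region to a box of polynomially bounded bit length, and nested ellipsoid methods with separation oracles for $\X$ and $\Z$ and with $\nabla_\bx g(\bx_t,\bz_t)$ furnishing the separating hyperplane / approximate supergradient for the outer problem (the paper phrases the outer loop as binary search plus a feasibility oracle, but that is the same technique). Your greedy separation oracle for $\Z$ and your $\epsilon$-supergradient claim for $f$ are both correct.

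The genuine gap is exactly where you flag it: the polynomial bound $\log M=\poly(L,\log(1/\epsilon))$ on an approximate minimizer of $g(\bx,\cdot)$ over $\Z$. What you offer in its place is a heuristic, and parts of it are not even correct as stated: $g(\bx,\bz)=\log\bigl(\sum_{R}e^{z(R)}c_R\bigr)$ is a log-sum-exp, hence convex but \emph{not} strictly convex along directions that change the matrix (e.g.\ it is affine along $\mathbf{1}$, and more generally along any direction constant on the support of every contributing $R$), and ``standard tools for bounding extreme points of LP-like systems with integral right-hand sides'' do not apply because the minimizer of a smooth convex function over $\Z$ is generically not an extreme point of anything, and the objective's level sets depend on the (real-valued, possibly tiny) coefficients $c_R=\det(\sum_{i\in R}x_i\bv_i\bv_i^\top)$. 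The danger in the $z_i\to-\infty$ direction is precisely that some $c_R$ may be exponentially small relative to others, so a near-optimal $\bz$ could a priori require coordinates of super-polynomial magnitude to balance the terms $e^{z(R)}c_R$; ruling this out needs a quantitative argument tied to the bit complexity of the $c_R$'s. The paper closes this by first proving the exact identity $\inf_{\bz\in\Z}g(\bx,\bz)=\sup_{\balpha\in\P(\I_d)}\inf_{\bz}\bigl(g(\bx,\bz)-\langle\balpha,\bz\rangle\bigr)$ (Claim~\ref{claim:dual-g}, which also requires Sion's minimax theorem via Claim~\ref{claim:switch-z-alpha}), and then invoking Lemma~3.4 of~\cite{anari2017generalization}, which bounds the domain for sup-inf programs of this polynomial form; a shifting argument ($\hat\bz=\bz^\star-\bz^\star(S)/d$) then transfers the bound back to the constrained infimum over $\Z$. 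Without this (or an equivalent quantitative substitute), the starting ellipsoid for your inner loop has no certified size and the running-time claim does not follow.

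A second, smaller omission: you never bound the Lipschitz constant of $g$ in $\bx$ (the paper shows it is at most $2^{\poly(L)}$ once $\bz$ is confined to the box), and you need such a bound, together with the lower bound $x_i\ge 1/M$ on the relevant near-optimal $\bx$, to terminate the outer ellipsoid with error $\epsilon$ in $\poly(L,\log(1/\epsilon))$ iterations. This is routine once the box for $\bz$ is established, but it is part of the ``careful accounting'' you defer and should be stated.
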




The goal of this section is to prove Theorem \ref{thm:solvability}. First, we provide an algorithm to check whether \(\OPT=0\).
\begin{lemma} \label{lem:m-intersect} There exists a polynomial time
  algorithm that, given input vectors \(\bv_1,\ldots,\bv_n\in\R^d\), correctly
  decides if \(\OPT=0\).
\end{lemma}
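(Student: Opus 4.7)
The plan is to reduce the question to matroid intersection. Observe that for any set $S \subseteq [n]$, the matrix $\sum_{i\in S}\bv_i\bv_i^\top$ is positive semidefinite, and it is positive definite (equivalently, its determinant is nonzero) if and only if the vectors $\{\bv_i : i\in S\}$ span $\R^d$. Since every basis of $\M$ has the same size $k$, we conclude that $\OPT > 0$ if and only if there exists a basis $S\in\B$ such that $\{\bv_i : i\in S\}$ contains $d$ linearly independent vectors.

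I claim that this in turn is equivalent to the existence of a set $T\subseteq[n]$ with $|T|=d$ that is simultaneously independent in $\M$ and in the linear matroid $\mathcal{L}$ on $\{\bv_1,\ldots,\bv_n\}$ (that is, such that the vectors $\{\bv_i : i\in T\}$ are linearly independent). The forward direction is immediate: pick any $d$ linearly independent vectors inside the witnessing basis $S$, and their indices form such a $T$. Conversely, if such a $T$ exists, then since $T\in\I$ and $|T|=d\le k$, we may extend $T$ to a basis $S\in\B$ by the matroid augmentation property; this $S$ contains the $d$ linearly independent vectors indexed by $T$, so $\det\left(\sum_{i\in S}\bv_i\bv_i^\top\right)>0$. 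Note that if $k<d$, no such $T$ can exist and we correctly conclude $\OPT=0$.

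To decide whether such a $T$ exists, run Edmonds' matroid intersection algorithm on $\M$ and $\mathcal{L}$ and check whether the size of the largest common independent set equals $d$. Matroid intersection runs in time polynomial in $n$ and the running times of the independence oracles for the two matroids. An independence oracle for $\M$ is given as input, while an independence oracle for $\mathcal{L}$ is implemented by standard Gaussian elimination on the $d\times|T|$ matrix whose columns are the vectors indexed by the query set. Thus the whole procedure runs in polynomial time, which proves the lemma.

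No significant obstacle is expected: the only subtlety is the equivalence of the three reformulations (nonzero determinant $\Leftrightarrow$ spanning basis $\Leftrightarrow$ common independent set of size $d$), and both implications follow directly from the Cauchy--Binet formula and the matroid augmentation axiom.
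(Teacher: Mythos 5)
Your proposal is correct and follows essentially the same route as the paper: both reduce the question to whether $\M$ and the linear matroid on $\{\bv_1,\ldots,\bv_n\}$ share an independent set of size $d$ (via the Cauchy--Binet observation and extension to a basis), and both invoke a polynomial-time matroid intersection algorithm. The only cosmetic difference is that the paper phrases the intersection in terms of the rank-$d$ truncations $\I_d(\M)$ and $\V$, whereas you run intersection on the untruncated matroids and test whether the maximum common independent set has size $d$; these are equivalent.
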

\begin{proof}
  Observe that, by the Cauchy-Binet formula, for any set \(S\) of size
  \(k\), \(\det\pr{\sum_{i\in S}\bv_i\bv_i^\top}>0 \) if and only if there
  exists a subset \(R\subseteq S\) of size \(d\) such that
  \(\det\pr{\sum_{i\in R}\bv_i\bv_i^\top}> 0 \), i.e. the set of vectors $\{\bv_i: i
  \in R\}$ is linearly independent. Let $\V$ be the collection of
  subsets $R$ of size $d$ of $[n]$ such that $\{\bv_i: i \in R\}$ is
  linearly independent. We have that \(\OPT>0\) if and only if
  $\I_d(\M)$ and \(\V\) have a non-empty intersection.  This can be
  checked in polynomial time using a matroid intersection algorithm
  (see, for example, Chapter 41.2 of \cite{schrijver2003combinatorial}), since both $\I_d(\M)$
  and $\V$ are the collections of bases of matroids.
\end{proof}

Since it is polynomial-time to check if \(\OPT=0\), we may now assume \(\OPT>0\) for the rest of this section. Let \(L\) denote the bit complexity of the input. Since the size of \OPT{} is bounded by \(\poly(L)\) (\cite{Schrijver1998}), we have \(\OPT\in[2^{-p(L)},2^{p(L)}]\) for some polynomial \(p\). Therefore, by binary search on \OPT, the problem of maximizing \(f(\bx)\) reduces to
the feasibility problem \DFea:
\begin{itemize}
\item \textbf{Input}: \(\beta\in\R\) 
\item \textbf{Output:} `yes' if there exists \( \bx\in\P(\M)\) such that  \(f(\bx)\geq \beta\), `no' otherwise.

\end{itemize}

We will solve \DFea{} by the ellipsoid method. We will first show that the feasible set can be bounded with a small loss of accuracy so that the starting ellipsoid contains the feasible set. Then, we will note some properties of the continuous function \(g(\bx,\bz)=\log\det\pr{\sum_{i=1}^n x_ie^{z_i} \bv_i \bv_i^\top}\),   show an oracle to solve the inner problem \(f(\bx)=\inf_{\bz\in\Z}g(\bx,\bz)\), and finally present an algorithm to \DFea.

\begin{lemma} \label{lem:bound-relax}
For all \(\eps>0\), we have 
\begin{equation}
\sup_{\bx\in\X:\bx\geq\frac 1M} \inf_{\bz\in\Z:-M\leq\bz\leq M} g(\bx,\bz)-\eps\leq\sup_{\bx\in\X} \inf_{\bz\in\Z} g(\bx,\bz) \leq\sup_{\bx\in\X:\bx\geq\frac 1M} \inf_{\bz\in\Z:-M\leq\bz\leq M} g(\bx,\bz) + \eps
\end{equation}
for  some \(M\) with bit complexity \(\log M=\poly(L,\log(\frac 1\eps))\).
\end{lemma}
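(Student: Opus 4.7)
The plan is to prove the two inequalities in Lemma~\ref{lem:bound-relax} separately, exploiting the fact that restricting the inner infimum to a smaller set of $\bz$'s only increases its value.

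For the inequality $\sup_{\bx\in\X}\inf_{\bz\in\Z}g(\bx,\bz)\leq \sup_{\bx\in\X:\bx\geq 1/M}\inf_{\bz\in\Z:-M\leq\bz\leq M}g(\bx,\bz)+\eps$, the restriction on $\bz$ actually works in our favor, so it suffices to find, given a near-optimizer $\hat\bx\in\X$ of $\sup_{\bx\in\X} f(\bx)$, a point $\bx'\in \X$ with $\bx'\geq 1/M$ and $f(\bx')\geq f(\hat\bx)-\eps$. I will use that $f(\bx)=\inf_{\bz\in\Z}g(\bx,\bz)$ is concave, since $g(\cdot,\bz)$ is the log-determinant of a linear function of $\bx$ and an infimum of concave functions is concave. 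Thanks to the preprocessing of $\M$ into parallel pairs, one can choose a symmetric fully-interior point $\bar\bx\in\X$ with $\bar x_i\geq c$ for some $c=2^{-\poly(L)}>0$. By concavity, taking $\bx':=(1-\delta)\hat\bx+\delta\bar\bx$ yields $f(\bx')\geq f(\hat\bx)-\delta(f(\hat\bx)-f(\bar\bx))$. The plan is then to bound $|f(\hat\bx)|$ and $|f(\bar\bx)|$ by $\poly(L)$ using Cauchy-Binet and the bit complexity of the input, so that choosing $\delta=\eps/\poly(L)$ ensures $\bx'\geq \delta\bar\bx\geq 1/M$ for $M$ with $\log M=\poly(L,\log(1/\eps))$ and $f(\bx')\geq f(\hat\bx)-\eps$.

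For the reverse inequality, I will prove the stronger statement that for every $\bx\in\X$ with $\bx\geq 1/M$ the infimum $f(\bx)$ is in fact attained at some $\bz^\star\in\Z$ with $\|\bz^\star\|_\infty\leq M$, which then directly gives $\inf_{\bz\in\Z,-M\leq\bz\leq M}g(\bx,\bz)=f(\bx)\leq \sup_{\bx\in\X}f(\bx)$. Lemma~\ref{lem:finite} already guarantees the existence of an infimizer $\bz^\star\in\Z$; the task is to bound its coordinates. Summing the KKT stationarity condition of Lemma~\ref{thm:KKT} over $i\in[n]$ reduces the left side to $\tr(\bX^{-1}\bX)=d$ and the right side to $d\sum_{S:z^\star(S)=0}\lambda_S$, forcing $\sum_S\lambda_S=1$ and hence $x_ie^{z^\star_i}\bv_i^\top\bX^{-1}\bv_i\leq 1$ for every $i$. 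Combining with $\bv_i^\top\bX^{-1}\bv_i\geq\|\bv_i\|^2/\lambda_{\max}(\bX)$ and $x_i\geq 1/M$ yields the upper bound $z^\star_i\leq\log(M\lambda_{\max}(\bX)/\|\bv_i\|^2)$. For the lower bound, since every basis $S\in\I_d(\M)$ with $i\in S$ satisfies $z^\star(S)\geq 0$, the identity $z^\star_i=z^\star(S)-\sum_{j\in S\setminus\{i\}}z^\star_j$ gives $z^\star_i\geq-(d-1)\max_j z^\star_j$, so $z^\star_i$ is bounded once the upper bound is established.

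The main obstacle is producing the quantitative upper bound on $\lambda_{\max}(\bX)$, since $\bX$ itself depends on the very $\bz^\star$ we are trying to bound. The plan is a bootstrapping argument using $\log\det\bX=f(\bx)\leq\poly(L)$ together with the general position of the perturbed vectors from Lemma~\ref{lem:perturb}: for any basis $S$ of the vector matroid contained in $\supp{\bx}$, the PSD lower bound $\bX\succeq \frac{1}{M}\sum_{i\in S}e^{z^\star_i}\bv_i\bv_i^\top$ combined with the constraint $z^\star(S)\geq 0$ forces the smaller eigenvalues of $\bX$ to be bounded below in terms of $\min_{i\in S}e^{z^\star_i}$, which in turn upper-bounds $\lambda_{\max}(\bX)$ via $\lambda_{\max}\leq \det(\bX)/\lambda_{\min}^{d-1}$. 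All implicated quantities depend polynomially on $L$ and $\log(1/\eps)$ through the bit complexity of the perturbed vectors, yielding the promised polynomial bit complexity of $M$.
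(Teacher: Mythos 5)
Your first inequality is fine and is essentially the paper's argument in different clothing: the paper rescales an optimizer by $(1-\tfrac{n}{kM_1})$ and redistributes mass $\tfrac{n}{M_1}$ so that every coordinate reaches $\tfrac1{M_1}$, losing only an additive $O(d\cdot\tfrac{n}{kM_1})\le\eps$; your convex combination with a symmetric interior point $\bar\bx$ plus concavity of $f$ achieves the same thing, and the bounds $|f(\hat\bx)|,|f(\bar\bx)|\le\poly(L)$ you need are indeed available ($f(\bx)\le g(\bx,0)$ for the upper bound, and Cauchy--Binet together with $z(R)\ge0$ for some $R\in\I_d(\M)$ with linearly independent vectors for the lower bound, such $R$ existing since $\OPT>0$).

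The second inequality is where your argument breaks. The KKT computation giving $\sum_S\lambda_S=1$, hence $x_ie^{z_i^\star}\bv_i^\top\bX^{-1}\bv_i\le1$ and $e^{z_i^\star}\le M\lambda_{\max}(\bX)/\|\bv_i\|^2$, is correct, as is $z_i^\star\ge-(d-1)\max_jz_j^\star$. But the bootstrapping you propose for $\lambda_{\max}(\bX)$ is circular in a way that does not close. Tracing your chain: $\lambda_{\max}(\bX)\le\det(\bX)/\lambda_{\min}(\bX)^{d-1}$, then $\lambda_{\min}(\bX)\gtrsim\frac1M\min_{i\in S}e^{z_i^\star}\gtrsim\frac1M e^{-(d-1)\max_jz_j^\star}$, and finally $e^{\max_jz_j^\star}\lesssim M\lambda_{\max}(\bX)$. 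Substituting back yields an inequality of the form $\lambda_{\max}(\bX)\le C\cdot\lambda_{\max}(\bX)^{(d-1)^2}$, which for $d\ge2$ is satisfied by every sufficiently large $\lambda_{\max}(\bX)$ and therefore gives no upper bound at all. (A secondary issue: Lemma~\ref{lem:perturb} guarantees general position with probability $1$ but supplies no quantitative lower bound on $\|\bv_i\|$ or on $\lambda_{\min}\bigl(\sum_{i\in S}\bv_i\bv_i^\top\bigr)$, which your argument also needs.) The paper sidesteps all of this: it never bounds the optimal $\bz^\star$ of the original program. Instead it shows (Claim~\ref{claim:dual-g}) that $\inf_{\bz\in\Z}g(\bx,\bz)$ equals $\sup_{\balpha\in\P(\I_d)}\inf_{\bz}g(\bx,\bz)-\an{\balpha,\bz}$, invokes Lemma~3.4 of \cite{anari2017generalization} to truncate the \emph{unconstrained} inner infimum to $-M_2\le\bz\le M_2$ at a cost of $\eps$, and then converts an optimizer of the truncated problem into a feasible point of $\Z$ with $\|\bz\|_\infty\le2M_2$ by the shift $\bz^\star\mapsto\bz^\star-\tfrac{z^\star(S)}{d}$ for a minimum-weight $S\in\I_d$. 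To repair your route you would need a non-circular quantitative lower bound on $\lambda_{\min}(\bX)$ (equivalently, on the most negative coordinates of an optimal $\bz^\star$); that is precisely the content your argument is missing and that the cited lemma provides.
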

The key to the proof of Lemma \ref{lem:bound-relax}
is to relate 
\(\inf_{\bz\in\Z} g(\bx,\bz)\) with 
\begin{equation} \label{eq:g-as-sup-alpha}
\sup_{\balpha \in\P(\I_d)} \inf_\bz g(\bx,\bz)-\an{\balpha,\bz}
\end{equation}and then apply
Lemma 3.4 of \cite{anari2017generalization}, which states that the supreme over the infimum \eqref{eq:g-as-sup-alpha} can be well-approximated when the feasible set is made bounded.
We now state several claims, setting up for the proof of Lemma \ref{lem:bound-relax}. 

\begin{claim} \label{claim:switch-z-alpha}
Let \(\D\subseteq\R^n\) be a compact set and \(\D'\subseteq\R^n\) be a convex set. Then,
\begin{equation}
\inf_{\bz\in\D'} \sup_{{\balpha} \in\D} g(\bx,\bz)-\an{\balpha,\bz} = \sup_{{\balpha} \in\D} \inf_{\bz\in\D'} g(\bx,\bz)-\an{\balpha,\bz}.
\end{equation}
\end{claim}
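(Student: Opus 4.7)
The plan is to invoke Sion's minimax theorem. The key preparatory observation is that $g(\bx,\bz)$ is a convex and continuous function of $\bz$. To see this, apply the Cauchy--Binet formula to obtain
\begin{equation*}
  \det\pr{\sum_{i=1}^n x_i e^{z_i} \bv_i \bv_i^\top}
  = \sum_{S\subseteq[n],\,|S|=d} c_S \, e^{\an{\one_S,\bz}},
  \qquad
  c_S := \pr{\prod_{i\in S} x_i}\det\pr{[\bv_i]_{i\in S}}^2 \ge 0,
\end{equation*}
so that $g(\bx,\bz)=\log\sum_S c_S e^{\an{\one_S,\bz}}$ is a log-sum-exp of affine functions of $\bz$ with nonnegative weights, hence convex and continuous in $\bz$ on $\R^n$ (with value $-\infty$ only if all $c_S=0$, a degenerate case handled separately by Lemma~\ref{lem:m-intersect}). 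Consequently the function $h(\bz,\balpha):=g(\bx,\bz)-\an{\balpha,\bz}$ is convex and continuous in $\bz$ for each fixed $\balpha$, and \emph{affine} (hence concave and continuous) in $\balpha$ for each fixed $\bz$.

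Because $h(\bz,\cdot)$ is affine in $\balpha$, the supremum over $\D$ equals the supremum over $\mathrm{conv}(\D)$ for every $\bz$, so neither side of the claimed equality changes if we replace $\D$ by $\mathrm{conv}(\D)$. Since $\D$ is compact in the finite-dimensional space $\R^n$, its convex hull $\mathrm{conv}(\D)$ is also compact (by Carath\'eodory's theorem it is the continuous image of a compact set), and it is now convex. Thus it suffices to prove the swap when the outer supremum is taken over a compact convex set $\mathrm{conv}(\D)$ and the infimum over a convex set $\D'$.

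With all hypotheses in place, Sion's minimax theorem applies to $h$ on $\D'\times\mathrm{conv}(\D)$: lower semicontinuity and quasi-convexity in $\bz$ follow from convexity and continuity of $h(\cdot,\balpha)$; upper semicontinuity and quasi-concavity in $\balpha$ follow from affineness of $h(\bz,\cdot)$; compactness is satisfied by $\mathrm{conv}(\D)$; and both domains are convex. Sion's theorem then yields
\begin{equation*}
  \inf_{\bz\in\D'}\sup_{\balpha\in\mathrm{conv}(\D)} h(\bz,\balpha)
  = \sup_{\balpha\in\mathrm{conv}(\D)}\inf_{\bz\in\D'} h(\bz,\balpha),
\end{equation*}
and replacing $\mathrm{conv}(\D)$ by $\D$ on both sides (using affineness of $h$ in $\balpha$) completes the proof.

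The main obstacle is pinning down the convexity of $g(\bx,\cdot)$, which could be obscured by the matrix-valued nature of the argument; the Cauchy--Binet reformulation dissolves this obstacle by revealing $g$ as a log-sum-exp. The remaining work---checking Sion's hypotheses and reducing to the case of a convex $\D$---is routine.
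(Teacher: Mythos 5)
Your proof is correct and follows essentially the same route as the paper: both establish that $g(\bx,\cdot)$ is a convex, continuous log-sum-exp via Cauchy--Binet (so $h(\bz,\balpha)=g(\bx,\bz)-\an{\balpha,\bz}$ is convex in $\bz$ and affine in $\balpha$) and then invoke Sion's minimax theorem using compactness of the $\balpha$-domain. Your extra step of passing to $\mathrm{conv}(\D)$ is a welcome refinement, since Sion's theorem requires the compact set to also be convex and the claim as stated only assumes $\D$ compact (though in the paper's application $\D=\P(\I_d)$ is already a polytope).
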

\begin{proof}
The function \(g(\bx,\bz)-\an{\balpha,\bz}\) is continuous and concave in \(\bz\) (see \eqref{eq:g-sum-in-z} in the proof of Lemma \ref{lem:g-convex-concave}) and linear in \(\balpha\). Moreover, the domain of $\balpha$  is compact. The claim now follows from Sion's minimax theorem~\cite{Sion58}.
\end{proof}

The claim above implies, in particular, that 
\begin{equation} \label{eq:inf-alpha-sup-z-I-d}
\inf_{\bz} \sup_{{\balpha} \in\P(\I_d)} g(\bx,\bz)-\an{\balpha,\bz} = \sup_{{\balpha} \in\P(\I_d)} \inf_{\bz} g(\bx,\bz)-\an{\balpha,\bz}.
\end{equation}
We now relate \(\inf_{\bz\in\Z} g(\bx,\bz)\) with \(\sup_{\balpha \in\P(\I_d)} \inf_\bz g(\bx,\bz)-\an{\balpha,\bz}\) by claiming that they are equal.
\begin{claim} \label{claim:dual-g}
We have
\begin{equation} \label{eq:dual-g}
\inf_{\bz\in\Z} g(\bx,\bz) = \sup_{\balpha \in\P(\I_d)} \inf_\bz g(\bx,\bz)-\an{\balpha,\bz}.
\end{equation}\end{claim}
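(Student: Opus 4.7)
The plan is to combine two ingredients: a shift-based reduction that turns the constrained infimum on the LHS into an unconstrained infimum of a penalized objective, and the minimax swap already established in \eqref{eq:inf-alpha-sup-z-I-d} (Claim~\ref{claim:switch-z-alpha}). To start, I would use the vertex description $\P(\I_d) = \mathrm{conv}\{\mathbf{1}_S : S \in \I_d\}$ to compute $\sup_{\balpha \in \P(\I_d)} [-\an{\balpha,\bz}] = -\min_{S \in \I_d} z(S)$, so that
$$\sup_{\balpha \in \P(\I_d)} \bigl[g(\bx,\bz) - \an{\balpha,\bz}\bigr] = g(\bx,\bz) - \min_{S \in \I_d} z(S).$$

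The key step, and where I expect the main work to lie, is the identity
$$\inf_{\bz \in \Z} g(\bx,\bz) \;=\; \inf_{\bz \in \R^n} \bigl[g(\bx,\bz) - \min_{S \in \I_d} z(S)\bigr].$$
The $\leq$ direction is immediate: for every $\bz \in \Z$ we have $\min_{S \in \I_d} z(S) \ge 0$, so $g(\bx,\bz) - \min_{S \in \I_d} z(S) \leq g(\bx,\bz)$. For the reverse $\geq$ direction, I would exploit the scaling property $g(\bx, \bz + c\mathbf{1}) = g(\bx,\bz) + cd$, which holds for any $c \in \R$ because $\det(e^c \bA) = e^{cd} \det(\bA)$ for any $d \times d$ matrix $\bA$. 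Given any $\bz \in \R^n$, set $m := \min_{S \in \I_d} z(S)$ and $\bz' := \bz - (m/d)\,\mathbf{1}$. Since every $S \in \I_d$ has $|S| = d$, we get $z'(S) = z(S) - m \ge 0$ for every $S \in \I_d$, so $\bz' \in \Z$; and the scaling identity gives $g(\bx,\bz') = g(\bx,\bz) - m$. Thus every value attained by the penalized objective on the RHS is also attained by $g(\bx,\cdot)$ at a point of $\Z$, proving $\geq$.

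Combining these observations with \eqref{eq:inf-alpha-sup-z-I-d}, which swaps the inf and sup over the compact $\P(\I_d)$ and the convex $\R^n$, I would conclude
$$\inf_{\bz\in\Z} g(\bx,\bz) = \inf_{\bz \in \R^n} \sup_{\balpha \in \P(\I_d)} [g(\bx,\bz) - \an{\balpha,\bz}] = \sup_{\balpha \in \P(\I_d)} \inf_{\bz \in \R^n} [g(\bx,\bz) - \an{\balpha,\bz}],$$
which is exactly the claim. The crux is the second display above: the homogeneity identity $g(\bx, \bz + c\mathbf{1}) = g(\bx,\bz) + cd$ together with the uniform size $|S| = d$ for $S \in \I_d$ is what makes the required translational shift \emph{exactly} value-preserving, so that penalizing with $-\min_{S} z(S)$ acts as a tight surrogate for the constraint $\bz \in \Z$.
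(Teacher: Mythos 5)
Your proof is correct and follows essentially the same route as the paper's: both hinge on the translational shift $\bz \mapsto \bz - (\min_{S\in\I_d} z(S)/d)\mathbf{1}$, which is value-preserving by the degree-$d$ homogeneity of $\log\det$, together with the fact that the linear optimization over $\P(\I_d)$ is attained at a vertex $\one_S$, and the minimax swap of \eqref{eq:inf-alpha-sup-z-I-d}. Your packaging as an exact-penalty identity is slightly cleaner in that it works with arbitrary $\bz$ rather than assuming the infima are attained, but the underlying argument is the same.
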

\begin{proof}
Let \(\bz^\star\) achieve the infimum in \(\inf_{\bz\in\Z} g(\bx,\bz)\). Then, for all \(\bal\in\P(\I_d)\), we can write \(\balpha=\sum_{S\in\I_d} \lambda_S 1_S\) for \(\lambda_S\geq0\), and we have
\begin{equation}
\an{\balpha,\bz^\star}
=\sum_{S\in\I_d}\lambda_S\an{1_S,\bz^\star}
=\sum_{S\in\I_d}\lambda_S z^\star(S) \geq 0
\end{equation}
where the inequality is by \(\bz^\star\in\Z\). Hence, \[\sup_{\balpha \in\P(\I_d)} \inf_\bz g(\bx,\bz)-\an{\balpha,\bz} \leq \sup_{\balpha \in\P(\I_d)} g(\bx,\bz^\star)-\an{\balpha,\bz^\star} \leq g(\bx,\bz^\star)=\inf_{\bz\in\Z} g(\bx,\bz) .\]

We now show the other direction of the inequality. By \eqref{eq:inf-alpha-sup-z-I-d}, we switch the order of infimum and supremum in \( \sup_{\balpha \in\P(\I_d)} \inf_\bz g(\bx,\bz)-\an{\balpha,\bz}\) to \(\inf_{\bz} \sup_{{\balpha} \in\P(\I_d)} g(\bx,\bz)-\an{\balpha,\bz}\) and let \(\bz^\star\) achieve the \textit{outer} infimum. Since \(g(\bx,\bz^\star)-\an{\balpha,\bz^\star}\) is linear in \(\balpha\), the supremum \(\sup_{{\balpha} \in\P(\I_d)} g(\bx,\bz^\star)-\an{\balpha,\bz^\star}\) can be attained at \(\bal^\star=1_S\) for some \(S\in\I_d\) such that \(\an{S,z^\star}\leq\an{T,z^\star}\) for all \(T\in\I_d\). Let \(\hat \bz=\bz^\star-\frac{\bz^\star(S)}{d}\). Then, for all \(T\in\I_d\), we have
\(
\hat z(T)=\an{T,z^\star}\geq\an{S,z^\star}=z^\star(S)
\),
and therefore \(\hat \bz\in\Z\). 
The feasible solution \(\hat \bz\) gives
\[
g(\bx,\hat\bz)=\log\det\pr{\sum_{i=1}^n x_ie^{z_i^\star-\frac{z^\star(S)}{d}}\bv_i\bv_i^\top}
=\log\pr{\det\pr{\sum_{i=1}^n x_ie^{z_i^\star}\bv_i\bv_i^\top} \cdot e^{-z^\star(S)}}=g(\bx,\bz^\star)-\an{1_S,\bz^\star}
\]
which is the same as the right-hand side of \eqref{eq:dual-g}. Therefore, we have
\[
\inf_{\bz\in\Z} g(\bx,\bz) \leq \sup_{\balpha \in\P(\I_d)} \inf_\bz g(\bx,\bz)-\an{\balpha,\bz}
\]
finishing the proof of the claim.
\end{proof}

We are now ready to prove Lemma \ref{lem:bound-relax}.

\begin{proofof}{Lemma \ref{lem:bound-relax}}
First, we show that the outer supremum is well-approximated after putting a bound on the feasible set:
\begin{equation} \label{eq:outer-bound}
\sup_{\bx\in\X} \inf_{\bz\in\Z} g(\bx,\bz) \geq\sup_{\bx\in\X:\bx\geq\frac 1{M_1}} \inf_{\bz\in\Z} g(\bx,\bz)-\eps 
\end{equation}
for \(M_1=\poly(L,\frac 1\eps)\).
Let \(\bx^\star \in\sup_{\bx\in\X} \inf_{\bz\in\Z} g(\bx,\bz)\). Then, we scale \(\tilde \bx \leftarrow (1-\frac n{k M_1})\bx^\star\) and add the mass \(\frac{n}{M_1}\) to some coordinates of \(\tilde\bx\) so that \(\frac 1{M_1}\leq \tilde \bx\leq \frac 12\). Then for any \(\bz\in\Z\), we have \(g(\tilde \bx,\bz)\geq g(\bx^\star,\bz)-d\log(1-\frac n{k M_1})\). We can bound the error term \(d\log(1-\frac n{k M_1})\) by \(\eps\) by setting \(M_1=\poly(L,\frac 1\eps)\). 

Next, we show that the inner infimum problem is well-approximated after putting a bound on the feasible set.
Let \(\bx\in\X\) such that \(\bx\geq \frac 1M\). By Claim \ref{claim:dual-g}, we have \(\inf_{\bz\in\Z} g(\bx,\bz) = \sup_{\balpha \in\P(\I_d)} \inf_\bz g(\bx,\bz)-\an{\balpha,\bz}\). Now, observe that 
\begin{equation}
\exp(g(\bx,\log \by))=\det\pr{\sum_{i=1}^n x_iy_i \bv_i\bv_i^\top}=\sum_{R\subseteq[n]:|R|=d}\bx^R \by^R\det\pr{\sum_{i\in R} \bv_i\bv_i^\top} 
\end{equation}
is a polynomial of degree \(d\) in \(\by\). By Lemma 3.4 of \cite{anari2017generalization},
\begin{equation} \label{eq:bound-inf}
\sup_{\balpha \in\P(\I_d)} \inf_\bz g(\bx,\bz)-\an{\balpha,\bz} \geq \sup_{\balpha \in\P(\I_d)} \inf_{\bz:-M_2\leq\bz\leq M_2} g(\bx,\bz)-\an{\balpha,\bz} - \eps
\end{equation}
for some \(M_2=\poly(L,\log \frac 1\eps)\). We now claim that\begin{equation} \label{eq:dual-bounded-g}
\sup_{\balpha \in\P(\I_d)} \inf_{\bz:-M_2\leq\bz\leq M_2} g(\bx,\bz)-\an{\balpha,\bz}  \geq\inf_{\bz\in\Z:-M_3\leq\bz\leq M_3} g(\bx,\bz)  
\end{equation}
for some \(M_3=\poly(L,\log \frac 1\eps)\). 

The proof is similar to the proof of Claim \ref{claim:dual-g} as follow(s). First, we switch the supremum and infimum on the left-hand side of \eqref{eq:dual-bounded-g} using Claim \ref{claim:switch-z-alpha}. Second, we fix a solution \(\bz^\star\) of the \textit{outer} infimum and observe that \(\balpha^\star\) achieving \textit{inner} supremum is in the form of \(1_S\) for some \(S\in\I_d\). Next, we construct \(\hat \bz=\bz^\star-\frac{\bz^\star(S)}{d}\) and argue that \(\hat\bz\) is feasible to the right-hand side of  \eqref{eq:dual-bounded-g} and achieves the same objective \(g(\bx,\hat\bz\)) as the left-hand side of \eqref{eq:dual-bounded-g}. This same argument follows here except that the feasibility constraint on the right-hand side of \eqref{eq:dual-bounded-g} contains \(-M_3\leq\bz\leq M_3\). To remedy, we first set \(M_3=2M_2\). By \(-M_2\leq\bz\leq M_2\), we have that \(2M_2\leq\bz^\star-\frac{\bz^\star(S)}{d}\leq 2M_2\), and therefore \(\hat\bz\) is now feasible for the right-hand side of \eqref{eq:dual-bounded-g}.

Combining \eqref{eq:dual-g}, \eqref{eq:bound-inf}, and \eqref{eq:dual-bounded-g}, we obtain
\begin{equation} \label{eq:inner-bound}
\inf_{\bz\in\Z} g(\bx,\bz)\geq\inf_{\bz\in\Z:-M_3\leq\bz\leq M_3} g(\bx,\bz) -\eps.
\end{equation}
Therefore, the statement to be proved follows from \eqref{eq:outer-bound} and \eqref{eq:inner-bound} with \(M=\max\set{M_1,M_3}\).
\end{proofof}

By Lemma \ref{lem:bound-relax}, we now assume that the feasible region are bounded in \([-M,M]^n\) for some \(M\) of polynomial size, which allows us to set the initial ellipsoid of the ellipsoid algorithm. Consequently, the size of \(\bx,\bz\) during the run of the algorithm are polynomial in input size.

Next, we show Lipschitz property of \(g(\bx,\bz)\) and that its Lipschitz constant is bounded by the complexity of the input.\begin{lemma} \label{lem:g-convex-concave}
Let \(\bv_1,\ldots,\bv_n\in\R^d\). The function \(g(\bx,\bz)=\log\det\pr{\sum_{i=1}^n x_ie^{z_i} \bv_i\bv_i^\top}\) is concave in \(\bx\) and convex in \(\bz\). Furthermore, for \(\bx_0\) and \(\bz_0\) of  size polynomial in the input size \(L\)  such that \(\inf_{\bz\in\Z} g(\bx_0,\bz)\) is finite, there exists a polynomial \(p(L)\) such that \(g(\bx,\bz_0)\) as a function of \(\bx\) is \(2^{p(L)}\)-Lipschitz, and that \(g(\bx_0,\bz)\) as a function of \(\bz\) is \({p(L)}\)-Lipschitz.
\end{lemma}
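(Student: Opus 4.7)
The proof decomposes into two independent parts: the concavity/convexity claims and the Lipschitz bounds.

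For concavity in $\bx$, observe that for fixed $\bz$, the map $\bx \mapsto M(\bx,\bz):=\sum_i x_i e^{z_i}\bv_i\bv_i^\top$ is affine with image in the PSD cone, and $X\mapsto \log\det X$ is concave on the positive definite cone; concavity of $g(\cdot,\bz)$ follows by composition. For convexity in $\bz$, apply the Cauchy--Binet formula to obtain
\begin{equation}\label{eq:g-sum-in-z}
\exp(g(\bx,\bz)) \;=\; \sum_{S\in\binom{[n]}{d}} c_S \Big(\prod_{i\in S}x_i\Big) \exp\Big(\sum_{i\in S} z_i\Big),
\end{equation}
where $c_S=\det(\sum_{i\in S}\bv_i\bv_i^\top)\ge 0$. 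For fixed $\bx\ge 0$ this writes $g(\bx,\cdot)$ as a log-sum-exp of affine functions of $\bz$ with nonnegative coefficients, hence convex.

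For the Lipschitz bounds I compute partial derivatives using the identity $\partial\log\det M/\partial M = M^{-1}$ (applicable since $M$ is symmetric):
\[
\frac{\partial g}{\partial z_i}(\bx,\bz) = x_i e^{z_i}\,\bv_i^\top M(\bx,\bz)^{-1}\bv_i, \qquad
\frac{\partial g}{\partial x_i}(\bx,\bz) = e^{z_i}\,\bv_i^\top M(\bx,\bz)^{-1}\bv_i.
\]
The crucial observation is that $x_i e^{z_i}\bv_i\bv_i^\top$ is one of the PSD summands of $M(\bx,\bz)$, so $x_i e^{z_i}\bv_i\bv_i^\top \preceq M(\bx,\bz)$. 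Conjugating by $M^{-1/2}$ and taking the trace of the resulting rank-one PSD operator gives
\[
x_i e^{z_i}\,\bv_i^\top M^{-1} \bv_i \;\le\; 1
\]
whenever $M$ is invertible. The finiteness hypothesis $\inf_{\bz\in\Z}g(\bx_0,\bz)>-\infty$ forces $\{\bv_j:x_{0,j}>0\}$ to span $\R^d$, so $M(\bx_0,\bz)$ is positive definite for every $\bz\in\R^n$.

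This immediately yields $|\partial g/\partial z_i|(\bx_0,\bz)\le 1$ \emph{uniformly} in $\bz$, and summing gives $\|\nabla_\bz g(\bx_0,\cdot)\|_2\le\sqrt{n}\le p(L)$ for a suitable polynomial $p$. The main subtlety is precisely this uniform control: naive bounds on $\|M^{-1}\|$ blow up as $\bz$ drifts to $-\infty$ on some coordinates, and the trace trick bypasses this without requiring any a priori bound on $\bz$. For the $\bx$-direction the same inequality rearranges to $\bv_i^\top M(\bx,\bz_0)^{-1}\bv_i \le 1/(x_i e^{z_{0,i}})$, cancelling the $e^{z_{0,i}}$ factor in $\partial g/\partial x_i$ to give $|\partial g/\partial x_i|\le 1/x_i$. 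On the restricted domain $\bx\ge 1/M$ from Lemma~\ref{lem:bound-relax} (with $\log M=\poly(L,\log(1/\epsilon))$), this gives $\|\nabla_\bx g(\cdot,\bz_0)\|_2\le \sqrt{n}\,M \le 2^{p(L)}$ after absorbing $\log(1/\epsilon)$ into $p$.
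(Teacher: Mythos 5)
Your proof is correct, and for the convexity/concavity claims and the $\bz$-Lipschitz bound it essentially coincides with the paper's: both establish concavity in $\bx$ by composing the affine map $\bx\mapsto M(\bx,\bz)$ with the concave $\log\det$, both obtain convexity in $\bz$ from the Cauchy--Binet expansion and the log-sum-exp structure, and both arrive at $0\le \partial g/\partial z_i\le 1$ and hence $\|\nabla_\bz g\|_2\le\sqrt n$ (the paper reads this off the ratio $\sum_{R\ni i}e^{z(R)}c_R\,/\sum_R e^{z(R)}c_R$, while you use the trace inequality $x_ie^{z_i}\bv_i^\top M^{-1}\bv_i\le 1$ coming from $x_ie^{z_i}\bv_i\bv_i^\top\preceq M$ --- the same fact in different clothing). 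Where you genuinely diverge is the $\bx$-Lipschitz bound. The paper differentiates $t\mapsto\log\det(\bX+t\bY)$ along a unit-Frobenius-norm direction $\bY$ and bounds the eigenvalues of $\bX^{-1/2}\bY\bX^{-1/2}$ by $2^{\poly(L)}$; as written, that step infers a bound on $\|\bX^{-1}\|$ from an upper bound on $\|\bX\|_F$, which does not follow and implicitly requires a lower bound on $\lambda_{\min}(\bX)$ (available only via the domain restriction and general position). Your route --- the exact identity $\partial g/\partial x_i=e^{z_{0,i}}\bv_i^\top M^{-1}\bv_i\le 1/x_i$, followed by the restriction $\bx\ge 1/M$ from Lemma~\ref{lem:bound-relax} --- makes the source of the $2^{p(L)}$ explicit and avoids any spectral lower bound on $M$. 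The one caveat is that your Lipschitz constant is only valid on the truncated domain $\{\bx\ge 1/M\}$ rather than on all of $\P(\M)\cap[0,\tfrac12]^n$; since the function can be $-\infty$ on the boundary, some such restriction is unavoidable, it is exactly the domain on which the ellipsoid oracle operates after Lemma~\ref{lem:bound-relax}, and the paper's own argument tacitly needs it too, so this is a feature of your write-up rather than a gap.
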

We note that it is required for the Lipschitz constants of \(g(\bx,\bz)\) in \(\bx\) and in \(\bz\) to be at most \(2^{\poly(L)}\) so that the ellipsoid algorithm to be introduced has small error.

\begin{proof}
The proof of concavity of \(g(\bx,\bz)\)  in \(\bx\) follows from the concavity of log-determinant function  \cite{boyd2004convex}. The Lipschitz property in \(\bx\) follows from the calculation of gradient of log-determinant function in the proof in \cite{boyd2004convex}: for a matrix \(\bX\in\SS^n\) and symmetric matrix \(\bY\) of Frobenius norm \(1\), the function \(h(t)=\log\det(\bX+t\bY)\) (on \(t\) such that \(X+t Y \succeq 0)\) has derivative
\begin{align*}
\frac{d}{dt}h(t)&=\frac{d}{dt}\pr{\log\det(\bX)+\log\det\pr{I+t\bX^{-\frac 12} \bY \bX^{-\frac 12}}} \\
&= \frac{d}{d t}\pr{\log\pr{\prod_{i=1}^n(1+t\cdot \lambda_i\pr{ \bX^{-\frac 12} \bY \bX^{-\frac 12}}}} \\
&=\sum_{i=1}^n \frac {\lambda_i\pr{ \bX^{-\frac 12} \bY \bX^{-\frac 12}}}{1+t\cdot \lambda_i\pr{ \bX^{-\frac 12} \bY \bX^{-\frac 12}}}
\end{align*}
where \(\lambda_i\pr{\bM}\) is the \(i\)th eigenvalues of  matrix \(\bM\).
For any \(\bz=\poly(L)\), we have that the Frobenius norm of \(\bX=\sum_{i=1}^n x_ie^{z_i} \bv_i\bv_i^\top\) is at most \(2^{\poly(L)}\) (due to the exponent in \(z_i\)). Hence, for a symmetric matrix \(\bY\) of Frobenius norm \(1\), the Frobenius norm of \(\bX^{-\frac 12} \bY \bX^{-\frac 12}\) is at most \(2^{\poly(L)}\). Hence, \(\abs{\lambda_i\pr{ \bX^{-\frac 12} \bY \bX^{-\frac 12}}}\leq2^{\poly(L)}\) for each \(i\), and therefore \(\frac{d}{d t}h(t)\leq2^{\poly(L)}\). The \(2^{\poly(L)}\)-Lipschitz property on \(h(\lambda)\) implies  \(2^{\poly(L)}\)-Lipschitz property on \(g(\bx,\bz)\) by the chain rule of derivative. 

Observe that, by the Cauchy-Binet formula, 
\begin{equation}
g(\bx,\bz)=\log\pr{\sum_{R\subseteq[n]:|R|=d}\det\pr{\sum_{i\in R} x_ie^{z_i} \bv_i\bv_i^\top}}=\log\pr{\sum_{R\subseteq[n]:|R|=d}e^{z(R)}c_R} \label{eq:g-sum-in-z}
\end{equation}
where \(c_R=\det\pr{\sum_{i\in R} x_i \bv_i\bv_i^\top}\), showing that \(g(\bx,\bz)\) as a function of \(\bz\) is a log-sum-exponential function. The convexity in \(\bz \)   of log-sum-exponential functions is proven in \cite{boyd2004convex}. 

The Lipschitz property of \(g(\bx,\bz)\) in \(\bz\) can be shown by a direct calculation of gradient. Let \(\bx_0\) be such that \(\inf_{\bz\in\Z} g(\bx_0,\bz)\) is finite. Then, 
\begin{align*}
\frac{\partial}{\partial z_i}g(\bx,\bz) &=\frac{\sum_{ R\subseteq[n]:|R|=d,i\in R}e^{z(R)}c_R}{\sum_{R\subseteq[n]:|R|=d}e^{z(R)}c_R} 
\end{align*}
which implies \(0\leq \frac{\partial}{\partial z_i}g(\bx,\bz) \leq 1\) as \(c_R\geq 0\) for all \(R\).
Hence, \(||\nabla_\bz(\bx,\bz) \leq 1||_2\leq\sqrt{n}\).
\end{proof}

Since \(g(\bx,\bz)\) as a function of \(\bz\) is convex, we obtain an efficient oracle \(\F\) to solving the inner infimum problem using an ellipsoid method:
\begin{itemize}
\item 
\textbf{Input: }\(\bx_0\in\X\), error \(\epsilon\).
\item
\textbf{Output: }\(\bz\in\Z\) such that \(g(\bx_0,\bz)\leq\inf_{\bz\in\Z}g(\bx_0,\bz)+\eps\).
\end{itemize}
Because the sizes of the infima and Lipschitz constants are bounded by \(\poly(L)\), the oracle \(\F\) runs in time \(\poly(L,\log(\frac 1\eps))\). 

We now present an oracle \(\G\) for solving the feasibility problem of \CPD, namely \DFea, using the oracle \(\F\).

\begin{lemma}
There exists an oracle \(\G\) for the following problem:

\begin{itemize}
\item 
\textbf{Input:} target \(\beta\in\R\), error \(\epsilon\).
\item
\textbf{Output:} \(\bar \bx\in\X\) such that \(\inf_{\bz\in\Z} g(\bar \bx,\bz) \geq \beta-\eps\), or a proof that \(\sup_{\bx\in\X}\inf_{\bz\in\Z} g(\bx,\bz)<\beta\).
\end{itemize}
Moreover, the  oracle \(\G\) runs in \(\poly(L,\log(\frac 1\eps))\) time.\end{lemma}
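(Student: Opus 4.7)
The plan is to apply the ellipsoid method to the outer maximization over $\bx$, using the inner oracle $\F$ to evaluate $f$ approximately and to produce an approximate supergradient. First, observe that $f(\bx) = \inf_{\bz \in \Z} g(\bx,\bz)$ is concave in $\bx$: by Lemma~\ref{lem:g-convex-concave} the function $g(\cdot,\bz)$ is concave for every fixed $\bz \in \Z$, and the pointwise infimum preserves concavity. Hence the superlevel set $\set{\bx \in \X : f(\bx) \geq \beta}$ is convex, and \DFea{} is a convex feasibility problem.

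The two ingredients needed for the ellipsoid method are (i) a polynomial-time separation oracle for $\X = \P(\M)\cap[0,\tfrac12]^n$, which is standard via matroid rank inequalities and the trivial box constraints, and (ii) an approximate supergradient oracle for $f$. For (ii), given $\bx_0 \in \X$ and an error $\delta > 0$, invoke $\F$ on $(\bx_0,\delta)$ to obtain $\tilde\bz \in \Z$ with $g(\bx_0,\tilde\bz) \leq f(\bx_0) + \delta$, and set $\mathbf{g} := \nabla_\bx g(\bx_0,\tilde\bz)$, whose $i$-th coordinate is $e^{\tilde z_i}\,\bv_i^\top \tilde\bX^{-1} \bv_i$ with $\tilde\bX = \sum_j (x_0)_j e^{\tilde z_j} \bv_j \bv_j^\top$. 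Concavity of $g(\cdot,\tilde\bz)$ gives, for every $\bx$, $f(\bx) \leq g(\bx,\tilde\bz) \leq g(\bx_0,\tilde\bz) + \mathbf{g}^\top(\bx-\bx_0) \leq f(\bx_0) + \delta + \mathbf{g}^\top(\bx-\bx_0)$, so $\mathbf{g}$ is a supergradient of $f$ at $\bx_0$ up to an additive error $\delta$.

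With these ingredients the algorithm runs the ellipsoid method starting from the polynomial-size bounding ellipsoid for $\X$ guaranteed by Lemma~\ref{lem:bound-relax}. At each iteration with center $\bx_0$: if $\bx_0 \notin \X$, use the separation oracle for $\X$ to produce a cut; otherwise call $\F(\bx_0,\delta)$ with $\delta = \eps/2$ and compute $\mathbf{g}$. If $g(\bx_0,\tilde\bz) \geq \beta - \eps/2$, then $f(\bx_0) \geq g(\bx_0,\tilde\bz) - \delta \geq \beta - \eps$ and we return $\bar \bx = \bx_0$. Otherwise $g(\bx_0,\tilde\bz) < \beta - \eps/2$, and for any hypothetical $\bx^\star \in \X$ with $f(\bx^\star) \geq \beta$ the approximate supergradient inequality forces $\mathbf{g}^\top(\bx^\star - \bx_0) > 0$, so the half-space $\set{\bx : \mathbf{g}^\top(\bx - \bx_0) \leq 0}$ is a valid cut. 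The number of iterations is polynomial in the bit complexity of the initial ellipsoid and in $\log(1/\eps)$ by the standard ellipsoid analysis combined with the $2^{\poly(L)}$-Lipschitz bound on $g$ in $\bx$ given by Lemma~\ref{lem:g-convex-concave}; when the volume drops below the threshold implied by this Lipschitz bound without a feasible center being found, we conclude that $\sup_{\bx \in \X} f(\bx) < \beta$.

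The main obstacle is propagating the approximation error of $\F$ through the ellipsoid iterations so that the volume-based certificate of infeasibility remains sound. Concretely, one must request from $\F$ an accuracy $\delta$ small enough (polynomially in $\eps$ and $2^{-\poly(L)}$) that an approximate supergradient still induces a valid cut, and choose the stopping volume so that failure to find an $\eps$-feasible center genuinely witnesses $\sup_{\bx \in \X} f(\bx) < \beta$ up to the allowed slack. Since $\F$ runs in $\poly(L,\log(1/\delta))$ time and each ellipsoid iteration makes only one call to $\F$ plus one call to the matroid separation oracle, the overall running time is $\poly(L,\log(1/\eps))$.
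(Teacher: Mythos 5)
Your proposal is correct and follows essentially the same route as the paper: an ellipsoid method on the outer problem that calls the inner oracle $\F$ at each feasible center to obtain an approximate minimizer $\tilde\bz$, returns the center when $g(\bx_0,\tilde\bz)$ clears the threshold $\beta-\eps/2$, and otherwise uses $\nabla_\bx g(\bx_0,\tilde\bz)$ together with concavity of $g(\cdot,\tilde\bz)$ to generate a valid cut, terminating via the $2^{\poly(L)}$ Lipschitz bound. The only differences are cosmetic (you phrase the cut as an approximate supergradient of $f$ and use inner accuracy $\eps/2$ where the paper uses $\eps/4$).
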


\begin{proof}
We assume without loss of generality that for all \(\bx\in\X\), \(\inf_{\bz\in\Z} g(\bx,\bz)\) is finite (by Lemma \ref{lem:m-intersect} and we can check in polynomial time if \(\inf_{\bz\in\Z} g(\bx,\bz)\) is finite) and that \(f(\bx)\) attains its minimum at a finite \(\bz\) (by Lemma \ref{lem:finite}).

We initiate the oracle \(\G\) as an  ellipsoid algorithm with an ellipsoid containing $\X$. At an iteration \(t\) of the algorithm, we denote $\bx_t$ the center of the ellipsoid. If $\bx_t\notin \X$, then we can efficiently find a violating constraint in \(\X\) due to the structure of \(\X\) to get a separating hyperplane and continue. If \(\bx_t\in \X\), we call the oracle \(\F\) with input $\bx_t$ and $\frac \epsilon{4}$. Let $\F$ return $\bz_t$ such that 
$$ g({\bx_t},{\bz_t})\leq \inf_{z\in\Z} g({\bx_t},\bz) +\frac \epsilon4.$$
If $g(\bx_t,\bz_t)\geq \beta-\frac\epsilon2$, then let \(\G\) return $\bx_t$. In this case, we have $$\inf_{z\in\Z} g({\bx_t},\bz)\geq g({\bx_t},{\bz_t})-\frac\epsilon4\geq \beta-\frac\epsilon2-\frac\epsilon4\geq \beta-\eps$$ as needed. 

Otherwise, if 
$g(\bx_t,\bz_t)< \beta-\frac\epsilon2$,  then return the separating hyperplane $\{\bx: \langle \nabla_\bx g(\bx_t,\bz_t),x-\bx_t\rangle \geq 0\}$ to the ellipsoid algorithm (note that \( \nabla_\bx g(\bx,\bz)\) has a closed-form expression and  can be efficiently calculated). We now claim that the returned hyperplane is valid, i.e. that any point $\bx^\star$ such that $\inf_{z\in\Z}g(\bx^\star,\bz) \geq \beta$ satisfies the constraint as given by the separating hyperplane. Let $\bx^\star$ be such that $\inf_{z\in \Z}g(\bx^\star,\bz) \geq \beta$. Thus $g(\bx^\star,\bz_t)\geq \beta$, and therefore $g(\bx^\star,\bz_t)-g(\bx_t,\bz_t)\geq 0$.  By concavity of \(g(\bx,\bz)\) in \(x\), 
\begin{align*}
0\leq g(\bx^\star,\bz_t)-g(\bx_t,\bz_t)\leq \an{\nabla_\bx g(\bx_t,\bz_t),\bx^\star-\bx_t}
\end{align*}
as claimed.

Finally, if the ellipsoid algorithm \(\G\) ends without returning any point after the \(t\)th iteration, then we have $g(\bx_t,\bz_t)<\beta-\frac\epsilon2$. Moreover, if there exists $\bx^\star\in\X$ such that $g(\bx^\star,\bz_t)\geq \beta$, then $\|\bx_t-\bx^\star\|_2\leq \eta_t$ where \(\eta_t\) denotes the radius of the ellipsoid at iteration \(t\). But then we have 
\begin{align}
\frac\eps2<g(\bx^\star,\bz_t)-g(\bx_t,\bz_t)\leq 2^{p(L)} \|\bx^\star-\bx_t\|_2 \leq 2^{p(L)}\cdot \eta_t  \label{eq:ellipsoid-end}
\end{align}
where $2^{p(L)}$ is the Lipschitz constant (Lemma \ref{lem:g-convex-concave}). We run the algorithm until \( 2^{p(L)}\cdot \eta_t < \frac\epsilon2\), so that we get a contradiction in \eqref{eq:ellipsoid-end}. This is the proof that \(\sup_{\bx\in\X}\inf_{\bz\in\Z} g(\bx,\bz)<\beta\), as claimed.

It remains to run the algorithm until \(2^{p(L)}\cdot \eta_t < \frac\epsilon2\), which is equivalent to \(\frac1{\eta_t}=2^{\poly(L)}\cdot\poly(\frac1\eps)\). Since in an ellipsoid algorithm, the radius \(\eta_t\) shrinks exponentially in \(t\), we only need \(T=O(\log \frac{M}{\eta_t})\) iterations, where \(M\) is the size of the initial ellipsoid. By  Lemma \ref{lem:bound-relax},  we may choose \(M\) such that \(\log M=\poly(L,\log(\frac 1\eps))\).
Therefore, we have \(T=O(\log M +\log \frac 1{\eta_t})= \poly(L,\log(\frac1\eps))\). Each of these \(T\) iterations runs in \(\poly(L,\log(\frac 1\eps))\) time, so the total runtime of the ellipsoid algorithm is \(\poly(L,\log(\frac1\eps))\). 
\end{proof}

The existence of the polynomial-time algorithm for \DFea{} finishes the proof of Theorem \ref{thm:solvability}.

\section{Optimality Conditions}\label{sec:KKT}

In this section, we prove Lemma~\ref{thm:KKT}.
We use Theorem~\ref{thm:slater} (Strong Duality) with
\[
h(\bz) = g(\bx,\bz) = \log \det\pr{\sum_{i=1}^n x_ie^{z_i} \bv_i \bv_i^\top},
\]
where $\bx$ will be either $\bx^\star$ or $\hat{\bx}$. For any fixed $\bx$,
$g(\bx,\bz)$ is convex in $\bz$ (Lemma \ref{lem:g-convex-concave}). We can
write $\Z = \set{\bz\in\R^{n}:z(S)\geq0 \ \ \forall S\in \I_d(\M)}$
as $\Z = \{\bz\in\R^{n}: \bA\bz \ge 0\}$ for a matrix $\bA$ whose rows are
 indicator vectors of sets in $\I_d(\M)$. Thus, for any $\bx$, the
optimization problem $\inf_{\bz \in \Z}g(\bx,\bz)$ can be written as
$\inf\{h(\bz): -\bA\bz \le 0\}$ where $h(\bz) = g(\bx,\bz)$ is a convex function
with domain $\R^{n}$. As we assumed the infimum is achieved, it must
also be finite. Since $\Z$ contains, for example, the non-negative
orthant of $\R^{n}$, it is not empty, and the assumptions of
Theorems~\ref{thm:slater}~and~\ref{thm:gen-KKT} hold. Note that we
apply the theorems with $\ell = 0$.

Let us first show Lemma~\ref{thm:KKT} before ``moreover''.
Observe, first, that by condition~\ref{cond:matrix} of Lemma~\ref{thm:KKT}, we have that
$g(\bx^\star,\bz^\star) =
g(\hat{\bx}, \bz^\star)$. Therefore, to show that $f(\hat{\bx}) =
f({\bx^\star})$, it is enough to show that $\bz^\star$ achieves an
infimum in
$\inf_{\bz \in \Z}
g(\hat{\bx}, \bz)$. We do so by verifying that
conditions~\ref{cond:gen-CS}~and~\ref{cond:gen-first-order} of
Theorem~\ref{thm:gen-KKT} hold for $\bz^\star$ and $\blambda$. Indeed, condition~\ref{cond:CS} of
Lemma~\ref{thm:KKT} is exactly the condition~\ref{cond:gen-CS} of
Theorem~\ref{thm:gen-KKT}. Moreover, the function $g(\hat{\bx}, \bz)$ is
differentiable in $\bz$, and its partial derivatives at $\bz^\star$ are given
by
\[
\frac{\partial g}{\partial z_i}(\hat{\bx}, \bz^\star) = \hat x_i e^{z_i^\star}\bv_i^\top \bX^{-1} \bv_i,
\]
for $\bX = \sum_{i=1}^n \hat x_i e^{z_i^\star}\bv_i \bv_i^\top$. We also have
$(\mathbf{M}^\top \blambda)_i = -(\bA^\top \blambda)_i = -\sum_{S\in \I_d(\M): i\in S}\lambda_S$, and,
therefore, condition~\ref{cond:first-order} of Lemma~\ref{thm:KKT} is exactly
condition~\ref{cond:gen-first-order} of Theorem~\ref{thm:gen-KKT}. Lemma~\ref{thm:KKT} before ``moreover'' now follows by Theorem~\ref{thm:gen-KKT}.

Next we establish Lemma~\ref{thm:KKT}  after ``moreover''. The Lagrangian of
$\inf_{\bz \in \Z}{g(\bx^\star, \bz)}$ equals
\[
L(\bz, \lambda) = g(\bx^\star,\bz) - \sum_{S \in\I_d(M)} \lambda_S z(S).
\]
Theorem~\ref{thm:slater} implies that there exists $\blambda^\star
\in \R^{\I_d(\M)}_{\geq 0}$ such that
\[
\inf_{\bz \in \R^{n}} L(\bz,\blambda^\star) = \inf_{\bz \in \Z}{g(\bx^\star, \bz)} = g(\bx^\star, \bz^\star).
\]
Therefore, conditions~\ref{cond:gen-CS}~and~\ref{cond:gen-first-order}
hold for $\bz^\star$ and $\blambda^\star$, and, as we argued above, they are
equivalent to conditions~\ref{cond:CS}~and~\ref{cond:first-order} of
Lemma~\ref{thm:KKT}.



\section{Proofs from Section~\ref{sec:sparse}} \label{sec:proof-sparse}
For a vector \(\bx\in\R^{n}\) over the ground set \([n]\) and a smaller ground set $\tilde \U\subseteq[n]$, we denote by \(\bx_{|\tilde\U}\in\R^{\tilde \U}\) the vector \(\bx\) restricted to \(\tilde \U\).
We show that restricting the ground set to the support of a solution of \(\CPD\) preserves the optimal value and optimal solutions (after restricting to the support\ of the inner infimum. 
\begin{lemma} \label{lem:restrict-U}
Let \(\bx^\star\) be a feasible solution to \CPD. Let \(\tilde\U=\supp{\bx}\) and let
 \(\tilde \bx^\star=\bx^\star_{|\tilde\U}\). Let \(\tilde \M=(\tilde \U,\tilde \I)\) be a matroid where
\[
\tilde \I= \set{S\subseteq \tilde \U: S\in\I}.
\]
 Define \(\tilde\Z\) and \(\tilde g:\R^{\tilde \U}\times \R^{\tilde \U}\rightarrow \R\) as \(\Z\) and \(g\) restricted to \(\tilde \U\) by
\[
\tilde \Z=\set{\tilde\bz\in\R^{\tilde \U}:\tilde z(S)\geq0 \ \ \forall S\in \I_d(\tilde \M)}
\]
and
\[
\tilde g(\tilde\bx,\tilde\bz)=\log\det\pr{\sum_{i\in\tilde\U}x_ie^{z_i} \bv_i\bv_i^\top}.
\]
Then,
\begin{equation}
\inf_{\bz\in\Z} g(\bx^\star,\bz) = \inf_{\tilde\bz\in\tilde\Z} \tilde g(\tilde \bx^\star,\tilde\bz).
\end{equation}
Moreover, for any \(\bz^\star\in\inf_{\bz\in\Z} g(\bx^\star,\bz) \), we have \(\hat \bz^\star_{|\tilde\U}\in\inf_{\tilde\bz\in\tilde\Z} \tilde g(\tilde \bx^\star,\tilde\bz)\).
\end{lemma}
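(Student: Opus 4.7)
The plan is to exploit the key observation that \(g(\bx^\star, \bz)\) depends only on the coordinates of \(\bz\) indexed by \(\tilde \U\). Indeed, since \(x_i^\star = 0\) for every \(i \notin \tilde \U\), we have
\[
\sum_{i \in [n]} x_i^\star e^{z_i} \bv_i \bv_i^\top \;=\; \sum_{i \in \tilde \U} x_i^\star e^{z_i} \bv_i \bv_i^\top,
\]
so \(g(\bx^\star, \bz) = \tilde g(\tilde \bx^\star, \bz_{|\tilde \U})\) for every \(\bz \in \R^n\). Once this identity is in hand, the lemma reduces to verifying that the feasible sets \(\Z\) and \(\tilde \Z\) are compatible under restriction and extension to and from \(\tilde \U\).

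For the inequality \(\inf_{\tilde \bz \in \tilde \Z} \tilde g(\tilde \bx^\star, \tilde \bz) \le \inf_{\bz \in \Z} g(\bx^\star, \bz)\), I would first note that by construction of \(\tilde \M\), any \(S \in \I_d(\tilde \M)\) satisfies \(S \subseteq \tilde \U\) and \(S \in \I_d(\M)\). Hence for any \(\bz \in \Z\), the restriction \(\bz_{|\tilde \U}\) satisfies \(\tilde z(S) = z(S) \ge 0\) for every \(S \in \I_d(\tilde \M)\), so \(\bz_{|\tilde \U} \in \tilde \Z\), and by the identity above, \(\tilde g(\tilde \bx^\star, \bz_{|\tilde \U}) = g(\bx^\star, \bz)\). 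Taking the infimum yields the desired inequality.

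For the reverse inequality, given any \(\tilde \bz \in \tilde \Z\), I would extend it to an \(\bz^M \in \R^n\) by setting \(z^M_i = \tilde z_i\) for \(i \in \tilde \U\) and \(z^M_i = M\) for \(i \notin \tilde \U\), where \(M\) is chosen large enough that \(z^M(S) \ge 0\) for every \(S \in \I_d(\M)\); e.g., any \(M \ge d \cdot \max_{i \in \tilde \U}|\tilde z_i|\) suffices, since any \(S \in \I_d(\M)\) either lies entirely in \(\tilde \U\) (in which case \(z^M(S) = \tilde z(S) \ge 0\) by \(\tilde \bz \in \tilde \Z\)) or contains at least one element of \([n]\setminus \tilde \U\) (in which case the \(M\) term dominates). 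By the identity at the start of the proof, \(g(\bx^\star, \bz^M) = \tilde g(\tilde \bx^\star, \tilde \bz)\), so \(\inf_{\bz \in \Z} g(\bx^\star, \bz) \le \tilde g(\tilde \bx^\star, \tilde \bz)\), and taking the infimum over \(\tilde \bz\) gives equality.

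Finally, the ``moreover'' part follows immediately from the first direction: if \(\bz^\star \in \arg\min_{\bz \in \Z} g(\bx^\star, \bz)\), then \(\bz^\star_{|\tilde \U} \in \tilde \Z\) and \(\tilde g(\tilde \bx^\star, \bz^\star_{|\tilde \U}) = g(\bx^\star, \bz^\star) = \inf_{\tilde \bz \in \tilde \Z} \tilde g(\tilde \bx^\star, \tilde \bz)\) by the equality just established. No step presents a serious obstacle; the only subtlety is ensuring the correct identification \(\I_d(\tilde \M) = \{S \in \I_d(\M) : S \subseteq \tilde \U\}\), which is immediate from the definition \(\tilde \I = \{S \subseteq \tilde \U : S \in \I\}\).
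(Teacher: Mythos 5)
Your proposal is correct and follows essentially the same route as the paper's proof: both directions rest on the identity $g(\bx^\star,\bz)=\tilde g(\tilde\bx^\star,\bz_{|\tilde\U})$, with restriction for one inequality and extension by a sufficiently large constant on the coordinates outside $\tilde\U$ for the other (the paper uses $z_j\geq\sum_{i\in\tilde\U}|\tilde z_i|$ where you use $M\geq d\cdot\max_{i\in\tilde\U}|\tilde z_i|$, an immaterial difference). The ``moreover'' part is handled identically.
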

\begin{proof}
We first show that \(\inf_{\bz\in\Z} g(\bx^\star,\bz) \geq \inf_{\tilde\bz\in\tilde\Z} \tilde g(\tilde \bx^\star,\tilde\bz)\). For any \(\bz\in\Z\), 
observe that \(\bz_{|\tilde\U}\in\tilde\Z\) and that by the definitions of \(g,\tilde g\) we have \(g(\bx^\star,\bz)=\tilde g(\tilde \bx^\star,\bz_{|\tilde\U})\). Hence, \(\inf_{\bz\in\Z} g(\bx^\star,\bz) \geq \inf_{\tilde\bz\in\tilde\Z} \tilde g(\tilde \bx^\star,\tilde\bz)\). 

To prove the other direction of the inequality, let \(\tilde \bz \in\tilde\Z\). We then construct \(\bz\in\Z\) from \(\tilde \bz\) by adding \(z_j\) for each \(j\in[n]\setminus\U'\) with value \(z_j\geq \sum_{i\in\U'}|z_i|\). This ensures that \(z(S)\geq0\) for all \(S\in\I_d(\M)\), and hence \(\bz\in\Z\). Again, \(g(\bx^\star,\bz)=\tilde(\tilde \bx^\star,\tilde \bz)\), so we have \(\inf_{\bz\in\Z} g(\bx^\star,\bz) \leq \inf_{\tilde\bz\in\tilde\Z} \tilde g(\tilde \bx^\star,\tilde\bz)\).
Therefore,  \(\inf_{\bz\in\Z} g(\bx^\star,\bz) = \inf_{\tilde\bz\in\tilde\Z} \tilde g(\tilde \bx^\star,\tilde\bz)\), as claimed.

Next, let \(\bz^\star\in\inf_{\bz\in\Z} g(\bx^\star,\bz) \). Since \(g(\bx^\star,\bz^\star)=\tilde g(\tilde \bx^\star,\bz^\star_{|\tilde\U})\) and \(\inf_{\bz\in\Z} g(\bx^\star,\bz) = \inf_{\tilde\bz\in\tilde\Z} \tilde g(\tilde \bx^\star,\tilde\bz)\), we have \(\hat \bz^\star_{|\tilde\U}\in\inf_{\tilde\bz\in\tilde\Z} \tilde g(\tilde \bx^\star,\tilde\bz)\).
\end{proof}

The following statement that minimum-weight bases form a matroid is standard; we include its proof for completeness.
\begin{lemma}\label{lem:bases}
Let $\M=([n], \I)$ be a matroid and let $\B$ denote the set of bases of $\M$. Let $\bw:[n]\rightarrow \R$  denote a weight function and let $\B'=\{S\in \B: w(S)=\min_{T\in \B} w(T)\}$ denote the set of minimum weight bases of \(\M\). Then,\, $\B'$ are bases of another matroid $\M'=([n],\I')$. Moreover, if $\M$ admits an independence oracle, then $\M'$ admits an independence oracle. 
\end{lemma}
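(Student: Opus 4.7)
The plan is to prove that $\B'$ satisfies the strong basis exchange property, which suffices to conclude that $\B'$ is the set of bases of a matroid $\M' = ([n], \I')$ with $\I' := \{S \subseteq [n] : S \subseteq B \text{ for some } B \in \B'\}$. Concretely, I would take any $B_1, B_2 \in \B'$ and any $x \in B_1 \setminus B_2$, and apply the strong basis exchange property of $\M$ itself (which is available as a preliminary) to obtain some $y \in B_2 \setminus B_1$ such that $B_1' := B_1 - x + y \in \B$ and $B_2' := B_2 - y + x \in \B$ simultaneously. By linearity of $w$, we have $w(B_1') + w(B_2') = w(B_1) + w(B_2) = 2w^\star$, where $w^\star := \min_{B \in \B} w(B)$. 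Since $w(B_1') \geq w^\star$ and $w(B_2') \geq w^\star$ individually, both must equal $w^\star$ exactly, putting $B_1', B_2' \in \B'$. This establishes strong basis exchange for $\B'$; equicardinality of elements of $\B'$ is immediate since all bases of $\M$ have size $r(\M)$.

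For the independence oracle of $\M'$, given a query set $S \subseteq [n]$, the plan is to test whether $S$ can be extended to a minimum-weight basis of $\M$. First, I would use the oracle for $\M$ to verify that $S \in \I(\M)$; if not, output ``no.'' Next, I would compute $w^\star = \min_{B \in \B} w(B)$ using the standard greedy algorithm for minimum-weight bases of matroids, which only requires the independence oracle for $\M$. Separately, I would compute the minimum weight $w_S$ of a basis of $\M$ containing $S$ by running greedy on the contracted matroid $\M/S$ with the weights inherited from $w$, and setting $w_S := w(S) + w(T^\star)$, where $T^\star$ is the resulting minimum-weight basis of $\M/S$. The independence oracle of $\M/S$ is implemented directly from that of $\M$: $T \in \I(\M/S)$ iff $T \cap S = \emptyset$ and $S \cup T \in \I(\M)$. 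Finally, output $S \in \I'$ iff $w_S = w^\star$. Correctness is immediate: $S \in \I'$ precisely when some basis containing $S$ achieves the global minimum weight.

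The main obstacle is mostly formal rather than conceptual: the only nontrivial ingredients are (a) the equivalence between the strong basis exchange property and being the base set of a matroid, and (b) the correctness of the greedy algorithm for minimum-weight bases in an arbitrary matroid. Both are classical and should require only a citation; all remaining steps follow from direct computation of weights and from the structural preliminaries already recalled in the appendix.
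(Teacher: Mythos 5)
Your proposal is correct and follows essentially the same route as the paper: both establish the exchange property for $\B'$ by invoking the strong basis exchange property of $\M$ and then arguing that the two exchanged bases must still have minimum weight, and both implement the independence oracle by checking independence in $\M$ and comparing the minimum weight of a basis extending the query set (computed via the contracted matroid) against the global minimum. The only cosmetic difference is that you deduce $w(B_1')=w(B_2')=w^\star$ from the sum $w(B_1')+w(B_2')=2w^\star$, whereas the paper reaches the same conclusion by a case analysis on $w_u$ versus $w_v$.
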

\begin{proof}
A set system \(\B'\) are bases of a matroid if it has an exchange property:
\(
\forall S_1,S_2\in \B'\) such that \(S_1\neq S_2, \
\)
we have that for all \( v\in S_2\setminus S_1\), there is \(u\in S_1\setminus S_2\) such that \(S_1+v-u\in \B'\) (\cite{whitney1935abstract}). 

Let \(S_1,S_2\in\B'\) be such that \(S_1\neq S_2\). Let \(v\in S_2\setminus S_1\). Since \(S_1,S_2\) are bases of  matroid \(\M\), by the strong basis exchange property, there exists \( u\in S_1\setminus S_2 \) such that \(S_1+v-u,S_2+u-v\in\B\).
We consider different cases based on \(w_u,w_v\).

If \(w_u<w_v\), then \(w\pr{S_2+u-v} < w(S_2)\), a contradiction to \(S_2\) being a minimum-weight basis.
If \(w_u>w_v\), then \(w\pr{S_1+v-u} > w(S_1)\), again a contradiction to \(S_1\) being a minimum-weight basis.
Therefore, \(w_u=w_v\), and so \(S_1+v-u\) is also a minimum weight basis, as desired.

To test if a set $Q$ is an independent set in $\M'$, we check if $Q$ is an independent set in $\M$ and if the minimum-weight independent set in $\M/Q$ is equal to $\min_{T \in \B} w(T) - w(Q)$. Here, $\M/Q$ denote the matroid $\M$ after contracting $Q$. Since we can check if a set is independent in $\M$ in polynomial time and optimize a linear function over a matroid constraint in polynomial time, we can check if $Q$ is independent in $\M'$ in polynomial time.
\end{proof}

We now complete a missing proof of Lemma \ref{lem:extreme-LP-uncross} using the
uncrossing technique.

\begin{proofof}{Lemma \ref{lem:extreme-LP-uncross}}
Let \(\bx^\star\) be an extreme solution to \LP. A chain \(\C_1\) corresponding to tight linearly independent constraints in \eqref{mat11}-\eqref{mat12} of a matroid base polytope can be obtained by an uncrossing argument (see Lemma 5.2.4 of \cite{lau2011iterative}). We will show that a chain \(\C_2\) with similar property can be obtained for  constraints \eqref{mat21}-\eqref{mat22}. Let \(\ba=(e^{z_i^\star} \bv_i \bX^{-1} \bv_i)_{i\in[n]}\). For \(S\subseteq[n]\), we denote \(\ba_{S}\) a vector obtained from \(\ba\) by setting \(a_i=0\) for each coordinate \(i\notin S\). Let \(\F=\set{\emptyset\subsetneq S\subseteq [n]:\sum_{i\in S} x_i  e^{z_i^\star} \bv_i \bX^{-1} \bv_i = r^\star(S)}\) be the set of tight constraints in \eqref{mat21}-\eqref{mat22}.
The uncrossing argument applies to show that \(\F\) is also closed under union and intersection as follow(s).
\begin{lemma}
If \(A,B\in\F\), then \(A\cup B,A\cap B\in \F\). Moreover, \(\ba_A+\ba_B=\ba_{A\cup B} + \ba_{A\cap B}\).
\end{lemma}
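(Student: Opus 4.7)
The plan is a standard submodular uncrossing argument applied to the tight constraints of \LP{} at the point $\bx^\star$. Define the ``set-valued'' evaluation $\bw_{x^\star}(S) := \sum_{i\in S} x_i^\star e^{z_i^\star}\bv_i^\top\bX^{-1}\bv_i$, so that $A\in\F$ means $\bw_{x^\star}(A)=r^\star(A)$, and $A\in\F$ implies also $\bw_{x^\star}(A)\le r^\star(A)$ generally since $\bw_{x^\star}\in\P(\M^\star)$ by constraints \eqref{mat21}--\eqref{mat22}. The two ingredients I will combine are (i) the rank function $r^\star$ of the matroid $\M^\star$ is submodular, and (ii) the set function $S\mapsto\bw_{x^\star}(S)$ is \emph{modular}, i.e. $\bw_{x^\star}(A)+\bw_{x^\star}(B)=\bw_{x^\star}(A\cup B)+\bw_{x^\star}(A\cap B)$ for any $A,B\subseteq[n]$, because it is a sum of coordinates of a fixed vector.

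The main chain of (in)equalities is then
\[
\bw_{x^\star}(A)+\bw_{x^\star}(B)
=\bw_{x^\star}(A\cup B)+\bw_{x^\star}(A\cap B)
\le r^\star(A\cup B)+r^\star(A\cap B)
\le r^\star(A)+r^\star(B)
=\bw_{x^\star}(A)+\bw_{x^\star}(B),
\]
where the first equality uses modularity of $\bw_{x^\star}$, the first inequality uses $\bw_{x^\star}\in\P(\M^\star)$, the second uses submodularity of $r^\star$, and the final equality uses the assumption $A,B\in\F$. Since the first and last expressions are equal, both inequalities must be tight, yielding $\bw_{x^\star}(A\cup B)=r^\star(A\cup B)$ and $\bw_{x^\star}(A\cap B)=r^\star(A\cap B)$, i.e.~$A\cup B,\,A\cap B\in\F$.

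For the ``moreover'' part, the identity $\ba_A+\ba_B=\ba_{A\cup B}+\ba_{A\cap B}$ is purely set-theoretic: for every coordinate $i\in[n]$, the indicator identity $\mathbf{1}_{i\in A}+\mathbf{1}_{i\in B}=\mathbf{1}_{i\in A\cup B}+\mathbf{1}_{i\in A\cap B}$ holds, and multiplying through by $a_i$ and collecting over $i$ gives the desired vector equation coordinate-wise. No submodularity is needed here; it is automatic from the definition of $\ba_S$ as the restriction of $\ba$ to coordinates in $S$. The only real content of the lemma is therefore the uncrossing step above, which is short and parallels the standard argument of Lemma 5.2.4 of~\cite{lau2011iterative} used in the preceding sentence for the chain $\C_1$. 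I do not anticipate any obstacle beyond correctly invoking submodularity of $r^\star$, which in turn follows from Lemma~\ref{lem:sparse-another-matroid} (that $\M^\star$ is a matroid) and the standard submodularity of matroid rank functions recorded in Appendix~\ref{sec:matroidprelim}.
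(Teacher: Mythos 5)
Your proof is correct and is essentially identical to the paper's: the paper proves the uncrossing step via the same four-link chain (tightness of $A,B$, modularity of $S\mapsto\sum_{i\in S}x_i a_i$, feasibility of constraints \eqref{mat21}, and submodularity of $r^\star$), concluding that both inequalities are tight, and likewise dismisses the vector identity $\ba_A+\ba_B=\ba_{A\cup B}+\ba_{A\cap B}$ as an immediate set-theoretic fact. You merely traverse the chain starting from the modular side rather than the rank side, and you correctly note the one point the paper leaves implicit, namely that tightness of the \emph{sum} together with the individual feasibility inequalities forces each of $A\cup B$ and $A\cap B$ to be tight separately.
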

\begin{proof}
The proof follows similarly from the proof for a base polytope (see Lemma 5.2.2 of \cite{lau2011iterative}).
We have
\begin{align*}
r^\star(A)+r^\star(B)&=\sum_{i\in A} x_i  a_i+\sum_{i\in B} x_i  a_i \\
 &=\sum_{i\in A\cup B} x_i  a_i+\sum_{i\in A\cap B} x_i  a_i \\
&\leq r^\star(A\cup B)+r^\star(A\cap B)\\
&\leq r^\star(A)+r^\star(B)
\end{align*}
The first equality is by \(A,B\in\F\). The first inequality follows from constraints \eqref{mat21}. The last inequality follows from submodularity of rank function of a matroid. The equality \(\ba_A+\ba_B=\ba_{A\cup B} + \ba_{A\cap B}\) is straight-forward from the basic set property.
\end{proof}

The rest of the proof to show an existence of \(\C_2\) follows similarly from the standard uncrossing argument (Lemma 5.2.4 of \cite{lau2011iterative}). Note that chains \(\C_1,\C_2\) obtained from the
uncrossing argument are in the same linear program with some tight constraints \(P\) in \eqref{vec1}. However, we may remove linearly dependent constraints when we take the set of constraints in \(\C_1,\C_2,P\) together until we have linearly independent constraints.

Finally, \LP{}, which has \(n\) variables,  must have \(n\) linearly independent constraints to specify an extreme solution. Since the number of tight constraints in \eqref{mat11}-\eqref{vec1} is \(|\C_1|+|\C_2|+|P|\), there are \(n-|\C_1|+|\C_2|+|P|\) tight constraints in \eqref{LP:x-geq-0}. Therefore, \(|\supp{\bx}|=|\C_1|+|\C_2|+|P|\).
\end{proofof}

\cut{
\paragraph{Notation for Lemma \ref{lem:restrict-U-LP}} We recall the notation in Lemma \ref{lem:extreme-LP-uncross}.
For matroids \(\M=([n],\I)\) and \(\M^\star=([n],\I^\star)\) (and their corresponding rank functions \(r,r^\star\)) and \(\bz^\star\in\R^n\), we consider \LP. Let \(\bx^\star\) be an  extreme feasible solution to \LP.  Suppose $\C_1,\C_2 \subseteq 2^{[n]}$ and $P\subseteq [d]\times[d]$
are such that $x^\star(S)=r(S)$ for each $S\in \C_1$ and $\bw_x(S)=r^\star(S)$ for each $S\in \C_2$ and  $(\sum_{i=1}^n x_i e^{z_i^\star}\bv_i \bv_i^\top)_{jk}=(\sum_{i=1}^n  x_i^\star e^{z_i^\star} \bv_i \bv_i^\top)_{jk}$ for each $(j,k)\in P$. Also, suppose that constraints corresponding to sets in $\C_1, \C_2$ and pairs in $P$ are linearly independent.

We define another LP with ground set restricted to \(\bx^\star\). Let \(\tilde \U = \supp{\bx^\star}\). Let \(\tilde \bx=\bx^\star_{|\tilde\U}\),  \(\tilde \bz=\bz^\star_{|\tilde\U}\), and \(\tilde \bw_x=(\bw_{x})_{|\tilde \U'}\).
Let  \(\tilde \M=([n],\tilde\I)\) and \(\tilde\M^\star=([n],\tilde\I^\star)\) (and their corresponding rank functions \(\tilde r,\tilde r^\star\)) be defined by \(\tilde \I= \set{S\subseteq \tilde \U: S\in\I}\)
and \(\tilde \I^\star= \set{S\subseteq \tilde \U: S\in\I^\star}\).
Our goal is to produce sets similar to $\C_1, \C_2,P$ with the same property as  in Lemma \ref{lem:extreme-LP-uncross} on \(\tilde\U\).
\begin{lemma} \label{lem:restrict-U-LP}

Let \(\tilde\C_1=\set{S\cap \tilde\U:S\in\C_1}\), \(\tilde\C_2=\set{S\cap \tilde\U:S\in\C_2}\), and \(\tilde P=\set{(j,k)\in P:j,k\in \tilde \U}\).
We consider the program \eqref{LP:start}-\eqref{LP:x-geq-0} with input \(\tilde \bx,\tilde \bz,\tilde r,\tilde r^\star\) over the ground set \(\tilde \U\) (\LP[$\tilde x$]). Then we have the followings:
\begin{enumerate}
\item
For all \(S\in\tilde\C_1\), \(\tilde x(S)=\tilde r(S)\);
for all \(S\in\tilde\C_2\), \(\tilde \bw_x(S)=\tilde r^\star(S)\); and for all $(j,k)\in\tilde P$,  $(\sum_{i\in\tilde\U} x_i e^{z_i^\star}\bv_i \bv_i^\top)_{jk}=(\sum_{i\in\tilde\U}  x_i^\star e^{z_i^\star} \bv_i \bv_i^\top)_{jk}$.
\item
Constraints corresponding to sets in $\tilde\C_1, \tilde\C_2$ and pairs in $\tilde P$ are linearly independent.
\item
\(\tilde \bx\) is an extreme feasible solution to \LP[$\tilde x$].
\item $|\supp{\tilde\bx}|=|\tilde\C_1|+|\tilde\C_2|+|\tilde P|$.
\end{enumerate}

\end{lemma}

\begin{proof}
[To be added]
\end{proof}
}

\section{Proofs from Section \ref{sec:rounding} } \label{sec:app-rounding}

\begin{proofof}{Lemma \ref{lem:relate}} We prove this lemma using the inequality proven for log-concave polynomials in~\cite{AnariGV18} (see Lemma~\ref{lem:AGV_bound} in Appendix) by setting up an appropriate polynomial.
For a given $\bx^\star$, let $g(y_1,\dots,y_n)  = \det\pr{\sum_{i=1}^n y_i x_i^\star \bv_i \bv_i^\top}$. For a matroid $\M = ([n], \I)$, let
\[ W = \{ [n]\setminus S \mid S \in \I_d\}.\]
Let $\I^\star $ be the set of all subsets of sets in $W$. That is, 
\begin{equation} \label{eq:dual-matroid}
\I^\star= \{ S \subseteq [n] \mid \exists T \in W \text{ such that }S \subseteq T\}.
\end{equation}
The following claim follows from the fact that for any matroid, independent sets of a fixed size form a basis of another matroid and that complements of these independent sets form a basis of the dual matroid (see Chapter 2, Theorem 1 in \cite{Welsh10}).
\begin{claim}
For a matroid \(\M=([n],\I)\) and \(\I^\star\) defined as in \eqref{eq:dual-matroid}, the set system $\M^\star=([n], \I^\star)$ is a matroid with basis set $W= \{ [n]\setminus S \mid S \in \I_d\}$.
\end{claim}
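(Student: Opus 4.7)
The plan is to construct $\M^\star$ explicitly as the dual of the rank-$d$ truncation of $\M$, following the hint in the preceding paragraph. Since the rank $k$ of $\M$ satisfies $k \geq d$, the truncation $\M_d := ([n], \I_{\leq d})$, where $\I_{\leq d} = \{S \in \I : |S| \leq d\}$, is itself a matroid, and its collection of bases is exactly $\I_d$ (all its bases have the same size $d$). This is standard and can be verified directly from the basis exchange axiom, or invoked from Welsh.

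Next, I would apply matroid duality to $\M_d$. For any matroid $\mathcal{N}$ on ground set $[n]$ with basis family $\mathcal{B}$, the dual matroid $\mathcal{N}^\star$ is a matroid with basis family $\{[n]\setminus B : B \in \mathcal{B}\}$. Applied to $\M_d$, this yields a matroid whose bases are precisely $\{[n]\setminus S : S \in \I_d\} = W$. Call this dual matroid $\M_d^\star$.

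Finally, I would show that $\M^\star = \M_d^\star$ as set systems, which will establish the claim. In any matroid, the independent sets are exactly the subsets of the bases (by the hereditary axiom and the fact that every independent set extends to a basis). Therefore the independent sets of $\M_d^\star$ are exactly $\{S \subseteq [n] : \exists T \in W \text{ such that } S \subseteq T\}$, which matches the definition of $\I^\star$ in \eqref{eq:dual-matroid} verbatim. Hence $\M^\star = ([n], \I^\star) = \M_d^\star$ is a matroid with basis set $W$, as claimed.

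There is really no significant obstacle here: the result is a direct composition of two standard matroid constructions (truncation and duality). The only point worth stating explicitly is that the truncation step requires $k \geq d$, which is ensured by the standing assumption that $\M$ has rank $k \geq d$ (otherwise $\I_d$ would be empty and the claim would be vacuous). A cleaner alternative, if one wished to avoid invoking truncation and duality as black boxes, would be to verify the basis exchange axiom directly for $W$: given $B_1, B_2 \in W$ and $e \in B_1 \setminus B_2$, write $B_i = [n]\setminus S_i$ with $S_i \in \I_d$, note $e \in S_2 \setminus S_1$, apply strong basis exchange in $\M_d$ to find $f \in S_1 \setminus S_2$ with $S_1 - f + e \in \I_d$, and conclude $B_1 - e + f \in W$; but the two-line appeal to standard theory is more transparent.
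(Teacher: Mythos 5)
Your proof is correct and matches the paper's argument exactly: the paper also justifies this claim by composing truncation (independent sets of a fixed size form the bases of another matroid) with matroid duality (complements of bases form the bases of the dual), citing Welsh. Your write-up simply spells out the details that the paper leaves as a one-line citation.
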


Let $h(z_1,\dots,z_n)$ be the bases generating polynomial of $\M^\star$. That is,
\[ h(z_1,\dots,z_n) = \sum_{S \in \I_d}\Pi_{i \in [n]\setminus S}  z_i = \sum_{S \in \I_d} z^{[n]\setminus S} \]
We use several lemmas, which can be found in Appendix \ref{sec:prelim}. For a nonzero scalar \(a\in\R\) and a vector \(\bp\in\R^n\), we let \(\frac \bp a := [\frac{p_i}{a}]_{i=1}^n\) be a vector obtained from element-wise division. For vectors \(\by,\bz,\bp\in\R^n\), we let \(\by^\bp:=\prod_{i=1}^ny_i^{p_i}\) and \(\bz^{1-\bp}:=\prod_{i=1}^n z_i^{1-p_i}\). By Lemmas \ref{lem:matroid_log_concave} and \ref{lem:det-is-log-concave}, both $g$ and $h$ are completely log-concave polynomials. By Lemma~\ref{lem:prod_complete_log_concave}, $g(\by)h(\bz)$ is a completely log-concave polynomial. Hence, by Lemma~\ref{lem:AGV_bound}, for any $\mathbf{p} \in [0,1]^n$,

\begin{equation}\label{eq:anari_gv_identity}
 \pr{\Pi_{i=1}^n ({\partial}_{y_i} + {\partial}_{z_i})} g(\mathbf{y})h(\mathbf{z}) |_{\mathbf{y} = \mathbf{z} = 0} \geq \pr{\frac{\mathbf{p}}{e^2}}^{\mathbf{p}} \inf_{\by,\bz \in \rR_{>0}^n} \frac{g(\by)h(\bz)}{\by^{\mathbf{p}} \bz^{1-\mathbf{p}}}.
\end{equation}

We first simplify the left-hand side of \eqref{eq:anari_gv_identity}.
\begin{claim}
We have $\pr{\Pi_{i=1}^n ({\partial}_{y_i} + {\partial}_{z_i})} g(\mathbf{y})h(\mathbf{z}) |_{\mathbf{y} = \mathbf{z} = 0} = \sum_{S \in \I_d} \det\pr{\sum_{i\in S} x_i^\star \bv_i \bv_i^\top}$
\end{claim}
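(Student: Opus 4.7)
The plan is to expand the product of differential operators $\prod_{i=1}^n(\partial_{y_i}+\partial_{z_i})$ by distributivity into a sum over subsets $S \subseteq [n]$, where for each $i$ we pick $\partial_{y_i}$ if $i \in S$ and $\partial_{z_i}$ if $i \in [n]\setminus S$. Since $g(\by)$ depends only on $\by$ and $h(\bz)$ depends only on $\bz$, each term in this expansion factors as a product:
\begin{equation*}
\prod_{i=1}^n (\partial_{y_i}+\partial_{z_i}) g(\by)h(\bz) = \sum_{S\subseteq [n]} \left(\prod_{i\in S}\partial_{y_i} g(\by)\right)\left(\prod_{i\in [n]\setminus S}\partial_{z_i} h(\bz)\right).
\end{equation*}
We then evaluate each factor at $\by=\bz=0$.

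First, I would compute the $g$-factor. By the Cauchy--Binet formula applied to $g(\by)=\det\pr{\sum_{i=1}^n y_i x_i^\star \bv_i\bv_i^\top}$, we obtain
\begin{equation*}
g(\by) = \sum_{T \subseteq [n],\,|T|=d} \pr{\prod_{i\in T} y_i}\det\pr{\sum_{i\in T} x_i^\star \bv_i \bv_i^\top},
\end{equation*}
which is multilinear and homogeneous of degree $d$ in $\by$. Hence $\pr{\prod_{i\in S}\partial_{y_i}} g(\by)\big|_{\by=0}$ vanishes unless $|S|=d$, in which case it equals $\det\pr{\sum_{i\in S} x_i^\star \bv_i \bv_i^\top}$.

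Next, I would compute the $h$-factor. Since $h(\bz)=\sum_{T\in\I_d} z^{[n]\setminus T}$ is a sum of squarefree monomials of degree $n-d$, the quantity $\pr{\prod_{i\in [n]\setminus S}\partial_{z_i}} h(\bz)\big|_{\bz=0}$ extracts the coefficient of the monomial $\prod_{i\in [n]\setminus S} z_i$. This coefficient is $1$ if $[n]\setminus S=[n]\setminus T$ for some $T\in\I_d$, i.e., if $S\in\I_d$, and $0$ otherwise.

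Combining the two factors, only subsets $S\in\I_d$ (which automatically have $|S|=d$) contribute, and each contributes $\det\pr{\sum_{i\in S} x_i^\star \bv_i \bv_i^\top}\cdot 1$. This yields
\begin{equation*}
\prod_{i=1}^n(\partial_{y_i}+\partial_{z_i}) g(\by)h(\bz)\big|_{\by=\bz=0} = \sum_{S\in\I_d} \det\pr{\sum_{i\in S} x_i^\star \bv_i \bv_i^\top},
\end{equation*}
as claimed. No step presents a real obstacle; the entire argument is a straightforward bookkeeping of which monomials survive differentiation and evaluation at zero.
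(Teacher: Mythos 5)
Your proof is correct and follows essentially the same route as the paper's: expand $\prod_{i=1}^n(\partial_{y_i}+\partial_{z_i})$ over subsets, apply Cauchy--Binet to $g$, and observe that only the diagonal terms $T=S_1=S_2$ survive evaluation at zero. The only cosmetic difference is that you factor the operator's action on $g(\by)h(\bz)$ into separate actions on $g$ and $h$ before expanding into monomials, while the paper expands both polynomials first; the content is identical.
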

\begin{proof}
By the Cauchy-Binet formula,
\[ g(\by) = \det\pr{\sum_{i=1}^n y_i x_i^\star \bv_i \bv_i^\top} = \sum_{S \in {[n]\choose d}} \det\pr{\sum_{i\in S} y_i x_i^\star \bv_i \bv_i^\top} = \sum_{S \in {[n]\choose d}} y^S\det\pr{\sum_{i\in S}  x_i^\star \bv_i \bv_i^\top}.  \]
By definition, $h(\bz) = \sum_{S \in \I_d} z^{[n] \setminus S}$. Hence,
\[ g(\by)h(\bz) = \sum_{S_1 \in {[n]\choose d}} \sum_{S_2 \in \I_d} y^{S_1} z^{[n] \setminus S_2} \det\pr{\sum_{i\in S_1}  x_i^\star \bv_i \bv_i^\top}.\]
So, we have
\begin{align*}
\pr{\Pi_{i=1}^n ({\partial}_{y_i} + {\partial}_{z_i})} g(\mathbf{y})h(\mathbf{z})   & = \pr{\sum_{T \subseteq [n]} \Pi_{i \in T} {\partial}_{y_i} \Pi_{j \in [n] \setminus T} {\partial}_{z_j} } \sum_{S_1 \in {[n]\choose d}} \sum_{S_2 \in \I_d} y^{S_1} z^{[n] \setminus S_2} \det\pr{\sum_{i\in S_1}  x_i^\star \bv_i \bv_i^\top}\\
& =  \sum_{S_1 \in {[n]\choose d}} \sum_{S_2 \in \I_d} \pr{\sum_{T \subseteq [n]} \pr{\Pi_{i \in T} {\partial}_{y_i} \Pi_{j \in [n] \setminus T} {\partial}_{z_j}}y^{S_1} z^{[n] \setminus S_2} }  \det\pr{\sum_{i\in S_1}  x_i^\star \bv_i \bv_i^\top}.
\end{align*}

It is easy to see that for $T \subseteq [n]$ and $S_1, S_2$ such that $|S_1| = |S_2|$, $\pr{\Pi_{i \in T} {\partial}_i \Pi_{j \in [n] \setminus T} {\partial}_j}y^{S_1} z^{[n] \setminus S_2}$ is equal to $1$ if $T = S_1 = S_2$ and $0$ otherwise. Hence,
\[\pr{\Pi_{i=1}^n ({\partial}_{y_i} + {\partial}_{z_i})} g(\mathbf{y})h(\mathbf{z}) |_{\mathbf{y} = \mathbf{z} = 0} = \sum_{S \in \I_d} \det\pr{\sum_{i\in S} x_i^\star \bv_i \bv_i^\top}\]
finishing the proof of the claim.
\end{proof}

Next, we reformulate the right-hand side of \eqref{eq:anari_gv_identity}. For vectors \(\by,\bw,\bp\in\R^n\), we let \((\by\bw)^\bp:=\prod_{i=1}^n (y_iw_i)^{p_i}\) and \(\bw^{\bp-1}:=\prod_{i=1}^n w_i^{p_i-1}\).

\begin{claim}
We have
\[\inf_{\by,\bz \in \rR_{>0}^n} \frac{g(\by)h(\bz)}{\by^{\mathbf{p}} \bz^{1-\mathbf{p}}} = \inf_{\by, \bw \in \rR_{>0}^n} \frac{\det\pr{\sum_{i=1}^n x_i^\star y_i \bv_i \bv_i^\top} \pr{\sum_{S \in \I_d} w^S }}{\pr{\by\bw}^{\bp}}
\]
\end{claim}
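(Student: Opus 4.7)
The plan is to prove equality of the two infima by a coordinate-wise change of variables that identifies $\bz$ with $\bw$ through the substitution $z_i = 1/w_i$. Since the map $\bw \mapsto (1/w_1,\ldots,1/w_n)$ is a bijection from $\R_{>0}^n$ to $\R_{>0}^n$, it will suffice to show that the integrand on the left equals the integrand on the right after this substitution, and then the infima over the two domains will coincide.

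First, I will simplify $h(\bz)$ under this substitution. Since $h(\bz) = \sum_{S \in \I_d} \prod_{i \notin S} z_i$, plugging in $z_i = 1/w_i$ and factoring out $\prod_{i=1}^n w_i^{-1}$ yields
\begin{equation*}
h(\bz) = \sum_{S \in \I_d} \prod_{i \notin S} w_i^{-1} = \bw^{-\mathbf{1}} \sum_{S \in \I_d} \prod_{i \in S} w_i = \bw^{-\mathbf{1}} \sum_{S \in \I_d} w^S,
\end{equation*}
where $\bw^{-\mathbf{1}} = \prod_{i=1}^n w_i^{-1}$. Next, the denominator transforms as $\bz^{1-\bp} = \prod_{i=1}^n w_i^{-(1-p_i)} = \bw^{\bp - \mathbf{1}}$, so $\by^{\bp}\bz^{1-\bp} = \by^{\bp}\bw^{\bp-\mathbf{1}}$.

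Combining these computations, the integrand on the left becomes
\begin{equation*}
\frac{g(\by)h(\bz)}{\by^{\bp}\bz^{1-\bp}}
= \frac{g(\by) \cdot \bw^{-\mathbf{1}} \sum_{S \in \I_d} w^S}{\by^{\bp}\bw^{\bp-\mathbf{1}}}
= \frac{g(\by) \sum_{S \in \I_d} w^S}{\by^{\bp}\bw^{\bp}}
= \frac{\det\!\pr{\sum_{i=1}^n x_i^\star y_i \bv_i\bv_i^\top}\pr{\sum_{S \in \I_d} w^S}}{(\by\bw)^{\bp}},
\end{equation*}
which is exactly the integrand on the right after recalling the definition of $g(\by)$. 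Because the substitution is a bijection between the two feasible sets and preserves the integrand pointwise, the two infima are equal, which is precisely the claim. I expect no technical obstacle here: everything reduces to a routine bookkeeping check that the exponents in the monomial factors balance out, and the only mild subtlety is the simultaneous factoring of $\bw^{-\mathbf{1}}$ from $h(\bz)$ and $\bw^{-\mathbf{1}}$ absorbed into $\bz^{1-\bp}$, which cancel against each other.
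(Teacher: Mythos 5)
Your proof is correct and follows the same route as the paper: the coordinate-wise change of variables $\bw = 1/\bz$, which is a bijection of $\R_{>0}^n$, followed by the bookkeeping that $h(1/\bw) = \bw^{-\mathbf{1}}\sum_{S\in\I_d} w^S$ and $\bz^{1-\bp} = \bw^{\bp-\mathbf{1}}$, so the factors of $\bw^{-\mathbf{1}}$ cancel and the denominator becomes $(\by\bw)^{\bp}$. Your write-up is, if anything, slightly more explicit than the paper's about the exponent cancellation.
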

\begin{proof}
By a change of variable \(\bw=1/\bz\) coordinate-wise, we have\[ \inf_{\by,\bz \in \rR^n_{>0}} \frac{g(\by) h(\bz)}{\by^{\bp} \bz^{1-\bp}} = \inf_{\by, \bw \in \rR_{>0}^n} \frac{g(\by)h(1/\bw)}{\by^{\bp} \bw^{\bp-1}}\]
Substituting $g$ and $h$ by their definitions, we get
\begin{align*}
\inf_{\by,\bz \in \rR^n_{>0}} \frac{g(\by) h(\bz)}{\by^{\bp} \bz^{1-\bp}}& =  \inf_{\by, \bw \in \rR_{>0}^n} \frac{\det\pr{\sum_{i=1}^n x_i^\star y_i \bv_i \bv_i^\top} \pr{\sum_{S \in \I_d} w^S/w^{[n]}}}{\by^\bp \bw^{\bp-1}}\\
&=\inf_{\by, \bw \in \rR_{>0}^n} \frac{\det\pr{\sum_{i=1}^n x_i^\star y_i \bv_i \bv_i^\top} \pr{\sum_{S \in \I_d} w^S }}{\pr{\by\bw}^{\bp}}
\end{align*}
as claimed.
\end{proof}

Applying the two claims above to the left- and right-hand sides of \eqref{eq:anari_gv_identity}, we get that for any $\bp \in [0,1]^n$,
\[ \sum_{S \in \I_d} \det\pr{\sum_{i\in S} x_i^\star \bv_i \bv_i^\top} \geq \pr{\frac{\mathbf{p}}{e^2}}^{\mathbf{p}}\inf_{\by, \bw \in \rR_{>0}^n} \frac{\det\pr{\sum_{i=1}^n x_i^\star y_i \bv_i \bv_i^\top} \pr{\sum_{S \in \I_d} w^S }}{\pr{\by\bw}^{\bp}}.\]
In particular, if we consider all $\bp \in \P(\I_d)$, we get
\[\sum_{S \in \I_d} \det\pr{\sum_{i\in S} x_i^\star \bv_i \bv_i^\top} \geq \sup_{\bp \in \P(\I_d)} \pr{\frac{\mathbf{p}}{e^2}}^{\mathbf{p}}\inf_{\by, \bw \in \rR_{>0}^n} \frac{\det\pr{\sum_{i=1}^n x_i^\star y_i \bv_i \bv_i^\top} \pr{\sum_{S \in \I_d} w^S }}{\pr{\by\bw}^{\bp}}.\]
For any $\bp \in \P(\I_d)$, we have $\sum_{i=1}^n p_i = d$. Hence,
\[\sum_{S \in \I_d} \det\pr{\sum_{i\in S} x_i^\star \bv_i \bv_i^\top} \geq e^{-2d} \sup_{\bp \in \P(\I_d)} \inf_{\by, \bw \in \rR_{>0}^n} \frac{\det\pr{\sum_{i=1}^n x_i^\star y_i \bv_i \bv_i^\top} \pr{\sum_{S \in \I_d} w^S }}{\pr{\frac{\by\bw}{\bp}}^{\bp}}.\]
By changing variable $\bp$ to ${\balpha}$, we get the desired result.
\end{proofof}

\begin{proofof}{Lemma~\ref{lem:stronger_cp}}
Let $R = \inf_{\bz\in \Z} \det\pr{\sum_{i=1}^n x_i^\star e^{z_i} \bv_i \bv_i^\top}$.
By the change of variable $y_i = e^{z_i}$, we get 
\begin{equation}\label{eq:temp_prelim_1}
R = \inf_{\by > 0: \forall S \in \I_d, y^S \geq 1} \det\pr{\sum_{i=1}^n x_i^\star y_i \bv_i \bv_i^\top}.
\end{equation}
We now claim a condition to check the feasibility of \(\by\) of the infimum \eqref{eq:temp_prelim_1}.
\begin{claim}
For any $\by \in \rR_{\geq 0}^n$,
\begin{center}
$y^S \geq 1$ for all $S \in \I_d$ if and only if $y^{\balpha} \geq 1$ for all ${\balpha} \in \P(\I_d)$.
\end{center}
\end{claim}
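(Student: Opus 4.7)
The plan is to exploit the definition of the base polytope directly: by Corollary~40.2d in \cite{schrijver2003combinatorial} (quoted as Lemma on the characterization of base polytope in Appendix~\ref{sec:matroidprelim}, applied to the matroid $([n],\I_d)$ whose bases are exactly the sets in $\I_d$), every $\balpha\in\P(\I_d)$ can be written as a convex combination
\[
\balpha=\sum_{S\in\I_d}\lambda_S 1_S,\qquad \lambda_S\ge 0,\qquad \sum_{S\in\I_d}\lambda_S=1.
\]
With this representation in hand, the equivalence becomes a one-line computation.

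For the ``if'' direction I would simply note that each indicator vector $1_S$ with $S\in\I_d$ lies in $\P(\I_d)$, and $y^{1_S}=\prod_{i\in S}y_i=y^S$, so assuming $y^\balpha\ge 1$ for every $\balpha\in\P(\I_d)$ yields $y^S\ge 1$ for every $S\in\I_d$.

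For the ``only if'' direction, assume $y^S\ge 1$ for every $S\in\I_d$. Given $\balpha\in\P(\I_d)$ and a convex decomposition $\balpha=\sum_{S\in\I_d}\lambda_S 1_S$ as above, I would compute
\[
y^{\balpha}
=\prod_{i=1}^n y_i^{\alpha_i}
=\prod_{i=1}^n y_i^{\sum_{S\in\I_d}\lambda_S (1_S)_i}
=\prod_{S\in\I_d}\prod_{i=1}^n y_i^{\lambda_S (1_S)_i}
=\prod_{S\in\I_d}(y^S)^{\lambda_S}.
\]
Since each factor satisfies $y^S\ge 1$ and $\lambda_S\ge 0$, every $(y^S)^{\lambda_S}\ge 1$, so $y^{\balpha}\ge 1$ as desired.

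The only subtle point, which I do not expect to be a real obstacle, is the handling of coordinates where $y_i=0$: one must check that $y^{\balpha}$ is well-defined. If $\alpha_i>0$ then $i$ appears in some $S\in\I_d$ with $\lambda_S>0$ in the chosen decomposition, and the hypothesis $y^S\ge 1$ forces $y_i>0$; coordinates with $\alpha_i=0$ contribute $y_i^0=1$ by the usual convention. Thus $y^{\balpha}$ is well-defined on the support of $\balpha$ and the displayed identity above is valid, completing the proof.
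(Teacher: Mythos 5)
Your proof is correct and follows essentially the same route as the paper's: write $\balpha\in\P(\I_d)$ as a convex combination $\sum_S\lambda_S 1_S$ of indicator vectors, compute $y^{\balpha}=\prod_S(y^S)^{\lambda_S}\ge 1$ for one direction, and specialize to $\balpha=1_S$ for the other. Your extra remark on coordinates with $y_i=0$ is a minor point the paper leaves implicit, but it does not change the argument.
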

\begin{proof}
Suppose $y^S \geq 1$ for all $S \in \I_d$. Let ${\balpha} \in \P(\I_d)$. Then, there exists $\blambda \in \rR^{\I_d}_{\geq 0}$ such that $\sum_{S \in \I_d} \lambda_S = 1$ and ${\balpha} = \sum_{S \in \I_d} \lambda_S 1_S$ where $1_S$ is an indicator vector of a set $S$.
Then, we have
\[ y^{\balpha} = y^{\sum_{S \in \I_d} \lambda_S 1_S} = \Pi_{S \in \I_d} \pr{\Pi_{i \in S} y_i}^{\lambda_S} \geq \Pi_{S \in \I_d} 1^{\lambda_S} \geq 1,\]
proving one direction of the claim.
Next, suppose that $y^{\balpha} \geq 1$ for any ${\balpha} \in \P(\I_d)$. Note that $1_S \in \P(\I_d)$ for any $S \in \I_d$, so we may use \(y^{\balpha} \geq 1\) with \(\bal=1_S\). Hence, $y^S \geq 1$ for any $S \in \I_d$.
\end{proof}

Applying the above claim to \eqref{eq:temp_prelim_1}, we get
\[ R = \inf_{\by > 0: \forall {\balpha} \in \P(\I_d), y^{\balpha} \geq 1} \det\pr{\sum_{i=1}^n x_i^\star y_i \bv_i \bv_i^\top}.\]
Since $\det\pr{\sum_{i=1}^n x_i^\star y_i \bv_i \bv_i^\top}$ is a degree $d$ polynomial in $\by$,
\begin{align*}
R &= \inf_{\by > 0} \frac{\det\pr{\sum_{i=1}^n x_i^\star y_i \bv_i \bv_i^\top}}{\inf_{{\balpha} \in \P(\I_d)} \by^{\balpha}}  =  \inf_{\by > 0} \sup_{{\balpha} \in \P(\I_d)} \frac{\det\pr{\sum_{i=1}^n x_i^\star y_i \bv_i \bv_i^\top}}{\by^{\balpha}}.
\end{align*}
Applying Claim \ref{claim:switch-z-alpha} (Sion's minimax theorem) on \(\log R\), we get
\begin{align}
R=\inf_{\by >0} \sup_{{\balpha} \in \P(\I_d)} \frac{\det\pr{\sum_{i=1}^n x_i^\star y_i \bv_i \bv_i^\top}}{\by^{\balpha}} = \sup_{{\balpha} \in \P(\I_d)} \inf_{\by > 0} \frac{\det\pr{\sum_{i=1}^n x_i^\star y_i \bv_i \bv_i^\top}}{\by^{\balpha}}. \label{eq:R-y-no-w}
\end{align}

Next, we relate the right-hand side of \eqref{eq:R-y-no-w} to  the left-hand side of the inequality in Lemma~\ref{lem:stronger_cp} by the following claim. We denote \(\pr{\frac{\bw}{{\balpha}}}^{{\balpha}}:=\prod_{i=1}^n \pr{\frac {w_i} {\alpha_i}}^{\alpha_i} \).
\begin{claim}\label{claim:max_entropy}
For any $\bw \geq 0$ and ${\balpha} \in \P(\I_d)$, we have $\sum_{S \in \I_d} w^S \geq \pr{\frac{\bw}{{\balpha}}}^{{\balpha}}$.
\end{claim}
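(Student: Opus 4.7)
The plan is to prove Claim~\ref{claim:max_entropy} via the Gibbs variational principle combined with the max-entropy distribution guaranteed by Lemma~\ref{lem:max_entropy_dist} and the entropy inequality for log-concave distributions in Lemma~\ref{lem:compare_entropy}. First I reduce to the case where $\bw > 0$ coordinatewise and $\bal$ lies in the relative interior of $\P(\I_d)$: the inequality is trivial if some $w_i = 0$ with $\alpha_i > 0$ (right-hand side is $0$), and the boundary cases in $\bal$ will follow from the interior case by continuity, using the convention $0^0 = 1$.

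Assuming $\bw > 0$ and $\bal$ is in the relative interior of $\P(\I_d)$, I apply Lemma~\ref{lem:max_entropy_dist} (instantiated with the bases of the matroid $([n], \I_d(\M))$ so that the convex closure of the indicator vectors is exactly $\P(\I_d)$) to obtain a distribution $\mu$ on $\I_d$ of the form $\mu(S) \propto \blambda^S$ for some $\blambda > 0$ with marginals $\Pr_{S \sim \mu}[i \in S] = \alpha_i$ for each $i \in [n]$. Since the basis generating polynomial $\sum_{S \in \I_d} z^S$ is completely log-concave by Lemma~\ref{lem:matroid_log_concave}, Lemma~\ref{lem:log_concavity_preserve} implies that the generating polynomial of $\mu$, namely $\sum_S \mu(S) z^S$, is log-concave as well (rescaling variables and multiplying by a constant preserves log-concavity). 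Thus $\mu$ is a log-concave distribution, and Lemma~\ref{lem:compare_entropy} yields
\[
\mathcal{H}(\mu) \;\geq\; \sum_{i=1}^n \alpha_i \log \frac{1}{\alpha_i}.
\]

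The Gibbs variational principle states that for any probability distribution $\nu$ on $\I_d$,
\[
\log \sum_{S \in \I_d} w^S \;\geq\; \sum_{S \in \I_d} \nu(S)\, \log w^S \;+\; \mathcal{H}(\nu).
\]
Applying this with $\nu = \mu$, using linearity of expectation to rewrite $\sum_S \mu(S) \log w^S = \sum_S \mu(S) \sum_{i \in S} \log w_i = \sum_i \alpha_i \log w_i$, and plugging in the entropy bound above gives
\[
\log \sum_{S \in \I_d} w^S \;\geq\; \sum_{i=1}^n \alpha_i \log w_i \;+\; \sum_{i=1}^n \alpha_i \log \frac{1}{\alpha_i} \;=\; \sum_{i=1}^n \alpha_i \log \frac{w_i}{\alpha_i},
\]
which is the claimed inequality after exponentiation. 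The main obstacle is ensuring the entropy inequality from Lemma~\ref{lem:compare_entropy} applies to $\mu$ --- this requires verifying log-concavity of its generating polynomial, which is where the matroid structure (via Lemma~\ref{lem:matroid_log_concave}) enters crucially; once that is in hand, the rest is a short calculation.
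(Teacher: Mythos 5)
Your proposal is correct and is essentially the paper's own proof in logarithmic form: both arguments obtain the max-entropy distribution $\mu$ with marginals $\balpha$ from Lemma~\ref{lem:max_entropy_dist}, verify its log-concavity via Lemmas~\ref{lem:matroid_log_concave} and~\ref{lem:log_concavity_preserve}, invoke the entropy comparison of Lemma~\ref{lem:compare_entropy}, and finish with what the paper phrases as weighted AM-GM and you phrase as the Gibbs variational principle (these are the same inequality). The only cosmetic difference is that you handle boundary $\balpha$ by continuity while the paper restricts to the smallest face of the matroid polytope; both work.
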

\begin{proof} We assume that ${\balpha}$ is strictly inside the base polytope with the base set $\I_d$. If not, we can focus on the matroid with bases corresponding to the vertices of the smallest face in $\P(\I_d)$ containing ${\balpha}$. By Proposition 2.3 in~\cite{FeichtnerS05}, every face of a matroid polytope is a matroid polytope.

Setting $\zeta = \I_d$ and $\mathbf{p} = {\balpha}$ in Lemma~\ref{lem:max_entropy_dist}, we get a distribution $\mu: 2^{\I_d} \rightarrow \rR_+$ and $\lambda_1,\dots,\lambda_n > 0$ such that $\mu(S) \propto \lambda^S$ for $S \in \I_d$. Moreover, ${\balpha}_i = \Pr_{S \sim \mu} [ i \in S]$ and ${\balpha} = \sum_{S \in \I_d} \mu(S) 1_S$. The generating polynomial for $\mu$ is $g_{\mu}(\bz) = \frac{1}{\sum_{S \in \I_d} \lambda^S} \sum_{S \in \I_d} \lambda^S z^S$, which we claim to be log-concave. By Lemma~\ref{lem:matroid_log_concave}, $\sum_{S \in \I_d} z^S$ is log-concave. By Lemma~\ref{lem:log_concavity_preserve}, substituting $z_i$ by $\lambda_i z_i$ in and multiplying with a constant $\frac{1}{\sum_{S \in \I_d} \lambda^S}$ to the polynomial \(\sum_{S \in \I_d} z^S\) preserve log-concavity. Hence, $g_{\mu}(\bz)$ is log-concave, as claimed, and so $\mu$ is a log-concave distribution. 

By Lemma~\ref{lem:compare_entropy},
\[ \sum_{S \in \I_d } \mu(S) \log \frac{1}{\mu(S)} \geq \sum_{i =1}^n {\balpha}_i \log \frac{1}{{\balpha}_i}\]
which is equivalent to
\[ \prod_{S \in \I_d} \frac{1}{\pr{\mu(S)}^{\mu(S)}} \geq \prod_{i=1}^n \frac{1}{{\balpha}_i^{{\balpha}_i}} =\frac{1}{{\balpha}^{\balpha}}.\]

Now, we are ready to prove the claim. We have
\begin{align*}
\pr{\frac{\bw}{{\balpha}}}^{\balpha} & = \frac{\bw^{\balpha}}{{\balpha}^{\balpha}} = \frac{\bw^{\pr{\sum_{S \in \I_d} \mu(S) 1_S}}}{{\balpha}^{\balpha}}\\
& \leq \frac{\Pi_{S \in \I_d} \bw^{\mu(S)1_S}}{\Pi_{S \in \I_d} \mu(S)^{\mu(S)}} = \Pi_{S \in \I_d} \pr{\frac{\bw^{1_S}}{\mu(S)}}^{\mu(S)}\\
& \leq \sum_{S \in \I_d} \bw^{1_S} = \sum_{S \in \I_d} w^S.
\end{align*}
where the last inequality follows from the weighted AM-GM inequality since $\sum_{S \in \I_d} \mu(S) = 1$.
\end{proof}

We continue of the proof of the lemma. By \eqref{eq:R-y-no-w} and Claim~\ref{claim:max_entropy}, for any $\bw \geq 0$, we have
\[ R \leq  \sup_{{\balpha} \in \P(\I_d)} \inf_{\by \geq 0} \frac{\det\pr{\sum_{i=1}^n x_i^\star y_i \bv_i \bv_i^\top}}{y^{\balpha}} \frac{\sum_{S \in \I_d} w^S}{\pr{\frac{\bw}{{\balpha}}}^{{\balpha}}}.\]
Therefore,
\[R=\inf_{\bz\in \Z} \det\pr{\sum_{i=1}^n x_i^\star e^{z_i} \bv_i \bv_i^\top} \leq \sup_{{\balpha} \in \P(\I_d)} \inf_{\by,\bw>0} \frac{\det\pr{\sum_{i=1}^n x_i^\star y_i \bv_i \bv_i^\top} \pr{\sum_{S \in \I_d} w^S }}{\pr{\frac{yw}{{\balpha}}}^{\balpha}} \]
finishing the proof of the lemma.
\end{proofof}

\section{Oblivious Rounding Scheme}\label{sec:no_pc_dist}

In this section, we show that none of the previous approaches for \DetMax\ yield an approximation factor independent of the size of the output solution $k$ even if the dimension of the vectors $d$ is $2$. Formally, we show that any relaxation and rounding schemes satisfying the following properties cannot achieve an approximation factor independent of $k$. 
\begin{itemize}
\item Let the relaxation be $\sup_{\bx \in \P(\M)} g(\bx)$ for some function $g$. Then, for any $\bx  \in \P(\M)$ which we write as \(\bx= \sum_{T \in \B} \lambda_T 1_T\) for \( \sum_{T \in \B} \lambda_T =1\) and \(\lambda\geq0\), we have $g(\bx) \geq \max_{T \in \B} \det\pr{\lambda_T \sum_{i \in T} \bv_i \bv_i^\top}$.
\item Given $\bx \in \P(\M)$, the rounding scheme outputs a solution $T \in \B$ with probability dependent only on $\bx$ and $\M$ (and so independent of $\bv_i$'s). 
\end{itemize}

We construct an   instance as follow(s).
\paragraph{Matroid $\M$ and $\bx \in \P(\M)$:} Consider the graphic matroid with a graph $G$ on $n+2$ vertices $V = \{a_1,\dots,a_m\} \cup \{b,c\}$ and an edge set $E = \{a_ib,a_ic \mid i \in [m]\}$. All  spanning trees of $G$ are  bases of  matroid \(\M\). Consider a fractional spanning tree $\bx^\star$ such that $x_e^\star = \frac{m+1}{2m}$ for every edge $e \in E$. 

Let the rounding scheme pick a subset of edges with distribution $\mu :2^{E} \rightarrow \rR_+$. Since the rounding scheme outputs a basis of the matroid, it must be that $\mu(F) = 0$ if the graph $(V,F)$ has a cycle. Suppose we sample a subgraph as per distribution $\mu$. We let
\begin{center}
$B_i:=$ the event that both $a_ib$ and $a_ic$ are picked.
\end{center}
We now prove some properties about these events.
\begin{claim}For any $i \neq j$, $\Pr[B_i \cap B_j] = 0$. Hence, $\exists i \in [m]$ such that $\Pr[B_i] \leq \frac{1}{m}$.
\end{claim}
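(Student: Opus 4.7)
The plan is to exploit the cycle structure of the graph $G$ together with the fact that the rounding scheme is constrained to output spanning trees (i.e.\ bases of the graphic matroid $\M$). In particular, for distinct $i,j$ the four edges $a_ib, a_ic, a_jb, a_jc$ form the $4$-cycle $b - a_i - c - a_j - b$. If the event $B_i \cap B_j$ occurred, then the sampled subset $F$ would contain all four of these edges and the subgraph $(V,F)$ would contain this cycle, contradicting the requirement that $F$ be a spanning tree. Since the distribution $\mu$ is supported on spanning trees, i.e.\ $\mu(F)=0$ whenever $(V,F)$ has a cycle, the first part of the claim follows immediately, with essentially no calculation.

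For the ``hence'' part, I will use the disjointness of the $B_i$ in probability together with averaging. Pairwise, $\Pr[B_i \cap B_j]=0$, so by inclusion--exclusion (or just a union bound applied iteratively) one obtains
\[
\sum_{i=1}^m \Pr[B_i] \;=\; \Pr\!\left[\bigcup_{i=1}^m B_i\right] \;\le\; 1.
\]
Therefore the average of $\Pr[B_1],\dots,\Pr[B_m]$ is at most $1/m$, and by pigeonhole there exists some index $i \in [m]$ with $\Pr[B_i] \le 1/m$, as desired.

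I do not anticipate a significant obstacle here: the argument is entirely combinatorial and rests on the two easy observations (existence of the $4$-cycle through any pair $a_i, a_j$, and the fact that pairwise-disjoint events in $[0,1]$ cannot have probabilities summing to more than $1$). The only mild subtlety is to remember to invoke the assumption that the rounding scheme outputs a basis of $\M$ with probability one, which is what forces $\mu$ to vanish on cyclic edge sets; this is where the graphic-matroid structure is really being used.
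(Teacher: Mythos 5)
Your proof is correct and follows essentially the same approach as the paper: both use the $4$-cycle $b$--$a_i$--$c$--$a_j$--$b$ to show $\mu$ assigns zero mass to any outcome realizing $B_i \cap B_j$, and then the pairwise disjointness plus averaging to extract an index with $\Pr[B_i] \le 1/m$.
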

\begin{proof}
If both $B_i$ and $B_j$ occur, then our sampled subgraph contains edges $a_ib,a_ic,a_jb,a_jc$ which implies that there is a cycle in the subgraph. However, by the definition, $\mu(F) = 0$ if the sampled subgraph $(V,F)$ has a cycle. Hence, $\Pr[B_i \cap B_j] =0$. Therefore, $\sum_{i=1}^m \Pr[B_i] =\Pr[\cup_{i=1}^m B_i] 
 \leq 1$, so there exists  $i \in [m]$ such that $\Pr[B_i] \leq \frac{1}{m}$. 
\end{proof}

We continue constructing the instance with the description of input vectors.
\paragraph{Vector Set:} Consider the vector set as follows: $\bv_{a_ib} = \left[\begin{array}{c} 2\\ 0\end{array}\right], \bv_{a_ic} = \left[\begin{array}{c} 0 \\ 2 \end{array}\right]$, and for $j \neq i$, $\bv_{a_jb} = \bv_{a_jc} = \left[ \begin{array}{c} 0 \\ 0 \end{array} \right]$. Since the rounding scheme is oblivious to the set of vectors, we can make such a selection.

By the assumption on the relaxation, we have that $g(\bx^\star) \geq \det\pr{\frac{m+1}{2m}\bv_{a_ib} \bv_{a_ib}^\top + \frac{m+1}{2m}\bv_{a_ic}\bv_{a_ic}^\top} = \frac{(m+1)^2}{m^2} >1$. The rounding scheme with distribution $\mu$ outputs a solution with non-zero value only if both $a_i u$ and $a_iv$ are picked. Hence, the expected objective value of the solution returned is 
\[\Pr[B_i] \det\pr{\bv_{a_ib}\bv_{a_ib}^\top + \bv_{a_ic}\bv_{a_ic}^\top} \leq \frac{4}{m}\] 
and the approximation factor achieved is larger than $\frac{1}{\frac{4}{m}} = \frac{m}{4}$. As $m \rightarrow \infty$, the approximation factor tends to infinity even for $d=2$. 

To construct a similar instance for $d>2$, we  add vectors $\bv_i = \left[\begin{array}{c} 0^{i-1}\\ 1 \\ 0^{d-i}\end{array}\right]$ for  each $i \in \{3,\dots,d\}$ and include them in  the bases of the matroid.

\section{Improved Approximation for a Partition Matroid}\label{sec:partition_approx}

In this section, we show an $e^{3d}$-estimation algorithm for \DetMax{} under a partition matroid.
Algorithm \ref{alg:deterministic} and the same analysis of the algorithm will  imply an efficient derandomization  with approximation factor \(\exp(O(d^3))\) for a partition matroid.
\begin{theorem}\label{thm:partition}
  There is an efficiently computable convex program whose objective value estimates the objective of \DetMax{} problem under a partition matroid constraint within a multiplicative factor of $e^{O(d)}$.
\end{theorem}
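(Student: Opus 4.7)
The plan is to reuse the convex program \CPD{} from Section~\ref{sec:solvability}, which is efficiently solvable by Lemma~\ref{lem:alg}, and argue that its value is within a factor $e^{O(d)}$ of $\OPT{}$ when $\M$ is a partition matroid. Since $\exp(\CP{})\geq \OPT{}$ by Lemma~\ref{lem:relaxation}, it suffices to produce an integral solution of expected value at least $e^{-O(d)}\exp(\CP{})$, sidestepping the sparsity-based machinery of Sections~\ref{sec:sparse}--\ref{sec:rounding} that is responsible for the extra $(O(d))^d$ factor in the general case.

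Given an optimal $\bx^\star$ to \CPD{} and the partition $[n]=\bigsqcup_{j=1}^{r} B_j$ with capacities $k_1,\ldots,k_r$, sample a basis $T=\bigsqcup_j T_j$ by drawing \emph{independently} in each block $B_j$ a size-$k_j$ subset $T_j\subseteq B_j$ from a Strongly Rayleigh distribution with marginals $\Pr[i\in T_j]=x_i^\star$. Such a distribution exists because the elementary symmetric polynomial $e_{k_j}(z_{B_j})$ is real stable, and tilting it by exponential weights $w_i$ matches any prescribed marginals. The Cauchy-Binet formula then gives
\[
\E\!\left[\det\!\left(\sum_{i\in T}\bv_i\bv_i^\top\right)\right] \;=\; \sum_{S\subseteq[n],\,|S|=d}\Pr[S\subseteq T]\,\det\!\left(\sum_{i\in S}\bv_i\bv_i^\top\right).
\]

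The key technical claim, generalizing the real stable analysis of \cite{anari2018nash} for Nash Social Welfare, is that for every $S\in\I_d(\M)$,
\[
\Pr[S\subseteq T] \;=\; \prod_{j=1}^{r}\Pr[S\cap B_j\subseteq T_j] \;\geq\; e^{-d}\prod_{i\in S} x_i^\star,
\]
where independence across blocks yields the factorization, and each block-level factor $\Pr[S\cap B_j\subseteq T_j]\geq e^{-|S\cap B_j|}\prod_{i\in S\cap B_j} x_i^\star$ follows from a Gurvits-style capacity bound for the real stable polynomial $e_{k_j}$ tilted to match the marginals. Since $\sum_j |S\cap B_j|=d$, the per-block losses multiply to $e^{-d}$. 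Substituting into the Cauchy-Binet expansion and applying Lemmas~\ref{lem:relate} and~\ref{lem:stronger_cp} in succession exactly as in the proof of Theorem~\ref{thm:main2} yields
\[
\OPT{} \;\geq\; \E[\det(\cdot)] \;\geq\; e^{-d}\cdot e^{-2d}\cdot\exp(\CP{}) \;=\; e^{-O(d)}\exp(\CP{}),
\]
as required.

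The main obstacle is justifying the per-block bound $\Pr[S\cap B_j\subseteq T_j]\geq e^{-|S\cap B_j|}\prod_{i\in S\cap B_j} x_i^\star$. For rank-one blocks ($k_j=1$, the NSW setting) this holds with equality by construction, since $T_j$ is a single element drawn with probability $x_i^\star$ and any $S\in\I_d(\M)$ meets $B_j$ in at most one element. For $k_j>1$, the bound reduces to a Gurvits-type capacity inequality applied to the tilted real stable polynomial $e_{k_j}$: the probability of containing a prescribed sub-basis is at least the appropriate ``diagonal'' coefficient of the generating polynomial, which by Gurvits's inequality is within an $e^{|S\cap B_j|}$ factor of the corresponding product of marginals. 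The central difficulty, and where the argument must go beyond \cite{anari2018nash}, is that the $d$-dimensional determinantal objective interacts with the block structure through the Cauchy-Binet sum rather than through a product of linear forms as in NSW; however, this interaction only enters the chain through Lemmas~\ref{lem:relate}--\ref{lem:stronger_cp}, which are agnostic to the matroid structure, so the argument separates cleanly.
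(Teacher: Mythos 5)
Your proposal is correct and follows essentially the same route as the paper's Appendix~\ref{sec:partition_approx}: solve \CPD{}, sample independently within each partition block from a product-form (real-stable) distribution, prove $\Pr[S\subseteq T]\geq e^{-d}\,x^S$ for every $S\in\I_d$ by multiplying per-block containment bounds, expand via Cauchy--Binet, and finish with Lemmas~\ref{lem:relate} and~\ref{lem:stronger_cp} to get $e^{-3d}\cdot\OPT$. The only difference is the per-block sampler: the paper draws $T_j$ with probability proportional to $x^{T_j}$ and cites Singh--Xie's Proposition~2 (a containment bound stated in terms of the tilt weights $x^\star$, which is exactly the quantity needed since $\det(\sum_{i\in S}x_i\bv_i\bv_i^\top)=x^S\det(\sum_{i\in S}\bv_i\bv_i^\top)$), whereas you re-tilt to match marginals $x_i^\star$ and appeal to a Gurvits-type capacity bound in terms of marginals --- a true but here unproved variant that also requires $x^\star$ to lie strictly inside each block's polytope, so the paper's choice is the more economical instantiation of the same idea.
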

We start by discussing the rounding scheme  presented in Algorithm~\ref{alg:partition_rounding}.
\begin{algorithm}
\begin{algorithmic}[1]
\State \para{Input:} a partition matroid $\M=([n],\I)$ with bases $\B$, and $\bx \in \P(\M)$\State \para{Output:} a basis $T \in \B$
\State Sample a set $T \in \B$ with probability $\frac{x^T}{\sum_{R \in \B} x^R}$
\State Return $T$\end{algorithmic}
\caption{Rounding Scheme for a Partition Matroid}
\label{alg:partition_rounding}
\end{algorithm}

To see that Algorithm~\ref{alg:partition_rounding} is polynomial time, observe that if we sample a set $W$ of $b_i$ elements from the partition $P_i$ with probability
proportional to $x^W$ for each $i$, then our sample would be a set $T \in \B$ with probability $\frac{x^T}{\sum_{R \in B} x^R}$. Such a sampling can be done efficiently as proved by Singh and Xie~\cite{SinghX18}. Next, we show that for every independent set $S$ of size $d$, we sample a basis containing $S$ with a large probability.
\begin{lemma}\label{lem:partition_prob}
Let $T$ denote the random set returned by Algorithm~\ref{alg:partition_rounding}. Then, for any set $S \in \I_d$, we have
\[ \Pr[S \subseteq T] \geq e^{-d} x^S.\]
\end{lemma}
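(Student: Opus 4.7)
The plan has two conceptual steps: a factorization using the structure of the partition matroid, followed by a per-block symmetric polynomial inequality whose proof rests on real stability.

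For the first step, I would exploit the product structure of the partition matroid. Let $P_1,\ldots,P_\ell$ and $b_1,\ldots,b_\ell$ be the blocks and budgets. Since $\bx \in \P(\M)$ forces $\sum_{j \in P_i} x_j = b_i$ for each $i$, the distribution $\mu(T) \propto x^T = \prod_i x^{T \cap P_i}$ factorizes across the blocks: $T \cap P_i$ is drawn independently, with $\Pr[T \cap P_i = W] = x^W / e_{b_i}(P_i; x)$ for $|W| = b_i$, where $e_k(\cdot; x)$ denotes the $k$-th elementary symmetric polynomial. Writing $S_i := S \cap P_i$ and $s_i := |S_i|$ (so that $\sum_i s_i = d$ and $s_i \leq b_i$ by independence of $S$), a direct count gives
\[
\Pr[S \subseteq T] \;=\; \prod_i \frac{x^{S_i}\, e_{b_i - s_i}(P_i \setminus S_i;\, x)}{e_{b_i}(P_i;\, x)} \;=\; x^S \cdot \prod_i \frac{e_{b_i - s_i}(P_i \setminus S_i;\, x)}{e_{b_i}(P_i;\, x)}.
\]
The lemma then reduces to the per-block inequality
\[
e_{b-s}(P \setminus S;\, x) \;\ge\; e^{-s}\, e_b(P;\, x)
\]
under the conditions $\sum_{j \in P} x_j = b$, $x_j \in [0,1]$, and $|S| = s \leq b$; multiplying these bounds over $i$ then yields $\Pr[S \subseteq T] \ge e^{-\sum_i s_i}\, x^S = e^{-d}\, x^S$.

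The hard part will be establishing this per-block inequality, which is the technical heart of the lemma. My plan is to exploit the real stability of $e_b(P; z)$, which is the basis generating polynomial of the uniform matroid of rank $b$. Substituting $z_j = x_j$ for $j \notin S$ and $z_j = x_j y$ for $j \in S$ produces a univariate polynomial $Q(y) = \sum_{k=0}^s e_k(S; x)\, e_{b-k}(P \setminus S; x)\, y^k$ satisfying $Q(0) = e_b(P \setminus S; x)$, $Q(1) = e_b(P; x)$, and leading coefficient $x^S \cdot e_{b-s}(P \setminus S; x)$. Real stability implies $Q$ is real rooted with nonpositive roots, and I would then bound $Q(1) / (x^S \cdot e_{b-s}(P \setminus S; x))$ from above by $e^s$ via an AM-GM / Newton-inequality argument on the roots of $Q$, exploiting $\sum_j x_j = b$ and $x_j \leq 1$. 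Equivalently, this inequality is an instance of a Gurvits-type capacity lower bound applied to $e_b(P; z)$, yielding $\partial_S e_b(P; z)|_{z = x} \geq e^{-s}\, e_b(P; x)$. A more self-contained alternative is an induction on $s$ that peels off one element $j_0 \in S$ at a time using the multilinear identity $e_b(P; x) = e_b(P \setminus \{j_0\}; x) + x_{j_0}\, e_{b-1}(P \setminus \{j_0\}; x)$, combined with log-concavity of the coefficient sequence of the real-rooted polynomial $\prod_{k \neq j_0}(1 + y x_k)$ and the bound $x_{j_0} \leq 1$ to absorb the resulting error into the factor of $e$.
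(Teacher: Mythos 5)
Your proof follows the paper's argument: the factorization of $\Pr[S\subseteq T]$ into a product of per-block ratios over the partition classes is exactly what the paper does, and the per-block inequality $e_{b-s}(P\setminus S;x)\ge e^{-s}e_b(P;x)$ that you correctly isolate as the technical heart is precisely Theorem~\ref{thm:singhx_uniform} (Proposition~2 of~\cite{SinghX18}), which the paper invokes as a black box. Rather than reproving it via real-rootedness and Newton-type inequalities --- a viable but unfinished sketch in your write-up --- you can simply cite that result.
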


The statement then implies a lower bound on the expected objective value of the solution returned.

\begin{lemma}\label{lem:partition_expected}
Algorithm~\ref{alg:partition_rounding} returns a basis $T \in \B$ with expected objective value
\[ \E\left[ \det\pr{\sum_{i \in T} \bv_i \bv_i^\top} \right] \geq e^{-d} \sum_{S \in \I_d} \det \pr{\sum_{i \in S} x_i \bv_i \bv_i^\top}\]
\end{lemma}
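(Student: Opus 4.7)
The plan is to reduce Lemma~\ref{lem:partition_expected} to Lemma~\ref{lem:partition_prob} by exactly the Cauchy-Binet bookkeeping used in the proof of Lemma~\ref{lem:expected_value}. First I would expand the determinant pointwise via Cauchy-Binet and swap the finite expectation with the sum to obtain
\[
\E\br{\det\pr{\sum_{i \in T} \bv_i \bv_i^\top}} \;=\; \sum_{S \subseteq [n],\,|S|=d} \Pr[S \subseteq T]\, \det\pr{\sum_{i \in S} \bv_i \bv_i^\top}.
\]

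Next, since the output $T$ is always a basis of $\M$, the event $S \subseteq T$ has probability zero whenever $S$ is dependent in $\M$. Hence the outer sum may be restricted to $S \in \I_d$ at no loss. Applying Lemma~\ref{lem:partition_prob} then yields $\Pr[S \subseteq T] \ge e^{-d} x^S$ for every such $S$.

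Finally, I would fold the scalar $x^S$ into the matrix using the identity $x^S \det\pr{\sum_{i \in S} \bv_i \bv_i^\top} = \det\pr{\sum_{i \in S} x_i \bv_i \bv_i^\top}$, which holds whenever $|S|=d$ because $\sum_{i \in S} x_i \bv_i \bv_i^\top = \bV_S\, \mathrm{diag}(x_i)_{i \in S}\, \bV_S^\top$ with $\bV_S$ the $d \times d$ matrix whose columns are $\bv_i$ for $i \in S$, so the determinant factors as a product of the scalars $x_i$ and $\det(\bV_S)^2$. Combining the three steps gives the claimed inequality.

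There is no real obstacle: Lemma~\ref{lem:partition_prob} carries all the probabilistic content, and the remaining manipulation is the same routine identity already used in the proof of Lemma~\ref{lem:expected_value}. The only thing to be slightly careful about is that the statement sums over matroid-independent sets $S \in \I_d$ rather than over all size-$d$ subsets, which is why the vanishing of $\Pr[S \subseteq T]$ on dependent sets is invoked explicitly.
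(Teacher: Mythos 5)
Your proposal is correct and follows essentially the same route as the paper's proof: Cauchy--Binet, restriction of the sum to $\I_d$, the bound from Lemma~\ref{lem:partition_prob}, and the identity $x^S\det\pr{\sum_{i\in S}\bv_i\bv_i^\top}=\det\pr{\sum_{i\in S}x_i\bv_i\bv_i^\top}$. The only cosmetic difference is that you justify dropping dependent sets by noting $\Pr[S\subseteq T]=0$ for them, whereas the paper simply discards nonnegative terms; both are valid.
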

Next, we relate this lower bound of the objective of the convex relaxation \CPD{} by using Lemma~\ref{lem:relate} and Lemma~\ref{lem:stronger_cp}. Before we prove these lemmas, we prove Theorem~\ref{thm:partition}.

\begin{proofof}{Theorem~\ref{thm:partition}} We start by solving the convex relaxation \CPD{} (which can be done in polynomial time from Theorem \ref{thm:solvability}). Let $\bx^\star$ be an optimal solution to \CPD{} (same argument works for a near optimal solution as well). Let $T \in \B$ be the random solution returned by Algorithm~\ref{alg:partition_rounding} with input $\bx = \bx^\star$. By Lemma~\ref{lem:partition_expected}, the expected value of the solution returned is
\[\E\left[ \det\pr{\sum_{i \in T} \bv_i \bv_i^\top} \right] \geq e^{-d} \sum_{S \in \I_d} \det \pr{\sum_{i \in S} x_i \bv_i \bv_i^\top}.\]
By Lemmas~\ref{lem:relate} and~\ref{lem:stronger_cp}, the right-hand side of the above inequality is further bounded, and  we get \[\E\left[ \det\pr{\sum_{i \in T} \bv_i \bv_i^\top} \right] \geq e^{-d} \cdot e^{-2d}  \inf_{\bz \in \Z} \det\pr{\sum_{i=1}^n x_i^\star e^{z_i} \bv_i \bv_i^\top}.\]

Since $\bx^\star$ is an optimal solution to \CPD{}, we have $\inf_{\bz \in \Z} \det\pr{\sum_{i=1}^n x_i^\star e^{z_i} \bv_i \bv_i^\top} = \CP{}$ which is at least $\OPT$. Hence, we get a random solution $T$ in polynomial time with expected value
\[\E\left[ \det\pr{\sum_{i \in T} \bv_i \bv_i^\top} \right] \geq e^{-3d} \cdot \OPT{}\]
which finishes the proof.
\end{proofof}

To prove Lemma~\ref{lem:partition_prob}, we make use of a similar result proved by Singh and Xie~\cite{SinghX18} in the context of a uniform matroid.

\begin{theorem}\label{thm:singhx_uniform}(Proposition 2 in~\cite{SinghX18}) For a uniform matroid with rank at least $d$ and a fractional solution $\by$ in the matroid polytope, if we sample a basis $Q$ with probability $y^Q$, then for each set $W$ of size $d$, all elements of $W$ are selected with probability at least $e^{-d} y^W$.

More formally, let $\by \in \rR_+^m$ be a vector such that $\sum_{i=1}^m y_i = \ell$ where $\ell$ is an integer. Then for any $W \subseteq [m]$ such that $|W| \leq \ell$, we have
\[ \frac{\sum_{Q \in {[m]\choose \ell}: W \subseteq Q} y^Q}{\sum_{Q \in {[m] \choose \ell}} y^Q} \geq e^{-|W|} y^W.\]
\end{theorem}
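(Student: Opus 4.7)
The plan is to rewrite both sides of the inequality in terms of elementary symmetric polynomials and then prove the resulting bound using log-concavity and the implicit upper bound $y_i\le 1$ (inherited from $\by$ lying in the matroid base polytope of the uniform matroid; the formal statement is in fact false in general without this bound, as the example $\by=(3-3\epsilon,1,\epsilon,\epsilon,\epsilon)$, $\ell=4$, $W=\{1\}$ shows for small $\epsilon>0$, where the LHS tends to $1$ while the RHS tends to $3/e>1$).

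Letting $w:=|W|$ and writing $e_k(\cdot)$ for the $k$-th elementary symmetric polynomial, the numerator of the LHS equals $y^W e_{\ell-w}(\by_{-W})$ and the denominator equals $e_\ell(\by)$, so after cancelling $y^W$ the claim reduces to $e_\ell(\by)\le e^{w}\,e_{\ell-w}(\by_{-W})$. The plan is to estimate the two sides separately. For the upper bound, Maclaurin's inequality applied to $\by$ with $\sum y_i=\ell$ gives $e_\ell(\by)\le \ell^\ell/\ell!$. For the lower bound, observe that the polynomial $\tilde q(z):=\prod_{i\notin W}(1+y_iz)$ is real-rooted with nonnegative coefficients, so a Gurvits-type capacity inequality for such polynomials yields
\[
e_{\ell-w}(\by_{-W})\;\ge\;\frac{(\ell-w)!}{(\ell-w)^{\ell-w}}\cdot\inf_{z>0}\frac{\tilde q(z)}{z^{\ell-w}}.
\]
The constraint $y_i\le 1$ enters through the Bernoulli-type inequality $1+y_iz\ge(1+z)^{y_i}$ (valid exactly for $y_i\in[0,1]$), which gives the pointwise estimate $\tilde q(z)\ge(1+z)^{\ell-s}$ with $s:=\sum_{i\in W}y_i$, and hence a lower bound on the capacity that depends only on $\ell-s$.

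Combining these estimates, the claim reduces to an arithmetic inequality involving Stirling's formula: the quotient $(\ell^\ell/\ell!)\big/((\ell-w)^{\ell-w}/(\ell-w)!)$ equals $e^w$ up to subexponential factors in $\ell$, exactly matching the target constant. The main obstacle is controlling these subexponential factors cleanly through all the estimates; the critical role of $y_i\le 1$ is to provide enough slack in both the Bernoulli inequality and the capacity lower bound to absorb the polynomial Stirling corrections, as well as to keep $y^W\le 1$ so that it does not disturb the final accounting. Packaging the estimates together then yields $e_\ell(\by)\le e^{w}e_{\ell-w}(\by_{-W})$, which after multiplying through by $y^W/e_\ell(\by)$ is the desired bound.
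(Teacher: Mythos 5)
This statement is not proved in the paper at all: it is imported verbatim as Proposition~2 of~\cite{SinghX18}, so there is no internal proof to compare your argument against. Judging your proposal on its own terms: the reduction of the claim to $e_\ell(\by)\le e^{w}\,e_{\ell-w}(\by_{-W})$ is correct, and your observation that the ``more formally'' phrasing is false without the constraint $y_i\le 1$ (which is implicit in membership in the base polytope of the uniform matroid) is a genuine and valuable catch; your counterexample checks out.

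However, the main estimate does not close, for two reasons. First, your upper and lower bounds combine to
\[
\frac{e_\ell(\by)}{e_{\ell-w}(\by_{-W})}\;\le\;\frac{\ell^\ell}{\ell!}\cdot\frac{(\ell-w)^{\ell-w}}{(\ell-w)!}\cdot\frac{1}{C},
\qquad C:=\inf_{z>0}\frac{\tilde q(z)}{z^{\ell-w}},
\]
and the first two factors form the \emph{product} $\approx e^{2\ell-w}$, not the \emph{quotient} $(\ell^\ell/\ell!)\big/\bigl((\ell-w)^{\ell-w}/(\ell-w)!\bigr)\approx e^{w}$ that you invoke in the final accounting. To recover $e^{w}$ you would need $C\ge e^{2(\ell-w)}$, but the Bernoulli estimate only yields $C\ge\inf_{z>0}(1+z)^{\ell-s}/z^{\ell-w}$, which equals $1$ in the worst case $s=w$ (e.g.\ when $y_i=1$ for all $i\in W$) and never exceeds $2^{\ell}$, so it cannot supply a factor of order $e^{2(\ell-w)}$. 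Second, the capacity inequality you invoke, $a_k\ge\frac{k!}{k^k}\inf_{z>0}p(z)/z^{k}$ for real-rooted $p$ with nonnegative coefficients, is false as stated: for $p(z)=(1+z)^2$ and $k=1$ one has $a_1=2$ while the right-hand side equals $4$. The correct univariate Gurvits bound carries the factor $\binom{n}{k}k^{k}(n-k)^{n-k}/n^{n}$ depending on the degree $n=m-w$, which can be as small as $\Theta(1/\sqrt{n})$ and would therefore introduce a dependence on the ground-set size $m$ that the theorem does not permit. A correct proof needs a different pairing of estimates (for instance the inductive/combinatorial argument of~\cite{SinghX18}); as written, the plan establishes only a bound of order $e^{2\ell-w}$.
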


\begin{proofof}{Lemma~\ref{lem:partition_prob}} A set $T \in \B$ is sampled with probability $\frac{x^T}{\sum_{R \in \B} x^R}$. Hence, for any set $S \in \I_d$,
we have
\begin{equation} \label{eq:prob-S-in-T-partition}
\Pr[S \subseteq T] = \sum_{T \in \B: S \subseteq T} \frac{x^T}{\sum_{R \in \B} x^R} =  \frac{\sum_{T \in \B: S\subseteq T} x^{T}}{\sum_{R \in \B} x^R}.
\end{equation}
Let the partition matroid be $\M = ([n],\I)$ with partitions $P_1,\dots,P_t$ such that $\cup_{i=1}^t P_i = [n]$ and let the rank of $P_i$ be $b_i$. For any $i \in [t]$, let $ S_i = S \cap P_i $. Then, the numerator and denominator of the right-hand side of \eqref{eq:prob-S-in-T-partition} can be decomposed into products across each partition as $\sum_{R \in \B} x^R = \prod_{i=1}^t \pr{\sum_{R \in {P_i \choose b_i}} x^{R}}$ and $\sum_{T \in \B: S\subseteq T} x^{T}  = \prod_{i=1}^t \pr{\sum_{T\in {P_i \choose b_i}: S_i \subseteq T} x^{T}}$.
Therefore,
\[ \Pr[S \subseteq T] =   \prod_{i=1}^t \frac{\sum_{T\in {P_i \choose b_i}: S_i \subseteq T} x^{T}}{\sum_{R \in {P_i \choose b_i}} x^{R}}.
\]
Since $\bx \in \P(\M)$, we have $\sum_{i \in P_j} x_i = b_j$. Since $S \in \I_d$, we also have $|S_j| = |S \cap P_j| \leq b_j = \sum_{i \in P_j} x_i $. Applying Theorem~\ref{thm:singhx_uniform}, we get

\[\Pr[S \subseteq T] \geq \prod_{i=1}^t e^{-|S_i|} x^{S_i} = e^{-\sum_{i=1}^t |S_i|} x^{\cup_{i=1}^t S_i}.
\]
Since $S \in \I_d$, we have $d=|S| = |\cup_{i=1}^t S_i| = \sum_{i=1}^t |S_i|$. Therefore, \[\Pr[S \subseteq T]\geq e^{-d} x^S\]
as desired.
\end{proofof}

We now prove Lemma~\ref{lem:partition_expected}.

\begin{proofof}{Lemma~\ref{lem:partition_expected}} Let $T \in \B$ be the random set returned by Algorithm~\ref{alg:partition_rounding}. Then, by the Cauchy-Binet formula,
we have\[ \E\left[ \det\pr{\sum_{i \in T} \bv_i\bv_i^\top}\right] = \E\left[ \sum_{S \subseteq T: |S| = d} \det\pr{\sum_{i \in S} \bv_i \bv_i^\top}\right] = \sum_{S \in {[n] \choose d}} \Pr[S \subseteq T] \det\pr{\sum_{i \in S} \bv_i \bv_i^\top}.\]
By Lemma~\ref{lem:partition_prob}, for each set $S \in \I_d$, we have
$\Pr[S \subseteq T] \geq e^{-d} x^S$. Therefore,
\[ \E\left[ \det\pr{\sum_{i \in T} \bv_i\bv_i^\top}\right] \geq \sum_{S \in \I_d} e^{-d} x^S\det\pr{\sum_{i \in S} \bv_i \bv_i^\top} = e^{-d} \sum_{S \in \I_d} \det\pr{\sum_{i \in S} x_i \bv_i \bv_i^\top}.\]

\end{proofof}

\end{document}